\DeclareFontFamily{U}{MnSymbolD}{}
\DeclareSymbolFont{MnSyD}{U}{MnSymbolD}{m}{n}
\DeclareFontShape{U}{MnSymbolD}{m}{n}{
	<-6>  MnSymbolD5
	<6-7>  MnSymbolD6
	<7-8>  MnSymbolD7
	<8-9>  MnSymbolD8
	<9-10> MnSymbolD9
	<10-12> MnSymbolD10
	<12->   MnSymbolD12
}{}
\DeclareFontShape{U}{MnSymbolD}{b}{n}{
	<-6>  MnSymbolD-Bold5
	<6-7>  MnSymbolD-Bold6
	<7-8>  MnSymbolD-Bold7
	<8-9>  MnSymbolD-Bold8
	<9-10> MnSymbolD-Bold9
	<10-12> MnSymbolD-Bold10
	<12->   MnSymbolD-Bold12
}{}
\DeclareMathSymbol{\insmall}{\mathrel}{MnSyD}{"3E}
\DeclareDocumentCommand\lin{m o o} {\accentset{\scalebox{.6}{\mbox{\tiny\textnormal{ (1)}}}}{#1}{_{\IfNoValueF{#2}{#2}}^{\IfNoValueF{#3}{#3}}}}
\newcommand{\mcald}{\mathcal{D}}
\newcommand{\mfM}{\mathfrak{m}}
\newcommand{\mfa}{\mathfrak{a}}
\newcommand{\mscrH}{\mathscr{H}}
\newcommand{\mscrT}{\mathscr{T}}
\newcommand{\mcala}{\slashed{\mathcal{A}}}
\newcommand{\mcalh}{\mathcal{H}}
\newcommand{\mcali}{\mathcal{I}}
\newcommand{\mcall}{\mathcal{L}}
\newcommand{\mcalm}{\mathcal{M}}
\newcommand{\mcalq}{\mathcal{Q}}
\newcommand{\qmm}{\mcalq_{\mathsmaller{\mcalm}}}
\newcommand{\smm}{S^2_{\mathsmaller{\mcalm}}}
\newcommand{\sexd}{\slashed{\ud}}
\newcommand{\mcals}{\mathcal{S}}
\theoremstyle{plain}
\newtheorem{theorem}{Theorem}[subsection] 
\theoremstyle{definition}
\newtheorem{definition}[theorem]{Definition} 
\theoremstyle{proposition}
\newtheorem{proposition}[theorem]{Proposition} 
\theoremstyle{Remark}
\newtheorem{remark}[theorem]{Remark} 
\theoremstyle{Lemma}
\newtheorem{lemma}[theorem]{Lemma} 
\theoremstyle{Corollary}
\newtheorem{corollary}[theorem]{Corollary} 
\newtheorem*{conjecture*}{Conjecture}
\newtheorem*{theorem*}{Theorem}
\newtheorem*{proposition*}{Proposition}
\newtheorem*{corollary*}{Corollary}
\newtheorem*{lemma*}{Lemma*}
\newtheorem*{definition*}{Definition}
\newtheorem*{remark*}{Remark}
\newtheorem*{notation*}{Notation}
\newtheorem*{comment*}{Comment}
\providecommand{\customgenericname}{}
\newcommand{\newcustomtheorem}[2]{%
	\newenvironment{#1}[1]
	{%
		\renewcommand\customgenericname{#2}%
		\renewcommand\theinnercustomgeneric{##1}%
		\innercustomgeneric
	}
	{\endinnercustomgeneric}
}
\newcommand{\So}{\mathscr{S}}
\newcommand{\Sos}{\mathscr{D}}
\newcommand{\Soa}{\mathscr{A}}
\newcommand{\flin}{\lin{f}}
\newcommand{\Solution}{\So=\bigg(\gabhatlin, \trgablin, \mling, \sghatlin, \trsglin\bigg)}
\newcommand{\Seed}{\mathscr{D}=\bigg(\oPhilin, \uPhilin, \oPsilin, \uPsilin, \fvulin, \fvblin, \fvslin, \fwulin, \fwblin, \fwslin, \mfM, \mathfrak{a}\bigg)}
\newcommand{\uL}{\underline{L}}
\newcommand{\iplus}{\mathcal{I}^+}
\newcommand{\ubar}[1]{\underaccent{\bar}{#1}}
\newcommand{\uPhilin}{\lin{\ubar{\Phi}}}
\newcommand{\uPsilin}{\lin{\ubar{\Psi}}}
\newcommand{\oPhilin}{\lin{\bar{\Phi}}}
\newcommand{\oPsilin}{\lin{\bar{\Psi}}}
\newcommand{\uomega}{\underline{\omega}}
\newcommand{\alin}{\lin{\mathfrak{a}}}
\newcommand{\mfmlin}{\lin{\mathfrak{m}}}
\newcommand{\mnorm}[1]{|{#1}|_{\mathsmaller{\mcalm}}}
\newcommand{\SiMa}{\mathscr{S}_{\mathsmaller{\mfM, \mathfrak{a}}}}
\newcommand{\SosMa}{\mathscr{D}_{\mathsmaller{\mfM, \mathfrak{a}}}}
\newcommand{\bg}{\boldsymbol{g}}
\newcommand{\bog}{\boldsymbol{\overline{g}}}
\newcommand{\Sf}{\mathring{{\mathscr{S}}}}
\newcommand{\SfMa}{\mathring{{\mathscr{S}}}_{\mathsmaller{\mfM, \mathfrak{a}}}}
\newcommand{\GfMa}{\mathring{{\mathscr{G}}}_{\mathsmaller{\mfM, \mathfrak{a}}}}
\newcommand{\Ke}{\mathscr{K}}
\newcommand{\Ga}{\mathscr{G}}
\newcommand{\Gf}{\mathring{\mathscr{G}}}
\newcommand{\Mgs}{\big(\mcalm, g_M\big)}
\newcommand{\hplus}{\mathcal{H}^+}
\newcommand{\LM}{\Lambda(\mcalm)}
\newcommand{\taus}{\tau^\star}
\newcommand{\pv}{\partial_v}
\newcommand{\n}{\nabla}
\newcommand{\on}{\overline{\nabla}}
\newcommand{\ud}{\, \mathrm{d}}
\newcommand{\mfp}{\mathfrak{p}}
\newcommand{\opmu}{(1+\mu)}
\newcommand{\ommu}{(1-\mu)}
\newcommand{\mcaltf}{\slashed{\mcalt}_f}
\newcommand{\mcaltv}{\slashed{\mcalt}_\xi}
\newcommand{\mcaltt}{\slashed{\mcalt}_\theta}
\newcommand{\trglin}{\textnormal{tr}_g\glin}
\newcommand{\sigmalin}{\lin{\sigma}}
\newcommand{\Philin}{\lin{\Phi}}
\newcommand{\Psilin}{\lin{\Psi}}
\newcommand{\varphilin}{\lin{\varphi}}
\newcommand{\philin}{\lin{\phi}}
\newcommand{\owlin}{\lin{\overline{w}}}
\newcommand{\glin}{\lin{g}}
\newcommand{\qm}{\mcalq}
\newcommand{\qg}{\tilde{g}}
\newcommand{\pt}{\partial_{t^*}}
\newcommand{\pr}{\partial_r}
\newcommand{\qbox}{\widetilde{\Box}}
\newcommand{\qn}{\widetilde{\nabla}}
\newcommand{\qtr}{\textnormal{tr}_{\qg}}
\newcommand{\qhd}{\tilde{\star}}
\newcommand{\qexd}{\tilde{\textnormal{d}}}
\newcommand{\qd}{\tilde{\delta}}
\newcommand{\qepsilon}{\tilde{\epsilon}}
\newcommand{\psilin}{\lin{\psi}}
\newcommand{\trgab}{\textnormal{tr}_{\qg}\tilde{{g}}}
\newcommand{\qV}{\widetilde{V}}
\newcommand{\qf}{\tilde{f}}
\newcommand{\qv}{\tilde{v}}
\newcommand{\isnorm}[1]{\|{#1}\|^2_{\mathsmaller{\Sigma},r}}
\newcommand{\nisnorm}[1]{\|{#1}\|^2_{\mathsmaller{\iplus},\taus}}
\newcommand{\trgablin}{\qtr\lin{\tilde{g}}}
\newcommand{\trtaulin}{\qtr\lin{\tilde{\tau}}}
\newcommand{\etalin}{\lin{\tilde{\eta}}}
\newcommand{\zetalin}{\lin{\tilde{\zeta}}}
\newcommand{\tflin}{\lin{\tilde{f}}}
\newcommand{\hlin}{\lin{\tilde{h}}}
\newcommand{\mcalc}{\mathcal{C}}
\newcommand{\mcalt}{\slashed{\mathcal{T}}}
\newcommand{\gabhat}{\hat{\tilde{g}}}
\newcommand{\qglin}{\lin{\qg}}
\newcommand{\gabhatlin}{\lin{\hat{\tilde{g}}}}
\newcommand{\tauhatlin}{\lin{\hat{\tilde{\tau}}}}
\newcommand{\qmsm}{\mcalq\otimes S}
\newcommand{\mg}{\stkout{g}}
\newcommand{\sghat}{\hat{\slashed{g}}}
\newcommand{\strg}{\slashed{\textnormal{tr}}\slashed{g}}
\newcommand{\stkout}[1]{\ifmmode\text{\sout{\ensuremath{#1}}}\else\sout{#1}\fi}
\newcommand{\momega}{\text{\sout{\ensuremath{\omega}}}}
\newcommand{\mling}{\text{\sout{\ensuremath{\lin{g}}}}}
\DeclareDocumentCommand\adaptable{m o o o} {{#1}_{\mathsmaller{\mcali}^{\IfNoValueF{#3}{[#3]}},\mathsmaller{\szeta}^{\mathsmaller{{\IfNoValueF{#4}{[#4]}}}}}^{\mathsmaller{\IfNoValueF{#2}{[\tilde{#2}]}}}}
\DeclareDocumentCommand\angadaptable{m o o o o} {{#1}_{\mathsmaller{\mcali}^{\IfNoValueF{#3}{[#3]}},\mathsmaller{\szeta}^{\mathsmaller{{\IfNoValueF{#4}{[#4]}}}}}^{\mathsmaller{\IfNoValueF{#2}{[\tilde{#2}]}},\mathsmaller{\IfNoValueF{#5}{[\slashed{#5}]}}}}
\newcommand{\qD}{\tilde{D}}
\DeclareDocumentCommand\linadaptable{m o o o} {\lin{{#1}}_{\mathsmaller{\mcali}^{\IfNoValueF{#3}{[#3]}},\mathsmaller{\szeta}^{\mathsmaller{{\IfNoValueF{#4}{[#4]}}}}}^{\mathsmaller{\IfNoValueF{#2}{[\tilde{#2}]}}}}
\DeclareDocumentCommand\linangadaptable{m o o o o} {\lin{{#1}}_{\mathsmaller{\mcali}^{\IfNoValueF{#3}{[#3]}},\mathsmaller{\szeta}^{\mathsmaller{{\IfNoValueF{#4}{[#4]}}}}}^{\mathsmaller{\IfNoValueF{#2}{[\tilde{#2}]}},\mathsmaller{\IfNoValueF{#5}{[\slashed{#5}]}}}}
\DeclareDocumentCommand\linadaptablepair{m m o o o} {\lin{{#1}}_{\mathsmaller{\mcali}^{\IfNoValueF{#4}{[#4]}},\mathsmaller{\szeta}^{\mathsmaller{{\IfNoValueF{#5}{[#5]}}}}}^{\mathsmaller{\IfNoValueF{#3}{[\tilde{#3}]}}}\btimes\lin{{#2}}_{\mathsmaller{\mcali}^{\IfNoValueF{#4}{[#4]}},\mathsmaller{\szeta}^{\mathsmaller{{\IfNoValueF{#5}{[#5]}}}}}^{\mathsmaller{\IfNoValueF{#3}{[\tilde{#3}]}}}}
\newcommand{\qotimeshat}{\hat{\tilde{\otimes}}}
\newcommand{\sotimeshat}{\hat{\slashed{\otimes}}}
\newcommand{\uwlin}{\lin{\underline{w}}}
\newcommand{\sm}{S}
\newcommand{\sg}{\slashed{g}}
\newcommand{\bC}{\boldsymbol{C}}
\newcommand{\sn}{\slashed{\nabla}}
\newcommand{\slap}{\slashed{\Delta}}
\newcommand{\zslap}{\slashed{\Delta}_\zeta}
\newcommand{\sdo}{\slashed{\mathcal{D}}_1}
\newcommand{\sdt}{\slashed{\mathcal{D}}_2}
\newcommand{\sdso}{\slashed{\mathcal{D}}_1^{\star}}
\newcommand{\sdst}{\slashed{\mathcal{D}}_2^{\star}}
\newcommand{\sdiv}{\slashed{\textnormal{div}}}
\newcommand{\otimeshat}{\hat{\otimes}}
\newcommand{\scurl}{\slashed{\textnormal{curl}}}
\newcommand{\sepsilon}{\slashed{\epsilon}}
\newcommand{\shd}{\slashed{\star}}
\newcommand{\str}{\slashed{\textnormal{tr}}}
\newcommand{\sps}{\slashed{\Pi}_f}
\newcommand{\spv}{\slashed{\Pi}_\xi}
\newcommand{\spt}{\slashed{\Pi}_\theta}
\newcommand{\smcA}[1]{\slashed{\mathcal{A}}_f^{[#1]}}
\newcommand{\vmcA}[1]{\slashed{\mathcal{A}}_\xi^{[#1]}}
\newcommand{\tmcA}[1]{\slashed{\mathcal{A}}_\theta^{[#1]}}
\newcommand{\mfglin}{\lin{\mathfrak{g}}}
\newcommand{\qdL}{\qexd^{\mathsmaller{\mathsmaller{\mathsmaller{\mcali}}}}}
\newcommand{\szetap}[1]{\szeta^{[#1]}}
\newcommand{\szeta}{\slashed{\zeta}}
\newcommand{\mfqlin}{\lin{\mathfrak{q}}}
\newcommand{\sV}{\slashed{V}}
\newcommand{\slf}{\slashed{f}}
\newcommand{\sv}{\slashed{v}}
\newcommand{\trsglin}{\str{\lin{\slashed{g}}}}
\newcommand{\spi}{\slashed{\Pi}}
\newcommand{\sflin}{\lin{\slashed{f}}}
\newcommand{\swlin}{\lin{\slashed{w}}}
\newcommand{\Id}{\textnormal{Id}}
\newcommand{\btimes}{\boldsymbol{\times}}
\newcommand{\sglin}{\lin{\slashed{g}}}
\newcommand{\sghatlin}{\lin{\hat{\slashed{g}}}}
\newcommand{\smi}{S_\nu}
\newcommand{\hplusi}{\hplus\cap\Sigma}
\newcommand{\LS}{\Lambda(\Sigma)}
\newcommand{\sD}{\slashed{D}}
\newcommand{\snnu}{\sn_{\nu}}
\newcommand{\mfT}{\mathfrak{T}}
\newcommand{\mft}{\mathfrak{t}}
\newcommand{\bh}{\bar{h}}
\newcommand{\bk}{\bar{k}}
\newcommand{\fvu}{\ubar{\mathfrak{v}}}
\newcommand{\fwu}{\ubar{\mathfrak{w}}}
\newcommand{\fvs}{\slashed{\mathfrak{v}}}
\newcommand{\fws}{\slashed{\mathfrak{w}}}
\newcommand{\fvulin}{\lin{\ubar{\mathfrak{v}}}}
\newcommand{\fwulin}{\lin{\ubar{\mathfrak{w}}}}
\newcommand{\fvblin}{\lin{\bar{\mathfrak{v}}}}
\newcommand{\fwblin}{\lin{\bar{\mathfrak{w}}}}
\newcommand{\fvslin}{\lin{\slashed{\mathfrak{v}}}}
\newcommand{\fwslin}{\lin{\slashed{\mathfrak{w}}}}
\newcommand{\bmcalm}{\boldsymbol{\mcalm}}
\newcommand{\Nlin}{\lin{N}}
\newcommand{\bblin}{\lin{\bar{b}}}
\newcommand{\bhlin}{\lin{\bar{h}}}
\newcommand{\strhlin}{\str\lin{\slashed{h}}}
\newcommand{\bklin}{\lin{\bar{k}}}
\newcommand{\bslin}{\lin{\bar{s}}}
\newcommand{\llin}{\lin{N}}
\newcommand{\dpart}[1]{{#1}_{l=0,1}}
\newcommand{\mpart}[1]{{#1}_{l=0,1}}
\newcommand{\mhlin}{\lin{\stkout{h}}}
\newcommand{\mklin}{\lin{\stkout{k}}}
\newcommand{\mblin}{\lin{\stkout{b}}}
\newcommand{\mslin}{\lin{\stkout{s}}}
\newcommand{\chihat}{\hat{\chi}}
\newcommand{\qwlin}{\lin{\tilde{w}}}
\newcommand{\opsilin}{\lin{\bar{\psi}}}
\newcommand{\upsilin}{\lin{\ubar{\psi}}}
\newcommand{\gablin}{\lin{\tilde{g}}}
\newcommand{\fvo}{\bar{\mathfrak{v}}}
\newcommand{\fwo}{\bar{\mathfrak{w}}}
\newcommand{\shhatlin}{\lin{\hat{\slashed{h}}}}
\newcommand{\skhatlin}{\lin{\hat{\slashed{k}}}}
\newcommand{\qdot}{\,\tilde{\cdot}\,}
\newcommand{\sdot}{\,\slashed{\cdot}\,}
\newcommand{\trk}{\textnormal{tr}k}
\newcommand{\trklin}{\lin{\big(\textnormal{tr}k}\big)}
\DeclareDocumentCommand\xitau{o} {\Xi_{\tau^*_{\IfNoValueF{#1}{#1}}}}
\DeclareDocumentCommand\eh{o o} {\hplus_{\taus_{\IfNoValueF{#1}{#1}}, \taus_ {\IfNoValueF{#2}{#2}}}}
\DeclareDocumentCommand\ni{o o} {{\mcali}^+_{\taus_{\IfNoValueF{#1}{#1}}, \taus_ {\IfNoValueF{#2}{#2}}}}
\DeclareDocumentCommand\sip{m o} {\big\langle{#1}\big\rangle_{{\IfNoValueF{#2}{#2}}}}
\DeclareDocumentCommand\snorm{m o o} {||{#1}||^2_{L^2_{{\IfNoValueF{#2}{#2}}, {\IfNoValueF{#3}{#3}}}}}
\newcommand{\bvlin}{\lin{\bar{v}}}
\newcommand{\uvlin}{\lin{\ubar{v}}}
\newcommand{\trchi}{\slashed{\textnormal{tr}}\chi}
\DeclareDocumentCommand\bold{m o o} {\boldsymbol{\tensor{#1}{_{\IfNoValueF{#2}{#2}}^{\IfNoValueF{#3}{#3}}}}}
\DeclareDocumentCommand\unbold{m o o} {{\tensor{#1}{_{\IfNoValueF{#2}{#2}}^{\IfNoValueF{#3}{#3}}}}}
\newcommand{\bglin}{\lin{\bar{g}}}
\newcommand{\uglin}{\lin{\ubar{g}}}
\newcommand{\bgprimelin}{\lin{\bar{g}}'}
\newcommand{\ugprimelin}{\lin{\ubar{g}}'}
\DeclareDocumentCommand\tintegral{m o o} {\int_{\tau^*_{\IfNoValueF{#2}{#2}}}^{\IfNoValueF{#3}{#3}}{#1}\ud \overline{\tau}^*}
\DeclareDocumentCommand\fluxintegral{m o} {\int_{\IfNoValueF{#2}{#2}}{#1}\qepsilon|_n\wedge\sepsilon}
\DeclareDocumentCommand\bulkintegral{m o o} {\int_{\tau^*_{\IfNoValueF{#2}{#2}}}^{\tau^*_{\IfNoValueF{#3}{#3}}}\int_{{\Xi_{\overline{\tau}^*}\backslash\sm}}{#1}\qepsilon|_{n_{\Xi_{\overline{\tau}^*}}}\ud\overline{\tau}^*}
\DeclareDocumentCommand\energy{m o o o} {\mathbb{E}_{\IfNoValueF{#3}{#3}}^{\IfNoValueF{#4}{#4}}[#1]{\IfNoValueF{#2}{(\tau^*_{#2})}}}
\DeclareDocumentCommand\initenergy{m o} {\mathbb{E}_{0}^{\IfNoValueF{#2}{#2}}[#1]}
\DeclareDocumentCommand\linnorm{m m o o o} {\lin{\mathbb{#1}}_{\IfNoValueF{#4}{\mathsmaller{#4}}}^{\IfNoValueF{#3}{#3}}[#2]{\IfNoValueF{#5}{({#5})}}}
\DeclareDocumentCommand\linrnorm{m m o o o o} {\lin{\mathbb{#1}}_{\mathsmaller{\mcali^{\IfNoValueF{#4}{#4}}_{\IfNoValueF{#5}{#5}}}}^{\IfNoValueF{#3}{#3}}[#2]{\IfNoValueF{#6}{({#6})}}}
\DeclareDocumentCommand\linrnormdata{m m o o o o} {\lin{\mathbb{#1}}_{\circ^{\mathsmaller{{\IfNoValueF{#4}{#4}}}}_{\mathsmaller{\IfNoValueF{#5}{#5}}}}^{\IfNoValueF{#3}{#3}}[#2]{\IfNoValueF{#6}{({#6})}}}
\DeclareDocumentCommand\norm{m m o o o} {{\mathbb{#1}}_{\IfNoValueF{#4}{\mathsmaller{#4}}}^{\IfNoValueF{#3}{#3}}[#2]{\IfNoValueF{#5}{({#5})}}}
\DeclareDocumentCommand\rnorm{m m o o o o} {{\mathbb{#1}}_{\mathsmaller{\mcali^{\IfNoValueF{#4}{#4}}_{\IfNoValueF{#5}{#5}}}}^{\IfNoValueF{#3}{#3}}[#2]{\IfNoValueF{#6}{({#6})}}}
\DeclareDocumentCommand\rnormdata{m m o o o o} {{\mathbb{#1}}_{\circ^{\mathsmaller{{\IfNoValueF{#4}{#4}}}}_{\mathsmaller{\IfNoValueF{#5}{#5}}}}^{\IfNoValueF{#3}{#3}}[#2]{\IfNoValueF{#6}{({#6})}}}
\DeclareDocumentCommand\linflux{m m o o o} {\lin{\mathbb{#1}}_{\IfNoValueF{#4}{#4}}^{\IfNoValueF{#3}{#3}}[#2]{\IfNoValueF{#5}{({#5})}}}
\DeclareDocumentCommand\linrflux{m m o o o} {\lin{\mathbb{#1}}_{\mathsmaller{\mcali}}^{\IfNoValueF{#3}{#3}}[#2]{\IfNoValueF{#4}{({#4})}}}
\DeclareDocumentCommand\linrrflux{m m o o} {\lin{\mathbb{#1}}_{\mathsmaller{\mcali^2}}^{\IfNoValueF{#3}{#3}}[#2]{\IfNoValueF{#4}{({#4})}}}
\DeclareDocumentCommand\linFlux{m m o o o} {\lin{\textnormal{#1}}_{\IfNoValueF{#4}{#4}}^{\IfNoValueF{#3}{#3}}[#2]{\IfNoValueF{#5}{({#5})}}}
\DeclareDocumentCommand\linrFlux{m m o o o} {\lin{\textnormal{#1}}_{\mathsmaller{\mcali}}^{\IfNoValueF{#3}{#3}}[#2]{\IfNoValueF{#4}{({#4})}}}
\DeclareDocumentCommand\linrrFlux{m m o} {\lin{{\textnormal{#1}}}_{\mathsmaller{\mcali^2}}^{\IfNoValueF{#3}{#3}}[#2]}
\DeclareDocumentCommand\flux{m m o o o} {\mathbb{#1}_{\IfNoValueF{#4}{#4}}^{\IfNoValueF{#3}{#3}}[#2]{\IfNoValueF{#5}{({#5})}}}
\DeclareDocumentCommand\rflux{m m o o o} {\mathbb{#1}_{\mathsmaller{\mcali}}^{\IfNoValueF{#3}{#3}}[#2]{\IfNoValueF{#4}{({#4})}}}
\DeclareDocumentCommand\rrflux{m m o o o} {\mathbb{#1}_{\mathsmaller{\mcali^2}}^{\IfNoValueF{#3}{#3}}[#2]{\IfNoValueF{#4}{({#4})}}}
\DeclareDocumentCommand\Flux{m m o o o} {\textnormal{#1}_{\IfNoValueF{#4}{#4}}^{\IfNoValueF{#3}{#3}}[#2]{\IfNoValueF{#5}{({#5})}}}
\DeclareDocumentCommand\rFlux{m m o o o} {\textnormal{#1}_{\mathsmaller{\mcali}}^{\IfNoValueF{#3}{#3}}[#2]{\IfNoValueF{#4}{({#4})}}}
\DeclareDocumentCommand\rrFlux{m m o o o} {\textnormal{#1}_{\mathsmaller{\mcali^2}}^{\IfNoValueF{#3}{#3}}[#2]{\IfNoValueF{#4}{({#4})}}}
\DeclareDocumentCommand\ied{m o o o} {\mathbb{M}_{\IfNoValueF{#3}{#3}}^{\IfNoValueF{#4}{#4}}[#1]{\IfNoValueF{#2}{(\tau^*_{#2})}}}
\DeclareDocumentCommand\bulk{m o o} {\mathbb{I}^{\IfNoValueF{#3}{#3}}[#1]{\IfNoValueF{#2}{(\tau^*_{#2})}}}
\DeclareDocumentCommand\ID{m o o o} {\mathbb{D}_{\IfNoValueF{#2}{#2}}^{\IfNoValueF{#3}{#3}}[#1]{\IfNoValueF{#4}{(\tau^*_{#4})}}}
\def\QEQ{{%
		\setbox0\hbox{$T$}%
		\rlap{\hbox to \wd0{\hss$\times$\hss}}\box0
}}
\title{The linear stability of the Schwarzschild solution to gravitational perturbations in the generalised wave gauge}
\author{Thomas Johnson\footnote{Email: twj25@cam.ac.uk}}
\affil{Imperial College London, London, UK}
\begin{document}
	
\maketitle

\begin{abstract}
	
In a recent seminal paper \cite{D--H--R} of Dafermos, Holzegel and Rodnianski the linear stability of the Schwarzschild  family of black hole solutions to the Einstein vacuum equations was established by imposing a double null gauge. In this paper we shall prove that the Schwarzschild family is linearly stable as solutions to the Einstein vacuum equations by imposing instead a generalised wave gauge: all sufficiently regular solutions to the system of equations that result from linearising the Einstein vacuum equations, as expressed in a generalised wave gauge, about a fixed Schwarzschild solution remain uniformly bounded on the Schwarzschild exterior region and in fact decay to a member of the linearised Kerr family. The dispersion is at an inverse polynomial rate and therefore in principle sufficient for future nonlinear applications. The result thus fits into the wider goal of establishing the full nonlinear stability of the exterior Kerr family as solutions to the Einstein vacuum equations by employing a generalised wave gauge and therefore complements \cite{D--H--R} in a similar vein as the pioneering work \cite{L--R} of Lindblad and Rodnianski complemented the monumental achievement of Christodoulou and Klainerman in \cite{C--K} whereby the global nonlinear stability of the Minkowski space was established. 

\end{abstract}

\tableofcontents

\section{Introduction}

The celebrated Kerr family \cite{K} of spacetimes, discovered in 1963, comprise a 2-parameter family of solutions to the Einstein vacuum equations of general relativity,
\begin{align}\label{introEE}
\text{Ric}[g]=0.
\end{align}
 They putatively describe isolated gravitating systems that contain a rotating \emph{black hole} -- a region of spacetime which cannot communicate with distant observers.
The parameter $M>0$ thus determines the mass of the black hole whilst the parameter $a$, with\footnote{Black holes with $|a|=M$ are known as \emph{extremal} Kerr black holes and they possess properties that are in stark contrast to their slower rotating cousins. See, for instance, the pioneering \cite{Aretakis}.} $0\leq |a|< M$, measures its angular momentum. The region of spacetime that lies outside the black hole is described by the Kerr exterior family.

The physical reality of such objects, as opposed to being mere mathematical fiction, requires at the very least a positive resolution to the conjectured stability of their exterior\footnote{For the interior region, the situation is drastically different. See the recent remarkable \cite{D--L}.} regions as solutions to the Einstein vacuum equations:
\begin{conjecture*}
The Kerr exterior family is stable as a family of solutions to \eqref{introEE}.
\end{conjecture*}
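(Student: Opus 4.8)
The plan is to follow the three-step strategy that this paper initiates at the linear level. First, one linearises the Einstein vacuum equations \eqref{introEE} around a fixed subextremal member of the Kerr family after imposing a generalised wave gauge, obtaining a quasilinear system of wave equations whose constraints propagate; the present paper carries this out around Schwarzschild and establishes uniform boundedness together with inverse-polynomial decay of all sufficiently regular solutions to a member of the linearised Kerr family. Second, one upgrades the Schwarzschild linear theory to the Kerr linear theory for all $0 \le |a| < M$, replacing the spherical-harmonic decomposition and the Regge--Wheeler/Zerilli reduction by the Teukolsky formalism and a Chandrasekhar-type transformation to a Regge--Wheeler-like system, and importing the $r^p$-hierarchy and vector-field multiplier estimates now available in that setting. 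Third, one runs a nonlinear continuity/bootstrap argument in the generalised wave gauge, in the spirit of Lindblad--Rodnianski's treatment of Minkowski but adapted to the asymptotically-flat-with-horizon geometry of Kerr: one propagates a weighted energy hierarchy, controls the gauge source functions and the slowly-decaying ``bad'' metric components, and closes the estimates using the linear decay rates as the driving input.

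The key steps, in order, are: (i) fix the gauge, choosing generalised wave coordinates with source functions adapted dynamically to the evolving geometry so that the principal part is a genuine system of wave operators and the (weak) null structure of the nonlinearities is exposed; (ii) decompose the linearised unknowns into gauge-invariant quantities (Teukolsky scalars, mass and angular-momentum aspects) and pure-gauge plus linearised-Kerr parts, and prove decay for each --- here mode stability, the absence of quasinormal modes on the real axis, and the Teukolsky-equation estimates on Kerr enter; (iii) reconstruct the full linearised metric coefficients from the gauge-invariant data by integrating transport equations along the horizon and along null infinity, tracking the precise weights; (iv) set up the nonlinear bootstrap, defining the spacetime norm, verifying that the gauge remains valid with no coordinate degeneration up to and including $\hplus$, and estimating each nonlinear term using the null condition together with the weak null structure that survives in the wave gauge; (v) identify the limiting parameters $(M_\infty, a_\infty)$ and prove convergence to the corresponding Kerr metric at an inverse-polynomial rate, which must be fast enough for the continuity argument to close.

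The main obstacle will be step (iv) combined with the interplay between the weak decay rates forced by the wave gauge and the trapping and superradiance of Kerr. In the generalised wave gauge certain metric components decay only like $t^{-1+\delta}$, which is borderline for the energy hierarchy; on Kerr one must simultaneously control the ergoregion, where the stationary Killing field is spacelike and no coercive conserved energy is available, and the trapped null geodesics, where a derivative is lost. Showing that the nonlinear error terms generated by the slowly-decaying gauge components are nevertheless integrable against the Kerr vector-field estimates, while keeping the gauge nondegenerate, is the crux --- and it is precisely the step for which the linear result of the present paper in the generalised wave gauge is the indispensable first ingredient. A secondary difficulty is the loss, away from Schwarzschild, of the clean even/odd and spherical-harmonic decomposition, which must be compensated by the Teukolsky--Chandrasekhar machinery and by isolating the low angular frequencies, i.e. the would-be linearised Kerr directions, for separate treatment.
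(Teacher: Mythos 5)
The statement you have been asked about is not a theorem of the paper at all: it is the open conjecture on the stability of the Kerr exterior family, and the paper says explicitly, immediately after stating it, that ``this conjecture remains open.'' The paper offers no proof of it --- its actual contribution is a much weaker statement, namely the \emph{linear} stability of the \emph{Schwarzschild} subfamily in a particular generalised wave gauge, which the author presents only as a first step towards, and as evidence in favour of, the wave-gauge approach to the conjecture. So there is no ``paper's own proof'' to compare against, and your text cannot be assessed as a correct alternative proof.

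What you have written is a research programme, not a proof, and the gap is the entirety of the mathematical content. Each of your key steps (ii)--(v) is itself a major open problem at the level of generality required: quantitative boundedness and decay for the Teukolsky equations on Kerr across the full subextremal range $0\le|a|<M$, the reconstruction of all metric components from gauge-invariant quantities in a wave gauge on a Kerr background, and above all the nonlinear continuity argument of step (iv), which you yourself identify as ``the crux'' and for which you supply no estimate, no norm, and no mechanism by which the slowly decaying wave-gauge components are reconciled with trapping, superradiance, and the absence of a coercive energy in the ergoregion. A proposal whose central step is flagged as an unresolved obstacle is an outline of where a proof would have to go, not a proof. If the task is to reproduce what the paper establishes about this statement, the honest answer is that the paper establishes nothing beyond formulating it precisely (via \cite{D--H--Rscattering}) and proving a linearised, gauge-fixed, $a=0$ special case as supporting evidence.
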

A precise mathematical formulation of this conjecture can be found\footnote{A discussion on this formulation, along with other weaker formulations of the problem, can be found in the introduction of \cite{D--H--R}.} in \cite{D--H--Rscattering} where it is posed, analogously to the monumental work \cite{C--K} of Christodoulou--Klainerman, in the context of general relativity as a \emph{hyperbolic Cauchy initial value problem}, a correspondence first identified in the pioneering \cite{C-B} of Choquet-Bruhat. Observe in particular  that it is the exterior Kerr family itself which is posited to be stable, as opposed to a single member of this family, for it is expected that a general perturbation contributes both mass and angular momentum to the final state of the black hole.

This conjecture remains open. Indeed, to even attempt to give a positive resolution one must first address the issue of \emph{gauge}. For in the theory of general relativity one can only distinguish between spacetimes up to an equivalence class that is determined by diffeomorphisms. In fact, the Einstein vacuum equations as formulated in \eqref{introEE}  impose an underdetermined system on the spacetime metric $g$, with this degenerancy arising from the diffeomorphism invariance of the theory. It follows that any attempt to resolve the conjecture must necesarily \emph{specify a gauge}. 

In regards to questions pertaining to the issue of stability, particular success has been achieved via the specification of a \emph{wave gauge} in which the hyperbolicity of the Einstein equations is made manifest by reducing \eqref{introEE} to a system of quasilinear wave equations. Indeed, such a gauge was originally employed by Choquet-Bruhat in \cite{C-B} to demonstrate \emph{local} well-posedness of the Einstein equations. That this gauge could in fact be utilised to understand the \emph{global} dynamics of solutions to the Einstein equations was exhibited by Lindblad and Rodnianski in their pioneering\footnote{This result was surprising as there existed convincing heuristics in the literature that such a gauge should become singular globally.} \cite{L--R}, in which they established the nonlinear stability of the trivial solution Minkowski space. Various authors have since extended this by considering either different matter models (see \cite{L--R}, \cite{Sp}, \cite{LeF--M}, \cite{L--T}, \cite{F--J--S}) or different asymptotics (see \cite{Huneau}). See also \cite{L} and \cite{H--Vmink}.
The novelty of this approach due to Lindblad and Rodnianski, as compared to the previously mentioned result \cite{C--K} of Christodoulou and Klainerman, is not only that the proof is dramatically simpler (although one has the caveat of obtaining less detailed asymptotics on the spacetimes constructed) but moreover that it suceeds despite the fact that the Einstein equations in a wave gauge satisfy only the so-called `weak null condition'.

Motivated by this success, it is the intention of this paper to lend credence to the notion that the specification of a \emph{generalised wave gauge} (see \cite{Friedrich} and \cite{C-Bbook} for a precise definition) will be sufficient to resolve the above conjecture in the affirmative. Indeed, one advantage of a generalised wave gauge over the wave gauge is that one can view this gauge as a natural generalisation of the latter to the situation where one is perturbing about a spacetime with non-trivial curvature. This is manifested by the observation that the linearisation of the Einstein equations, as expressed in a generalised wave gauge, \emph{exhibits a particularly amenable structure upon linearisation about a non-trivial background solution}.  Naturally, this structure does not persist if one instead imposes a wave gauge, unless the background one linearises about is the trivial Minkowski spacetime. A more precise definition of the generalised wave gauge can be found in section \ref{OVTheEinsteinequationsinageneralisedwavegauge} of our detailed overview.

In this paper we shall demonstrate the \emph{linear stability} of the Schwarzschild exterior subfamily of the Kerr exterior family with $a=0$ as solutions to \eqref{introEE} under the imposition of a judicious generalised wave gauge:
\begin{theorem*}
All sufficiently regular solutions to the equations of linearised gravity around Schwarzschild i.e. the system of equations that result from linearising the Einstein vacuum equations \eqref{introEE}, as expressed in a (particular and explicit) generalised wave gauge, about a fixed member of the Schwarzschild exterior family
\begin{enumerate}[i)]
	\item remain uniformly bounded on the Schwarzschild exterior
	\item decay to a member of the linearised Kerr family.
\end{enumerate}
\end{theorem*}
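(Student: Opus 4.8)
The plan is to reproduce, at the level of the \emph{structure} of the argument, the scheme of \cite{D--H--R}, but to replace the geometric transport--elliptic hierarchy of the double null gauge by a hierarchy of \emph{wave equations} that is made available precisely by the choice of generalised wave gauge. First I would write out the linearised system explicitly: linearising \eqref{introEE} together with the generalised wave coordinate condition about a fixed Schwarzschild background yields a system of wave equations for the components of the metric perturbation that is manifestly hyperbolic --- so that well-posedness of the Cauchy problem for regular seed data is immediate --- and that is coupled only through lower-order terms arranged in a triangular (hierarchical) fashion. Decomposing into spherical harmonics and into the even (polar) and odd (axial) sectors, one isolates the two scalar gauge-invariant quantities --- the odd-sector Regge--Wheeler quantity and the even-sector Zerilli/Regge--Wheeler-type quantity --- which, after a suitable change of unknowns, satisfy decoupled Regge--Wheeler equations $\Box_{\mathrm{RW}}\Psi = 0$ on the Schwarzschild exterior.

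The second step is to establish quantitative boundedness and inverse-polynomial decay for these Regge--Wheeler quantities. Here I would run the now-standard vector-field toolkit on a hyperboloidal foliation $\{\Sigma_{\tau^*}\}$ terminating at $\hplus$ and at $\iplus$: a degenerate integrated-local-energy-decay (Morawetz) estimate in which the trapping at the photon sphere is handled by a carefully constructed multiplier, the red-shift estimate of Dafermos--Rodnianski near $\hplus$, and the $r^p$-weighted energy hierarchy near $\iplus$, which together yield uniform boundedness of the energy flux through $\Sigma_{\tau^*}$ and decay $\mathbb{E}[\Psi](\tau^*) \lesssim (\tau^*)^{-2}$ together with its higher-order analogues. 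Commuting with the Killing fields of Schwarzschild and with the red-shift vector field then upgrades this to pointwise decay for $\Psi$ and its derivatives.

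The third --- and I expect the hardest --- step is to propagate this decay to the entire metric perturbation, i.e.\ to all the seed and admissible quantities. Unlike in the double null gauge, where one integrates transport equations along the null directions and solves elliptic equations on the spheres, here one exploits that each remaining component of $\glin$ satisfies an \emph{inhomogeneous} wave equation whose right-hand side is built, schematically, out of components already estimated together with the Regge--Wheeler quantities. One therefore works through the hierarchy outward from the gauge-invariant quantities: at each stage the inhomogeneity carries a known decay rate, and the genuine difficulty is that the wave operator on the Schwarzschild exterior becomes a good operator only after suitable $r$-weighting, so the $r^p$-weighted estimates must be run with \emph{exactly} the right weights for the source to be integrable both towards $\iplus$ and, after red-shift commutation, towards $\hplus$. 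The borderline cases are the components with the slowest asymptotic decay --- in a wave gauge these are precisely the ones responsible for the ``weak null'' behaviour in the nonlinear problem --- and controlling them requires careful bookkeeping of which terms are truly critical and which gain an extra power of $r$ or of the background lapse $1-2M/r$.

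Finally, the low modes $\ell = 0$ and $\ell = 1$ must be treated separately, since they do not decay to zero: the $\ell = 0$ even mode converges to a linearised Schwarzschild (mass) solution and the $\ell = 1$ odd mode to a linearised Kerr (angular-momentum) solution, the relevant parameters being read off as conserved fluxes (of the linearised Weyl tensor, or Komar-type integrals) from the initial seed data. One then subtracts the corresponding exact linearised Kerr solution --- which itself solves the linearised equations in our generalised wave gauge, up to an explicit residual pure-gauge solution generated by a vector field solving the background wave equation --- and shows that the difference now lies in the decaying regime controlled by the previous steps. Assembling the boundedness statements of steps two and three gives conclusion (i), and combining these with steps three and four gives conclusion (ii).
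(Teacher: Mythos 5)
Your first two steps track the paper closely (the paper's version of the mode/parity decomposition is the covariant one via the $\slashed{\Pi}$ operators, and the even-sector quantity satisfies the Zerilli equation with its nonlocal $\szetap{p}$ potential rather than a literal Regge--Wheeler equation, but these are cosmetic differences), and the decay statement for $\Philin$ and $\Psilin$ is exactly the paper's Theorem 2. The genuine gap is in your third step. You propose to propagate decay to the remaining metric components by viewing each as solving an inhomogeneous wave equation with an already-estimated source and climbing the hierarchy outward from the gauge-invariant quantities. This cannot close as stated, because a source estimate controls only the inhomogeneous part of the solution: the homogeneous part of each component's Cauchy data is, for a general solution, populated by residual pure gauge solutions $\glin=\mcall_v g_M$ (with $v$ solving the gauge-preserving wave equations \eqref{feelgrim1}--\eqref{feelgrim2}), and these are invisible to $\Philin$ and $\Psilin$ and need not decay --- indeed they need not even be bounded. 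This is why the paper insists that even the boundedness statement (i) requires a quantitative gauge normalisation of the data, and why your gauge subtraction, which you perform only for the $l=0,1$ modes, must in fact be carried out at \emph{all} spherical harmonics. Concretely, the paper adds an explicitly constructed pure gauge solution $\GfMa$, determined from the seed data on $\Sigma$ alone, which enforces the Regge--Wheeler-type conditions $\sghatlin=0$, $\sdiv\mling=0$, $\trgablin=0$ globally (these propagate because the relevant components satisfy homogeneous wave equations, e.g.\ \eqref{eqnlingrav4}); only after this normalisation is the solution, minus $\Ke_{\mathsmaller{\mfM,\mfa}}$, controlled by the gauge-invariant quantities.

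Once that normalisation is in place, the paper's route to transferring decay is also structurally different from, and much lighter than, the one you sketch: in the normalised gauge the linearised wave gauge conditions and the gauge-invariant hierarchy force the remaining components to be \emph{elliptically} slaved to $\Philin$ and $\Psilin$ via the invertible angular operators $\mcaltf$ and $\mcaltv$ (relations \eqref{a1}--\eqref{a3}), so one needs only elliptic estimates on the $2$-spheres, at the cost of a loss of angular derivatives --- no $r^p$-weighted estimates for inhomogeneous wave equations, no borderline bookkeeping of slowly decaying components. Your wave-hierarchy strategy is not unreasonable in principle, but it is substantially harder and, without the gauge normalisation above, it is not merely harder but false.
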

Note that the dynamic convergence to a member of the linearised Kerr family is to be understood within the wider context of the stability of the Kerr exterior family. In terms of resolving the conjectured stability of the Kerr family, the generalised wave gauge thus passes the first test put to it by the (less elaborate) Schwarzschild exterior subfamily.

A more comprehensive version of the Theorem is Theorem 1 to be found in section \ref{OVThemaintheoremandoutlineoftheproof} of the overview. However, already at this stage it is proper to discuss the issue surrounding \emph{residual gauge freedom}. This freedom arises from the fact that imposing the generalised wave gauge on a spacetime does not fully specify the gauge. This is directly manifested in the linear theory by the existence of a \emph{residual} class of infinitesimal diffeomorphisms on $\Mgs$ which preserve the generalised wave gauge, thus generating an explicit 
class of solutions to the equations of linearised gravity known as \emph{pure gauge solutions}. The existence of such solutions, along with the presence of the linearised Kerr family, implies that one can only prove a decay statement for solutions to the equations of linearised gravity \emph{up to} the addition of some pure gauge solution and some member of the linearised Kerr family. Indeed, even part i) of our Theorem requires a quantitative gauge-normalisation of initial data. However, this `initial-data-normalisation' is in fact sufficient to obtain part ii) of our Theorem.

The original version of the above Theorem where decay is obtained \emph{only} after the addition of a dynamically determined residual pure gauge solution and in which the Einstein equations are expressed in a \emph{double null gauge} as opposed to a generalised wave gauge was proven by Dafermos, Holzegel and Rodnianski in \cite{D--H--R}. More specifically, their analysis focused on the null decomposed linearised Bianchi equations for the Weyl curvature, coupled to the linearised null structure equations. This approach is therefore in keeping with that of Christodoulou and Klainerman in \cite{C--K}. In particular, the body of work presented here complements that of Dafermos, Holzegel and Rodnianski in a similar vein as to how the result \cite{L--R} of Lindblad and Rodnianski complemented that of Christodoulou and Klainerman.  

One caveat of our Theorem is that the gauge in which we obtain decay is not asymptotically flat. In fact, by modifying our choice of generalised wave gauge it is possible to obtain a Theorem statement as above in addition to which asymptotic flatness is preserved. However, this modification is slightly cumbersome and is relegated to our upcoming \cite{J}. Indeed, the main purpose of this paper is to exhibit in the simplest manner possible the fact that by making a judicious choice of generalised wave gauge one can establish a statement of linear stability with relative ease.

The proof of the theorem relies crucially on the fact that one can extract two fully decoupled scalar wave equations from the equations of linearised gravity. That this is possible is well-known in the literature and corresponds to the remarkable discovery by Regge--Wheeler \cite{RW} and Zerilli \cite{Z} that certain \emph{gauge-invariant} quantities decouple from the full system of linearised Einstein equations into the celebrated Regge--Wheeler and Zerilli equations. Indeed, by combining this decoupling with a sagacious choice of generalised wave gauge one can gauge-normalise initial data in such a way as to cause all linearised metric quantities in this gauge to be fully determined by those two quantities that satisfy the Regge--Wheeler and Zerilli equations respectively. This has the effect of essentially reducing the Theorem to a boundedness and decay statement for solutions to said equations. We remark that this `initial-data-gauge' is nothing but the well-known Regge--Wheeler gauge (see \cite{RW}) adapted to the linearised Einstein equations as they are expressed in a generalised wave gauge. Consequently, \emph{realising this Regge--Wheeler gauge within the framework of a well-posed formulation of linearised gravity comprises one of the key aspects of our work}.

A decay statement for solutions to the Regge--Wheeler equation was established by Holzegel in \cite{Holz}, with earlier results of \cite{B--S} due to Blue and Soffer, whereas a decay statement for solutions to the Zerilli was obtained independently by the author \cite{Me} and Hung--Keller--Wang in \cite{H--K--W}. We also note \cite{A--B--W}. These results regarding solutions to the Regge--Wheeler and Zerilli equations will in fact be applied freely in this paper, although we shall reprove them in a setting more in keeping with the presentation of the main body of this work in our upcoming \cite{Me}. We note that in doing so we rely heavily upon the fundamental techniques developed by Dafermos--Rodnianski in \cite{D--R} and \cite{newmethod} by which one establishes a quantitative rate of dispersion for solutions to the scalar wave equation on the Schwarzschild exterior.

We now discuss other results that are related to our work. We first note that the recent remarkable result \cite{H--V} of Hintz and Vasy whereby the global nonlinear stability of the Kerr--De Sitter family\footnote{These are a family of solutions to $\text{Ric}[g]=\Lambda g$ with $\Lambda>0.$} (\cite{K}, \cite{C}) of black holes was established, for small rotation parameter $a$, proceeded by employing a generalised wave gauge. This was later extended by Hintz in \cite{H} to the Kerr--Newman--De Sitter family (\cite{K}, \cite{C}, \cite{K--N--DS}). A particular feature of this problem however is the presence of a positive cosmological constant which has the effect of making it fundamentally different in nature than one which concerns the nonlinear stability of the Kerr family. In any case, it would be interesting to compare the `initial-data-gauge' we employ in this paper with the (residual) gauge choices that are utilised in the aforementioned works.

Moreover, for a result towards establishing the linear stability of the Schwarzschild exterior which employs a so-called Chandrasekhar gauge in the setting of a ``metric perturbations'' approach, see \cite{H--K--W}. Finally, for further references pertaining to the linearised Einstein equations about the Schwarzschild exterior family, see \cite{Holzconv}, \cite{Moncrief2} and \cite{S--T}--\cite{T}.

We end the introduction with a brief discussion towards nonlinear applications and future work. Indeed, in view of the fact that one must in effect linearise about the solution one expects to approach, providing a positive resolution to the conjectured stability of the Kerr exterior family by utilising a generalised wave gauge, even for the Schwarzschild exterior subfamily, would require upgrading the linear theory established here to the full Kerr exterior family. Nevertheless, in \cite{D--H--R} Dafermos, Holzegel and Rodnianski formulated a restricted nonlinear stability conjecture regarding the Schwarzschild exterior family for which the improved rate of dispersion embodied in part ii) of our Theorem is in principle sufficient, when coupled with the understanding gained by Lindblad--Rodnianski in \cite{L--R}, to treat the nonlinearities present in the system of equations that results from expressing \eqref{introEE} in a generalised wave gauge, thus paving the way for a resolution of this conjecture by means of a generalised wave gauge. A precise formulation of the conjecture can be found in section \ref{OVArestrictednonlinearstabilityconjecture} of the overview. Remarkably, a proof of said conjecture in the symmetry class of axially symmetric and polarised perturations has very recently been announced by Klainerman and Szeftel over a series of three papers, the first of which can be found in \cite{K--S}. 

\section{Overview}\label{Overview}

\noindent We shall now give a complete overview of the paper. The overview is to be divided into five segments.

In the first part, section \ref{OVTheequationsoflinearisedgravityaroundSchwarzschild}, we describe the process behind which one arrives at the equations of linearised gravity around Schwarzschild. In addition, two special classes of solutions to the equations of linearised gravity are discussed, namely the linearised Kerr and pure gauge solutions. 

In the second part, section \ref{OVAneffectivescalarisationoftheequationsoflinearisedgravity}, we discuss how one extracts two fully decoupled scalar wave equations from the equations of linearised gravity. This decoupling will prove vital in establishing the linear stability of the Schwarzschild exterior solution.

In the third part, section \ref{OVTheCauchyproblemfortheequationsoflinearisedgravity}, the well-posedness of the equations of linearised gravity as a Cauchy problem is established. This foundational statement is complicated by the existence of \emph{constraints}.

In the fourth part, section \ref{OVGaugenormalisedsolutionstotheequationsoflinearisedgravity}, we discuss particular solutions to the equations of linearised gravity which differ from a general solution by the addition of certain pure gauge solutions. These \emph{gauge-normalised} solutions will ultimately serve to address the complications arising from the existence of the linearised Kerr and pure gauge solutions.

In the fifth part, section \ref{OVThemaintheoremandoutlineoftheproof}, we discuss our main theorem regarding quantitative boundedness and decay statements for these gauge-normalised solutions. The proof of the theorem, an outline of which shall be given, relies crucially on the decoupling discussed in part two of the overview.

Finally, in section \ref{OVArestrictednonlinearstabilityconjecture}, we state the restricted nonlinear stability conjecture of Dafermos, Holzegel and Rodnianski regarding the Schwarzschild exterior solution for which the results of this paper are in principle sufficient to attempt to resolve by means of a generalised wave gauge.\newline

Before we begin, the author would like to acknowledge that the organisational structure of the overview, and indeed the paper, stays very close to that found in \cite{D--H--R}. This of course facilitates a comparison between the two approaches although needless to say it additionally alleviates some of the structural burden placed upon the author in preparing this document.

\subsection{The equations of linearised gravity around Schwarzschild}\label{OVTheequationsoflinearisedgravityaroundSchwarzschild}

We commence the overview with a discussion as to how one linearises the Einstein equations, as expressed in a generalised wave gauge, about the Schwarzschild exterior spacetime, thus ultimately arriving at the equations of linearised gravity around Schwarzschild. 

It is by deriving quantitative decay bounds on solutions to these equations of linearised gravity that one establishes the linear stability of the Schwarzschild exterior family as solutions to the Einstein vacuum equations in a generalised wave gauge. 
 
 This part of the overview corresponds to sections \ref{ThevacuumEinsteinequationsinageneralisedwavegauge}-\ref{TheequationsoflinearisedgravityaroundSchwarzschild} in the main body of the paper.
 
\subsubsection{The Einstein equations in a generalised wave gauge}\label{OVTheEinsteinequationsinageneralisedwavegauge}

We begin by describing the notion of a generalised wave gauge for an abstract Lorentzian manifold, giving then a description of the vacuum Einstein equations when expressed in this gauge. 

This section of the overview corresponds to section \ref{ThevacuumEinsteinequationsinageneralisedwavegauge} of the main body of the paper.\newline

Let $\big(\bold{\mcalm}, \bold{g}\big)$ and $\big(\bold{\mcalm}, \bold{\overline{g}}\big)$ be $3+1$ globally hyperbolic Lorentzian manifolds with\footnote{Here we recall the notation $T^k(\mcalm)$ for the space of $k$-covariant tensor fields on $\mcalm$.} $\boldsymbol{f}:T^2(\bmcalm)\times T^2(\bmcalm)\rightarrow T\bmcalm$ a smooth map. 

Then following \cite{C-Bbook}, $\boldsymbol{g}$ is said to be in a generalised $\boldsymbol{f}$-wave gauge with respect to $\boldsymbol{\overline{g}}$ iff the identity map
\begin{align*}
\textbf{\textnormal{Id}}:\big(\boldsymbol{\mcalm}, \boldsymbol{g}\big)\rightarrow\big(\boldsymbol{\mcalm}, \boldsymbol{\overline{g}}\big)
\end{align*}
is an $\boldsymbol{f}(\bg, \bog)$-wave map. Denoting by $\bC_{\bg, \bog}$ the connection tensor between $\bg$ and $\bog$, this amounts to
\begin{align}\label{fm}
\boldsymbol{g^{-1}}\cdot\bC_{\bg, \bog}=\boldsymbol{f}(\bg, \bog).
\end{align}

We note that, since the condition \eqref{fm} is equivalent to solving a system of semilinear wave equations, under sufficient regularity one can always\ find an open set $\boldsymbol{U}\subset\boldsymbol{\mcalm}$ such that
\begin{align*}
\textbf{\textnormal{Id}}:\big(\boldsymbol{U}, \boldsymbol{g}\big|_{\boldsymbol{U}}\big)\rightarrow\big(\boldsymbol{U}, \boldsymbol{\overline{g}}\big|_{\boldsymbol{U}}\big)
\end{align*}
is an $\boldsymbol{f}$-wave map. The generalised wave gauge is thus \emph{locally well-posed}. Moreover, observe that if one sets
\begin{align*}
\boldsymbol{f}=0,\qquad\bmcalm=\mathbb{R}^4\qquad\text{and}\qquad\bog=\eta,
\end{align*} 
with $\eta$ the Minkowski metric, then choosing a globally inertial system of coordinates on $\mathbb{R}^4$ one recovers the wave gauge employed so successfully by Lindblad and Rodnianski in \cite{L--R}.

If one assumes that $\bg$ is a generalised $\boldsymbol{f}$-wave gauge with respect to $\bog$ then the Einstein vacuum equations for the metric $\boldsymbol{g}$,
\begin{align*}
{\textbf{Ric}}[\bg]=0,
\end{align*}
reduce to a \emph{quasilinear tensorial wave equation} on $\bg$. A schematic description\footnote{The wave-like nature of this system is a consequence of the Lorentzian character of $\boldsymbol{g}$.} is as follows:
\begin{align}
\Big(\boldsymbol{g^{-1}}\cdot\boldsymbol{\on}^2\Big)\boldsymbol{g}+\bC_{\bg, \bog}\cdot\bC_{\bg, \bog}+\overline{\textbf{Riem}}\cdot\boldsymbol{g}&=\boldsymbol{\mcall}_{\boldsymbol{f}(\bg, \bog)}\boldsymbol{g},\label{m}\\
\boldsymbol{g^{-1}}\cdot\bC_{\bg, \bog}&=\boldsymbol{f}(\bg, \bog)\label{b}.
\end{align}
Here, $\boldsymbol{\overline{\textbf{Riem}}}$ and $\boldsymbol{\on}$ are the Riemann tensor and Levi-Civita connection of $\boldsymbol{\overline{g}}$ respectively.

We note that since the expression $\boldsymbol{f}(\bg, \bog)$ is at the level of the metric $\bg$, then under sufficient regularity the system of equations given by \eqref{m} coupled with \eqref{b}, which corresponding to the \emph{Einstein vacuum equations as expressed in a generalised $\boldsymbol{f}$-wave gauge with respect to $\bog$}, are always locally well-posed as a hyperbolic initial value problem with constrained\footnote{The constraints arise as a consequence of the Gauss--Codazzi equations.} initial data. The generalised wave gauge thus captures the essential hyperbolicity of the Einstein equations. See the book \cite{C-Bbook} of Choquet-Bruhat for details.
\newline

In the introduction, we discussed the conjecture relating to the stability of the Kerr exterior family and how one might seek to resolve it in the affirmative by utilising a generalised wave gauge. One can now make this aim slightly more precise with the statement that one wishes to establish an appropriate notion of stability for solutions to \eqref{m} and \eqref{b} \textbf{for which $\big(\boldsymbol{\mcalm}, \boldsymbol{\overline{g}}\big)$ is set to be any fixed member of the subextremal exterior Kerr family.} This strategy was employed successfully by Hint--Vasy and Vasy in \cite{H--V} and \cite{H} respectively for the case of the Kerr--De Sitter and Kerr--Newman--De Sitter family of black holes with small angular momentum parameter (see also \cite{H--Vmink}).

In this paper, we shall concern ourselves with the instance for which this member resides within the Schwarzschild exterior subfamily. Of course, that still leaves the freedom in specifying the map $\boldsymbol{f}$!\newline\newline

For a non-schematic description of the Einstein vacuum equations as expressed in a generalised wave gauge, see section \ref{TheEinsteinequations}.

\subsubsection{The exterior Schwarzschild spacetime}\label{OVTheexteriorSchwarzschildspacetime}

The Schwarzschild family $\big(\mcals, g_M\big)$, with $M\in\mathbb{R}^+$, constitute the unique family of spherically symmetric solutions to the Einstein vacuum equations. A particularly relevant local description of this family is motivated by the notion that the Einstein equations, as expressed in a generalised wave gauge, are most naturally formulated in terms of a \emph{Cauchy} problem\footnote{In particular, note that the generalised wave gauge condition \eqref{fm} is a condition on the first order derivatives of the metric.}. This suggests describing $g_M$ in a system of coordinates on $\mcals$ which adequately captures the fact that one can foliate $\big(\mcals, g_M\big)$ by Cauchy hypersurfaces. A natural candidate which also describes the event horizon in a regular fashion are the so-called Schwarzschild-star coordinates.

In this coordinate system the metric takes the form
\begin{align}\label{PB}
g_M=-\bigg(1-\frac{2M}{r}\bigg)\ud {t^*}^2+\frac{4M}{r}\ud t^*\ud r+\bigg(1+\frac{2M}{r}\bigg)\ud r^2+r^2\big(\ud\theta^2+\sin^2\theta\ud\varphi^2\big)
\end{align}
where the coordinates take values
\begin{align*}
(t^*, r)\in (-\infty, \infty)\times (0, \infty),\qquad (\theta, \varphi)\in S^2.
\end{align*}
Restricting now the coordinate $r$ to the range $r\in[2M,\infty),$ with the null hypersurface
\begin{align*}
\mcalh^+:=\big\{(t^*, 2M, \theta, \varphi)|(t^*, \theta, \varphi)\in \mathbb{R}\times S^2\big\}
\end{align*}
describing the so-called future event horizon, one thus has the Schwarzschild-star coordinate system on the Schwarzschild exterior spacetime viewed now as a submanifold with boundary $\Mgs$ of $\big(\mcals, g_M\big)$. Since moreover $t^*$ is now a globally regular function on $\mcalm$ whose gradient is everywhere time-like, it follows that the hypersurfaces of constant $t^*$, which we denote by $\Sigma_{t^*}$, describe a foliation of $\big(\mcalm, g_M\big)$ by Cauchy hypersurfaces. In addition, observe that the causal vector field 
\begin{align*}
T=\pt
\end{align*}
is manifestly Killing. The Schwarzschild exterior spacetime is thus \emph{static}. Moreover, this same vector field determines a global time orientation on $\Mgs$. 

A Penrose diagram of the spacetime $\Mgs$ is given in Figure 1.\newline\newline

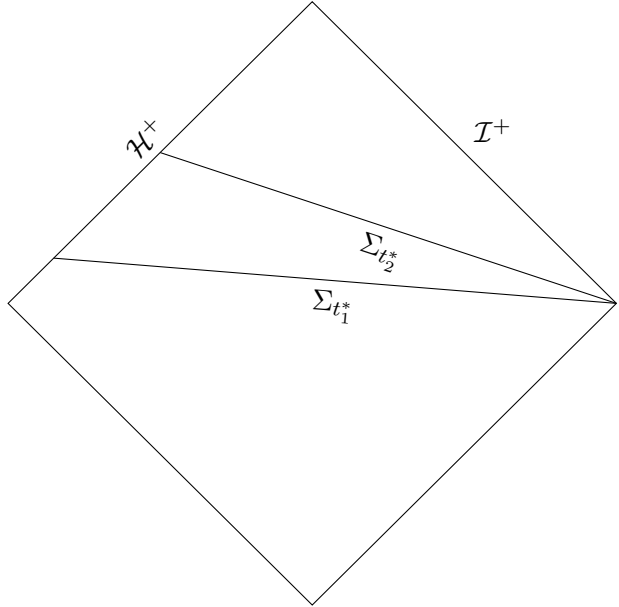
\begin{figure}[!h]
	\centering
	\begin{tikzpicture}
	\node (I)    at ( 4,0)   {};

	\path 
	(I) +(90:4)  coordinate (Itop)
	+(-90:4) coordinate (Ibot)
	+(180:4) coordinate (Ileft)
	+(0:4)   coordinate (Iright)
	;
	\draw  (Ileft) -- (Itop)node[midway, above, sloped] {$\hplus$} -- (Iright) node[midway, above right]    {$\cal{I}^+$} -- (Ibot)-- (Ileft)-- cycle;
	
	\draw 
	(0:8) -- node[midway, below, sloped]{$\Sigma _{t_1^*}$}  ($(Itop)!.85!(Ileft)$);
	
	\draw 
	(0:8) -- node[midway, below, sloped]{$\Sigma _{t_2^*}$}  ($(Itop)!.5!(Ileft)$);
	
	\end{tikzpicture}
	\caption{A Penrose diagram of $\Mgs$ depicting the Cauchy hypersurfaces $\Sigma_{t_1^*}$ and $\Sigma_{t_2^*}$.}
\end{figure}
In the main body of the paper, we will actually use the above Schwarzschild-star coordinates to \emph{define} the Schwarzschild exterior spacetime, modulo the standard degeneration of the coordinates on $S^2$, without reference to the ambient spacetime $\big(\mcals, g_M\big)$. See section \ref{ThedifferentialstructureandmetricoftheSchwarzschildexteriorspacetime} for details.

Moreover, the limit along future-directed outgoing null cones as constructed in the spacetime $\Mgs$ will informally be referred to as \emph{future null infinity}, depicted in Figure 1 as $\iplus$. See section \ref{AfoliationofMthatpenetratesbothhplusandiplus} for details.

\subsubsection{The equations of linearised gravity}\label{OVTheequationsoflinearisedgravity}

The equations of interest in this paper are those that result from linearising the Einstein vacuum equations, as expressed in a generalised wave gauge defined with respect to a fixed Schwarzschild exterior solution, about the very same Schwarzschild exterior solution. To that end, we now describe the linearisation process and present, following \cite{D--H--R}, the so-called `equations of linearised gravity' which result from making a particular choice of the map $\boldsymbol{f}$, at least to linear order. 

This section of the overview corresponds to section \ref{TheequationsoflinearisedgravityaroundSchwarzschild} of the main body of the paper.\newline

One first identifies the abstract Lorentzian manifold $\big(\bmcalm, \bog\big)$ of section \ref{OVTheEinsteinequationsinageneralisedwavegauge} with that of a \emph{fixed} member of the Schwarzschild exterior family $\Mgs$. The schematic description of the Einstein vaccum equations as expressed in \emph{this} generalised wave gauge, the system \eqref{m} and \eqref{b}, then translates to to the following:
\begin{align}
\Big(\boldsymbol{g^{-1}}\cdot\n_M^2\Big)\boldsymbol{g}+\bC_{\bg, g_M}\cdot\bC_{\bg, g_M}+\text{Riem}_M\cdot\boldsymbol{g}&=\mcall_{\boldsymbol{f}(\bg, g_M)}\boldsymbol{g},\label{m2}\\
\boldsymbol{g^{-1}}\cdot\bC_{\bg, g_M}&=\boldsymbol{f}(\bg, g_M)\label{b2}.
\end{align}
In particular, $\nabla_M$ is now the Levi-Civita connection of $g_M$ with $\boldsymbol{g}$ assumed to be a Lorentzian metric on $\mcalm$ and $\boldsymbol{f}:T^2(\mcalm)\times T^2(\mcalm)\rightarrow T\mcalm$ a smooth map.

In order to formally linearise, we consider a smooth 1-parameter family of solutions $\boldsymbol{g}(\epsilon)$ to the Einstein vacuum equations, defined on $\mcalm$, with $\boldsymbol{g}(0)=g_M$. We will also demand that each solution $\boldsymbol{g}(\epsilon)$ is in a generalised $\boldsymbol{f}$-wave gauge with respect to $g_M$ where $\boldsymbol{f}$ is a smooth map such that $\boldsymbol{f}(g_M,g_M)=0$.

Observing thus that $g_M$ is indeed a solution to the Einstein vacuum equations as expressed in a generalised $\boldsymbol{f}$-wave gauge with respect to $g_M$, to linearise we consider a formal power series expansion of $\boldsymbol{g}(\epsilon)$ in terms of $\epsilon$:
\begin{align*}
\boldsymbol{g}(\epsilon)&=g_M+\epsilon\cdot\glin+o\big(\epsilon^2\big).
\end{align*}
Here, $\glin$ is a symmetric 2-covariant tensor field on $\mcalm$ denoting the linearised metric. Thus, in keeping with \cite{D--H--R}, linearised quantities are denoted by a superscript $(1)$. We also write
\begin{align*}
\boldsymbol{f}(\boldsymbol{g}(\epsilon), g_M)=\epsilon\cdot Df\big|_{g_M}(\glin)+o\big(\epsilon^2\big)
\end{align*}
where $Df\big|_{g_M}:T^2(\mcalm)\rightarrow T\mcalm$ is a smooth linear map denoting the linearisation of the map $\boldsymbol{f}(\cdot, g_M)$ at $g_M$.

One then arrives at the linearised equations by inserting this formal power series expansion into the system of equations defined by Einstein vacuum equations, as expressed in the generalised $\boldsymbol{f}$-wave gauge with respect to $g_M$, and discarding those terms that appear to higher than linear order in $\epsilon$. Proceeding in this manner leads to the following\footnote{Note we have dropped the subscript $M$ notation.} system of equations:
\begin{align}
\Box\glin-2\text{Riem}\cdot\glin&=\mcall_{\flin}g_M,\label{OVeqnlingrav1}\\
\text{div}\glin-\frac{1}{2}\ud\text{tr}\glin&=\flin\label{OVeqnlingrav2}.
\end{align}
Here, $\text{div}$ and $\Box$ are the spacetime divergence and Laplacian of $g_M$ respectively, with $\ud$ the exterior derivative on $\mcalm$. Moreover, we have defined the vector field $\flin$ according to
\begin{align*}
\flin:=Df\big|_{g_M}(\glin).
\end{align*}

The linearisation of the Einstein vacuum equations, as expressed in a generalised wave gauge, around Schwarzschild thus comprise of the \emph{tensorial system of linear wave equations \eqref{OVeqnlingrav1} coupled with the divergence relation \eqref{OVeqnlingrav2}} on the Schwarzschild exterior spacetime. We remark that these are nothing but the linearised Einstein equations in a \emph{generalised} Lorentz gauge. See the book of Wald \cite{Wald}.

The structure of the system of equations \eqref{OVeqnlingrav1}-\eqref{OVeqnlingrav2} is made more transparent by employing the so called $2+2$ formalism. In utilising this $2+2$ formalism, a formalism which was first adopted in \cite{G--S}, one exploits the fact that the topology of $\mcalm$ has the product structure $\mcalq\times S^2$ where $\mcalq$ is 2-dimensional manifold with boundary. Informally, this allows a decomposition of objects on $\mcalm$ into their parts `tangent' to $\mcalq$ and $S^2$. In particular, one can decompose tensor fields on $\mcalm$ into what we term as $\mcalq$-tensor fields and $S$-tensor fields respectively (see section \ref{The2+2decompositionoftensorfieldsonM} in the bulk of the paper for a precise definition). 

Indeed, applying this decomposition to the linearised metric $\glin$ yields
\begin{align*}
\glin\rightarrow\qglin, \mling, \sglin
\end{align*}
where $\qglin$ is a symmetric 2-covariant $\mcalq$-tensor, $\sglin$ is a symmetric 2-covariant $S$-tensor and $\mling$ is a $\mcalq\otimes S$ 1-form. Conversely, the decomposition applied to the 1-form $\flin$ on $\mcalm$ returns
\begin{align*}
\flin\rightarrow\tflin, \sflin
\end{align*}
where $\tflin$ is $\mcalq$ 1-form and $\sflin$ is an $S$ 1-form.

Decomposed quantities in hand, the system of equations \eqref{OVeqnlingrav1}-\eqref{OVeqnlingrav2} for the pair
\begin{align*}
\Big(\glin, \flin\Big)
\end{align*}
will now reduce to a system of coupled wave equations for the tuple
\begin{align*}
\bigg(\qglin, \mling, \sglin, \tflin, \sflin\bigg).
\end{align*}
Before we present a sample of the reduced equations however, it will behoove us to perform a further decomposition upon $\qglin$ and $\sglin$ into a symmetric, traceless 2-covariant $\mcalq$-tensor field and a symmetric, traceless 2-covariant $S$-tensor field respectively:
\begin{align*}
\qglin=\gabhatlin+\frac{1}{2}\qg_M\cdot\trgablin, \qquad\sglin=\sghatlin+\frac{1}{2}\sg_M\cdot\trsglin
\end{align*}
with $\qg_M$ and $\sg_M$ the Lorentzian $\mcalq$-metric and the Riemannian $S$-metric resulting from applying the $2+2$ decomposition to $g_M$ accordingly. In particular, the scalar functions $\trgablin$ and $\trsglin$ on $\mcalm$ correspond to the respective traces of $\qglin$ and $\sglin$ with $\qg_M$ and $\sg_M$.

One thus reduces the system of equations \eqref{OVeqnlingrav1}-\eqref{OVeqnlingrav2} to the system of coupled tensorial wave equations satisfied by the collection
\begin{align*}
\bigg(\gabhatlin, \trgablin, \mling, \sghatlin, \trsglin, \tflin, \sflin\bigg).
\end{align*}
A sample of this \emph{system of gravitational perturbations} is presented below.\newline

\noindent For the $\mcalq\otimes S$ 1-form $\mling$,
\begin{align}\label{ovpoo}
\qbox\mling+\slap\mling-\frac{1}{r^2}\ommu\mling=\frac{2}{r}\ud r\,{\otimes}\bigg(\frac{2}{r}\mling_P+\sdiv\sghatlin+\frac{1}{2}\sn\trsglin-\sflin\bigg)-\frac{2}{r}\sn{\otimes}\gabhatlin_P-\frac{1}{r}\ud r\,\sn{\otimes}\trgablin+\qn{\otimes}\sflin+\sn{\otimes}\tflin.
\end{align}
For the symmetric, traceless 2-covariant $\sm$-tensor $\sghatlin$,
\begin{align}\label{ovpoo1}
\qbox\sghatlin+\slap\sghatlin-\frac{2}{r}\qn_{P}\sghatlin-\frac{4}{r}\frac{\mu}{r}\sghatlin&=-\frac{2}{r}\sn\otimeshat\mling_P-2\sdst\sflin.
\end{align}
Finally, the $S$-component of the wave gauge condition 
\begin{align}\label{ovpoo2}
-\qd\mling-\frac{1}{2}\sn\trgablin+\sdiv\sghatlin+\frac{2}{r}\mling_P&=\sflin.\newline
\end{align}

Here, all geometric objects with a tilde above them relate to the $\mcalq$-metric $\qg_M$ whereas all geometric objects with a slashed through them relate to the $S$-metric $\sg_M$. In particular, $-\qd$ is the divergence operator associated to $\qg_M$, $\sn\otimeshat \xi$ is the trace-free part of the Lie derivative of $\sg$ with respect to the $\sm$-vector field $\xi$ and we emphasize that $\qbox$ and $\slap$ are the wave operator and Laplacian associated to $\qg_M$ and $\sg_M$ respectively. Moreover $\sdst$ is an appropriate $L^2$ adjoint of $\sdiv$. Lastly, $\mu=\frac{2M}{r}$ is the mass aspect function, $P$ is the vector field associated to the 1-form $\ud r$ under the musical isomorphism on $\Mgs$ and we use the notation
\begin{align*}
\omega_X:=\omega(X)
\end{align*}
for $\omega$ and $X$ a 1-form and vector field on $\mcalm$ respectively.

The greater transparency afforded by this $2+2$ formalism now motivates the choice of the linear map $Df\big|_{g_M}$ we are to make in this paper. Indeed, we define the map\footnote{Here we recall the notation $\mscrT^k(\mcalm)$ for the space of smooth $k$-covariant tensor fields on $\mcalm$.} $Df\big|_{g_M}:\mscrT^2(\mcalm)\rightarrow\mscrT(\mcalm)$ according to
\begin{align}\label{chin}
Df\big|_{g_M}(X):=\frac{2}{r}\tilde{X}_P-\frac{1}{r}\big(\hat{\tilde{X}}_{l=0}\big)_P+\frac{2}{r}\stkout{X}_P-\frac{1}{r}\ud r\,\str\slashed{X}.
\end{align}
Here, $\tilde{X}$, $\stkout{X}$ and $\slashed{X}$ are the projections of the smooth 2-covariant tensor $X$ onto a smooth 2-covariant $\qm$-tensor, a smooth $\qmsm$ 1-form and a smooth 2-covariant $\sm$-tensor respectively. Moreover, $\hat{\tilde{X}}$ denotes the traceless part of $\tilde{X}$ with respect to $\qg_M$ and $\tilde{X}_{l=0}$ denotes the spherically symmetric part of $\tilde{X}$ (see section \ref{Thel=0,1spheriacalharmonicsandqmtensors} in the bulk of the paper).

The utility of such a choice is made manifest by the observation that, modulo the `$l=0$ modes', in this gauge all zero'th order terms in the generalised wave gauge condition \eqref{OVeqnlingrav2} vanish. This can be verified, for instance, from equation \eqref{ovpoo2}. Conversely, this modification of the `$l=0$' modes is related to the special solutions we are to discuss in the next section of the overview.

Of course, one should verify that such a map can indeed be realised as the linearisation of the map $\boldsymbol{f}$ described in our earlier formal linearisation procedure. However, this is readily seen to be true -- for example\footnote{One can of course construct many other examples!}, the linear map $\boldsymbol{{f}}:\mathscr{T}^2(\mcalm)\rightarrow \mscrT\mcalm$ defined according to
	\begin{align*}
	\big(\boldsymbol{f}(X)\big)^\flat&=\frac{2}{r}\tilde{X}_P-\frac{1}{r}\big(\hat{\tilde{X}}_{l=0}\big)_P+\frac{2}{r}\stkout{X}_P-\frac{1}{r}\ud r\,\str\slashed{X}
	\end{align*}
	satisfies $\boldsymbol{{f}}(g_M)=0$ and moreover trivially linearises to the map $Df\big|_{g_M}$ of $\eqref{chin}$. Here, $\flat$ is the index-lowering musical isomorphism on $\Mgs$.
	
Motivated by \cite{D--H--R}, we use the collective notation
\begin{align*}
\Solution
\end{align*}
to denote solutions to the system of equations \eqref{OVeqnlingrav1}-\eqref{OVeqnlingrav2} that result from defining the vector field $\flin$ according to the map $Df\big|_{g_M}$ as given in \eqref{chin}. We shall henceforth refer to this system as \textbf{the equations of linearised gravity}. The remainder of the paper is then concerned with solutions to this system of equations. A sample of said system decomposed under the $2+2$ formalism is presented below.\newline

\noindent For the $\mcalq\otimes S$ 1-form $\mling$,
\begin{align}
\qbox\mling+\slap\mling-\frac{2}{r}\big(\qn{\otimes}\mling\big)_P-\frac{1}{r^2}\mling+\frac{2}{r^2}\ud r{\otimes}\mling_P=\frac{2}{r}\ud r\,{\otimes}\sdiv\sghatlin\label{OVeqnlingrav3}.
\end{align}
For the symmetric, traceless 2-covariant $\sm$-tensor $\sghatlin$,
\begin{align}
\qbox\sghatlin+\slap\sghatlin-\frac{2}{r}\qn_{P}\sghatlin-\frac{4}{r}\frac{\mu}{r}\sghatlin&=0\label{OVeqnlingrav4}.
\end{align}
Finally, the $S$-component of the wave gauge condition 
\begin{align}
-\qd\mling-\frac{1}{2}\sn\trgablin+\sdiv\sghatlin&=0\label{OVeqnlingrav7}.\newline
\end{align}
\\
We emphasize to the reader that a considerable advantage of employing a generalised wave gauge with respect to $g_M$, i.e. with respect to the metric one is to linearise about, is the absence of any explict first order derivatives of $\glin$ in the linearisation of the Einstein vacum equations as expressed in this gauge, the system of equations given by \eqref{OVeqnlingrav1}, and the absence of any explicit zero'th terms of $\glin$ in the linearisation of the generalised wave gauge condition itself, the system of equations given by \eqref{OVeqnlingrav2}. For instance, returning to the schematic description of the Einstein vacuum equations \eqref{m2}, one sees that under linearisation\footnote{Since $\bC_{g_M, g_M}=0$.} the terms quadratic in the connection tensor vanish. This would be patently false if one instead linearises the Einstein vacuum equations, as expressed in \emph{wave coordinates}, about the solution\footnote{Since, in a system of wave coordinates on $\Mgs$, $\Gamma_M\neq0$.} $g_M$. The resulting structure this vanishing implies for the equations of linearised gravity thus constitutes the amenable structure we mentioned in the introduction.
\newline\newline

As to the significance of the terms quadratic in the Christoffel symbols vanishing, see the non-schematic description of the Einstein vacuum equations as expressed in a generalised wave gauge presented in section \ref{Thelinearisationprocedure}.

Finally, see section \ref{Tensoranalysis} in the bulk of the paper for a precise definition of the various operations and differential operations on $\Mgs$ that follow naturally from this $2+2$ formalism and which shall appear frequently in the remainder of the overview.

\subsubsection{Special solutions to the equations of linearised gravity: Linearised Kerr and pure gauge}\label{OVSpecialsolutionsLinearisedKerrandpuregauge}

We end this first part of the overview by discussing two special classes of solutions to the equations of linearised gravity, namely the linearised Kerr and pure gauge solutions.

This section of the overview corresponds to section \ref{Specialsolutionstotheequationsoflinearisedgravity} of the main body of the paper.
\newline

The first such class of solutions arises from suitably expressing members of the 2-parameter Kerr exterior family in the generalised wave gauge identified in the previous section and then linearising about the fixed Schwarzschild exterior solution $g_M$. This yields an explicit \emph{4-parameter\footnote{One indeed has a 4-parameter family of solutions since the Schwarzschild exterior background around which ones linearises has no preferred axis of rotation.} family} of solutions to the equations of linearised gravity corresponding to the fact that $g_M$ actually sits within a family of solutions to the Einstein vacuum equations. 

For instance, linearising a Kerr exterior solution with mass $M$ in the rotation parameter $\boldsymbol{a}$ yields the following 3-parameter family of solutions to the equations of linearised gravity:
\begin{align*}
	\mling&=-\frac{1}{1-\mu}\Big(\mu\qhd P-\ud r\Big)\shd\sn\mathfrak{a},\\
\gabhatlin=\trgablin=\sghatlin=\trsglin&=0.
\end{align*}
Here, $\mathfrak{a}$ is a smooth function on $S^2$ lying in the span of the classical $l=1$ spherical harmonics on $S^2$. Further, $\qhd$ and $\shd$ are the Hodge duals with respect to the $\qm$-metric $\qg_M$ and $\sm$-metric $\sg_M$ respectively and $\mathring{g}$ is the unit metric on the round sphere. Of course, that the above indeed represents a 3-parameter family of solutions to the equations of linearised gravity can be verified explicitly.

Following \cite{D--H--R}, we shall denote this special class of \emph{linearised Kerr solutions} by $\Ke_{\mathsmaller{\mfM,\mathfrak{a}}}$. Consequently, we note two key properties of this family -- stationarity\footnote{That is, $\mcall_T\glin=\mcall_T\flin=0$ where $T$ is the causal Killing field introduced in section \ref{OVTheexteriorSchwarzschildspacetime}.} and being supported entirely on the $l=0,1$ spherical harmonics (see section \ref{TheprojectionoftensorfieldsonMontoandawayfromthel=0,1sphericalharmonics} for a precise definition). We note that this latter fact is the reason for the `$l=0$' term in the definition of the map $Df\big|_{g_M}$ in \eqref{chin} in that it allows one to realise the linearisation of the Schwarzschild exterior family in the mass parameter, when expressed in the Schwarzschild-star coordinates of section \ref{OVTheexteriorSchwarzschildspacetime}, as a 1-parameter family of solutions to the equations of linearised gravity.

The second such class of solutions arises from the fact that, on an abstract Lorentzian manifold $\bmcalm$, there exists a residual class of diffeomorphisms on $\bmcalm$ under which the generalised wave gauge is preserved.

Indeed, suppose that the smooth, globally hyperbolic Lorentzian manifolds $\big(\bold{\mcalm}, \bold{g}\big)$ and $\big(\bold{\mcalm}, \bold{\overline{g}}\big)$ are solutions to the Einstein vacuum equations for which $\boldsymbol{g}$ is in a generalised $\boldsymbol{f}$-wave gauge with respect to $\boldsymbol{\overline{g}}$. Then there exists a non-trivial class of diffeomorphisms $\boldsymbol{\phi}$ on $\bmcalm$ for which $\boldsymbol{\phi^*g}$ is in a generalised $\boldsymbol{f}$-wave gauge with respect to $\bog$. In particular, $\boldsymbol{\phi^*g}$ and $\bg$ thus reside within the same equivalence class of solutions to the Einstein vacuum equations.

This phenomena is manifested in the linear theory by the existence of a 1-parameter family of diffeomorphisms on $\mcalm$ for which each of the 1-parameter family of Lorentzian metrics given as the pullback of $g_M$ under this 1-parameter family of diffeomorphisms are in a generalised $\boldsymbol{f}$-wave gauge with respect to $g_M$ to first order. Here, $\boldsymbol{f}:\mscrT^2(\mcalm)\rightarrow\mscrT\mcalm$ is any smooth map which linearises to the map $Df\big|_{g_M}$ defined as in \eqref{chin}. This yields, upon linearisation, a family of solutions to the equations of linearised gravity corresponding to \emph{residual gauge freedom} in the non-linear theory. Following \cite{D--H--R}, we shall denote this special class of \emph{pure gauge solutions} by $\Ga$.

Indeed, let $v$ be a smooth vector field on $\mcalm$ satisfying
\begin{align*}
	\Box v=Df\big|_{g_M}(\mcall_vg_M)
\end{align*}
where $Df\big|_{g_M}$ is defined as in \eqref{chin}. Then the following
\begin{align}\label{OVeqnpuregauge}
	\glin=\mcall_vg_M
\end{align}
is a smooth solution to the equations of linearised gravity arising from the linearisation of the 1-parameter family of Lorentzian metrics given as the pullback of $g_M$ under the 1-parameter family of diffeomorphisms generated by $v$. That the above solve the equations of linearised gravity can indeed by verified from equations \eqref{OVeqnlingrav1} and \eqref{OVeqnlingrav2}.\newline

A consequence of the existence of these solutions is the expectation that, at best, a general solution to the equations of linearised gravity decays to a member of the linearised Kerr family, after the addition of some pure gauge solution.

\subsection{The Regge--Wheeler and Zerilli equations and the gauge-invariant hierarchy}\label{OVAneffectivescalarisationoftheequationsoflinearisedgravity}

In this second part of the overview we discuss how one extracts, via a hierarchy of \emph{gauge-invariant} quantities, from the equations of linearised gravity the two fully decoupled scalar wave equations described by the Regge--Wheeler and Zerilli equations respectively.

This section of the overview corresponds to section \ref{TheRegge-WheelerandZerilliequationsandthegaugeinvariantheirarchy} of the main body of the paper.\newline

Owing to the potential complications arising from the existence of the linearised Kerr and pure gauge solutions of section \ref{OVSpecialsolutionsLinearisedKerrandpuregauge}, it is natural to consider those quantities which vanish for all such solutions.

To isolate these \emph{gauge-invariant} quantities, we first recall from section \ref{OVSpecialsolutionsLinearisedKerrandpuregauge} that members of the linearised Kerr family are supported only on the $l=0,1$ spherical harmonics. A more useful characerisation is the statement that the linearised Kerr solutions in fact lie in the kernels of the family of `$\sm$-operators' on the 2-spheres
\begin{align*}
\spi=\big\{\sps, \spv, \spt\big\}
\end{align*}
 which respectively map $\qm$-tensors, $\qmsm$ 1-forms and $\sm$-tensors into the space $\LM$ of $\qm$-tensors which are supported \emph{outside} of the $l=0,1$ spherical harmonics.  For instance, the operator $\spv$ acts on smooth $\qmsm$ 1-forms $\momega$ by mapping them into the smooth pair of $\qm$ 1-forms in $\LM$ defined as
\begin{align*}
\spv\momega:=\Big(r^4\sdiv\sdiv\sn\otimeshat\momega, r^4\scurl\sdiv\sn\otimeshat\momega\Big).
\end{align*}
See section \ref{ThefamilyofoperatorsPandT} in the bulk of the paper for the definition of the operators $\sps$ and $\spt$ respectively.

The operators $\spi$ thus serve to \emph{project out} the $l=0,1$ modes of tensor fields on $\mcalm$ whilst concurrently facilitating a comparison between tensor fields of different `$\sm$-type'. In particular, the quantities we seek are thus determined by applying the $\spi$ operators to the linearised metric and isolating those combinations which vanish for all pure gauge solutions $\Ga$. An example of such a quantity is the $\qm$ 1-form $\etalin$ defined by
\begin{align}\label{OVeqndefneta}
\etalin:=r^4\scurl\sdiv\sn\otimeshat\mling-\frac{1}{2}r^2\qexd\bigg(r^2\scurl\sdiv\sghatlin
\bigg).
\end{align}

The full collection of gauge-invariant quantities, originally discovered\footnote{In fact, it took the later work of Moncrief in \cite{Moncrief} to clarify the gauge-invariance of these quantities. Moreover, the $\slashed{\Pi}$ operators were completely absent in \cite{RW}, their roles being fulfilled by the tensor, vector and scalar spherical harmonics employed in the mode decomposition.} by Regge--Wheeler in \cite{RW} in the context of a full mode decomposition of the linearised\footnote{In particular, the generalised wave gauge was not imposed.} Einstein equations, comprise of:
\begin{itemize}
	\item the symmetric, traceless 2-covariant $\qm$-tensor field $\tauhatlin$
	\item the scalar function $\trtaulin$ on $\mcalm$
	\item the $\qm$ 1-form $\etalin$
	\item the scalar function $\sigmalin$ on $\mcalm$.
\end{itemize}
Remarkably, as was first discovered by Regge--Wheeler \cite{RW} and Zerilli \cite{Z}, the system of gravitational perturbations force a decoupling of the gauge-invariant quantities into two \emph{scalar waves}.

For example, equations \eqref{ovpoo} and \eqref{ovpoo1} of section \ref{OVTheequationsoflinearisedgravity} imply that the gauge invariant $\qm$ 1-form $\etalin$ satisfies the decoupled wave equation
\begin{align*}
\qbox\etalin+\slap\etalin-\frac{2}{r}\Big(\qn\etalin\Big)_P+\frac{2}{r^2}\ud r\text{ }\etalin_P&=0.
\end{align*}
Moreover, imposing the wave gauge condition \eqref{ovpoo2}  implies that $\etalin$ is divergence free:
\begin{align*}
\qd\etalin&=0.
\end{align*}
This serves as an integrability condition for $\etalin$ and an application of the Poincar\'e lemma yields
\begin{align}\label{OVeqndefnphi}
\etalin=-\qhd\qexd\Big(r\Philin\Big)
\end{align}
where the smooth function $\Philin\in\LM$ satisfies the celebrated \emph{Regge--Wheeler} equation
\begin{align}\label{OVeqnRW}
\qbox\Philin+\slap\Philin=-\frac{3}{r}\frac{\mu}{r}\Philin.
\end{align}
 A similar but more complicated hierarchical procedure applied now to the gauge-invariant quantities $\tauhatlin, \trtaulin$ and $\sigmalin$ returns
 \begin{align*}
 \trtaulin=0
 \end{align*}
 and
\begin{align}
\tauhatlin&=\qn\otimeshat\qexd\Big(r\Psilin\Big)+6\mu\ud r\otimeshat\szetap{1}\qexd\Psilin\label{wanker1},\\
\sigmalin&=-2r\slap\Psilin+4\qn_{P}\Psilin+12\mu r^{-1}\ommu\szetap{1}\Psilin,\label{wanker2}
\end{align}
where the smooth function $\Psilin\in\LM$ satisfies the celebrated \emph{Zerilli} equation
\begin{align}\label{OVeqnZ}
\qbox\Psilin+\slap\Psilin=-\frac{3}{r}\frac{\mu}{r}\Psilin+\frac{6}{r}\frac{\mu}{r}(2-3\mu)\szetap{1}\Psilin+18\frac{\mu}{r}\frac{\mu}{r}(1-\mu)\szetap{2}\Psilin.
\end{align}
Here, $\szetap{p}$ is the inverse of the elliptic operator $r^2\slap+2-\frac{6M}{r}$
applied $p$-times. This is indeed invertible over the space of smooth functions in $\LM$.

We note that, in their original pioneering works \cite{RW} and \cite{Z}, Regge--Wheeler and Zerilli derived the Regge--Wheeler and Zerilli equations in terms of both the previously mentioned full mode decomposition of the linearised Einstein equations combined with a  fixing of Regge--Wheeler coordinates on $\Mgs$. The covariant, non-mode decomposed version of these equations presented above is ultimately\footnote{Earlier works of \cite{G--S} and \cite{S--T} established a covariant derivation of the Regge--Wheeler and Zerilli equations.
On the other hand, in \cite{Jez} a non-covariant, non-modal derivation was given, albeit via a Hamiltonian formulation of the problem.} due to Chaverra, Ortiz and Sarbach in \cite{C--O--S}.\newline

We moreover note the interesting result of Corrolary \ref{corrPsiPhivanishimpliespuregauge} in the bulk of the paper which states that a sufficiently regular solution to the equations of linearised gravity for which both $\Philin$ and $\Psilin$ vanish is in fact the sum of a linearised Kerr and pure gauge solution.
We further remark that, as discovered by Chandrasekhar, there exists a transformation theory mapping solutions of the Zerilli equation to solutions of the Regge--Wheeler equation, although we will not make use of this in this paper. See \cite{Chandbook} for details. 

Lastly, we remind the reader that the Regge--Wheeler equation appears and plays a major role in the work of Dafermos, Holzegel and Rodnianski in \cite{D--H--R}. It is remarkable that the same equation appears in these two different contexts. Note that the linearised Robinson--Trautman \cite{R--T} solutions that appear in their work as solutions with \emph{vanishing} `Regge--Wheeler quantities' are related to the aforementioned transformation mapping solutions of the Zerilli equation to the Regge--Wheeler equation, which has non-trivial kernel consisting of the so-called algebraically special modes. \newline\newline

The definition of the full collection of gauge-invariant quantities can be found in section \ref{Gauge-invariantquantities} with the corresponding gauge-invariant hierarchy derived in section \ref{Theconnectiontothesystemofgravitationalperturbations}.

\subsection{The Cauchy problem for the equations of linearised gravity}\label{OVTheCauchyproblemfortheequationsoflinearisedgravity}

In this third part of the overview we discuss a well-posedness result for the equations of linearised gravity as a Cauchy problem.

Given that, as our discussion in section \ref{OVSpecialsolutionsLinearisedKerrandpuregauge} of the overview indicates, the issues surrounding the existence of both the Kerr family and residual gauge freedom in the nonlinear theory can be understood completely in the context of the linear theory, this foundational statement, although complicated by the presence of constraints, allows one to develop the theory of linearised gravity \emph{without further reference to their origin}. 

This part of the overview corresponds to section \ref{Initialdataandwell-posednessoflinearisedgravity} of the main body of the paper.\newline

It is well known that the Einstein equations \eqref{introEE} must satisfy certain \emph{constraints} which arise as a consequence of the Gauss--Codazzi equations restricting the embedding of a 3-manifold as a hypersurface in a 4-manifold. These constraints are of course inherited by the linearisation, although they can be understood purely within the context of the linear theory.  Moreover, in regards to the \emph{Cauchy} problem for the equations of a linearised gravity, a further constraint is imposed by the generalised wave gauge condition \eqref{OVeqnlingrav2}. Therefore, any well-posedness result for the equations of linearised gravity must neccesarily include a procedure by which one generates \emph{admissible} initial data. A definition of such admissible initial data, along with the constraints it must satisfy, is to be found in section \ref{Admissibleinitialdatasetsforthesystemofgravitationalperturbations} of the bulk of the paper.

We confront this issue by introducing a notion of \emph{seed data} for the equations of linearised gravity. Indeed, it is from this \emph{freely prescribed} seed data that we are able to generate, uniquely,  an admissible initial data set for the system of gravitational perturbations. 

Fixing an initial Cauchy hypersurface $\Sigma$, corresponding to a level set of the time function $t^*$ introduced in section \ref{OVTheexteriorSchwarzschildspacetime}, this seed data consists of a collection of freely prescribed quantities on $\Sigma$. We denote this collection of seed by $\Sos$, with corresponding admissible initial data denoted by $\Soa$. An example of the seed quantities are
\begin{itemize}
	\item the two smooth functions $\oPhilin$ and $\uPhilin$ on $\Sigma$ that are supported outside of $l=0,1$
\end{itemize}
See section \ref{Seeddataforlinearisedgravity} in the bulk of the paper for a full description of the seed data $\Sos$.

The procedure by which we extend this seed data to a full admissible initial data set exploits the existence of three explicit classes of solutions to the equations of linearised gravity. Two have been discussed already, namely the linearised Kerr and pure gauge solutions. The third class, which are parametrised by two scalar quantities satisfying the Regge--Wheeler and Zerilli equations respectively, arise from inverting the family of operators $\mcalt$ in the expressions found in i) and ii) of the conditions describing the Regge--Wheeler gauge in section \ref{OVGaugenormalisedsolutionstotheequationsoflinearisedgravity} of the overview. Consequently, by (appropriately) projecting these three classes of solutions onto $\Sigma$ one generates three explicit classes of solutions to the linearised constraint equations and thus prescribing seed data in such a way as to generate said solutions determines, by linearity, an admissible initial data set given as their sum.

For instance, the seed quantities $\uPhilin$ and $\oPhilin$ introduced above are to determine Cauchy data for the gauge-invariant quantity $\Philin$ of section \ref{OVAneffectivescalarisationoftheequationsoflinearisedgravity} which satisfies the Regge--Wheeler equation. This part of the seed data thus explicitly generates a solution to the linearised constraint equations corresponding to the third class of explicit solutions. Moreover since, once certain `gauge-considerations' are taken into account, the freely prescribed seed data contains four\footnote{Corresponding exactly to Cauchy data for the gauge-invariant quantities $\Philin$ and $\Psilin$.} functional degrees of freedom, the full prescription of seed data generates a class of solutions to the linearised constraint equations which agrees with the full functional degrees of freedom one associates to the equations of linearised gravity by a crude function counting arguement (see the book of Wald \cite{Wald}). 

One advantage of this method to solving the linearised constraint equations is that, since it involves at most applying derivatives and inverting elliptic operators on 2-spheres, we are thus able to generate solutions to the constraint equations, linearised about the Schwarzschild exterior, that agree with admissible initial data for a linearised Kerr solution\footnote{The exact linearised Kerr solution one matches with is to be determined explicitly from the seed data. See section \ref{OVGaugenormalisedsolutionstotheequationsoflinearisedgravity}.} outside a compact set \emph{without introducing a gluing region}. This is a non-trivial statement since the linearised constraint equations in effect correspond to an elliptic system on the non-compact manifold $\Sigma$. In regards to generating  solutions to the full \emph{nonlinear} constraints for general asymptotically flat manifolds that agree with a \emph{Kerr} solution outside a compact set, see the pioneering work of Corvino--Schoen in the celebrated \cite{C--S}. See section \ref{AppendixConstructingadmissibleinitialdatafromseeddata} of the Appendix attached to the bulk of the paper for full details of the method.

In order to evolve this admissible initial data into a full solution to the equations of linearised gravity we appeal to the argument employed by Choquet-Bruhat in her celebrated \cite{C-B}. In brief, one first constructs a unique solution to the `well-posed' equation \eqref{OVeqnlingrav1} with corresponding Cauchy data given by the collection $\Soa$. It is then a classical\footnote{For instance, see the book of Choquet-Bruhat \cite{C-Bbook}.} result that the admissibility of said data in fact implies that this solution must neccesarily satisfy the gauge condition \eqref{OVeqnlingrav2}. This leads to the well-posedness theorem of section \ref{Thewell-posednesstheorem} in the bulk of the paper, which we summarise here as:
\begin{customthm}{0}\label{OVthmwellposedness}
The smooth seed data set $\Sos$ prescribed on $\Sigma$ gives rise to a unique, smooth solution $\So$ to the equations of linearised gravity on $\mcalm\cap D^+(\Sigma)$.
\end{customthm}

The boundedness and decay statements we discuss in section \ref{OVThemaintheoremandoutlineoftheproof} for solutions to the equations of linearised gravity will require that the initial data from which they arise satisfy in addition a notion of \emph{asymptotic flatness}. Importantly, we are able to provide such a notion on the freely prescribed seed data alone, which then propagates under our procedure of determining the full initial data and thus these \emph{asymptotically flat} solutions can be understood purely within the context of the above well-posedness theorem. See Theorem \ref{Apppropasymptoticflatness} in section \ref{AppPropagationofasymptoticflatness} of the appendix attached to the main body of the paper. Moreover, the precise definition of asymptotic flatness we require can be found in section \ref{Pointwisestrongasymptoticflatness} in the bulk of the paper.\newline

Finally, we end this part of the overview with a comments towards potential future work. Indeed, it should in principle be possible to show that under sufficient regularity this method we develop, along with the free data we prescribe, actually parametrises the full space of solutions to the linearised constraint equations.\newline\newline

Further details pertaining to the initial hypersurface $\Sigma$ and the corresponding notion of $\smi$-tensors can be found in section \ref{TheCauchyhypersurfaceSigma}.

\subsection{Gauge-normalised solutions to the equations of linearised gravity and identification of the Kerr parameters}\label{OVGaugenormalisedsolutionstotheequationsoflinearisedgravity}

In this fourth part of the overview we discuss particular solutions to the equations of linearised gravity that arise from exploiting the residual gauge freedom afforded from the existence of the pure gauge solutions of section \ref{OVSpecialsolutionsLinearisedKerrandpuregauge}.

This part of the overview corresponds to section \ref{GaugenormalisedsolutionsandidentificationoftheKerrparameters} of the main body of the paper.\newline

The first gauge under consideration in fact corresponds to a 4-parameter family of gauge choices parametrised by a constant $\mfM\in\mathbb{R}$ and a smooth function $\mathfrak{a}$ on $S^2$ lying in the span of the $l=1$ spherical harmonics. We shall term a member of this family as \emph{a $\Ke_{\mathsmaller{\mfM, \mathfrak{a}}}$-adapted Regge--Wheeler gauge}\footnote{The gauge is so-named because it is an adaption of the celebrated Regge--Wheeler gauge, originally used by Regge and Wheeler in their study \cite{RW} of the linearised Einstein equations on the Schwarzschild exterior, to the equations of linearised gravity.} and a smooth solution $\So$ to the equations of linearised gravity is said to be in such a gauge if the following conditions on $D^+(\Sigma)$:
\begin{enumerate}[i)]
	\item $\sdiv\mling=0$
	\item $\sghatlin=0$
	\item $\mpart{\So}+\dpart{\So}=\Ke_{\mathsmaller{\mfM, \mathfrak{a}}}$
\end{enumerate}
Here, the subscripts $l=0,1$ denote a projection onto the $l=0,1$ spherical harmonics, to be made precise in section \ref{Theprojectionontol=0andl=1} in the bulk of the paper, and we recall that $\Ke_{\mathsmaller{\mfM, \mathfrak{a}}}$ is a member of the linearised Kerr family introduced in section \ref{OVSpecialsolutionsLinearisedKerrandpuregauge}.

Indeed, it is smooth solutions to the equations of linearised gravity satisfying the above, when supplemented with an appropriate asymptotic flatness condition, that will be subject to our quantitative boundedness and decay statement to be discussed in section \ref{OVTheorem1:boundednessofthesolutionSi} of the overview. This will follow as a consequence of the quantitative boundedness and decay bounds we are able to derive for the gauge-invariant quantities $\Philin$ and $\Psilin$ of section \ref{OVAneffectivescalarisationoftheequationsoflinearisedgravity} satisfying the Regge--Wheeler and Zerilli equations respectively. For a solution $\So$ to the equations of linearised gravity which satisfies conditions i)-iii) in fact must obey on $D^+(\Sigma)$ the relations
\begin{align}
\mcaltf\gabhatlin&=\tauhatlin,\label{a1}\\
\mcaltv\mling&=\sdso\Big(0, \etalin\Big),\label{a2}\\
\mcaltf\trsglin&=\sigmalin\label{a3}
\end{align}
and
\begin{align*}
\trgablin=0.
\end{align*}
Here, $\tauhatlin, \etalin$ and $\sigmalin$ are determined from the gauge-invariant quantities $\Philin$ and $\Psilin$ as in section \ref{OVAneffectivescalarisationoftheequationsoflinearisedgravity}.
Moreover, the pair $\mcaltf$ and $\mcaltv$
are a pair of \emph{elliptic} operators on 2-spheres that are naturally related to the $\spi$ family of operators introduced in section \ref{OVAneffectivescalarisationoftheequationsoflinearisedgravity}. For instance,
\begin{align*}
\mcaltv:=\sdso\spv
\end{align*}
where the operator $\sdso$ maps pairs of $\qm$ 1-forms $\tilde{q}_1, \tilde{q}_2$ into the $\qmsm$ 1-form defined as
\begin{align*}
\sdso\big(\tilde{q}_1, \tilde{q}_2\big):=-\sn\tilde{q
}_1-\shd\sn\tilde{q}_2.
\end{align*}
See section \ref{ThefamilyofoperatorsPandT} for the definition of the operator $\mcaltf$ respectively.

In particular, applying elliptic estimates on the operators $\mcaltf$ and $\mcaltv$ thus readily transfers bounds on the quantities $\Philin$ and $\Psilin$ to the solution $\So$. Moreover, as the operators $\mcaltf$ and $\mcaltv$ have kernels spanned by the $l=0,1$ spherical harmonics (see section \ref{ThefamilyofoperatorsPandT} in the bulk of the paper), condition iii) ensures that this kernel is in fact completely described by the linearised Kerr solution $\Ke_{\mathsmaller{\mfM, \mathfrak{a}}}$.

Of course, the question remains as to whether there exist solutions that satisfy conditions i) and ii) which moreover verify the required conditions on asymptotic regularity. That such a gauge can indeed be realised is the content of Theorem \ref{thminitialdatagauge} in the bulk of the paper which states that, given the solution $\So$ to the equations of linearised gravity arising from the seed data set $\Sos$ courtesy of Theorem \ref{OVthmwellposedness}, one can add to it a pure gauge solution $\GfMa$ for which the resulting solution $\SfMa$ is in a $\Ke_{\mathsmaller{\mfM, \mathfrak{a}}}$-adapted Regge--Wheeler gauge gauge. The precise gauge one is in is to be determined explicitly from the seed data $\Sos$, part of which contains
\begin{itemize}
	\item a smooth function $\alin$ on the horizon sphere $\hplusi$ lying in the span of the $l=1$ spherical harmonics
	\item a constant $\mfmlin$ 
\end{itemize}
The latter thus determines the parameter $\mfM$ whereas the $Y^1_{i}$ modes of the former determine the parameters $\mfa_{-1}, \mfa_0$ and $\mfa_1$. The theorem further states that if the seed data, which we now explicitly denote by $\SosMa$, is moreover asymptotically flat then this property is inherited by the pure gauge solution $\GfMa$ and hence the solution $\SfMa$. 

An important property of the pure gauge solution $\GfMa$ that one can conclude from the proof of Theorem \ref{thminitialdatagauge} is that the initial data from which it arises is actually constructed explicitly\footnote{In particular, in the construction, one is only required to perform `operations' on the initial hypersurface $\Sigma$.} from the seed data of the solution $\So$ alone. For instance, by equation \eqref{OVeqnlingrav4} to ensure that condition ii) holds it suffices to impose trivial Cauchy data for $\sghatlin$ on $\Sigma$. See section \ref{GlobalpropertiesofSf} in the bulk of the paper.

Consequently, a $\Ke_{\mathsmaller{\mfM, \mathfrak{a}}}$-adapted Regge--Wheeler gauge is in fact realisable \emph{purely from seed data $\SosMa$ alone} and can therefore be said to be \emph{well-posed}. The solution $\SfMa$ is thus said to be \emph{initial-data-normalised}.\newline

We remark that previous works which have employed the Regge--Wheeler gauge have done so within the context of the linearised Einstein equations where no gauge has been fixed, a system of equations which are \emph{not} well-posed. In particular, \emph{it is not possible within the context of that formulation to identify the Regge--Wheeler gauge as an `initial-data-gauge'}. Subsequently, the novelty of our work is to thus appreciate the remarkably useful Regge--Wheeler gauge within the framework of a \emph{well-posed} formulation of linearised gravity around Schwarzschild, namely by imposing a generalised wave gauge in the full nonlinear theory and then linearising. Our sagacious choice of the linear map $Df\big|_{g_M}$ in section \ref{OVTheequationsoflinearisedgravity} then allows for the Regge--Wheeler gauge to be realised as a \emph{residual gauge choice at the level of initial data}. We remark that, in addition to our earlier noted motivation for the choice of the map $Df\big|_{g_M}$, we were further inspired in making this choice by inserting the expression for the linearised metric quantities as they appear in the Regge--Wheeler gauge into the linearised Einstein equations \eqref{OVeqnlingrav1}-\eqref{OVeqnlingrav2}, where the expression $\flin$ has not yet been fixed, and then evaluating the `error terms'.

Of course, there is the point of view in which the Regge--Wheeler gauge is considered to be a gauge choice in of itself, thereby forgoing any reference to the larger framework of the generalised wave gauge. Indeed, by virtue of the well-posedness of the respective Regge--Wheeler and Zerilli equations, one can determine the gauge-invariant quantities $\Philin$ and $\Psilin$ explicitly from a prescription of seed data. This subsequently determines the collection of quantities on $\mcalm$ appearing on `the right hand side' of \eqref{a1}-\eqref{a3}. Inverting the operators $\mcaltf$ and $\mcaltv$ thus \emph{a posteriori} constructs a solution to the linearised Einstein equations that arises purely from the seed data and which is manifestly in the Regge--Wheeler gauge. However, \emph{this procedure rests entirely on the decoupling of the quantities $\Philin$ and $\Psilin$}, a phenomena which is completely artificial\footnote{Indeed, see the paper of \cite{K--S} where the quantity $\mfqlin$, which in the linear theory completely decouples, but in the nonlinear theory does quite the opposite!} to the linearised setting. Consequently, with nonlinear applications in mind, it is indeed more appropriate to view the Regge--Wheeler gauge as a residual gauge choice associated to a judicious choice of a (well-posed) generalised wave gauge.

\subsection{The main theorems and outline of the proof}\label{OVThemaintheoremandoutlineoftheproof}

In this fifth part of the overview we discuss the statement and proof our main theorem from which one concludes the linear stability of the Schwarzschild exterior family as solutions to the Einstein vacuum equations in a generalised wave gauge.

This part of the overview corresponds to sections \ref{Precisestatementsofthemaintheorems}-\ref{Proofoftheorem1} in the main body of the paper.

\subsubsection{Theorem \ref{OVthmboundedness}: Boundedness and decay of the solution $\SfMa$}\label{OVTheorem1:boundednessofthesolutionSi}

We begin with a rough statement of the main theorem which concerns a boundedness and decay statement for the initial-data-normalised solution $\SfMa$ and which is a more precise version of the Theorem discussed in the introduction.

The precise statement of the theorem can be found in section \ref{BoundednessforsolutionsintheKerradaptedtracelessgauge} in the main body of the paper.\newline

The  statement in question involves certain natural $r$-weighted energy and integrated decay norms on hypersurfaces which penetrate both the future event horizon and future null infinity which are generalisations of the norms introduced in section \ref{OVAside:thescalarwaveequationontheSchwarzschildexteriorspacetime} for scalar waves to the full system of gravitational perturbations (in particular, tensor fields on $\mcalm$). See section \ref{Fluxandintegrateddecaynorms} in the bulk of the paper for the full description. Conversely, a description of the hypersurfaces can be found in section \ref{AfoliationofMthatpenetratesbothhplusandiplus} although see Figure 2 for a Penrose diagram depicting this foliation of $\Mgs$.

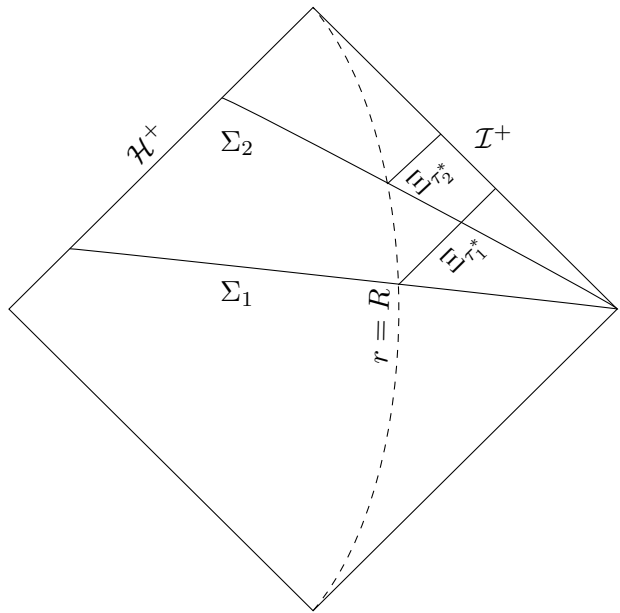
\begin{figure}[!h]
	\centering
	\begin{tikzpicture}
	\node (I)    at ( 4,0)   {};
	\path 
	(I) +(90:4)  coordinate (Itop)
	+(-90:4) coordinate (Ibot)
	+(180:4) coordinate (Ileft)
	+(0:4)   coordinate (Iright)
	;
	\draw  (Ileft) -- (Itop)node[midway, above, sloped] {$\hplus$} -- (Iright) node[midway, above right]    {$\cal{I}^+$} -- (Ibot)node[midway, below right]    {} -- (Ileft)node[midway, below, sloped] {} -- cycle;
	
	\draw[dashed]  
	(Ibot) to[out=50, in=-50, looseness=0.75] node[pos=0.475, above, sloped] {$r=R$        }($(Itop)!.5!(Itop)$) ;
	
	\draw 
	($(Itop)!.8!(Ileft)$) to[out=0, in=0, looseness=0.05] ($(Iright)!.5!(Iright)$);
	
	\draw 
	($(Itop)!.3!(Ileft)$) to[out=0, in=0, looseness=0.05] ($(Iright)!.5!(Iright)$);
	
	\draw 
	($(Itop)!.6!(Iright)$) --node[midway, below, sloped]{$\Xi _{\tau_1 ^*}$} (5.12, 0.32);
	
	\draw 
	($(Itop)!.42!(Iright)$) --node[right, below, sloped]{$\Xi _{\tau_2 ^*} $} (4.975, 1.66);
	
	\node [below] at (3,0.5) {$\Sigma_1$};
	
	\node [below] at (3,2.5) {$\Sigma_2$};
	
	\end{tikzpicture}
	\caption{A Penrose diagram of $\Mgs$ depicting the hypersurfaces $\Xi_{\taus}$ which penetrate both $\hplus$ and $\iplus$. Here, the hypersurfaces $\Sigma_{t^*}$ are level sets of the time function $t^*$.} 
	\label{Figure4}
\end{figure}

A rough formulation of the theorem is given below.

 In what follows, we shall employ the schematic notation of an estimate for a norm on the solution $\So$ to denote that said estimate holds, in that norm, for any quantity contained within the collection $\So$.

\begin{customthm}{1}\label{OVthmboundedness}
Let $\So$ be the smooth solution to the equations of linearised gravity arising from a smooth, asymptotically flat seed data set $\SosMa$ on $\Sigma$ in accordance with Theorem \ref{OVthmwellposedness}. We consider the initial-data-normalised solution
\begin{align*}
\SfMa=\So+\GfMa
\end{align*}
of section \ref{OVGaugenormalisedsolutionstotheequationsoflinearisedgravity} and define its projection
\begin{align*}
\So':=\SfMa-\Ke_{\mathsmaller{\mfM, \mathfrak{a}}}.
\end{align*}
Then the solution $\So'$ satisfies the following $r$-weighted energy and integrated decay estimates, concerning in particular up to $6$ angular derivatives of $\So'$, with the initial $r$-weighted energies on $\Sigma$ finite by assumption of asymptotic flatness.
\begin{enumerate}[i)]
	\item the flux estimates
	\begin{align}\label{cough1}
	\mathbb{F}^{5, \sn}[r^{-\frac{3}{2}}\So']\lesssim\mathbb{D}^5[\So'].
	\end{align}
	\item the integrated decay estimates
	\begin{align}\label{cough2}
	\mathbb{M}^{4, \sn}[r^{-\frac{3}{2}}\So']+\mathbb{I}^{5, \sn}[r^{-\frac{3}{2}}\So']\lesssim\mathbb{D}^5[\So'].
	\end{align}
\end{enumerate}
\end{customthm}
The flux norms in \eqref{cough1} and the integrated decay norms in \eqref{cough2} are such that, via now the Dafermos--Rodnianski hierarchy \cite{newmethod} of $r$-weighted energies to be discussed in section \ref{OVAside:thescalarwaveequationontheSchwarzschildexteriorspacetime} of the overview, one has by Sobolev embedding the corollary:
\begin{corollary*}[Uniform pointwise decay]
	The solution $\So'$ of Theorem 1, in addition to satisfying the uniform $r$-weighted pointwise bounds
	\begin{align*}
	\big|r^{-\frac{1}{2}}\Sf'\big|\lesssim\mathbb{D}^5[\So'],
	\end{align*}
	in fact decays to the future at an inverse polynomial rate:
	 \begin{align*}
	 \big|r^{-\frac{1}{2}}\Sf'\big|\lesssim\frac{1}{\sqrt{\taus}}\cdot\mathbb{D}^5[\So'].
	 \end{align*}
	
	In particular, the solution $\SfMa$ decays inverse polynomially to the linearised Kerr solution $\Ke_{\mathsmaller{\mfM, \mathfrak{a}}}$.
\end{corollary*}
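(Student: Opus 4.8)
The plan is to deduce the corollary from Theorem~\ref{OVthmboundedness} by combining Sobolev embedding on the leaves $\Xi_{\taus}$ with the $r^p$-weighted energy hierarchy of Dafermos--Rodnianski \cite{newmethod}; no further analysis of the equations of linearised gravity themselves is required.

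First I would establish the uniform pointwise bound. Fix $\taus$ and a point $q$ on the leaf $\Xi_{\taus}$, lying on the orbit sphere $S^2_{(u,v)}$ of area radius $r$. The flux estimate \eqref{cough1} of Theorem~\ref{OVthmboundedness} controls, in the appropriate $r$-weighted $L^2$-norm along $\Xi_{\taus}$, up to $6$ angular ($\sn$-)derivatives of the rescaled quantity $r^{-\frac{3}{2}}\So'$. Since $\dim S^2=2$, the Sobolev embedding $H^2(S^2)\hookrightarrow L^\infty(S^2)$ on $S^2_{(u,v)}$ uses only two $\sn$-derivatives, so the angular derivative budget is comfortable, and this converts the $L^2(S^2)$-control afforded by the flux into an $L^\infty(S^2)$-bound. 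To remove the remaining one-dimensional, $\mcalq$-directional, degeneracy I would apply a standard $r$-weighted Sobolev inequality along $\Xi_{\taus}$, which trades one power of $r$ and whose boundary contributions are controlled here because asymptotic flatness guarantees the finiteness of the initial $r$-weighted energies; this yields $\big|r^{-\frac{1}{2}}\Sf'\big|(q)\lesssim\mathbb{F}^{5,\sn}[r^{-\frac{3}{2}}\So'](\taus)$. Combined with \eqref{cough1}, this gives the first, non-decaying, estimate of the corollary, uniformly in $q$ and $\taus$.

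Next I would upgrade this to the asserted $\taus^{-\frac{1}{2}}$ decay, which reduces to showing that the flux $\mathbb{F}^{5,\sn}[r^{-\frac{3}{2}}\So'](\taus)$ itself decays at the rate $\taus^{-1}$. This is precisely the output of the Dafermos--Rodnianski $r^p$-hierarchy: one runs the $r^p$-weighted energy identities over $p\in[0,2]$ for the decoupled scalar quantities $\Philin$ (solving the Regge--Wheeler equation \eqref{OVeqnRW}) and $\Psilin$ (solving the Zerilli equation \eqref{OVeqnZ}), together with their $T$-commuted versions, and combines the resulting hierarchy with the integrated decay estimate \eqref{cough2} through a pigeonhole argument over dyadic $\taus$-intervals. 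Transferring the resulting decay of $\Philin,\Psilin$ back to $\So'$ through the elliptic relations \eqref{a1}--\eqref{a3}, which is legitimate because the Regge--Wheeler gauge forces $\So'=\SfMa-\Ke_{\mathsmaller{\mfM, \mathfrak{a}}}$ to be supported away from the $l=0,1$ spherical harmonics --- precisely the range on which $\mcaltf$ and $\mcaltv$ are invertible --- then yields $\mathbb{F}^{5,\sn}[r^{-\frac{3}{2}}\So'](\taus)\lesssim\taus^{-1}\,\mathbb{D}^5[\So']$, at the cost only of the fixed number of derivatives already absorbed into $\mathbb{D}^5[\So']$. Feeding this improved flux bound into the Sobolev step of the previous paragraph gives $\big|r^{-\frac{1}{2}}\Sf'\big|\lesssim\taus^{-\frac{1}{2}}\,\mathbb{D}^5[\So']$. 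Finally, since $\SfMa=\So'+\Ke_{\mathsmaller{\mfM, \mathfrak{a}}}$ and the linearised Kerr solution $\Ke_{\mathsmaller{\mfM, \mathfrak{a}}}$ is stationary, i.e. annihilated by $\mcall_T$, the decay of $\So'$ to zero is exactly the statement that $\SfMa$ converges, at an inverse polynomial rate, to $\Ke_{\mathsmaller{\mfM, \mathfrak{a}}}$.

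The step I expect to be most delicate is the $r$-weighted Sobolev embedding along $\Xi_{\taus}$. These hypersurfaces interpolate between a spacelike character near $\hplus$ and a null character near $\iplus$, and the $r$-weights in the flux norms of Theorem~\ref{OVthmboundedness} are calibrated to this transition; one must therefore check that the one-dimensional $r$-weighted inequality and the $S^2$-embedding compose to give precisely the weight $r^{-\frac{1}{2}}$, and not a worse power, in the pointwise bound, uniformly across the whole leaf. Relatedly, one must confirm that the flux decay delivered by the $r^p$-hierarchy is genuinely available in exactly the norm $\mathbb{F}^{5,\sn}[r^{-\frac{3}{2}}\,\cdot\,]$ figuring in the theorem, with no loss of derivatives beyond what $\mathbb{D}^5[\So']$ already accommodates.
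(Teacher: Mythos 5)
Your proposal is correct and is essentially the paper's own argument: the paper likewise derives the pointwise decay by feeding the Dafermos--Rodnianski $r^p$-hierarchy and pigeonhole argument (which, applied to $\Philin$ and $\Psilin$ and transferred through the elliptic relations \eqref{a1}--\eqref{a3}, yields the energy-decay statement $\linnorm{E}{r^{-\frac{3}{2}}\Sf'}[3,\sn]\lesssim\frac{1}{{\taus}^2}\cdot\lin{\Delta}^5[\So']$ recorded in Theorem \ref{mainthm1}) into Sobolev embedding on the spheres and along the leaves $\Xi_{\taus}$. The only slip is notational: with the paper's definitions what decays is the unweighted leaf energy $\mathbb{E}(\taus)$, not the norm $\mathbb{F}$ (which carries a $\sup_{\taus}$ and whose $r^{2}$-weighted top term is merely bounded), and it is precisely the interpolation between this decaying energy and the bounded weighted flux in your one-dimensional Sobolev step that accounts for the rate $\taus^{-1/2}$ rather than $\taus^{-1}$.
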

Here, $\taus$ is a function on $\mcalm$ the level sets of which correspond to the aforementioned hypersurfaces which penetrate both $\hplus$ and $\iplus$. See section \ref{AfoliationofMthatpenetratesbothhplusandiplus} in the bulk of the paper for a precise definition. 

We make the following remarks regarding Theorem \ref{OVthmboundedness}.

The first remarks concern the loss of derivatives in the statement of the theorem. Indeed, the loss of derivatives (at the top order) in \eqref{cough2} occurs due to the celebrated trapping effect and cannot be removed (see sections \ref{OVAside:thescalarwaveequationontheSchwarzschildexteriorspacetime} for further discussion. However, the loss in all but angular derivatives arises as a consequence of applying elliptic estimates on the angular operators $\mcaltf$ and $\mcaltv$ to the solution $\So'$ as given in the $\Ke_{\mathsmaller{\mfM, \mfa}}$-adapted Regge--Wheeler gauge and is recoverable with further work (see our upcoming \cite{J}).

The second remark concerns asymptotic flatness. Indeed, it is clear from the Corollary that pointwise asymptotic flatness is not preserved for the solution $\So'$. Moreover, there is a loss in $r$-weights in the norms of the Theorem statement\footnote{We are quite wasteful with the loss here -- in fact, for any quantity other than $\mling$ associated to the solution $\So$ one can replace the weight $r^{-\frac{3}{2}}$ to $r^{-1}$.}. However, this can be rectified by modifying the choice of the linear map $Df\big|_{g_M}$ in section \ref{OVTheequationsoflinearisedgravity} and this modification shall be performed in our upcoming \cite{J}. Since this procedure is slightly cumbersome however, for the purposes of this paper we prefer the simpler choice of gauge which we have utilised throughout. 

The third remark concerns the initial data norm required in the statement of the theorem. Indeed, we only requires a norm on the gauge-invariant quantities $\Philin$ and $\Psilin$ to be finite initially, and whether this is so can be verified explicitly from the seed data alone. We further note that this initial data norm in fact  propagates under evolution.

The final remark concerns whether a variant of the above Theorem holds for other potential choices of a generalised wave gauge. Indeed, the reader would agree that the most natural choice of the map $\boldsymbol{f}$ in section \ref{OVTheequationsoflinearisedgravity} would be the trivial one. In this case, the additional tensorial structure of the resulting equations \eqref{OVeqnlingrav1}-\eqref{OVeqnlingrav2} makes there analysis rather complicated. However, by appealing to a remarkable Maxwell-like structure within those equations, one can indeed formulate a weak version of the above Theorem for the linearised Einstein equations in such a gauge. See our upcoming \cite{J}.

\subsubsection{Aside: The scalar wave equation on the Schwarzschild exterior spacetime}\label{OVAside:thescalarwaveequationontheSchwarzschildexteriorspacetime}

We now make a brief aside to discuss a certain analogue of Theorem 1 which is known to hold for the simpler case of the scalar wave equation $\Box_{g_M}\psi=0$ on $\Mgs$ -- the main objective here then is to motivate the norms that appear in the statement of Theorem 1. Conversely, we point the reader to \cite{D--H--R} for a definitive review as to how one obtains these results.\newline

We consider a smooth solution $\psi$ to the scalar wave equation on Schwarzschild:
\begin{align}\label{OVwaveeqn}
\Box_{g_M}\psi=0.
\end{align}
We associate to $\psi$ the flux norms:
\begin{align}
\mathbb{F}_p[\psi]:&=\sup_{\taus\geq\taus_0}\int_{\Xi_{\taus}\cap\{r\leq R\}}\Big(|\pt  \psi|^2+|\pr  \psi|^2+|\sn  \psi|^2\Big)\ud r\ud\hat{\sigma}+\sup_{\taus\geq\taus_0}\int_{\Xi_{\taus}\cap\{r\geq R\}}\Big(r^p|\pv  (r\psi)|^2+|\sn  (r\psi)|^2\Big)\ud v\ud\hat{\sigma},\label{OVflux}\\
\mathbb{D}_p[\psi]:&=\int_{\Sigma}r^{p}\Big(|\pt  (r\psi)|^2+|\pr  (r\psi)|^2+|\sn  (r\psi)|^2\Big)\ud r\ud\hat{\sigma}\label{OVinitialflux}
\end{align}
along with the integrated decay norms:
\begin{align}
\mathbb{I}_p[\psi]:=&\int_{\taus_0}^\infty\int_{\Xi_{\taus}\cap\{r\leq R\}}(r-3M)^2\Big(|\pt  (r\psi)|^2+|\pr  (r\psi)|^2+|\sn  (r\psi)|^2+|  (r\psi)|^2\Big)\ud\taus\ud r\ud\hat{\sigma}\nonumber\\
+&\int_{\taus_0}^\infty\int_{\Xi_{\taus}\cap\{r\geq R\}}r^{p-1}\Big(|\pv  (r\psi)|^2+(2-p)|\sn  (r\psi)|^2\Big)\ud\taus\ud v\ud\hat{\sigma},\label{OViled}\\
\mathbb{M}[\psi]:=&\int_{\taus_0}^\infty\int_{\Xi_{\taus}}r^{-3}\Big(|\pt  (r\psi)|^2+|\pr  (r\psi)|^2+|\sn  (r\psi)|^2+|  (r\psi)|^2\Big)\ud\taus\ud r\ud\hat{\sigma}\label{OVied}.
\end{align}
Here, $\hat{\sigma}$ is the volume form on the unit round sphere, $\pv$ is the null vector\footnote{In the bulk of the paper we actually use a null geodesic vector field to define both the norm and the foliation as this is more geometric - the choice of $\pv$ here is for ease of presentation. A similar remark holds for the function $\taus$.} (in Schwarzschild-star coordinates)
\begin{align*}
\pv:=\opmu\pt+\ommu\pr
\end{align*}
and $\Xi_{\taus}$ is the level set of the function:
\begin{align*}
\tau^\star:=\begin{cases} 
t^* & r\leq R \\
u & r\geq R,
\end{cases}
\end{align*}
where $\pv u=0$ and $u|_{r=R}=t^*$. The flux norms \eqref{OVflux} and \eqref{OVinitialflux} thus denote energy norms containing all tangential and normal derivatives to $\Xi_{\taus}$ and $\Sigma$.

The above defined norms\footnote{The full norms we consider actually require supplementing the above with extra flux norms, in particular a flux norm along future null infinity. See section \ref{Fluxandintegrateddecaynorms}.} are the scalar wave prototypes of those found in the statement of Theorem 1 where $p=2$. Indeed, one has the following theorem which yields an analogue of Theorem 1 for the scalar wave equation on Schwarzschild.

\begin{theorem*}[Dafermos--Rodnianski - \cite{D--R}, \cite{newmethod}]
	Let $\psi$ be a smooth solution to \eqref{OVwaveeqn}. Then for any $n\geq1$ the following estimates hold, provided that the fluxes on the right hand side are finite.
	\begin{enumerate}[i)]
		\item for $0\leq p\leq 2$ the $r$-weighted flux estimates
		\begin{align}\label{OVfluxestimate}
		\mathbb{F}^n_p[\psi]+\mathbb{I}^n_p[\psi]\lesssim\mathbb{D}^n_p[\psi].
		\end{align}
		\item the integrated decay estimate
		\begin{align}\label{OViedestimate}
		\mathbb{M}^{n-1}[\psi]\lesssim\mathbb{D}^{n}[\psi].
		\end{align}
	\end{enumerate}
\end{theorem*}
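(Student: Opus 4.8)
The plan is to follow the by-now-standard Dafermos--Rodnianski framework for $\Box_{g_M}\psi=0$ on $\Mgs$, in which boundedness and the $r$-weighted/integrated-decay hierarchy are produced by combining four multiplier estimates --- the conserved $T$-energy (degenerate at the horizon), a redshift estimate at $\hplus$, a Morawetz-type integrated local energy decay estimate degenerating at the photon sphere $r=3M$, and the $r^p$-weighted hierarchy near $\iplus$ --- and then commuting the equation with the appropriate vector fields to reach all orders $n$. For complete details we refer to \cite{D--R} and \cite{newmethod} (and to the review in \cite{D--H--R}).

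\emph{Boundedness and the redshift.} First I would use that $T=\pt$ is Killing and causal on $\Mgs$ and that $\hplus$ is a Killing horizon: the energy identity associated to $T$, applied to the spacetime region bounded by two slices of the $\Xi_{\taus}$ foliation, $\hplus$ and $\iplus$, controls the (horizon-degenerate) $T$-flux on the later slice by its initial value, with a nonnegative flux through $\hplus$. To remove the degeneration at $r=2M$ and obtain the non-degenerate flux appearing in $\mathbb{F}_p[\psi]$, I would add the redshift multiplier $N$ of \cite{newmethod} --- a $T$-invariant timelike vector field transversal to $\hplus$ whose deformation tensor is positive in a neighbourhood of the horizon --- absorbing the resulting error into the good bulk term; commuting with $N$, whose commutator with $\Box_{g_M}$ is itself controlled by the redshift effect, yields the corresponding statement for horizon-transversal derivatives.

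\emph{Integrated local energy decay and the loss of a derivative.} Next I would establish the Morawetz estimate via a radial multiplier of the schematic form $f(r^*)\,\partial_{r^*}$, supplemented by the usual Lagrangean and zeroth-order corrections, with $f$ engineered to vanish at $r=3M$; this produces a spacetime estimate whose bulk is nonnegative but degenerates like $(r-3M)^2$ at the photon sphere --- an unavoidable feature owing to the trapped null geodesics there --- which, combined with the previous step, gives $\mathbb{I}_p[\psi]$. To obtain the non-degenerate weighted norm $\mathbb{M}[\psi]$ one commutes with the angular momentum operators $\Omega_i$ (Killing, hence commuting with $\Box_{g_M}$) and uses ellipticity of $\slap$ on the spheres of symmetry to trade the $(r-3M)^2$ weight for one extra angular derivative; this is precisely the source of the loss $\mathbb{M}^{n-1}[\psi]\lesssim\mathbb{D}^{n}[\psi]$ in part ii).

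\emph{The $r^p$ hierarchy near $\iplus$ and conclusion.} Finally, in $\{r\geq R\}$ I would run the $r^p$-weighted hierarchy of \cite{newmethod}: applying the multiplier $r^p\,\pv$ to the radiation field $r\psi$ gives, for $0\leq p\leq 2$, an estimate bounding the $r^p$-weighted flux of $r\psi$ on a later $\Xi_{\taus}$ together with an $r^{p-1}$-weighted spacetime term by the initial $r^p$-weighted flux plus an error supported near $r=R$ that is controlled by the Morawetz estimate; chaining the $p$ and $p-1$ cases and pigeonholing over dyadic time intervals converts this hierarchy into the estimates \eqref{OVfluxestimate}, while the higher-order statements follow by additionally commuting with $T$, with the $\Omega_i$ and with the weighted operators $r^k\pv$, none of which spoils the structure. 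I expect the main obstacle to be the construction of the Morawetz multiplier so that its bulk term is globally nonnegative while degenerating precisely and only at $r=3M$, together with the matching of this interior estimate to the $r^p$-hierarchy across $r=R$; the low, and in particular the spherically symmetric, modes also require separate care in both regions.
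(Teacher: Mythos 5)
Your outline is correct and is precisely the standard Dafermos--Rodnianski argument ($T$-energy plus redshift at $\hplus$, a Morawetz estimate degenerating at $r=3M$ with the derivative loss recovered by commutation, and the $r^p$-hierarchy for $0\leq p\leq 2$ near $\iplus$); the paper itself offers no proof of this statement, quoting it as known and deferring to \cite{D--R}, \cite{newmethod} and the review in \cite{D--H--R}, so there is nothing to compare against beyond noting that your sketch faithfully reproduces the cited argument. The only small imprecision is that the dyadic pigeonholing step is needed for the \emph{decay} of the energy rather than for the uniform flux and integrated-decay bounds \eqref{OVfluxestimate}--\eqref{OViedestimate} as stated, which follow directly from the multiplier identities without it.
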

Here, the above are natural higher order norms defined by replacing $\psi$ in \eqref{OVflux}-\eqref{OVied} with the appropriate derivatives. 

We note the loss of derivative in the integrated decay estimate - this is a consequence of the celebrated trapping effect on black hole spacetimes, arising in this instance from the existence of trapped null geodesics at $r=3M$. Indeed, a result of Sbierski \cite{Sbierski} shows that a statement such as \eqref{OViedestimate} cannot hold without a loss of derivative. In fact, the authors of \cite{Marz--Met--Tat--Toh} improve the loss in \eqref{OViedestimate} to only a logarthimic loss.

\subsubsection{Outline of the proof of Theorem 1}\label{OVOutlineoftheproofI-boundednessanddecayforsolutionstoequationsofRegge--WheelertypeontheSchwarzschildexteriorspacetime}

We return now to the equations of linearised gravity by discussing the proofs of Theorems 1. In fact, as previously discussed in section \ref{OVGaugenormalisedsolutionstotheequationsoflinearisedgravity}, the proof essentially follows from the ellipticity of the operators $\mcaltf$ and $\mcaltv$ combined with the following Theorem regarding solutions to the Regge--Wheeler and Zerilli equations and shall therefore not be discussed further.

For further details of the proof, see section \ref{Proofoftheorem1} in the bulk of the paper.\newline

Indeed, the result we are to use is as follows.
\begin{customthm}{2}
	Let $\Phi, \Psi\in\LM$ be smooth solutions to the Regge--Wheeler and Zerilli equations respectively:
	\begin{align*}
	\qbox\Phi+\slap\Phi=-\frac{6}{r^2}\frac{M}{r}\Phi,\qquad\qbox\Psi+\slap\Psi=-\frac{6}{r^2}\frac{M}{r}\Psi+\frac{24}{r^3}\frac{M}{r}(r-3M)\szetap{1}\Psi+\frac{72}{r^5}\frac{M}{r}\frac{M}{r}(r-2M)\szetap{2}\Psi.
	\end{align*}
	We assume finiteness of the initial flux norms
	\begin{align*}
	\norm{D}{r^{-1}\Phi, r^{-1}\Psi}[5].
	\end{align*}
	Then the following estimates hold.
	\begin{enumerate}[i)]
		\item the flux estimates 
		\begin{align*}
		\norm{F}{r^{-1}\Phi, r^{-1}\Psi}[5]&\lesssim\norm{D}{r^{-1}\Phi, r^{-1}\Psi}[5]
		\end{align*}
		\item the integrated decay estimates
		\begin{align*}
		\norm{M}{r^{-1}\Phi, r^{-1}\Psi}[4]+\norm{I}{r^{-1}\Phi, r^{-1}\Psi}[5]&\lesssim\norm{D}{r^{-1}\Phi, r^{-1}\Psi}[5]
		\end{align*}
	\end{enumerate}
	In addition, for any $\taus\geq\taus_0$, one has the decay estimates
	\begin{align*}
	\norm{E}{r^{-1}\Phi, r^{-1}\Psi}[3](\tau^\star)\lesssim\frac{1}{{\tau^\star}^2}\cdot\norm{D}{r^{-1}\Phi, r^{-1}\Psi}[5].
	\end{align*}
\end{customthm}
We note that the presence of the $r$-weight in the theorem statement is necessary as one expects solutions to the Regge--Wheeler and Zerilli equations to be only bounded on future null infinity. Indeed
\begin{align*}
\qbox\cdot+\slap\cdot=r\,\Box(r^{-1}\cdot)+\frac{1}{r}\frac{\mu}{r}\cdot.
\end{align*}

That a result such as Theorem 2 holds in by now well-known in the literature (see \cite{Holz}, \cite{B--S}, \cite{Me} and \cite{H--K--W}) and shall therefore be applied freely in this paper, although for completeness we shall present a proof in our upcoming \cite{J}.\newline\newline

So ends our detailed overview regarding the linear stability of the Schwarzschild exterior family as solutions to the Einstein vaccum solutions when expressed in a generalised wave gauge. Before we begin the paper proper however in the next section we shall state the conjecture due to Dafermos, Holzegel and Rodnianski that relates to a restricted nonlinear stability result for the Schwarzschild exterior family which the results of this paper should be in principle sufficient to resolve in the affirmative. 

\subsection{The restricted nonlinear stability conjecture of Dafermos, Holzegel and Rodnianski}\label{OVArestrictednonlinearstabilityconjecture}

The conjecture, lifted verbatim from \cite{D--H--R}, is as follows.
\begin{conjecture*}[Dafermos--Holzegel--Rodnianski, \cite{D--H--R}]
Let $(\Sigma_M, \bar{g}_M, K_M)$ be the induced data on a spacelike asymptotically flat slice of the Schwarzschild solution of mass M crossing the future horizon and bounded by a trapped surface. Then in the space of all nearby vacuum data $(\Sigma, \bar{g}, K)$, in a suitable norm, there exists a codimension-3 subfamily for which the corresponding maximal vacuum Cauchy development $(\mcalm, g)$ contains a black-hole exterior region (characterized as the past $J^-(\iplus)$ of a complete future null infinity $\mcali^+$), bounded by a non-empty future affine-complete event horizon $\hplus$, such that in $J^-(\iplus)$ (a) the metric remains close to $g_M$ and moreover (b) asymptotically settles down to a nearby Schwarzschild metric $g_{\tilde{M}}$ at suitable inverse polynomial rates.
\end{conjecture*}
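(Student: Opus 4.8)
The plan is to treat the conjecture as a small-data global-existence-and-stability problem for the Einstein vacuum equations expressed in exactly the generalised $\boldsymbol{f}$-wave gauge with respect to a fixed Schwarzschild exterior $\Mgs$ that underlies the present paper, i.e.\ the nonlinear system \eqref{m2}--\eqref{b2} with $\boldsymbol{f}$ a fixed nonlinear map whose linearisation is the $Df\big|_{g_M}$ of \eqref{chin} (or, in order to preserve asymptotic flatness, its modification from \cite{J}). Writing $\boldsymbol{g}=g_M+h$ and decomposing $h$ under the $2+2$ formalism of section~\ref{OVTheequationsoflinearisedgravity}, one obtains a quasilinear tensorial wave system whose linear part is precisely the equations of linearised gravity studied in the present paper and whose remainder is a genuinely quadratic-and-higher nonlinearity $\mathcal{N}(h,\partial h,\partial^2 h)$. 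The two structural inputs that make this tractable, both already visible in the linear theory, are: (i) since $\bC_{g_M, g_M}=0$, no undifferentiated $h$ and no explicit first-order derivatives of $h$ appear in the low-order part of the reduced wave equations; and (ii) the Regge--Wheeler and Zerilli quantities $\Philin,\Psilin$ continue to satisfy decoupled equations of Regge--Wheeler/Zerilli type \emph{modulo nonlinear sources}, so the full system retains a hierarchical, almost-decoupled structure on which a bootstrap can be run.

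First I would set up a continuity/bootstrap argument on the $r$-weighted energy and integrated-decay norms of Theorem~\ref{OVthmboundedness} together with their flux analogues along $\iplus$, postulating that the difference between $\boldsymbol{g}$ and a nearby Kerr solution of parameters $(\tilde M,\mathfrak{a})$ --- parameters that are \emph{not} prescribed but are to be extracted dynamically --- is small in these norms. The estimates would then be closed in three stages: (a) bound $\Philin,\Psilin$ by applying the Regge--Wheeler/Zerilli machinery behind Theorem~2 to the nonlinearly-sourced equations, absorbing the sources via the Dafermos--Rodnianski $r^p$-hierarchy and the bootstrap smallness; (b) transfer these bounds to the metric components, which first requires exploiting the residual (nonlinear) gauge freedom --- i.e.\ solving a quasilinear wave equation $\Box_{\boldsymbol{g}}v=\boldsymbol{f}\big(\mcall_v\boldsymbol{g}+\dots\big)$ for a generating vector field $v$ --- to place $h$ in a nonlinear analogue of the $\Ke$-adapted Regge--Wheeler gauge of section~\ref{OVGaugenormalisedsolutionstotheequationsoflinearisedgravity}, after which one inverts the elliptic operators $\mcaltf,\mcaltv$ exactly as in the relations \eqref{a1}--\eqref{a3}; (c) feed the resulting metric bounds back into $\mathcal{N}$ to close the loop. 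The codimension-$3$ feature enters precisely where the elliptic inversion meets the kernel of $\mcaltf,\mcaltv$: the three-dimensional span of the angular-momentum directions $\mfa_{-1},\mfa_0,\mfa_1$ inside the linearised Kerr family $\Ke_{\mathsmaller{\mfM,\mathfrak{a}}}$ is the unique obstruction to decay, so one fixes three scalar parameters in the Cauchy data by a Lyapunov--Schmidt / implicit-function argument to annihilate the non-decaying projection, while the mass direction is \emph{not} quotiented --- instead $\tilde M$ is read off dynamically and the whole gauge is re-centred about $g_{\tilde M}$ (this surviving modulation is why the conjectured codimension is $3$ and not $4$).

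The hard part will be the interplay between the Schwarzschildean long-range geometry and the weak null condition, exactly the difficulty that made \cite{L--R} delicate: as flagged in the remarks after Theorem~\ref{OVthmboundedness}, the component $\mling$ already loses $r$-weights at the linear level, so the naive energy fluxes do not close, and one must either pass to the modified $\boldsymbol{f}$ of \cite{J} or run Lindblad--Rodnianski-type weighted energy estimates --- with both $\langle t-r\rangle$- and $\langle r\rangle$-weights, and accommodating the logarithmically-distorted characteristics sourced by the mass $M$ --- that tolerate the $|\partial h|^2$-type quadratic terms which satisfy only the weak null condition. A second genuine obstacle is closing the \emph{top-order} energy in the presence of the trapping-induced derivative loss in the integrated-decay estimate \eqref{cough2}: one must commute with the full vector-field set adapted to the redshift, the $r^p$-hierarchy and the trapping at $r=3M$, and verify that the elliptic inversion of $\mcaltf,\mcaltv$ (which costs angular derivatives) together with the quasilinear gauge equation for $v$ does not overspend the derivative budget --- in effect a tight loss-of-derivatives versus gain-of-decay accounting. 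Finally, having produced the exterior region, one must show $\iplus$ is complete and $\hplus$ is non-empty and affine-complete; given the inverse-polynomial decay rates of part ii) of Theorem~\ref{OVthmboundedness}, this should follow by tracking null generators as in \cite{D--H--R} and \cite{C--K}.
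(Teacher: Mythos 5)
The statement you have been asked to prove is not a theorem of this paper: it is the restricted nonlinear stability \emph{conjecture} of Dafermos--Holzegel--Rodnianski, quoted verbatim from \cite{D--H--R} in section \ref{OVArestrictednonlinearstabilityconjecture}, and the paper offers no proof of it. The paper's claim is only that its linear results are ``in principle sufficient'' to \emph{attempt} a resolution via a generalised wave gauge; the only proof in the literature alluded to is the Klainerman--Szeftel work \cite{K--S} in the polarised axially symmetric class, which proceeds by entirely different (non-wave-gauge) methods. So there is no ``paper's own proof'' for your proposal to match, and what you have written is a research programme, not a proof. As a programme it is faithful to the route the paper advocates, but each of your three stages contains an unresolved analytic problem that \emph{is} the content of the conjecture, and you should not present the outline as if the bootstrap closes.

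The most concrete gaps are these. First, in stage (a) you assert that $\Philin,\Psilin$ ``continue to satisfy decoupled equations of Regge--Wheeler/Zerilli type modulo nonlinear sources''; the paper itself warns (footnote in section \ref{OVGaugenormalisedsolutionstotheequationsoflinearisedgravity}, citing \cite{K--S}) that this decoupling is ``completely artificial to the linearised setting'' --- nonlinearly the sources contain the full metric perturbation at top order, so controlling them presupposes stage (c), and you give no accounting of how the circle is broken. Second, the derivative budget does not visibly close: the elliptic inversion of $\mcaltf,\mcaltv$ costs four angular derivatives (Proposition \ref{propellipticestimatesonT}), the integrated decay estimate \eqref{cough2} loses a derivative at $r=3M$, and Theorem \ref{mainthm1} as stated already loses derivatives in all but angular directions; iterating these losses inside a quasilinear bootstrap is precisely the ``tight accounting'' you defer, and it is not known to work. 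Third, the decay actually proved here --- $\taus^{-1/2}$ pointwise, $\taus^{-2}$ for a three-derivative energy --- is at the threshold for quadratic nonlinearities satisfying only the weak null condition, and the gauge of this paper is not asymptotically flat ($\mling$ loses $r$-weights), so even the linear input to your scheme requires the unpublished modification \cite{J}. Fourth, your stage (b) requires solving a quasilinear wave equation for the gauge vector field $v$ \emph{globally} to the future with smallness, i.e.\ realising a nonlinear Regge--Wheeler gauge; no mechanism for this is given, and the linear construction in Theorem \ref{thminitialdatagauge} is purely an initial-data normalisation that does not obviously survive nonlinear evolution. Finally, the modulation step (extracting $\tilde M$ dynamically and killing the three angular-momentum directions by a Lyapunov--Schmidt argument) is stated but not constructed; identifying the final parameters teleologically is one of the central difficulties of the nonlinear problem, not a corollary of the linear kernel computation.
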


Consequently, the body of work we are about to present, discussed in detail over the previous sections of the overview, is in principle sufficient to try and resolve the above conjecture in the affirmative.\newline

\subsection{Acknowledgements}

I would like to thank my advisor Gustav Holzegel for the many helpful comments, insights and assistance provided over the period of time in which this paper was written. I would also like to thank Gustav Holzegel and Mihalis Dafermos for their invaluable help in preparing this document. I also thank Martin Taylor and Joe Keir for many helpful and interesting discussions over the years.  We also emphasize once more the significant role played by the paper \cite{D--H--R} in the organisational structure of this paper.

Finally, we acknowledge support through an ERC Starting Grant of Gustav Holzegel and the EPSRC grant
EP/K00865X/1.

\section{The Einstein vacuum equations in a generalised wave gauge}\label{ThevacuumEinsteinequationsinageneralisedwavegauge}

In this section we introduce the notion of a generalised wave gauge on an abstract Lorentzian manifold and then review the structure of the Einstein vacuum equations when expressed in such a gauge.

It is these equations that we shall linearise about a fixed Schwarzschild exterior solution in section \ref{TheequationsoflinearisedgravityaroundSchwarzschild}.

\subsection{The generalised wave gauge}\label{Thegeneralisedwavegauge}

In this section we provide the definition of a generalised wave gauge on an abstract Lorentzian manifold as it is found in \cite{C-Bbook}.

We note that the specification of such a gauge plays a vital role in the works \cite{H--V} and \cite{H}.\newline

Let $\big(\bold{\mcalm}, \bold{g}\big)$ and $\big(\bold{\mcalm}, \bold{\overline{g}}\big)$ be $3+1$ Lorentzian manifolds with\footnote{Here we recall the notation $T^k(\mcalm)$ for the space of $k$-covariant tensor fields on $\mcalm$.} $\boldsymbol{f}:T^2(\bmcalm)\times T^2(\bmcalm)\rightarrow T\bmcalm$ a smooth map. 

Then we say that $\boldsymbol{g}$ is in a generalised $\boldsymbol{f}$-wave gauge with respect to $\boldsymbol{\overline{g}}$ iff the identity map
\begin{align*}
\text{Id}:\big(\boldsymbol{\mcalm}, \boldsymbol{g}\big)\rightarrow\big(\boldsymbol{\mcalm}, \boldsymbol{\overline{g}}\big)
\end{align*}
is an $\boldsymbol{f}(\bg, \bog)$-wave map. Denoting by $\boldsymbol{C_{g, \overline{g}}}$ the connection tensor of $\bg$ and $\bog$, 
\begin{align}\label{eqndefnconnectiontensor}
\big(\boldsymbol{C_{g, \overline{g}}}\big)\boldsymbol{^\alpha_{\beta\gamma}}:=\frac{1}{2}\bold{\big(g^{-1}\big)}[][\alpha\delta]\Big(2\bold{\overline{\nabla}}[(\beta]\bold{g}[\gamma)\delta]-\bold{\overline{\nabla}}[\delta]\bold{g}[\beta\gamma]\Big)
\end{align}
with $\boldsymbol{\overline{\nabla}}$ the Levi-Civita connection associated to $\boldsymbol{\overline{g}}$, the imposition of such a gauge is equivalent to the condition
\begin{align}\label{eqndefnofageneralisedwavegauge}
\boldsymbol{g^{-1}}\cdot\boldsymbol{C_{g, \overline{g}}}=\boldsymbol{f}(\bg, \bog).
\end{align}

\subsection{The Einstein equations}\label{TheEinsteinequations}

In this section we present the vacuum Einstein equations assuming that a generalised wave has been imposed.\newline

Indeed, if $\bg$ is in a generalised $\boldsymbol{f}$--wave gauge with respect to $\bog$ the vacuum Einstein equations for $\boldsymbol{g}$,
\begin{align*}
\bold{\textbf{Ric}}[\alpha\beta] [\bold{g}]=0,
\end{align*}
take the following form:
\begin{align}\label{einstein equations in a generalised wave gauge}
\bold{\big(g^{-1}\big)}[][\gamma\delta]\bold{\overline{\nabla}}[\gamma]\bold{\overline{\nabla}}[\delta]\bold{g}[\alpha\beta]+2\boldsymbol{C^{\gamma}_{\delta\epsilon}}\cdot\bold{g}[\gamma(\alpha]\bold{\on}[\beta)]\bold{\big(g^{-1}\big)}[][\delta\epsilon]-4\bold{g}[\delta\epsilon]\boldsymbol{C^{\epsilon}_{\beta[\alpha}}\bold{\on}[{\gamma]}]\bold{\big(g^{-1}\big)}[][\gamma\delta]-4\boldsymbol{C^\delta_{\beta[\alpha}}\boldsymbol{C^\gamma_{\gamma]\delta}}+2\bold{g}[][\gamma\delta]\bold{g}[\epsilon(\alpha]\bold{\overline{\textbf{Riem}}}[\beta)\gamma\delta][\epsilon]\nonumber\\=2\bold{g}[\gamma(\alpha]\bold{\overline{\nabla}}[\beta)]\bold{f}[][\gamma](\bg, \bog).
\end{align}
Here, $\boldsymbol{\overline{\textbf{Riem}}}$ is the Riemann tensor of $\boldsymbol{\overline{g}}$ and $\boldsymbol{C}$ is defined as in \eqref{eqndefnconnectiontensor}.

\section{The exterior Schwarzschild background}\label{The exteriorSchwarzschildbackground}

In this section we define the Schwarzschild exterior spacetime as well as introducing various background objects and operations that shall prove vital throughout the remainder of the paper.

\subsection{The differential structure and metric of the Schwarzschild exterior spacetime}\label{ThedifferentialstructureandmetricoftheSchwarzschildexteriorspacetime}

We begin in this section by defining the differential structure and metric of the Schwarzschild exterior spacetime $\big(\mcalm,g_M\big)$.\newline

Let $M>0$ be a fixed parameter.

 We define the smooth manifold with boundary 
\begin{align*}
\mcalm:=(-\infty, \infty)\times [2M, \infty)\times S^2
\end{align*}
and endow it with the coordinate system $\big(t^*, r, \theta, \varphi\big)$. Here $S^2$ is the 2-sphere with $(\theta, \varphi)$ the standard angular coordinates. Equipping $\mcalm$ with the smooth Ricci-flat Lorentzian metric
\begin{align}\label{schwarzschildmetric}
g_M=-\bigg(1-\frac{2M}{r}\bigg)\ud {t^*}^2+\frac{4M}{r}\ud t^*\ud r+\bigg(1+\frac{2M}{r}\bigg)\ud r^2 +r^2\mathring{g},
\end{align}
with $\mathring{g}$ the metric on the unit round sphere, thus defines the Schwarzschild exterior spacetime (of mass $M$) as the Lorenztian manifold with boundary $\Mgs$. It is moreover time-orientable, with time-orientation given by the hypersurface-orthogonal vector field
\begin{align*}
T=\pt.
\end{align*}

The boundary of $\mcalm$, which we denote by
\begin{align*}
\mathcal{H}^+:=(-\infty,\infty)\times\{2M\}\times S^2,
\end{align*}
is a null hypersurface termed \emph{the future event horizon}. 

Subsequently, the coordinate system described by the coordinates $(t^*, r, \theta, \varphi)$, the standard degeneration of the angular coordinates understood, defines\footnote{Strictly speaking there is a ambiguity up to translations by a constant in the time function $t^*$. This will be removed in section \ref{TheCauchyhypersurfaceSigmaandsmitensoranalysis} by the specification of the initial hypersurface $\Sigma$.} the so-called \emph{Schwarzschild-star} coordinate system on $\mcalm$. Observe thus that the coordinate $r$ is an area radius function for the 2-spheres:
\begin{align*}
\textnormal{Area}\big(S^2_{t^*,r}\big)=4\pi r^2
\end{align*}
with
\begin{align*}
S^2_{t^*,r}:=\{t\}\times\{r\}\times S^2\subset\mcalm.
\end{align*}
It is moreover manifest that the generators $\Omega$ of the rotation group $SO(3)$ and the causal vector field $T$
are Killing fields for $g_M$:
\begin{align*}
\mcall_\Omega g_M=0,\qquad\mcall_Tg_M=0.
\end{align*}
The Schwarzschild exterior spacetime is thus both \emph{static} and \emph{spherically symmetric}. \newline

A Penrose diagram of the exterior Schwarzschild spacetime can be found in section \ref{Thetaustarfoliation}.

\subsection{The 2+2 formalism}\label{The2+2formalism}

We continue in this section by detailing the so-called $2+2$ formalism on $\mcalm$.

We shall make heavy use of the enlargened mathematical toolbox this formalism provides throughout the paper.

\subsubsection{The 2+2 decomposition of tensor fields on $\mcalm$}\label{The2+2decompositionoftensorfieldsonM}

We begin by employing this formalism to decompose tensor fields on $\mcalm$ into $\mcalq$-tensors, $S$-tensors and $\mcalq\otimes S$-tensors respectively.\newline

Observe that one can express the manifold $\mcalm$ as
\begin{align*}
\mcalm&=\mcalq\times S^2
\end{align*}
where\footnote{Here $\mathcal{H}$ is the half-space, not to be confused with the event horizon.} 
\begin{align*}
\mcalq\cong\mathbb{R}\times\mathcal{H}
\end{align*}
is a manifold with boundary. Consequently, if $V$ is a vector field on $\mcalm$ then we say
\begin{itemize}
	\item $V$ is a $\mcalq$-vector field iff
	$V\Big(f\big|_{S^2}\Big)=0$
	for all smooth functions $f\big|_{S^2}$ on $S^2$
	\item  $V$ is an $S$-vector field iff
	$V\Big(f\big|_{\mcalq}\Big)=0$
	for all smooth functions $f\big|_{\mcalq}$ on $\qm$
\end{itemize}
This leads to the following definition.
\begin{definition}
	Let $\mfT$ be an $n$-covariant tensor field on $\mcalm$. Then we say that $\mfT$ is an $n$-covariant $\qm$-tensor field iff $\mfT$ vanishes when acting on any $S$-vector field $\sV$:
	\begin{align*}
	\mfT\big(\cdot,...,\sV,...,\cdot\big)=0.
	\end{align*}
	Conversely, we say that $\mfT$ is an $n$-covariant $\sm$-tensor field iff $\mfT$ vanishes when acting on any $\qm$-vector field $\qV$:
	\begin{align*}
	\mfT\big(\cdot,...,\qV,...,\cdot\big)=0.
	\end{align*}
	Finally, if $\mfT$ is a symmetric 2-covariant tensor field on $\mcalm$, then we say that $\mfT$ is a $\qm\otimes\sm$ 1-form iff $\mfT$ vanishes when acting purely on $\qm$-vector fields $\qV$ or purely on $\sm$-vector fields $\sV$:
	\begin{align*}
	\mfT\big(\qV_1, \qV_2\big)=\mfT\big(\sV_1,\sV_2)=0.
	\end{align*}
\end{definition}
Note by convention we set a 0-covariant $\qm$-tensor field and a 0-covariant $\sm$-tensor field to be simply a scalar field on $\mcalm$.\newline

Given any tensor field on $\mcalm$ one projects it onto $\qm$-tensor fields and $\sm$-tensor fields as follows.

First, let $V$ be a vector field on $\mcalm$. Then we define
\begin{itemize}
	\item the projection of $V$ onto $\qmm$ is the $\qm$-vector field $\qV$ defined by $\qV\Big(f\big|_{\qm}\Big)=V\Big(f\big|_{\qm}\Big)$ for every smooth function $f\big|_{\qm}$ on $\qm$
	\item the projection of $V$ onto $\smm$ is the $\sm$-vector field $\sV$ defined by $\sV\Big(f\big|_{\sm}\Big)=V\Big(f\big|_{\sm}\Big)$ for every smooth function $f\big|_{\sm}$ on $S^2$
\end{itemize}
This leads to the subsequent definition.
\begin{definition}\label{defnprojection}
	Let $\mfT$ be an $n$-covariant tensor field on $\mcalm$. Then we define the projection of $\mfT$ onto $\qmm$ to be the $n$-covariant $\qm$-tensor $\widetilde{\mfT}$ defined by
\begin{align*}
\widetilde{\mfT}\big(V_1, ..., V_n\big)=\mfT\big(\qV_1,...,\qV_n\big),
\end{align*}
where $V_1,..., V_n$ are an $n$-tuple of vector fields on $\mcalm$ with $\qV_1,..., \qV_n$ the relative projections onto $\qmm$.

Conversely, we define the projection of $\mfT$ onto $\smm$ to be the $n$-covariant $\sm$-tensor $\slashed{\mfT}$ defined by
\begin{align*}
\slashed{\mfT}\big(V_1, ..., V_n\big)=\mfT\big(\sV_1,...,\sV_n\big),
\end{align*}
where $V_1,..., V_n$ are an $n$-tuple of vector fields on $\mcalm$ with $\sV_1,..., \sV_n$ the projections onto $\smm$.

Finally, if $\mfT$ is a symmetric 2-covariant tensor field on $\mcalm$ then we define the projection of $\mfT$ onto $\qmm\times\smm$ to be the $\qmsm$ 1-form $\stkout{\mfT}$ defined by
\begin{align*}
\text{\sout{\ensuremath{\mfT}}}\big(V_1, V_2\big)=\mfT\big(\qV_1, \sV_2\big)+\mfT\big(\sV_1, \qV_2\big).
\end{align*}
\end{definition}
Note that a symmetric 2-covariant tensor $\mfT$ is completely specified by the projections $\widetilde{\mfT}, \stkout{\mfT}$ and $\slashed{\mfT}$.\newline

One can use this to define the projection of maps
\begin{align*}
\qf(\mfT, \mfT'):=f(\widetilde{\mfT}, \widetilde{\mfT}')
\end{align*}
A particularly useful application of this decomposition is to the Schwarzschild metric $g_M$. Indeed
\begin{itemize}
	\item the projection of $g_M$ onto $\qmm$ results in the symmetric 2-covariant $\qm$-tensor $\qg_M$ which we refer to as a $\qm$-metric 
	\item the projection of $g_M$ onto $\qmm\times\smm$ is trivial
	\item the projection of $g_M$ onto $\smm$ results in the symmetric 2-covariant $\sm$-tensor $\sg_M$ which we refer to as an $\sm$-metric
\end{itemize}
Note that, in light of spherical symmetry,
\begin{align*}
\mathcal{L}_{\Omega}\qg_M=0,\qquad
\sg_M=r^2\mathring{g}.
\end{align*} 

\subsubsection{Tensor analysis}\label{Tensoranalysis}

We now develop a series of natural operations and differential operators on tensor fields on $\mcalm$ that arise as a result of the $2+2$ formalism of the previous section, in particular the $2+2$ decomposition of the metric $g_M$.

We begin with the operations.

In what follows, $\qepsilon$ and $\sepsilon$ are the unique 2-forms on $\mcalm$ such that $-\qepsilon\qdot\qepsilon=\sepsilon\sdot\sepsilon=2$. Moreover, $\mathfrak{T}$ and $\mathfrak{T}'$ denote $n$-covariant and $m$-covariant tensor fields on $\mcalm$ respectively with $\mathfrak{t}$ and $\mft'$ denoting 1-forms on $\mcalm$. Finally, we shall employ abstract index notation.
\begin{itemize}
\item the index raising operators $\tilde{\sharp}$ and $\slashed{\sharp}$ are defined by  
\begin{align*}
\Big(\mfT^{\tilde{\sharp}}\Big)^{a_1}_{\ \ a_2...a_n}:=\qg_M^{a_1b}\mfT_{ba_2...a_n},\qquad\Big(\mfT^{\slashed{\sharp}}\Big)^{a_1}_{\ \ a_2...a_n}:=\sg_M^{a_1b}\mfT_{ba_2...a_n}
\end{align*}
\item the contraction operators $\tilde{\cdot}$ and $\slashed{\cdot}$ are defined by
\[(\mfT\,\tilde{\cdot} \,\mfT')_{a_{m+1}...a_n}:=\mfT_{a_1...a_ma_{m+1}...a_n}\mfT'_{b_1...b_m}\qg_M^{a_1b_1}...\qg_M^{a_mb_m},\quad(\mfT\,\slashed{\cdot} \,\mfT')_{a_{m+1}...a_n}:=\mfT_{a_1...a_ma_{m+1}...a_n}\mfT'_{b_1...b_m}\qg_M^{a_1b_1}...\qg_M^{a_mb_m}\]
\item the norm $|.|_\sg$ operator are defined by
\begin{align*}
|\mfT|_{\sg_M}^2:=\mfT\,\slashed{\cdot}\,\mfT
\end{align*}
\item the trace operators $\qtr$ and $\str$ are defined by
\begin{align*}
\qtr \mfT:=\qg_M\,\tilde{\cdot}\, \mfT,\qquad\qtr \mfT:=\sg_M\,\slashed{\cdot}\, \mfT
\end{align*}
\item the symmetrised product operator (on 1-forms) $\otimes$ is defined by
\begin{align*}
\big(\mft\otimes \mft'\big)_{ab}:=\mft_a\mft'_b+\mft_b\mft'_a
\end{align*}
\item the traceless symmetrised product operators $\qotimeshat$ and $\sotimeshat$ are defined by
\begin{align*}
\big(\mft\qotimeshat \mft'\big)_{ab}:=\mft_a\mft'_b+\mft_b\mft'_a-\big({\qg_M}\big)_{ab}\mft^c\mft'_c,\qquad\big(\mft\sotimeshat \mft'\big)_{ab}:=\mft_a\mft'_b+\mft_b\mft'_a-\big({\sg_M}\big)_{ab}\mft^c\mft'_c
\end{align*}
\item the Hodge star operators $\qhd$ and $\shd$ are defined by
\begin{align*}
\tilde{\star} \mft:=\tilde{\epsilon}\qdot\mft,\qquad\slashed{\star} \mft:=\slashed{\epsilon}\sdot\mft
\end{align*}
\end{itemize}

Next are the differential operators. 

In what follows, $\qD$ and $\sD$ are the unique derivative operators on $\mcalm$ such that $\qD\qg_M=\sD\sg_M=0$. Moreover, $\mfp$ denotes a $p$-form on $\mcalm$ and $V, V_1,...,V_p$ denote smooth vector fields on $\mcalm$.
\begin{itemize}
	\item the derivative operators $\qn$ and $\sn$ are defined by
	\begin{align*}
	\qn_V\mfT:=\qD_{\qV}\mfT,\qquad\sn_V \mfT:=\sD_{\sV}\mfT
	\end{align*}
	\item the exterior derivative operators $\qexd$ and $\sexd$ are defined by
	\begin{align*}
	\big(\qexd\mfp\big)\big(V_1,...,V_p\big):=\big(\mathrm{d} \mfp\big)\big(\qV_1,..., \qV_p\big),\qquad\big(\sexd\mfp\big)\big(V_1,...,V_p\big):=\big(\mathrm{d} \mfp\big)\big(\sV_1,..., \sV_p\big)
	\end{align*}
	\item the divergence operators $\qd$ and $\sdiv$ are defined by
	\[\tilde{\delta} \mfT:=-\qtr\big(\qn\mfT\big),\qquad\sdiv\mfT:=\str\big(\sn\mfT\big)\]
	\item the wave operator $\qbox$ and the Laplace operator $\slap$ are defined by
	\begin{align*}
	\qbox \mfT:=-\qd\big(\qn \mfT\big),\qquad
	\slap\mfT:=\sdiv\big(\sn
	\mfT\big)
	\end{align*}
	\item the curl operators $(\qhd\qexd)$ and $\scurl$ are defined by
	\begin{align*}
	\tilde{\star}\tilde{\text{d}} \mft:=\frac{1}{2}\tilde{\epsilon}\,\tilde{\cdot}\,\qexd \mft,\qquad\scurl\mft:=\frac{1}{2}\slashed{\epsilon}\,\slashed{\cdot}\,\sexd \mft
	\end{align*}
\end{itemize}
Lastly, we end the section by introducing the following notation:
\begin{itemize}
	\item $\qn\otimes \mft$ and $\sn\otimes \mft$ denote the Lie derivatives $\mathcal{L}_{\mft^\sharp}\qg_M$ and $\mathcal{L}_{\mft^\sharp}\sg_M$ respectively
	\item $\qn\otimeshat \mft$ and $\sn\otimeshat \mft$ denote the traceless Lie derivatives $\mathcal{L}_{\mft^{\sharp}}\qg_M+\qg_M\,\tilde{\cdot}\,\tilde{\delta} \mft$ and $\mathcal{L}_{\mft^{\sharp}}\sg_M-\sg_M\cdot\sdiv\mft$ respectively
	\item for an $n$-tuple of vector fields $V_1,...,V_n$ we denote by $\mfT_{V_1...V_n}$ the contraction $(((\mfT\cdot V_1)\cdot V_2)...)\cdot V_n$
\end{itemize}

\subsection{The Cauchy hypersurface $\Sigma$}\label{TheCauchyhypersurfaceSigma}

In this section we consider a foliation of $\Mgs$ by Cauchy hypersurfaces $\Sigma_{t^*}$, thus identifying an initial Cauchy hypersurface $\Sigma$. In addition, a restricted version of the $2+2$ formalism to the hypersurface $\Sigma$ is detailed.

Initial data for the equations of linearised gravity will be prescribed on $\Sigma$ in section \ref{Pointwisestrongasymptoticflatness} with the aid of this restricted formalism.

\subsubsection{The Cauchy hypersurface $\Sigma$}\label{TheCauchyhypersurfaceSigmaandsmitensoranalysis}

We define the manifolds with boundary
\begin{align*}
\Sigma_{t^*}:=\{t^*\}\times[2M,\infty)\times S^2.
\end{align*}
As the gradient of $t^*$ is globally time-like on $\mcalm$, the family $\Sigma_{t^*}$ describe a foliation of $\mcalm$ by Cauchy hypersurfaces. We henceforth fix an initial time $t^*_0$:
\begin{align*}
\Sigma:=\Sigma_{t^*_0}.
\end{align*}
The initial hypersurface $\Sigma$ comes equipped with the Riemannian metric
\begin{align*}
h_M=\opmu\ud r^2+r^2\mathring{g}
\end{align*}
along with the associated second fundamental form
\begin{align*}
k:=\frac{1}{2}\mathcal{L}_n h_M=\frac{1}{2}\frac{1}{\bh}\frac{\mu}{r}(2+\mu)\ud r^2-\frac{\mu}{\bh}r\mathring{g}.
\end{align*}
Here, $n$ is the future-pointing unit normal to $\Sigma$
\begin{align*}
n=\bh\,\pt-\frac{\mu}{\bh}\pr
\end{align*}
where we have defined the \emph{lapse} function
\begin{align*}
\bh:=\sqrt{1+\mu}.
\end{align*}

\subsubsection{$\smi$-tensor analysis}\label{Smitensoranalysis}

We consider now a foliation of $\Sigma$ by the 2-spheres $S^2_r$ given as the level sets of the areal function $r$. Associated to this foliation is the inward pointing unit normal
\begin{align*}
\nu=\frac{1}{\bh}\pr.
\end{align*}
This leads to the following definition.
\begin{definition}\label{defnsmitensoranalysis}
	Let $\{\nu, e_1, e_2\}$ be a frame on $\Sigma$. Then we say that a smooth $n$-covariant tensor field $H$ on $\Sigma$ is an $n$-covariant $S_\nu$-tensor field if
	\begin{align*}
	H(\cdot,...,\nu,...,\cdot)=0.
	\end{align*}
\end{definition}
Given a symmetric 2-covariant tensor field $H$ on $\Sigma$ one projects it onto the function $\bar{H}$ on $\Sigma$, the $\smi$ 1-form $\stkout{H}$ and the symmetric 2-covariant $\smi$-tensor field as follows:
\begin{align}
\bar{H}&:=H(\nu,\nu),\label{n1}\\
\text{\sout{\ensuremath{H}}}(e_I)&:=H(\nu,e_I),\label{n2}\\
\slashed{H}(e_I,e_J)&:=H(e_I,e_J)\label{n3}.
\end{align}
It is natural to decompose the latter into its trace and trace-free parts with respect to the induced metric $\sg_M$ on $S^2_r$:
\begin{align*}
\slashed{H}=\hat{\slashed{H}}+\frac{1}{2}\sg_M\cdot\str\slashed{H}.
\end{align*}

A particularly useful application of this procedure is for the second fundamental form $k$. 

First one decomposes $k$ into its trace and tracefree parts with respect to $h_M$:
\begin{align*}
k=\hat{k}+\frac{1}{3}h\cdot\trk,\qquad\trk=-\frac{1}{2}\frac{\mu}{\bh}\frac{1}{r}\frac{2+3\mu}{1+\mu}.
\end{align*}
Decomposing $\hat{k}$ according to \eqref{n1}-\eqref{n3} then yields (supressing the hat notation):
\begin{align*}
\bk&=\frac{1}{3}\frac{\mu}{\bh}\frac{1}{r}\frac{4+3\mu}{1+\mu},\\
\str\slashed{k}&=-\bk,\\
\text{\sout{\ensuremath{k}}}=\hat{\slashed{k}}&=0.
\end{align*}

Lastly, one has a natural calculus on $\smi$ tensor fields induced from $(\Sigma,h_M)$. 

Indeed, the `angular' operations introduced in section \ref{Tensoranalysis} have analagous definitions for $\smi$-tensor fields. In addition, for smooth $\smi$-tensor fields we define the differential operator $\snnu$ as corresponding to the action of covariant differentiation (with respect to $h_M$) in the direction $\nu$:
\begin{align*}
\snnu\bar{H}&=\mcall_\nu\bar{H},\\
\snnu\text{\sout{\ensuremath{H}}}&=\mcall_\nu\text{\sout{\ensuremath{H}}}-\frac{1}{\bh}\frac{1}{r}\text{\sout{\ensuremath{H}}},\\
\snnu\slashed{H}&=\mcall_\nu\slashed{H}-\frac{1}{\bh}\frac{2}{r}\slashed{H}.
\end{align*}

\subsection{The $\taus$-foliation}\label{Thetaustarfoliation}

In this section we define a foliation of $D^+(\Sigma)$ by hypersurfaces which coincides with the foliation of $D^+(\Sigma)$ by the hypersurfaces $\Sigma_{t^*}$ in a compact region of spacetime and with a foliation of $D^+(\Sigma)$ by null-hypersurfaces in the non-compact region. A $\qm$-frame\footnote{A frame for $\qm$-vector fields on $\mcalm$.} which is adapted to this foliation is then defined.

It is through this foliation that we will capture the dispersive properties of solutions to the equations of linearised gravity. Moreover, the introduction of the frame will prove useful in the later analysis.

\subsubsection{A foliation of $D^+(\Sigma)$ that `penetrates' both $\hplus$ and $\iplus$}\label{AfoliationofMthatpenetratesbothhplusandiplus}

To define this foliation we first introduce the following null geodesic vector field $L$ on $D^+(\Sigma)$:
\begin{align}
\nabla_LL&=0,\label{eqndefiningL1}\\
L\big|_{\Sigma}&=\bh\big(n+\nu\big)\label{eqndefiningL2}.
\end{align}
Here, $\nabla$ is the Levi-Civita connection of $g_M$. This determines an optical function $U$ according to
\begin{align*}
LU&=0,\\
U\big|_\Sigma&=t^*_0-r
\end{align*}
along with the rescaled optical function
\begin{align*}
u:=-\log(U-t^*_0+2M)+t^*_0+\log(2M-R).
\end{align*}
Defining thus the function 
\begin{align*}
\tau^\star(t^*,r,\theta^A)=\begin{cases} 
u(t^*,R,\theta^A) & r\leq R \\
u(t^*,r,\theta^A) & r\geq R
\end{cases}
\end{align*}
the desired foliation arises as the union of the level sets $\Xi_{\taus}$ of the function $\taus$:
\begin{align*}
D^+(\Sigma)=\bigcup_{\taus\in\mathbb{R}}\Xi_{\taus}.
\end{align*}
Here, $\theta^A$ is any coordinate chart on $S^2$ and $R>>10M$ is a fixed constant. 

We shall informally refer to the limiting sphere as $r\rightarrow\infty$ along each hypersurface $\Xi_{\taus}$ as a sphere on \emph{future null infinity $\iplus$}. The function $\taus$ thus serves to parametrise $\iplus$. Moreover, we observe that $\taus$ is an increasing function of $t^*$ with $\taus\big|_{\Sigma\cap\{r\leq R\}}=t_0^*$. This yields the Penrose diagram of Figure 3.

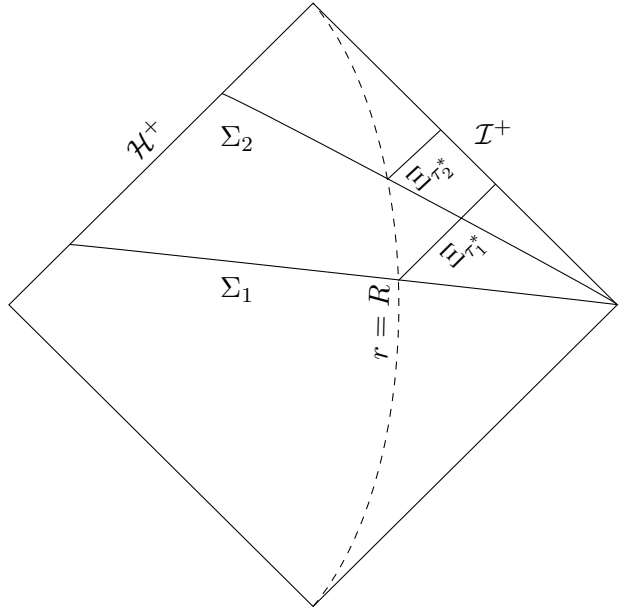
\begin{figure}[!h]
	\centering
	\begin{tikzpicture}
	\node (I)    at ( 4,0)   {};
	\path 
	(I) +(90:4)  coordinate (Itop)
	+(-90:4) coordinate (Ibot)
	+(180:4) coordinate (Ileft)
	+(0:4)   coordinate (Iright)
	;
	\draw  (Ileft) -- (Itop)node[midway, above, sloped] {$\hplus$} -- (Iright) node[midway, above right]    {$\cal{I}^+$} -- (Ibot)node[midway, below right]    {} -- (Ileft)node[midway, below, sloped] {} -- cycle;
	
	\draw[dashed]  
	(Ibot) to[out=50, in=-50, looseness=0.75] node[pos=0.475, above, sloped] {$r=R$        }($(Itop)!.5!(Itop)$) ;
	
	\draw 
	($(Itop)!.8!(Ileft)$) to[out=0, in=0, looseness=0.05] ($(Iright)!.5!(Iright)$);
	
	\draw 
	($(Itop)!.3!(Ileft)$) to[out=0, in=0, looseness=0.05] ($(Iright)!.5!(Iright)$);
	
	\draw 
	($(Itop)!.6!(Iright)$) --node[midway, below, sloped]{$\Xi _{\tau_1 ^*}$} (5.12, 0.32);
	
	\draw 
	($(Itop)!.42!(Iright)$) --node[right, below, sloped]{$\Xi _{\tau_2 ^*} $} (4.975, 1.66);
	
	\node [below] at (3,0.5) {$\Sigma_1$};
	
	\node [below] at (3,2.5) {$\Sigma_2$};
	
	\end{tikzpicture}
	\caption{A Penrose diagram of $\Mgs$ depicting the hypersurfaces $\Sigma_{t^*}$ and $\Xi_{\taus}$.} 
	\label{Figure1}
\end{figure}
We further present in Figure 4 a Penrose diagram depicting the spacetime region given as the future of the (initial) hypersurface $\Xi_{\taus_0}$.

\begin{figure}[!h]
	\centering
	\begin{tikzpicture}
	\node (I)    at ( 4,0)   {};
	\path 
	(I) +(90:4)  coordinate (Itop)
	+(-90:4) coordinate (Ibot)
	+(180:4) coordinate (Ileft)
	+(0:4)   coordinate (Iright)
	;
	\draw  (Ileft) -- (Itop)node[midway, above, sloped] {$\hplus$} -- (Iright) node[midway, above right]    {$\cal{I}^+$} -- (Ibot)node[midway, below right]    {} -- (Ileft)node[midway, below, sloped] {} -- cycle;
	
	\draw 
	($(Itop)!.9!(Ileft)$) to[out=0, in=0, looseness=0.05] node[midway, below, sloped]{$\Sigma$}($(Iright)!.5!(Iright)$);
	
	\draw 
	($(Itop)!.6!(Iright)$) --node[midway, below, sloped]{$\Xi_{\tau_0 ^*}$} (5.1, 0.15);
	
	\draw 
	($(Itop)!.8!(Ileft)$) to[out=0, in=0, looseness=0.05] (2,.325);
	
	\draw 
	($(Itop)!.7!(Ileft)$) to[out=0, in=0, looseness=0.05] (3,0.275);
	
	\draw 
	($(Itop)!.6!(Ileft)$) to[out=0, in=0, looseness=0.05] (3.9,0.25);
	
	\draw 
	($(Itop)!.5!(Ileft)$) to[out=0, in=0, looseness=0.05] (5.13,0.15);
	
	\draw 
	($(Itop)!.4!(Ileft)$) to[out=0, in=0, looseness=0.05] (5.5,0.6);
	
	\draw 
	($(Itop)!.3!(Ileft)$) to[out=0, in=0, looseness=0.05] (5.8,1); 
	
	\draw 
	($(Itop)!.2!(Ileft)$) to[out=0, in=0, looseness=0.05] (6.25,1.5);
	
	\draw 
	($(Itop)!.1!(Ileft)$) to[out=0, in=0, looseness=0.05] (5.5,2.5);
	
	\end{tikzpicture}
	\caption{A Penrose diagram of $\Mgs$ depicting the causal future of $\Xi_{\taus_0}$.}
	\label{Figure2}
\end{figure}
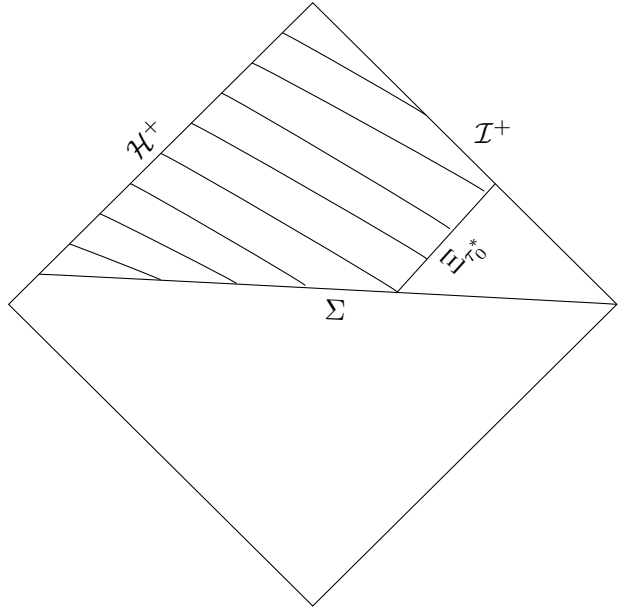

\subsubsection{The $\qm$-frame $\{S, L\}$}\label{TheqmframeSL}

We now introduce a $\qm$-frame on $D^+(\Sigma)$ adapted to the \emph{$\taus$-foliation} of the previous section.

In what follows, $S$ is the inward pointing (towards $\iplus$) unit normal to the 2-sphere given as the level set of the area radius function $r$ in the hypersurface $\Xi_{\taus}$. Moreover, we define the smooth function $\kappa=1-\frac{2M}{t^*_0-U}$ on $D^+(\Sigma)$ noting\footnote{To show this one derives from \eqref{derivativesofkappaandrinSLframe} the propagation equation $\qn_L\frac{\kappa}{1-\mu}=-\frac{1}{2M}\frac{\kappa^2}{\ommu^2}$ along $\hplus$ and then one observes that on $\hplus$ $L=e^\frac{t^*}{4M}T$.} that on $\hplus$
\begin{align*}
\frac{\kappa}{1-\mu}=\Big(1+2(e^{\frac{1}{4M}t^*}-e^{\frac{1}{4M}t^*_0})\Big)^{-1}.
\end{align*}
\begin{proposition}\label{propSLframe}
The vector fields $S$ and $L$ are linearly independent $\qm$-vector fields on the spacetime region $D^+(\Sigma)$. In particular, the set $\{S,L\}$ determines a (smooth, spherically symmetric) $\qm$-frame on $D^+(\Sigma)$ with
\begin{align}\label{qmlengthsofframe}
\qg_M(S,S)=1,\qquad\qg_M(S,L)=\bh\frac{\kappa}{1-\mu},\qquad\qg_M(L,L)=0
\end{align}
and the (smooth) connection coefficients
\begin{align}
\qn_SS&=-\frac{1}{2}\frac{1}{\bh^3}\frac{\mu}{r}S+\frac{1}{2}\frac{1}{\bh^2}\frac{\mu}{r}\frac{2+\mu}{1+\mu}\frac{1-\mu}{\kappa}L,\qquad\qquad\quad\,\,\qn_S L=-\frac{\bh}{2}\frac{1}{1-\mu}\bigg(\frac{\mu}{r}-\frac{1}{M}\frac{(1-\kappa)^2}{2-\kappa}\bigg) L,\label{connectioncoefficeintsinqmframe1}\\
\qn_LS&=\frac{1}{2}\frac{1}{1+\mu}\frac{\mu}{r}\bigg(1-\frac{\kappa}{1-\mu}\bigg)S-\frac{1}{2}\frac{1}{\bh}\frac{3+\mu}{1+\mu}\frac{\mu}{r} L,\qquad\qn_LL=0\label{connectioncoefficeintsinqmframe2}.
\end{align}
Moreover, 
\begin{align}\label{derivativesofkappaandrinSLframe}
\qn_S\kappa=\frac{1}{2}\frac{\bh}{M}\frac{(1-\kappa)^2}{2-\kappa}\frac{\kappa}{1-\mu},\qquad \qn_Lr=\kappa.
\end{align}
\end{proposition}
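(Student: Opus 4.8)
The plan is to verify the three groups of claims in Proposition \ref{propSLframe} by explicit computation in Schwarzschild-star coordinates $(t^*,r,\theta,\varphi)$, exploiting the spherical symmetry to reduce everything to the two-dimensional quotient $\qm$. First I would establish the $\qm$-vector-field property: both $S$ and $L$ are by construction built from $\pt$ and $\pr$ (indeed $L|_\Sigma = \bh(n+\nu)$ lies in the $\qm$-directions and the geodesic equation $\nabla_L L=0$ preserves this since the Christoffel symbols $\Gamma^{t^*}_{AB},\Gamma^r_{AB}$ are the only ones coupling angular and quotient directions, and these are spherically symmetric so they cannot generate an angular component from a purely $\qm$ initial vector — more precisely, the $SO(3)$-invariance of $g_M$ forces the geodesic flow of a spherically symmetric initial vector field to stay spherically symmetric and tangent to $\qm$). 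Linear independence then follows once I exhibit $\qg_M(L,L)=0$ while $\qg_M(S,S)=1$, since a null vector and a unit spacelike vector in a Lorentzian $2$-plane are necessarily independent.

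Next I would compute the inner products \eqref{qmlengthsofframe}. The identity $\qg_M(S,S)=1$ is immediate from the definition of $S$ as a unit normal; $\qg_M(L,L)=0$ holds on $\Sigma$ by the computation $g_M(n+\nu,n+\nu) = g_M(n,n)+2g_M(n,\nu)+g_M(\nu,\nu) = -1 + 0 + 1 = 0$ (using that $n$ is the future unit normal to $\Sigma$ and $\nu$ is the unit normal to $S^2_r$ within $\Sigma$, hence $g_M$-orthogonal to $n$), and then propagates by $L\big(\qg_M(L,L)\big) = 2\qg_M(\nabla_L L, L) = 0$. For $\qg_M(S,L)$ I would first solve for $L$ explicitly: from $\nabla_L L = 0$ with $U$ the optical function ($LU=0$, $U|_\Sigma = t^*_0 - r$) and $r$ the area radius, the relation $\qn_L r = \kappa$ in \eqref{derivativesofkappaandrinSLframe} pins down $L$ as a specific combination of $\pt$ and $\pr$; writing $S$ in terms of $\pr$ and $\pt$ along $\Xi_{\taus}$ (using $S\perp S^2_r$ and $Su=0$), the inner product $\qg_M(S,L) = \bh\frac{\kappa}{1-\mu}$ falls out after using the form \eqref{schwarzschildmetric} of $g_M$. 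Here the definition $\kappa = 1 - \tfrac{2M}{t^*_0 - U}$ and the footnoted propagation equation for $\kappa/(1-\mu)$ along $\hplus$ should be derived as part of establishing $\qn_S\kappa$.

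Then the connection coefficients \eqref{connectioncoefficeintsinqmframe1}--\eqref{connectioncoefficeintsinqmframe2} are obtained by differentiating the metric relations \eqref{qmlengthsofframe}: since $\{S,L\}$ spans $T\qm$, each $\qn_X Y$ is determined by its two components $\qg_M(\qn_X Y, S)$ and $\qg_M(\qn_X Y, L)$ via the inverse Gram matrix of \eqref{qmlengthsofframe}, and each such component is computed by the Koszul-type trick $X\big(\qg_M(Y,Z)\big) = \qg_M(\qn_X Y, Z) + \qg_M(Y,\qn_X Z)$ together with $\nabla_L L = 0$, the torsion-free condition, and the derivative identities \eqref{derivativesofkappaandrinSLframe}. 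The identity $\qn_L L = 0$ is of course just the defining geodesic equation. I expect \eqref{derivativesofkappaandrinSLframe} itself — specifically $\qn_S\kappa$ — to be the main obstacle: $\kappa$ is defined through the optical function $U$, which is only implicitly specified by the eikonal equation, so one must carefully track how $S$ (which is built from the $\taus$-foliation, not from $U$ directly) acts on $U$, i.e. compute $SU$ on $\Xi_{\taus}$. The cleanest route is probably to work on $\hplus$ first, where $L = e^{t^*/4M}T$ and the computation is explicit, derive the ODE $\qn_L\frac{\kappa}{1-\mu} = -\frac{1}{2M}\frac{\kappa^2}{(1-\mu)^2}$ along $\hplus$ as in the footnote, and then extend off $\hplus$ by using $\qn_L r = \kappa$ to convert $r$-derivatives into $L$-derivatives; once $\qn_S\kappa$ is in hand the remaining connection coefficients are a finite bookkeeping exercise. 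The regularity ("smooth") assertions follow throughout because $\kappa$, $r$, $\mu = 2M/r$ and $\bh = \sqrt{1+\mu}$ are all smooth and bounded away from their singular loci on the closed exterior region $D^+(\Sigma)$, and $1-\mu = 1 - 2M/r$, $2-\kappa$ stay positive there.
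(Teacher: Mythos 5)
Your overall strategy --- explicit computation in Schwarzschild-star coordinates, spherical symmetry to reduce to the quotient, and a Koszul-type argument for the connection coefficients --- is viable in principle, and it genuinely differs from the paper, which instead routes everything through Christodoulou's null structure equations (Raychaudhuri with vanishing shear to get $\qn_L^2r=0$, then the conjugate null vector $\uL$ and the propagation equation for $\uomega$ to get the remaining coefficients). But as written your proposal has two concrete gaps. The first is that you treat $\qn_Lr=\kappa$ as something that ``pins down $L$'' rather than as a claim needing proof: it is the linchpin of the entire computation and does not follow from $\nabla_LL=0$, $LU=0$ and $U|_\Sigma=t^*_0-r$ alone. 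The paper's mechanism is that spherical symmetry forces $\chihat=0$, so the Raychaudhuri equation gives $\qn_L(\qn_Lr)=0$; hence $\qn_Lr$ is constant along each generator, as is $\kappa$ (since $L\kappa=0$), and the two agree on $\Sigma$ where both equal $1-\mu$. In your coordinate framework the natural substitute is the conserved quantity $-g_M(L,T)$ coming from the Killing field $T$, which on the outgoing null branch equals $\qn_Lr$; either way, some such argument must be supplied.

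The second gap is your plan for $\qn_S\kappa$: deriving the ODE for $\kappa/(1-\mu)$ along $\hplus$ and then ``extending off $\hplus$ by using $\qn_Lr=\kappa$'' cannot work. Both of those inputs carry information only in the $L$-direction, and $L$ is tangent to $\hplus$ and, more to the point, tangent to the level sets of $U$, along which $\kappa$ is constant; no amount of $L$-differentiation recovers the transversal derivative $S\kappa$. (The footnote you cite also runs in the opposite direction: the horizon ODE is \emph{derived from} \eqref{derivativesofkappaandrinSLframe}.) To fill this you must either integrate the eikonal equation explicitly --- in Schwarzschild-star coordinates the outgoing cones are $t^*-r-4M\log(r-2M)=\text{const}$, and implicit differentiation of the initial condition $U|_\Sigma=t^*_0-r$ gives $\pt U=\kappa/(2-\kappa)$, after which $S\kappa$ follows from expressing $T$ in the frame $\{S,L\}$ --- or follow the paper and extract $\qn_S\kappa$ from the explicit formula for $\uL$ together with $\qn_{\uL}\uL=2\uomega\,\uL$.
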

\begin{proof}
The first statement follows from spherical symmetry of $\Mgs$ and uniqueness of solutions to \eqref{eqndefiningL1}-\eqref{eqndefiningL2}. This immediately gives the first and third parts of \eqref{qmlengthsofframe} from the metric \eqref{schwarzschildmetric}.

To prove the remaining parts of \eqref{qmlengthsofframe}-\eqref{connectioncoefficeintsinqmframe2} we follow Christodoulou in \cite{Christ}. 

Indeed, first we introduce the null second fundamental form $\chi$:
\begin{align}
2\chi:&=\mcall_L\sg_M,\nonumber\\
&=\frac{2}{r}\qn_Lr\sg_M\label{eqnfortrchi}.
\end{align}
Decomposing $\chi$ into its trace and trace-free parts
\begin{align*}
\chi=\chihat+\frac{1}{2}\sg_M\cdot\trchi
\end{align*}
one has by the Raychauduhri equation ((3.1c) of \cite{Christ})
\begin{align*}
\qn_L\trchi+\frac{1}{2}(\trchi)^2=-\chihat\sdot\chihat.
\end{align*}
However, since by spherical symmetry $\chihat=0$ (see Proposition \ref{propsmtwotensorsphericalharmonicdecomposition}) it follows that
\begin{align*}
\qn_L^2r=0.
\end{align*}
Hence by the initial condition \eqref{eqndefiningL2}
\begin{align*}
\qn_Lr=\kappa.
\end{align*}
In particular, as $\qn_Tr=0$, $\qn_Sr=\bh^{-1}$ and $L$ is future-directed\footnote{As $u$ is an increasing function with respect to $t^*$.} one has
\begin{align*}
T=\frac{1-\mu}{1+\mu}\Big(\kappa^{-1}L-\bh\,S\Big)
\end{align*}
thus yielding the first part of \eqref{connectioncoefficeintsinqmframe1} and moreover completing \eqref{qmlengthsofframe}. 

To continue we introduce the conjugate, future-directed null vector to $L$:
\begin{align}\label{LbarinSLframe}
\uL:=-\frac{2}{\bh}\frac{1-\mu}{\kappa}\, S+\frac{1}{1+\mu}\frac{\ommu^2}{\kappa^2} L.
\end{align}
As $\uL$ is the unique such null vector with $\qg_M(\uL, L)=-2$ it follows from spherical symmetry and (2.3)-(2.4), (4.1g) and (4.3) of \cite{Christ} that
\begin{align}\label{connectioncoefiicentsinnullframe}
\qn_L\uL=0,\qquad\qn_{\uL}L=-2\uomega\,L.
\end{align}
Here, the smooth function $\uomega$ on $D^+(\Sigma)$ satisfies the propagation equation
\begin{align*}
\qn_L\uomega=\frac{1}{r}\frac{\mu}{r}.
\end{align*}
Consequently, as $L-\opmu\uL=2\bh\,S$ it follows from \eqref{connectioncoefficeintsinqmframe1} and \eqref{connectioncoefiicentsinnullframe} that on $\Sigma$
\begin{align*}
\uomega=\frac{1}{2}\frac{1}{1+\mu}\frac{\mu}{r}
\end{align*}
and so
\begin{align*}
\uomega=-\frac{1}{2}\frac{\mu}{r}\frac{1}{\kappa}+\frac{1}{2}\frac{1}{M}\frac{(1-\kappa)^2}{2-\kappa}\frac{1}{\kappa}
\end{align*}
thus completing \eqref{connectioncoefficeintsinqmframe1}-\eqref{connectioncoefficeintsinqmframe2}.

Finally, to derive the formulae of \eqref{derivativesofkappaandrinSLframe} one uses \eqref{LbarinSLframe} along with the fact that $\qn_{\uL}\uL=2\uomega\, \uL$.
\end{proof}

Note that the last conclusion of Proposition \ref{propSLframe} in particular allows one to express all operations introduced in section \ref{Tensoranalysis} with respect to the frame $\{S, L\}$.

The introduction of the frame also allows for a pointwise norm acting on tensor fields  
to be defined.

\begin{definition}\label{defnqmnorm}
We define the (pointwise) norm $\mnorm{\cdot}$ acting on $\qm$-tensor fields, $\qmsm$ 1-forms and $\sm$-tensor tensor fields
as follows.

If $Q,q$ and $f$ are respectively a symmetric $2$-covariant $\qm$-tensor, a $\qm$ 1-form and a function on $D^+(\Sigma)$ then
\begin{align*}
\mnorm{Q}^2:&=2|Q_{SS}Q_{LL}|+|Q_{SL}|^2+2|Q_{SL}Q_{LL}|+|Q_{LL}|^2,\\
\mnorm{q}^2:&=2|q_{S}q_{L}|+|q_{L}|^2,\\
\mnorm{f}^2:&=|f|^2.
\end{align*}
Conversely, if $\Theta$ is an $n$-covariant $\sm$-tensor then
\begin{align*}
\mnorm{\Theta}^2:=|\Theta|_\sg^2.
\end{align*}
Finally, if $\momega$ is a $\qmsm$ 1-form then
\begin{align*}
\mnorm{\momega}^2:=|\momega_S\momega_L|_\sg+|\momega_L|_\sg^2.
\end{align*}
\end{definition}

\subsection{The projection of tensor fields on $\mcalm$ onto and away from the $l=0,1$ spherical harmonics}\label{TheprojectionoftensorfieldsonMontoandawayfromthel=0,1sphericalharmonics}

In this section we provide a notion of tensor fields on $\mcalm$ having support on and outside of the $l=0,1$ spherical harmonics and then define a projection map onto each respective space.

In section \ref{GaugenormalisedsolutionsandidentificationoftheKerrparameters} we will establish an identification between the projection of a solution to the equations of linearised gravity onto $l=0,1$ and stationary solutions to said equations.\newline

This section follows closely section 4.4 of \cite{D--H--R}.

\subsubsection{The $l=0,1$ spherical harmonics and the spherical harmonic decomposition}\label{Thel=0,1sphericalharmonicsandthesphericalharmonicdecomposition}

We recall the well known spherical harmonics $Y^l_m$ with $l\in\mathbb{N}$ and $m\in\{-l,...,0,...l\}$. We explicitly note the form of the $l=0$ and $l=1$ modes
\begin{align}
Y^{l=0}_{m=0}&=\frac{1}{\sqrt{4\pi}}\label{l=0sphericalharmonics}\\
Y^{l=1}_{m=-1}=\sqrt{\frac{3}{4\pi}}\sin\theta\cos\varphi,\qquad Y^{l=1}_{m=0}&=\sqrt{\frac{3}{8\pi}}\cos\theta,\qquad Y^{l=1}_{m=1}=\sqrt{\frac{3}{4\pi}}\sin\theta\sin\varphi.\label{l=1sphericalharmonics}
\end{align}

\begin{definition}\label{defnfunctionsupportedl=0,1}
Let $f$ be a smooth function on $S^2$. 

Then we say that $f$ is supported only on $l=0,1$ iff for $l\geq 2$
\begin{align*}
\int_{S^2}f\cdot Y^l_m\ud \mathring{\epsilon}=0
\end{align*}
where $\mathring{\epsilon}$ is the volume form associated to the unit metric on the round sphere.

Conversely, we say that $f$ has vanishing projection to $l=0,1$ iff for $l=0,1$
\begin{align*}
\int_{S^2}f\cdot Y^l_m\ud \mathring{\epsilon}=0.
\end{align*}
\end{definition}
This leads to the classical spherical harmonic decomposition of square integrable functions on $S^2$.
\begin{proposition}\label{propsphericalharmonicdecomp}
Let $f\in L^2(S^2)$. Then one has the unique, orthogonal decomposition
\begin{align*}
f=\mpart{f}+f'
\end{align*}
where the function $\mpart{f}\in L^2(S^2)$ is supported only on $l=0,1$ and the function $f'\in L^2(S^2)$ has vanishing projection to $l=0,1$.
\end{proposition}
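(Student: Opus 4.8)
The plan is to prove Proposition \ref{propsphericalharmonicdecomp}, the classical $L^2(S^2)$ spherical harmonic decomposition into the $l=0,1$ part and the part orthogonal to $l=0,1$. The cleanest route is simply to quote that $\{Y^l_m\}_{l\geq 0,\ |m|\leq l}$ forms a complete orthonormal basis of $L^2(S^2)$ with respect to the inner product $\langle f_1,f_2\rangle=\int_{S^2}f_1\,f_2\,\ud\mathring\epsilon$; this is the spectral theorem applied to the Laplace--Beltrami operator $\mathring\Delta$ on the compact manifold $S^2$, whose eigenvalues are $-l(l+1)$ with finite-dimensional eigenspaces spanned by the $Y^l_m$.

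The key steps, in order, are as follows. First, I would recall that completeness of $\{Y^l_m\}$ means every $f\in L^2(S^2)$ has a norm-convergent expansion $f=\sum_{l\geq 0}\sum_{|m|\leq l}\langle f,Y^l_m\rangle Y^l_m$. Second, I would define
\begin{align*}
\mpart{f}:=\sum_{l=0}^{1}\sum_{|m|\leq l}\langle f,Y^l_m\rangle Y^l_m,\qquad f':=f-\mpart{f}=\sum_{l\geq 2}\sum_{|m|\leq l}\langle f,Y^l_m\rangle Y^l_m.
\end{align*}
By construction $\mpart{f}$ lies in the span of the five functions listed in \eqref{l=0sphericalharmonics}--\eqref{l=1sphericalharmonics}, hence is supported only on $l=0,1$ in the sense of Definition \ref{defnfunctionsupportedl=0,1}, since orthonormality of the basis gives $\langle\mpart{f},Y^l_m\rangle=0$ for all $l\geq 2$. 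Third, orthonormality likewise gives $\langle f',Y^l_m\rangle=\langle f,Y^l_m\rangle-\langle\mpart{f},Y^l_m\rangle=0$ for $l=0,1$, so $f'$ has vanishing projection to $l=0,1$. Fourth, orthogonality of the two summands, $\langle\mpart{f},f'\rangle=0$, is immediate from the disjointness of the index ranges together with orthonormality. Fifth, for uniqueness, suppose $f=g_1+g_1'=g_2+g_2'$ with $g_i$ supported on $l=0,1$ and $g_i'$ having vanishing $l=0,1$ projection; then $g_1-g_2=g_2'-g_1'$, and pairing this common function against each $Y^l_m$ shows all its Fourier coefficients vanish (for $l\leq 1$ the left side vanishes, for $l\geq 2$ the right side vanishes), whence by completeness it is zero in $L^2$, so $g_1=g_2$ and $g_1'=g_2'$.

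Strictly speaking there is essentially no obstacle here: the only genuine input is the completeness of the spherical harmonics in $L^2(S^2)$, which is standard (e.g. via the spectral theorem for the Laplacian on a compact Riemannian manifold, or via the Stone--Weierstrass theorem applied to the algebra generated by restrictions of polynomials to $S^2$). The mild point to be careful about is that Definition \ref{defnfunctionsupportedl=0,1} phrases ``supported on $l=0,1$'' and ``vanishing projection to $l=0,1$'' in terms of vanishing of the integrals $\int_{S^2}f\cdot Y^l_m\,\ud\mathring\epsilon$ rather than in terms of the abstract orthogonal decomposition, so the proof must explicitly translate between the two, which is exactly what the orthonormality relations $\int_{S^2}Y^l_m\,Y^{l'}_{m'}\,\ud\mathring\epsilon=\delta_{ll'}\delta_{mm'}$ accomplish. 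I would therefore expect the write-up to be short, with the ``hard part'' amounting to nothing more than citing completeness and being careful that the finite sum defining $\mpart{f}$ is a bona fide element of $L^2$ (trivially true, being a finite linear combination of smooth functions).
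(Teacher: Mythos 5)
Your proof is correct and is exactly the standard argument one would supply here; the paper itself offers no proof of Proposition \ref{propsphericalharmonicdecomp}, treating it as the classical $L^2(S^2)$ spherical harmonic decomposition. Your care in translating between the abstract orthogonal decomposition and the integral conditions of Definition \ref{defnfunctionsupportedl=0,1} via orthonormality is the right (and only) point requiring attention.
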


The remainder of this section is concerned with generalising this result to tensor fields on $\mcalm$.

\subsubsection{The $l=0,1$ spherical harmonics and $\qm$-tensors}\label{Thel=0,1spheriacalharmonicsandqmtensors}

We begin with $\qm$-tensors.

In what follows $n\geq 0$ is an integer and we denote by $S^2_{\taus, r}$ the 2-sphere given as the intersection of the level sets of the functions $\taus$ and $r$ on $\mcalm$.

\begin{definition}\label{defnqmtensorssupportedonlgeq2}
Let $Q$ be a smooth $n$-covariant $\qm$-tensor.

Then we say that $Q$ is supported only on $l=0,1$ iff for $l\geq 2$ all components of $Q$ in the frame $\{S,L\}$ satisfy
	\begin{align*}
	\int _{S^2_{\taus, r}}Q_{\cdot\cdot\cdot S\cdot\cdot\cdot L\cdot\cdot\cdot}\cdot Y^l_m\ud\mathring{\epsilon}=0.
	\end{align*}
	
	Conversely, we say that $Q$ has vanishing projection to $l=0,1$ iff for $l=0,1$ all components of $Q$ in the frame $\{S,L\}$ satisfy
\begin{align*}
\int _{S^2_{\taus, r}}Q_{\cdot\cdot\cdot S\cdot\cdot\cdot L\cdot\cdot\cdot}\cdot Y^l_m\ud\mathring{\epsilon}=0.
\end{align*}
\end{definition}

We shall denote by $\LM$ the space of $\qm$-tensor fields that have vanishing projection to $l=0,1$.

Subsequently, applying Proposition \ref{propsphericalharmonicdecomp} to each frame component yields the following.
\begin{proposition}\label{propsphericalharmonicdecompositionofqmtensors}
	Let $Q$ be a smooth $n$-covariant $\qm$-tensor field. Then one has the unique, orthogonal decomposition
	\begin{align*}
	Q=\mpart{Q}+Q'
	\end{align*}
	where the smooth $n$-covariant $\qm$-tensor $\mpart{Q}$ is supported only on $l=0,1$ and $Q'\in\LM$.
\end{proposition}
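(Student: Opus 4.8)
The plan is to reduce the statement to the already-established scalar decomposition of Proposition \ref{propsphericalharmonicdecomp} by working componentwise in the $\{S,L\}$ frame. First I would fix the spacetime region $D^+(\Sigma)$ and recall from Proposition \ref{propSLframe} that $\{S,L\}$ is a smooth, spherically symmetric $\qm$-frame there; this is exactly what makes the argument go through, since it means that contracting a smooth $\qm$-tensor $Q$ against the frame vectors in all slots produces genuinely \emph{smooth} scalar functions on $D^+(\Sigma)$ whose restriction to each 2-sphere $S^2_{\taus,r}$ is smooth. Concretely, for each multi-index of frame labels (each slot being either $S$ or $L$) one gets a component function $Q_{\cdots}$, and there are $2^n$ such components, all smooth.

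Next I would apply Proposition \ref{propsphericalharmonicdecomp} to each component function, fibrewise over the 2-spheres $S^2_{\taus,r}$: each $Q_{\cdots}$ decomposes uniquely and orthogonally as $\mpart{(Q_{\cdots})} + (Q_{\cdots})'$ with the first piece supported only on $l=0,1$ and the second having vanishing projection to $l=0,1$, in the sense of Definition \ref{defnfunctionsupportedl=0,1}. I would then \emph{define} $\mpart{Q}$ to be the $\qm$-tensor whose $\{S,L\}$-components are the functions $\mpart{(Q_{\cdots})}$, and $Q'$ to be the one with components $(Q_{\cdots})'$. Since the frame $\{S,L\}$ is a global smooth frame on $D^+(\Sigma)$, specifying all $2^n$ frame components uniquely and smoothly determines a smooth $\qm$-tensor field, so $\mpart{Q}$ and $Q'$ are well-defined smooth $\qm$-tensors; and $Q = \mpart{Q} + Q'$ by construction. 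By Definition \ref{defnqmtensorssupportedonlgeq2}, $\mpart{Q}$ is supported only on $l=0,1$ and $Q' \in \LM$, which is the desired decomposition.

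For uniqueness and orthogonality I would argue as follows: if $Q = A + B$ is any other such splitting, then in frame components $Q_{\cdots} = A_{\cdots} + B_{\cdots}$ with $A_{\cdots}$ supported on $l=0,1$ and $B_{\cdots}$ having vanishing $l=0,1$ projection (this uses that Definition \ref{defnqmtensorssupportedonlgeq2} is phrased precisely in terms of the frame components against $Y^l_m$); the uniqueness clause of Proposition \ref{propsphericalharmonicdecomp} applied componentwise then forces $A_{\cdots} = \mpart{(Q_{\cdots})}$ and $B_{\cdots} = (Q_{\cdots})'$, hence $A = \mpart{Q}$ and $B = Q'$. Orthogonality (in the natural $L^2$ sense on each sphere, summed over frame components) follows likewise from the componentwise orthogonality in Proposition \ref{propsphericalharmonicdecomp}.

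The only genuine subtlety — and the step I would treat most carefully — is checking that the frame decomposition is legitimate and loses nothing: that the component functions are smooth on all of $D^+(\Sigma)$ (guaranteed by Proposition \ref{propSLframe}, in particular the smoothness and linear independence of $S$ and $L$ including along $\hplus$ and out towards $\iplus$), and that Definitions \ref{defnfunctionsupportedl=0,1} and \ref{defnqmtensorssupportedonlgeq2} are compatible so that "supported only on $l=0,1$" and "vanishing projection to $l=0,1$" for the tensor are exactly the conjunction over frame slots of the corresponding scalar notions. Once this bookkeeping is in place, the proposition is an immediate corollary of the scalar case. There is no hard analytic content beyond Proposition \ref{propsphericalharmonicdecomp}; the work is purely organisational.
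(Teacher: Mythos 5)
Your proposal is correct and is essentially the paper's own argument: the paper proves this proposition precisely by applying Proposition \ref{propsphericalharmonicdecomp} to each component of $Q$ in the $\{S,L\}$ frame, which is what you do, with uniqueness and orthogonality inherited componentwise. Your extra care about the smoothness and global validity of the frame (via Proposition \ref{propSLframe}) and the compatibility of Definitions \ref{defnfunctionsupportedl=0,1} and \ref{defnqmtensorssupportedonlgeq2} is sound bookkeeping that the paper leaves implicit.
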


\subsubsection{The $l=0,1$ spherical harmonics and $\qmsm$ 1-forms}\label{Thel=0,1sphericalharmonicsandqmsm1forms}

Next we consider $\qmsm$ 1-forms. 

First we recall the classical Hodge decomposition of smooth 1-forms on $S^2$
\begin{align*}
\xi=\sdso\big(\xi_{\text{e}}, \xi_{\text{o}}\big).
\end{align*}
Here, $\xi_{\text{e}}$ and $\xi_{\text{o}}$ are smooth functions on $S^2$. This decomposition is unique if one assumes that the smooth functions $\xi_{\text{e}}$ and $\xi_{\text{o}}$ have vanishing mean over $S^2$ and can be readily extended to smooth $\qmsm$ 1-forms by using the frame $\{S,L\}$. This leads to the following definition, an analogue of which for smooth $\sm$ 1-forms was first given in \cite{D--H--R}.

\begin{definition}\label{defnqmsmoneformssupportedonlgeqi}
Let $\momega$ be a smooth $\qmsm$ 1-form.

Then we say that $\momega$ is supported only on $l=0,1$ iff the two smooth $\qm$ 1-forms $\momega_{\textnormal{e}}$ and $\momega_{\textnormal{o}}$ in the unique representation
	\begin{align*}
	\momega=\sdso\big(\momega_{\textnormal{e}}, \momega_{\textnormal{o}}\big)
	\end{align*}
	are supported only on $l=0,1$ and moreover have vanishing mean on every 2-sphere $S^2_{\taus, r}$.
	
	Conversely, we say that $\momega$ has vanishing projection to $l=0,1$ iff $\momega_{\textnormal{e}},\momega_{\textnormal{o}}\in\LM$.
\end{definition}
Linearity and Proposition \ref{propsphericalharmonicdecompositionofqmtensors} then yields the desired decomposition of smooth $\qmsm$ 1-forms.
\begin{proposition}\label{propqmoneformsmoneformsphericalharmonicdecomposition}
	Let $\stkout{\omega}$ be a smooth $\qm\otimes\sm$ 1-form. Then one has the unique, orthogonal decomposition
	\begin{align*}
	\text{\sout{\ensuremath{\omega}}}=\dpart{\momega}+\momega'
	\end{align*}
	where the smooth $\qmsm$ 1-form $\momega$ is supported only on $l=0,1$ and $\momega'$ has vanishing projection to $l=0,1$.
\end{proposition}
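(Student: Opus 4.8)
The plan is to deduce this statement from Proposition~\ref{propsphericalharmonicdecompositionofqmtensors} by passing through the Hodge decomposition recalled just before Definition~\ref{defnqmsmoneformssupportedonlgeqi}. First I would write, uniquely, $\momega=\sdso\big(\momega_{\textnormal{e}}, \momega_{\textnormal{o}}\big)$ where $\momega_{\textnormal{e}}$ and $\momega_{\textnormal{o}}$ are smooth $\qm$ $1$-forms normalised to have vanishing mean on every $2$-sphere $S^2_{\taus,r}$; it is precisely this normalisation that makes the representation unique, and hence that makes $\sdso$ injective on the space of such pairs.

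Next, apply Proposition~\ref{propsphericalharmonicdecompositionofqmtensors} separately to $\momega_{\textnormal{e}}$ and to $\momega_{\textnormal{o}}$, obtaining the unique orthogonal splittings $\momega_{\textnormal{e}}=\dpart{(\momega_{\textnormal{e}})}+\momega_{\textnormal{e}}'$ and $\momega_{\textnormal{o}}=\dpart{(\momega_{\textnormal{o}})}+\momega_{\textnormal{o}}'$ with $\momega_{\textnormal{e}}', \momega_{\textnormal{o}}'\in\LM$. Since each frame component of $\momega_{\textnormal{e}}$ has vanishing mean on every $S^2_{\taus,r}$, its $l=0,1$ part $\dpart{(\momega_{\textnormal{e}})}$ has vanishing $l=0$ component and therefore also vanishes in the mean, and likewise for the odd part; consequently, by Definition~\ref{defnqmsmoneformssupportedonlgeqi}, $\dpart{\momega}:=\sdso\big(\dpart{(\momega_{\textnormal{e}})}, \dpart{(\momega_{\textnormal{o}})}\big)$ is supported only on $l=0,1$ and $\momega':=\sdso\big(\momega_{\textnormal{e}}', \momega_{\textnormal{o}}'\big)$ has vanishing projection to $l=0,1$. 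Linearity of $\sdso$ then gives $\momega=\dpart{\momega}+\momega'$.

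Uniqueness follows by running the construction backwards: any splitting of $\momega$ into an $l=0,1$-supported piece plus a piece with vanishing projection to $l=0,1$ pulls back under $\sdso^{-1}$ to a splitting of the pair $(\momega_{\textnormal{e}},\momega_{\textnormal{o}})$ of exactly the kind furnished by Proposition~\ref{propsphericalharmonicdecompositionofqmtensors}, which is unique. For orthogonality I would use that the relevant fibrewise $L^2$ pairing on $\qmsm$ $1$-forms decouples the even and odd Hodge sectors, and that on each sector $\big\langle \sdso(q_1,q_2),\sdso(q_1',q_2')\big\rangle$ reduces, after an integration by parts on the $2$-sphere, to $\big\langle q_1,(-\slap)q_1'\big\rangle+\big\langle q_2,(-\slap)q_2'\big\rangle$, which is diagonal with respect to the spherical-harmonic grading; hence $\dpart{\momega}\perp\momega'$ is inherited from the orthogonality asserted in Proposition~\ref{propsphericalharmonicdecompositionofqmtensors}.

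The only genuinely delicate point is the bookkeeping surrounding the two normalisations: one must check that the vanishing-mean condition used to pin down the Hodge representation is compatible with extracting the $l=0,1$ part, and that the $L^2$ structure with respect to which ``orthogonal'' is meant is precisely the one under which $\sdso$ intertwines with the frame-component spherical-harmonic expansion. No analytic input beyond Proposition~\ref{propsphericalharmonicdecompositionofqmtensors} and the classical Hodge decomposition of $1$-forms on $S^2$ is needed.
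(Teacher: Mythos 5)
Your proposal is correct and follows essentially the same route as the paper, which simply invokes the Hodge representation $\momega=\sdso(\momega_{\textnormal{e}},\momega_{\textnormal{o}})$ together with linearity and Proposition \ref{propsphericalharmonicdecompositionofqmtensors} applied to the potentials. The additional details you supply on uniqueness (injectivity of $\sdso$ on vanishing-mean pairs) and on orthogonality (the integration by parts reducing the pairing to $\langle q_1,-\slap q_1'\rangle+\langle q_2,-\slap q_2'\rangle$, with the even--odd cross terms vanishing by antisymmetry of $\sepsilon$) are accurate and merely make explicit what the paper leaves implicit.
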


\subsubsection{The $l=0,1$ spherical harmonics and symmetric, traceless 2-covariant $\sm$-tensors} \label{Thel=0,1sphericalharmonicsandsm2tensors}

Now we consider symmetric, traceless 2-covariant $\sm$-tensors. 

One has the following proposition, proved in \cite{D--H--R}, which establishes the sense in which all smooth, symmetric, traceless, 2-covariant $S$-tensors have vanishing projection to $l=0,1$.
\begin{proposition}\label{propsmtwotensorsphericalharmonicdecomposition}
	Let $\theta$ be a smooth, symmetric, traceless 2-covariant $\sm$-tensor. Then $\theta$ can be uniquely represented as
	\begin{align*}
	\theta=\sdst\sdso\big(\theta_{\textnormal{e}}, \theta_{\textnormal{o}}\big)
	\end{align*}
	where $\theta_{\textnormal{e}}, \theta_{\textnormal{o}}\in\LM$ are smooth functions. 
\end{proposition}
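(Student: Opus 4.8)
The plan is to prove the statement pointwise in the $\mcalq$-directions: on each $2$-sphere $S^2_{\taus,r}$ the $\sm$-metric equals $r^2\mathring{g}$, differing from the round metric only by the $\mcalq$-constant factor $r^2$, which commutes with every operator in play, so it suffices to work on the fixed round sphere $\big(S^2,\mathring{g}\big)$ and recover smooth dependence on $(\taus,r)$ afterwards.

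I would first assemble three classical facts about $S^2$. (i) Since $H^1_{\mathrm{dR}}(S^2)=0$, the Hodge decomposition of $1$-forms shows that every smooth $1$-form $\xi$ on $S^2$ can be written $\xi=\sdso\big(\xi_{\textnormal{e}},\xi_{\textnormal{o}}\big)$ for smooth functions $\xi_{\textnormal{e}},\xi_{\textnormal{o}}$, uniquely once one demands that they have vanishing mean. (ii) The kernel of $\sdst$ acting on $1$-forms is the space of conformal Killing $1$-forms of $S^2$, which is $6$-dimensional and spanned by $\{\sn Y^1_m\}_{m}$ (the gradient conformal Killing forms) together with $\{\shd\sn Y^1_m\}_{m}$ (the rotation Killing forms), $m\in\{-1,0,1\}$. (iii) A symmetric, traceless, divergence-free $2$-tensor on $S^2$ vanishes identically (there are no non-trivial transverse traceless tensors on the round sphere, e.g.\ by a Bochner identity exploiting positivity of the Gauss curvature). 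I would also use the scalar spherical-harmonic decomposition of Proposition \ref{propsphericalharmonicdecomp}, applied fibrewise.

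Next I would study the composed second-order operator $\sdst\sdso$, sending pairs of smooth functions to smooth symmetric traceless $2$-tensors. Its principal symbol at a covector $\zeta\neq0$ sends $(f,g)$ to a non-zero multiple of $f\,\zeta\otimeshat\zeta+g\,\zeta\otimeshat(\shd\zeta)$; as $\zeta\otimeshat\zeta$ and $\zeta\otimeshat(\shd\zeta)$ are linearly independent in the two-dimensional fibre of symmetric traceless $2$-tensors, $\sdst\sdso$ is elliptic, hence Fredholm on the compact manifold $S^2$ with smooth, finite-dimensional kernel and cokernel and closed range. By (i) and (ii), $(f,g)\in\ker(\sdst\sdso)$ if and only if $\sdso(f,g)$ is conformal Killing, if and only if $f$ and $g$ are supported on $l=0,1$ — the $l=0$ parts being annihilated by $\sdso$ outright and the $l=1$ parts producing the genuine $6$-dimensional kernel; thus $\ker(\sdst\sdso)$ is precisely the set of pairs of functions supported on $l=0,1$. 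For the cokernel, the formal $L^2\big(S^2,\mathring{g}\big)$-adjoint of $\sdst\sdso$ is $\sdo\,\sdt$ (with $\sdo,\sdt$ the adjoints of $\sdso,\sdst$); its kernel is $\{\theta:\sdt\theta\in\ker\sdo\}$, and since $\ker\sdo$ consists of harmonic $1$-forms, hence is trivial on $S^2$, fact (iii) gives $\ker(\sdo\,\sdt)=\ker\sdt=\{0\}$, so $\sdst\sdso$ has trivial cokernel and is therefore onto the smooth symmetric traceless $2$-tensors. Restricting $\sdst\sdso$ to pairs of functions in $\LM$ removes exactly the $l=0,1$ part of its kernel, so this restriction is a bijection onto the smooth symmetric traceless $2$-tensors.

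The proposition now follows: given $\theta$, choose any preimage under $\sdst\sdso$, apply Proposition \ref{propsphericalharmonicdecomp} to each of its two components and discard the $l=0,1$ parts (which lie in the kernel), obtaining $\theta_{\textnormal{e}},\theta_{\textnormal{o}}\in\LM$ with $\theta=\sdst\sdso\big(\theta_{\textnormal{e}},\theta_{\textnormal{o}}\big)$; injectivity of $\sdst\sdso$ on $\LM\times\LM$ gives uniqueness. Smoothness of $\theta_{\textnormal{e}},\theta_{\textnormal{o}}$ as functions on $\mcalm$ follows because they arise by applying the fixed inverse of $\sdst\sdso|_{\LM\times\LM}$ — a smoothing operator of order $-2$ on the round sphere — on each $2$-sphere, the dependence on $(\taus,r)$ entering only through the explicit powers of $r$. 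I expect the main obstacle to be precisely the elliptic bookkeeping of the third paragraph: identifying $\ker(\sdst\sdso)$ with the $l=0,1$ functions and showing its cokernel is trivial both rest on the two exceptional features of $S^2$ — its $6$-dimensional conformal group and the absence of transverse traceless symmetric $2$-tensors — and one must carefully absorb both the constant ($l=0$) ambiguity in the Hodge potentials and the genuine $l=1$ kernel into $\LM$. Since this proposition is established in \cite{D--H--R}, one may also simply invoke that reference.
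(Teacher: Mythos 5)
Your proof is correct, but it is worth pointing out that the paper itself offers no proof of this proposition at all: it is stated with the remark that it is ``proved in \cite{D--H--R}'', which is precisely the fallback you mention in your last sentence. What you have supplied is therefore a genuine, self-contained argument where the paper has only a citation. Your route — reduce to the round sphere, observe that $\sdst\sdso$ is an elliptic (hence Fredholm) operator from pairs of functions to symmetric traceless $2$-tensors, identify its kernel with the $l=0,1$ functions via the conformal Killing fields of $S^2$, and kill the cokernel using the absence of harmonic $1$-forms and of transverse-traceless tensors on $S^2$ — is sound; the index bookkeeping is consistent ($\sdso$ has index $2$, $\sdst$ has index $6$, and your kernel of the composition is the $8$-dimensional space of pairs supported on $l=0,1$), the symbol computation showing $\zeta\otimeshat\zeta$ and $\zeta\otimeshat(\shd\zeta)$ span the fibre is right, and the adjoint identification $(\sdst\sdso)^{*}=\sdo\,\sdt$ matches the paper's conventions. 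The proof in \cite{D--H--R} instead proceeds by expanding in (tensor) spherical harmonics and showing the images $\sdst\sdso(Y^l_m,0)$, $\sdst\sdso(0,Y^l_m)$ for $l\geq 2$ span; your Fredholm argument buys the same conclusion without an explicit basis and makes the smooth dependence on $(\taus,r)$ transparent via elliptic regularity, at the cost of having to invoke the two special features of $S^2$ (the $6$-dimensional conformal group and the vanishing of TT tensors) that the harmonic expansion encodes implicitly. Either way the statement is established; no gap.
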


\subsubsection{The $l=0,1$ spherical harmonics and $\smi$-tensors}\label{Thel=01sphericalharmonicsandqmsmitensors}

Our final considerations are $\smi$-tensors.

Analagously to the previous three sections one has the following.

In the sequel, we denote by $\LS$ the space of smooth functions on $\Sigma$ having vanishing projection to $l=0,1$.
\begin{proposition}\label{propsphericalharmonicdecompositionofsmitensors}
	Let $f$ be a smooth function on $\Sigma$. Then one has the unique, orthogonal decomposition
	\begin{align*}
	f=\mpart{f}+f'
	\end{align*}
	where $\mpart{f}$ is a smooth function on $\Sigma$ supported only on $l=0,1$ and $f'\in\LS$ is smooth.
	
	Moreover, if $\xi$ is a smooth $\smi$ 1-form then one has the unique, orthogonal decomposition
	\begin{align*}
	\xi=\dpart{\xi}+\xi'
	\end{align*}
	where $\dpart{\xi}$ is a smooth $\smi$ 1-form supported only on $l=0,1$ and $\xi'$ is a smooth $\sm$ 1-form with vanishing projection to $l=0,1$ (cf. Definition \ref{defnqmsmoneformssupportedonlgeqi}).
	
	Finally, if $\theta$ is a smooth, symmetric, traceless 2-covariant $\smi$-tensor then $\theta$ can be uniquely represented as
	\begin{align*}
	\theta=\sdst\sdso\big(\theta_{\textnormal{e}}, \theta_{\textnormal{o}}\big)
	\end{align*}
	where $\theta_{\textnormal{e}}, \theta_{\textnormal{o}}\in\LS$ are smooth. 
\end{proposition}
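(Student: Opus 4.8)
The plan is to reduce each of the three assertions to the corresponding statement already established on the round sphere --- Propositions \ref{propsphericalharmonicdecomp} and \ref{propsmtwotensorsphericalharmonicdecomposition}, together with the Hodge decomposition of $1$-forms on $S^2$ underlying Definition \ref{defnqmsmoneformssupportedonlgeqi} --- applied parametrically in the area radius $r$. The structural observation that makes this work, exactly as in sections \ref{Thel=0,1spheriacalharmonicsandqmtensors}--\ref{Thel=0,1sphericalharmonicsandsm2tensors}, is that $\Sigma = \{t^*_0\}\times[2M,\infty)\times S^2$ carries a product structure over $S^2$ just as $\mcalm$ does; hence by Definition \ref{defnsmitensoranalysis} a smooth $\smi$-tensor field on $\Sigma$ is nothing but a smooth one-parameter family, parametrised by $r\in[2M,\infty)$, of tensor fields on $S^2$, and every operation appearing above acts leafwise.

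For the scalar case I would take a smooth function $f$ on $\Sigma$, restrict it to each leaf $S^2_r$, and apply Proposition \ref{propsphericalharmonicdecomp} there, setting $\mpart{f}$ to be the sum of its $l=0,1$ projections and $f':=f-\mpart{f}$. The only point to verify is that these two pieces are again smooth on $\Sigma$, and this is immediate: the $l=0,1$ projection is obtained by pairing $f$ in $L^2(S^2)$ against the fixed smooth harmonics $Y^l_m$, and $f$ is jointly smooth in $(r,\theta^A)$, so the coefficients are smooth in $r$. Uniqueness and leafwise $L^2$-orthogonality follow at once from orthogonality of the $Y^l_m$, giving the first claim with $f'\in\LS$.

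For a smooth $\smi$ $1$-form $\xi$ I would first write $\xi = \sdso(\xi_{\textnormal{e}},\xi_{\textnormal{o}})$ by applying the Hodge decomposition on each leaf $S^2_r$, with $\xi_{\textnormal{e}},\xi_{\textnormal{o}}$ smooth functions on $\Sigma$ of vanishing mean on every leaf; their smoothness in $r$ is standard elliptic regularity with a parameter. Applying the scalar case to $\xi_{\textnormal{e}}$ and $\xi_{\textnormal{o}}$ and using linearity of $\sdso$ then yields $\dpart{\xi} := \sdso\big((\xi_{\textnormal{e}})_{l=0,1},(\xi_{\textnormal{o}})_{l=0,1}\big)$ and $\xi' := \sdso(\xi_{\textnormal{e}}',\xi_{\textnormal{o}}')$, the latter having vanishing projection to $l=0,1$ in the sense of Definition \ref{defnqmsmoneformssupportedonlgeqi}. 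Orthogonality uses that $\sdso = -\sn(\cdot) - \shd\sn(\cdot)$ splits a $1$-form into $L^2$-orthogonal gradient and co-gradient parts and that $-\slap$, having the $Y^l_m$ as eigenfunctions, preserves the scalar decomposition of the previous paragraph; uniqueness descends from uniqueness of the Hodge representation (modulo the harmonic, i.e.\ constant, part, which the vanishing-mean normalisation removes) combined with scalar uniqueness. The symmetric traceless $2$-tensor case is identical in spirit: apply Proposition \ref{propsmtwotensorsphericalharmonicdecomposition} on each leaf to obtain $\theta = \sdst\sdso(\theta_{\textnormal{e}},\theta_{\textnormal{o}})$, note that the scalar operator $\sdst\sdso$ annihilates precisely the $l=0,1$ modes so that $\theta_{\textnormal{e}},\theta_{\textnormal{o}}$ can and must be chosen in $\LS$, and inherit smoothness in $r$ as before.

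The only genuine obstacle --- and it is a mild one --- is establishing that the decomposed pieces are smooth in the $r$-direction, along with the usual care at the poles of $S^2$. Both are handled uniformly: the former because passing to $l=0,1$ components amounts to integration against fixed smooth harmonics, and because the Hodge and elliptic splittings invoked for $1$-forms and for symmetric traceless tensors depend smoothly on the parameter $r$ by elliptic regularity with parameters; the latter because the angular-coordinate degeneration is already built into the tensorial formalism and into the three sphere-level propositions being used. No new analytic input is required beyond what sections \ref{Thel=0,1spheriacalharmonicsandqmtensors}--\ref{Thel=0,1sphericalharmonicsandsm2tensors} already supply.
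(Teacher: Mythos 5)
Your proof is correct and follows exactly the route the paper intends: the paper gives no explicit argument for this proposition, merely asserting it holds ``analogously to the previous three sections,'' and your leafwise reduction to Propositions \ref{propsphericalharmonicdecomp} and \ref{propsmtwotensorsphericalharmonicdecomposition} together with the Hodge decomposition, applied parametrically in $r$ with smoothness handled by pairing against fixed harmonics and elliptic regularity with parameters, is precisely the content of that analogy. Nothing further is needed.
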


\subsubsection{The projection onto and away from $l=0,1$}\label{Theprojectionontoandwayfroml=0,1}

We end this section by defining a projection mapping onto and away from the space of tensor fields supported only on $l=0,1$.

In what follows, $n\geq0$ is an integer.
\begin{definition}\label{defnprojectionmap}
Let $Q$ be a smooth $n$-covariant $\qm$-tensor field with $\momega$ a smooth $\qmsm$ 1-form and let $f$ be a smooth function on $\Sigma$ with $\xi$ a smooth $\smi$ 1-form.

Then we respectively call the maps
\begin{align*}
Q&\rightarrow\mpart{Q},\\
\momega&\rightarrow\mpart{\momega}
\end{align*}
and
\begin{align*}
f&\rightarrow\mpart{f},\\
\xi&\rightarrow\mpart{\xi}
\end{align*}
the projection of $Q, \momega$ and $f, \xi$ onto $l=0,1$.

Conversely, we respectively call the maps
\begin{align*}
Q&\rightarrow Q',\\
\momega&\rightarrow\momega'
\end{align*}
\begin{align*}
Q&\rightarrow Q',\\
\momega&\rightarrow\momega'
\end{align*}
the projection of $Q, \momega$ and $f, \xi$ away from $l=0,1$.
\end{definition}

\subsection{Elliptic estimates on 2-spheres}\label{Ellipticoperatorson2-spheres}

In this section we introduce various elliptic operators on 2-spheres for which elliptic estimates are derived.

These operators will play an important role throughout the paper.

\subsubsection{Norms on spheres}\label{Normsonspheres}

First we define norms on 2-spheres through which the elliptic estimates are to be measured. Importantly, these norms shall respect the $\taus$-foliation of section \ref{Thetaustarfoliation}.

\begin{definition}\label{defninnerproduct}
Let $m\geq 0$ be an integer. Then we define the norm $||\cdot||_{H^m_{\taus, r}}$ acting on $\qm$-tensor fields, $\qmsm$ 1-forms and $\sm$-tensor tensor fields as follows.

If $Q,q$ and $f$ are respectively a symmetric $2$-covariant $\qm$-tensor, a $\qm$ 1-form and a function on $D^+(\Sigma)$ then
\begin{align*}
||Q||_{H^m_{\taus, r}}^2:&=\frac{1}{r^2}\int_{S^2_{\taus,r}}\sum_{i=0}^{m}\bigg(2|(r\sn)^iQ_{SS}\sdot(r\sn)^iQ_{LL}|+|(r\sn)^iQ_{SL}|_\sg^2+2|(r\sn)^iQ_{SL}\sdot(r\sn)^iQ_{LL}|+|(r\sn)^iQ_{LL}|_\sg^2\bigg)\sepsilon,\\
||q||_{H^m_{\taus, r}}^2:&=\frac{1}{r^2}\int_{S^2_{\taus,r}}\sum_{i=0}^{m}\bigg(2|(r\sn)^iq_{S}\sdot(r\sn)^iq_{L}|+|(r\sn)^iq_{L}|_\sg^2\bigg)\sepsilon,\\
||f||_{H^m_{\taus, r}}^2:&=\frac{1}{r^2}\int_{S^2_{\taus,r}}\sum_{i=0}^{m}|(r\sn)^if|_\sg^2\sepsilon.
\end{align*}
Conversely, if $\Theta$ is an $n$-covariant $\sm$-tensor then
\begin{align*}
||\Theta||_{H^m_{\taus, r}}^2:&=\frac{1}{r^2}\int_{S^2_{\taus,r}}\sum_{i=0}^{m}|(r\sn)^i\Theta|_\sg^2\sepsilon.
\end{align*}
Finally, if $\momega$ is a $\qmsm$ 1-form then
\begin{align*}
||\momega||_{H^m_{\taus, r}}^2:&=\frac{1}{r^2}\int_{S^2_{\taus,r}}\sum_{i=0}^{m}\bigg(2|(r\sn)^i\momega_{S}\sdot(r\sn)^i\momega_{L}|+|(r\sn)^i\momega_{L}|_\sg^2\bigg)\sepsilon.
\end{align*}
\end{definition}

We denote by $H^m_{\taus,r}$ the space of symmetric, 2-covariant $\qm$-tensors, $\qm$ 1-forms, functions on $\mcalm$, $\qmsm$ 1-forms and symmetric, traceless 2-covariant $\sm$-tensors which have $m$ weak derivatives in the sense defined above on any 2-sphere $S^2_{\taus,r}$. For the case $m=0$ we set $H^0_{\taus,r}=L^2_{\taus,r}$. We also denote by $H^m_{\mathsmaller{\Sigma},r}$ the appropriate restriction of $H^m_{\taus,r}$ to $\Sigma$.

Equipping either of these spaces with the canonical inner product yields a Hilbert space on every 2-sphere $S^2_{\taus,r}$. In particular, the closed subspaces $H^{m,\prime}_{\taus,r}$ and $H^{m,\prime}_{\mathsmaller{\Sigma},r}$ of $H^m_{\taus,r}$ and $H^m_{\mathsmaller{\Sigma},r}$ consisting of those tensor fields on $\mcalm$ and $\Sigma$ having vanishing projection to $l=0,1$ is a Hilbert space on every 2-sphere $S^2_{\taus,r}$.

\subsubsection{The family of operators $\slashed{\mathcal{A}}$}\label{ThefamilyofoperatorssA}

We continue by introducing a family of operators on $\mcalm$ which shall ultimately serve as a shorthand notation for controlling higher order angular derivatives of tensor fields on $\mcalm$ and $\Sigma$ when measured in the norms of the previous section.

First we follow \cite{D--H--R} by defining the family of $\slashed{\mcald}$ operators through their action on tensor fields on $\mcalm$, with their restriction to tensor fields on $\Sigma$ immediate:
\begin{itemize}
	\item the operator $\sdo$ maps $\sm$ 1-forms $\xi$ to the pair of functions on $\mcalm$
	\begin{align*}
	\sdo\xi=\big(\sdiv\xi, \scurl\xi\big)
	\end{align*}
	\item the $L^2$ (with respect to $\sg_M$) adjoint $\sdso$ of $\sdo$ maps pairs of functions on $\mcalm$ into the $S$ 1-form
	\begin{align*}
	\sdso(\rho, \sigma)=-\sn\rho-\slashed{\star}\sn\sigma
	\end{align*}
	\item the operator $\sdt$ maps symmetric, traceless, 2-covariant $\sm$-tensor fields into the $\sm$ 1-form
	\begin{align*}
	\sdt\theta=\sdiv\theta
	\end{align*}
	\item the $L^2$ (with respect to $\sg_M$) adjoint $\sdst$ of $\sdt$ maps $\sm$ 1-forms $\xi$ to the symmetric, traceless, 2-covariant $\sm$-tensor
	\begin{align*}
	\sdst\xi=-\frac{1}{2}\sn\otimeshat\xi
	\end{align*}
\end{itemize}
Moreover, by utilising the frame $\{S,L\}$ one generalises the operators $\sdo$ and $\sdso$ to acting on $\qmsm$ 1-forms:
\begin{itemize}
	\item the operator $\sdo$ maps $\qmsm$ 1-forms to the pair of $\qm$ 1-forms
	\begin{align*}
	\sdo\stkout{\omega}=\big(\sdiv\stkout{\omega}, \scurl\stkout{\omega}\big)
	\end{align*}
	\item the $L^2$ (with respect to $\sg_M$) adjoint $\sdso$ of $\sdo$ maps pairs of $\qm$ 1-forms into the $\qmsm$ 1-form
	\begin{align*}
	\sdso(q, q')=-\sn q-\slashed{\star}\sn q'
	\end{align*}
\end{itemize}
Consequently, proceeding again as in \cite{D--H--R}, the family of operators $\slashed{\mcala}$ are defined as follows, with their restriction to $\Sigma$ immediate:
\begin{itemize}
	\item the operators $\smcA{i}$ are defined inductively as
	\begin{align*}
	\smcA{2i+1}:=r\sn\smcA{2i},\qquad\smcA{2i+2}:=-r\sdiv\smcA{2i+1}
	\end{align*} 
	with $\smcA{1}=r\sn$
	\item the operators $\vmcA{i}$ are defined inductively as
	\begin{align*}
	\vmcA{2i+1}:=r\sdo\vmcA{2i},\qquad\vmcA{2i+2}:=r\sdso\vmcA{2i+1}
	\end{align*} 
	with $\vmcA{1}=r\sdo$
	\item the operators $\tmcA{i}$ are defined inductively as
	\begin{align*}
	\tmcA{2i+1}:=r\sdt\vmcA{2i},\qquad\tmcA{2i+2}:=r\sdst\vmcA{2i+1}
	\end{align*} 
	with $\tmcA{1}=r\sdt$
\end{itemize}

\begin{proposition}\label{propellipticestimatesonA}
Let $Q'\in H^{m,\prime}_{\taus, r}$ be a symmetric $2$-covariant $\qm$-tensor, $\momega'\in H^{m,\prime}_{\taus, r}$ be a $\qmsm$ 1-form and let $\theta'\in H^{m,\prime}_{\taus, r}$ be a symmetric, traceless 2-covariant $\sm$-tensor respectively. Then for any 2-sphere $S^2_{\taus,r}$ and any integer $m\geq 0$
	\begin{align*}
	\|Q'\|^2_{H^m_{\taus, r}}&\lesssim\snorm{\mcala_f^{[m]}Q'}[\taus][r],\\
	\|\momega'\|^2_{H^m_{\taus, r}}&\lesssim\snorm{\mcala_\xi^{[m]}\momega'}[\taus][r],\\
	\|\theta'\|^2_{H^m_{\taus, r}}&\lesssim\snorm{\mcala_\theta^{[m]}\theta'}[\taus][r].
	\end{align*}
\end{proposition}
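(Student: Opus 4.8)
The goal is Proposition \ref{propellipticestimatesonA}: given a tensor field on $S^2_{\taus,r}$ (of $\qm$-, $\qmsm$-, or traceless $\sm$-type) with vanishing projection to $l=0,1$, the $H^m_{\taus,r}$ norm is controlled by the $L^2_{\taus,r}$ norm of the corresponding iterated operator $\slashed{\mcala}^{[m]}$ applied to it. The plan is to reduce everything to the scalar/$\sm$-tensor case on the round sphere and then run a standard elliptic bootstrap on $S^2$, keeping careful track of the $l=0,1$ kernels.

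\textbf{Reduction to scalars.} First I would observe that by Definition \ref{defninnerproduct} the $\qm$-tensor and $\qmsm$-forms norms are built componentwise in the frame $\{S,L\}$ out of $\sm$-tensorial quantities (the components $Q_{SS}, Q_{SL}, Q_{LL}$, etc., are $\sm$-tensor fields on each $S^2_{\taus,r}$), and the operators $\smcA{m}$, $\vmcA{m}$, $\tmcA{m}$ are defined so that they act on each frame component through $r\sn$, $r\sdo$, $r\sdso$, $r\sdt$, $r\sdst$. Hence it suffices to prove three scalar-type estimates: (i) for a function $f'$ with vanishing $l=0,1$ projection, $\|f'\|^2_{H^m_{\taus,r}} \lesssim \snorm{\smcA{m}f'}[\taus][r]$; (ii) for an $\sm$ 1-form $\xi'$ with vanishing $l=0,1$ projection, $\|\xi'\|^2_{H^m_{\taus,r}} \lesssim \snorm{\vmcA{m}\xi'}[\taus][r]$; and (iii) for a symmetric traceless $\sm$ 2-tensor $\theta'$ (which by Proposition \ref{propsmtwotensorsphericalharmonicdecomposition} automatically has vanishing $l=0,1$ projection), $\|\theta'\|^2_{H^m_{\taus,r}} \lesssim \snorm{\tmcA{m}\theta'}[\taus][r]$. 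Since $\sg_M = r^2\mathring{g}$, rescaling reduces each of these to the corresponding statement on the unit round sphere $(S^2,\mathring{g})$ with the $r$-weights absorbed into the $r\sn$ normalisation, so I would work on the fixed round sphere.

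\textbf{The round-sphere estimates.} On $(S^2,\mathring{g})$ the operators $\sdo,\sdso,\sdt,\sdst$ are the standard Hodge operators; the key classical facts (all in \cite{D--H--R}, and provable directly by spherical-harmonic expansion) are: $\sdso\sdo = -\slap + (\text{zeroth order})$ on 1-forms, $\sdt\sdst = -\tfrac12\slap + (\text{zeroth order})$ on traceless symmetric 2-tensors, $\sdo$ is injective on 1-forms orthogonal to the $l=0,1$ Hodge potentials, $\sdt$ is injective on traceless symmetric 2-tensors, and each of $\sdo,\sdso,\sdt,\sdst$ gains one angular derivative in an elliptic sense. For (i): $\smcA{2}f' = -r\sdiv(r\sn f') = r^2(-\slap)f'$, and $-\slap$ on the subspace $\LM\cap L^2$ has spectrum bounded below by the $l=2$ eigenvalue $6$, so it is an isomorphism there with bounded inverse; standard $L^2$ elliptic regularity for $\slap$ on $S^2$ then gives $\|f'\|_{H^{2k}} \lesssim \|\slap^k f'\|_{L^2} = \|\smcA{2k}f'\|_{L^2}$ up to $r$-powers, and the odd orders follow by interpolation or by noting $\smcA{2k+1} = r\sn\smcA{2k}$ together with $\|\sn g\|_{H^{m}}\simeq \|g\|_{H^{m+1}}$ for $g$ in the relevant subspace. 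For (ii) and (iii) the same scheme applies using $\sdso\sdo$ and $\sdt\sdst$ in place of $-\slap$, with the Poincaré-type lower bounds coming from the absence of $l=0,1$ modes (respectively from Proposition \ref{propsmtwotensorsphericalharmonicdecomposition} for the automatically-vanishing projection of traceless 2-tensors). Induct on $m$: assuming $\|\cdot\|_{H^{m-2}} \lesssim \|\slashed{\mcala}^{[m-2]}\cdot\|_{L^2}$, apply it to $\slashed{\mcala}^{[2]}$ of the field (which again lies in the appropriate subspace since $\slashed{\mcala}^{[2]}$ preserves the $l=0,1$-vanishing property) and use the base elliptic gain.

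\textbf{Main obstacle.} The genuinely delicate point is bookkeeping around the $l=0,1$ kernels: the operators $\sdo$ and $\sdt$ have non-trivial kernels/cokernels precisely on the $l=0,1$ modes (for $\sdt$ the statement is that there are \emph{no} traceless 2-tensors there at all), and one must verify that (a) restricting to $H^{m,\prime}_{\taus,r}$ kills exactly the right low modes so that the relevant second-order operator is coercive, and (b) each application of $\sdo,\sdso,\sdt,\sdst$ maps the primed subspace to a primed subspace so the induction closes. Both are pure spherical-harmonic calculations but need care with the even/odd Hodge decomposition of $\qmsm$ 1-forms (Definition \ref{defnqmsmoneformssupportedonlgeqi}), where the two scalar potentials $\momega_{\mathrm{e}},\momega_{\mathrm{o}}$ must each be handled by (i). A secondary technical nuisance is making the $r$-weights in the norms of Definition \ref{defninnerproduct} match the $r$ factors in the definitions of $\smcA{m},\vmcA{m},\tmcA{m}$; this is just homogeneity but should be tracked explicitly. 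Once these are in hand the estimate is, as the paper suggests, a routine consequence of elliptic theory on the two-sphere, and the proof can simply cite \cite{D--H--R} for the base Hodge estimates and present the induction.
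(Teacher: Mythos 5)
Your proposal is correct and follows essentially the same route as the paper: reduce to frame components in $\{S,L\}$, use the composite Hodge identities $\sdso\sdo=-\slap+\tfrac{1}{r^2}$ and $\sdst\sdt=-\tfrac{1}{2}\slap+\tfrac{1}{r^2}$ to express $\snorm{\slashed{\mcala}^{[1]}\cdot}[\taus][r]$ and $\snorm{\slashed{\mcala}^{[2]}\cdot}[\taus][r]$ in terms of $r\sn$ and $r^2\slap$ norms, then close with the Poincar\'e inequality (valid on the primed subspaces) and elliptic estimates for $\slap$, finishing by induction on $m$. The only cosmetic difference is that the paper records the explicit $m=1,2$ norm identities rather than invoking a general bootstrap, but the substance is identical.
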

\begin{proof}
	We prove the proposition first assuming smoothness of all quantities and then appeal to a density arguement.
	
	We first note the identities
	\begin{align*}
		\sdso\sdo&=-\slap+\frac{1}{r^2},\\
		\sdst\sdt&=-\frac{1}{2}\slap+\frac{1}{r^2}.
	\end{align*}
	Computing thus in the frame $\{S,L\}$ one finds that on every 2-sphere $S^2_{\taus,r}$
	\begin{align}
	\snorm{\smcA{1}Q'}[\taus][r]&=\snorm{r\sn Q'}[\taus][r],\label{identityAf1}\\
	\snorm{\vmcA{1}\momega'}[\taus][r]&=\snorm{r\sn \momega'}[\taus][r]+\snorm{\momega'}[\taus][r],\label{identityAv1}\\
	\snorm{\tmcA{1}\theta'}[\taus][r]&=\frac{1}{2}\snorm{r\sn \theta'}[\taus][r]+\snorm{\theta'}[\taus][r]\label{identityAt1}
	\end{align}
	and
	\begin{align*}
	\snorm{\smcA{2}Q'}[\taus][r]&=\snorm{r^2\slap Q'}[\taus][r],\\
	\snorm{\vmcA{2}\momega'}[\taus][r]&=\snorm{r^2\slap \momega'}[\taus][r]+2\snorm{r\sn \momega'}[\taus][r]+\snorm{\momega'}[\taus][r],\\
	\snorm{\tmcA{2}\theta'}[\taus][r]&=\frac{1}{4}\snorm{r^2\slap \theta'}[\taus][r]+\snorm{r\sn \theta'}[\taus][r]+\snorm{\theta'}[\taus][r].
	\end{align*}
	The former along with the Poincar\'e inequality immediately yields the $m=1$ case of the proposition whereas the latter combined with elliptic estimates on $\slap$ and the Poincar\'e inequality yields the $m=2$ case.
	
	The higher order cases then follow by an inductive procedure.
\end{proof}
Of course, replacing $Q'$ by a $\qm$ 1-form $q'\in H^{m,\prime}_{\taus,r}$ or a function $f'\in H^{m,\prime}_{\taus,r}$ yields an analagous result. Similarly, replacing $H^{m,\prime}_{\taus,r}$ with $H^{m,\prime}_{\mathsmaller{\Sigma},r}$ then an analagous result holds for $Q$ a function on $\Sigma$, $\momega$ a $\smi$ 1-form and $\theta'$ a symmetric, traceless 2-covariant $\smi$-tensor.

\subsubsection{The family of operators $\slashed{\Pi}$ and $\mcalt$}\label{ThefamilyofoperatorsPandT}

Next we introduce a family of operators on $\mcalm$ and $\Sigma$ which have the useful property of mapping all types of tensor fields into the spaces $\LM$ and $\LS$ respectively. We moreover define a family of elliptic operators that are naturally related to the former family and which shall appear frequently in the text.

To achieve that this family indeed have ranges contained within $\LM$ and $\LS$ we use the result of Lemma 4.4.1 of \cite{D--H--R} which shows that the $l=0,1$ spherical harmonics lie in the kernel of the operator $\sdst\sdso$. The desired operators are then defined as follows, with their restriction to $\Sigma$ immediate:
\begin{itemize}
	\item the operator $\sps$ maps $n$-covariant $\qm$-tensors $Q$ into the $n$-covariant $\qm$-tensor in $\LM$
	\begin{align*}
	\sps Q=r^4\sdiv\sdiv\sn\otimeshat\sn Q
	\end{align*}
	\item the operator $\spv$ maps $\qmsm$ 1-forms $\momega$ into the pair of $\qm$ 1-forms in $\LM$
	\begin{align*}
	\spv \momega=r^4\sdo\sdiv\sn\otimeshat\momega
	\end{align*}
	\item the operator $\spt$ maps symmetric, traceless 2-covariant $\sm$-tensors $\theta$ into the pair of functions in $\LM$
	\begin{align*}
	\spt \theta=r^4\sdo\sdt\theta
	\end{align*}
\end{itemize}
The family of elliptic operators, which now preserve the $\sm$-type of the tensor under consideration, are then defined as follows, with their restriction to $\Sigma$ immediate:
\begin{itemize}
	\item the operator $\mcaltf$ maps $n$-covariant $\qm$-tensors $Q$ into the $n$-covariant $\qm$-tensor in $\LM$
	\begin{align*}
	\mcaltf Q=\sps Q
	\end{align*}
	\item the operator $\mcaltv$ maps $\qmsm$ 1-form $\momega$ into the $\qmsm$ 1-forms in $\LM$
	\begin{align*}
	\mcaltv \momega=\sdso\spv\momega
	\end{align*}
	\item the operator $\mcaltt$ maps symmetric, traceless 2-covariant $\sm$-tensors $\theta$ into the symmetric, traceless 2-covariant $\sm$-tensor
	\begin{align*}
	\mcaltt \theta=\sdst\sdso\spt\theta
	\end{align*}
\end{itemize}

\begin{proposition}\label{propellipticestimatesonT}
Let $Q'\in H^{m,\prime}_{\taus, r}$ be a symmetric $2$-covariant $\qm$-tensor, $\momega'\in H^{m,\prime}_{\taus, r}$ be a $\qmsm$ 1-form and let $\theta'\in H^{m,\prime}_{\taus, r}$ be a symmetric, traceless 2-covariant $\sm$-tensor respectively. Then for any 2-sphere $S^2_{\taus,r}$ and any integer $m\geq 0$
	\begin{align*}
	\|Q'\|^2_{H^{m+4}_{\taus, r}}&\lesssim\snorm{\mcala_f^{[m]}\mcaltf Q'}[\taus][r],\\
	\|\momega'\|^2_{H^{m+4}_{\taus, r}}&\lesssim\snorm{\mcala_\xi^{[m]}\mcaltv \momega'}[\taus][r],\\
	\|\theta'\|^2_{H^{m+4}_{\taus, r}}&\lesssim\snorm{\mcala_\theta^{[m]}\mcaltt\theta'}[\taus][r].
	\end{align*}
\end{proposition}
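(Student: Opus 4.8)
The plan is to reduce each of the operators $\mcaltf$, $\mcaltv$, $\mcaltt$ --- once it is written out in the frame $\{S,L\}$ and, in the $\qmsm$ $1$-form and $\sm$-tensor cases, after passing to the Hodge potentials --- to a constant-coefficient fourth-order elliptic operator on the round $2$-sphere which is a polynomial in $r^2\slap$ whose kernel is precisely the span of the $l=0,1$ spherical harmonics. The estimates then follow by iterating the elliptic estimate for $\slap$, exactly as in the inductive step of the proof of Proposition \ref{propellipticestimatesonA}.

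First I would record the pointwise identities on $2$-spheres underlying this reduction. In addition to $\sdso\sdo=-\slap+\frac{1}{r^2}$ and $\sdst\sdt=-\frac12\slap+\frac{1}{r^2}$ already used above, I would establish the Bochner-type identity $\sdiv\sdiv\big(\sn\otimeshat\sn f\big)=\slap\big(\slap+\frac{2}{r^2}\big)f$ for scalars $f$, together with its analogues for $\sdo\sdiv\sn\otimeshat(\cdot)$ acting on $\sm$ $1$-forms and for $\sdo\sdt(\cdot)$ acting on symmetric traceless $2$-covariant $\sm$-tensors; the latter two are obtained by inserting the classical Hodge decompositions $\xi=\sdso(\xi_{\textnormal{e}},\xi_{\textnormal{o}})$ and $\theta=\sdst\sdso(\theta_{\textnormal{e}},\theta_{\textnormal{o}})$ (cf. Proposition \ref{propsmtwotensorsphericalharmonicdecomposition}) and commuting the resulting compositions through $\slap$, using that $\slap$, $\sdiv$, $\sdo$, $\sdst$ all preserve the spherical-harmonic grading. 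Componentwise in the frame $\{S,L\}$ this gives, for a smooth $\qm$-tensor $Q'$,
\[
\sps Q'=(r^2\slap)\big(r^2\slap+2\big)Q',
\]
and likewise $\spv\momega'$ and $\spt\theta'$ are expressed through constant-coefficient polynomials in $r^2\slap$ applied to the Hodge potentials of $\momega'$ and $\theta'$; appending the operators $\sdso$ and $\sdst\sdso$ from the definitions of $\mcaltv$ and $\mcaltt$ only re-assembles these potentials into a $\qmsm$ $1$-form, resp. a symmetric traceless $\sm$-tensor, and leaves the composite a degree-four (in derivatives) polynomial in $r^2\slap$ on each frame component, vanishing exactly on $l=0,1$.

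Next I would invert. Since the inputs lie in the closed subspace $H^{m,\prime}_{\taus, r}$, the relevant scalar potentials have vanishing projection to $l=0,1$, and there $r^2\slap$ has spectrum contained in $\{-l(l+1):l\geq 2\}$; hence $(r^2\slap)\big(r^2\slap+2\big)$ and its $\spv$, $\spt$ counterparts are boundedly invertible, and applying the elliptic estimate for $\slap$ twice together with the Poincar\'e inequality --- the same mechanism that passes from $\smcA{1},\smcA{2}$ to higher $\smcA{m}$ in the proof of Proposition \ref{propellipticestimatesonA} --- yields, on every $2$-sphere $S^2_{\taus,r}$, the bound $\|Q'\|^2_{H^{m+4}_{\taus, r}}\lesssim\snorm{\mcala_f^{[m]}\mcaltf Q'}[\taus][r]$ and, similarly, the bounds for the potentials entering the $\mcaltv$ and $\mcaltt$ estimates. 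Summing over the frame components that $\mcala_f^{[m]}$, $\mcala_\xi^{[m]}$, $\mcala_\theta^{[m]}$ respectively control (by the explicit computations in the proof of Proposition \ref{propellipticestimatesonA}) gives the three asserted inequalities for smooth $Q'$, $\momega'$, $\theta'$, and a density argument in $H^{m+4,\prime}_{\taus, r}$ removes the smoothness hypothesis; the identical argument with $H^{m,\prime}_{\mathsmaller{\Sigma},r}$ in place of $H^{m,\prime}_{\taus, r}$ gives the corresponding estimates on $\Sigma$.

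The step I expect to be the main obstacle is the middle one: pinning down the Bochner-type identities with the correct constants --- so that the kernels of $\sps$, $\spv$, $\spt$ are genuinely exactly the $l=0,1$ modes and the operators are, modulo those kernels, invertible with $r$-uniform constants --- while carefully tracking the interplay of $\sn\otimeshat$, the iterated divergences, the Gauss-curvature term $r^{-2}$ and the $\sm$-type of each factor, so that no spurious lower-order contributions survive to obstruct the reduction to a constant-coefficient polynomial in $r^2\slap$.
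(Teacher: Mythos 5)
Your proposal is correct and follows essentially the same route as the paper: the identity $\sps f=(r^2\slap)(r^2\slap+2)f$ and its $\spv$, $\spt$ analogues are exactly the paper's expression of $\mcaltf$, $\mcaltv$, $\mcaltt$ as quadratic polynomials in the $\slashed{\mcala}$ operators, after which the estimate is obtained by inverting on the $l\geq2$ modes and reducing to Proposition \ref{propellipticestimatesonA}. The only (cosmetic) difference is that you invert via the spectral lower bound for $l\geq 2$, whereas the paper expands $\snorm{\mcala^{[m]}\mcalt\,\cdot}[\taus][r]$ as a signed combination of $\snorm{\mcala^{[m+k]}\cdot}[\taus][r]$ and absorbs the negative cross term by the Poincar\'e inequality.
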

\begin{proof}
We prove the proposition first assuming smoothness of all quantities and then appeal to a density arguement.

Using the identities
\begin{align*}
\sdiv\sn\otimeshat\momega&=\slap\momega+\frac{1}{r^2}\momega,\\
\sdiv\sn\otimeshat\sdiv\theta&=\slap\sdiv\theta+\frac{1}{r^2}\sdiv\theta
\end{align*}
we find
\begin{align*}
\mcaltf&=\smcA{4}-2\smcA{2},\\
\mcaltv&=-\vmcA{4}+2\vmcA{2},\\
\mcaltt&=2\tmcA{4}+2\tmcA{2}
\end{align*}
and so on every 2-sphere $S^2_{\taus,r}$
\begin{align*}
\snorm{\smcA{m}\mcaltf Q'}[\taus][r]&=\snorm{\smcA{m+4} Q'}[\taus][r]-4\snorm{\smcA{m+3} Q'}[\taus][r]+4\snorm{\smcA{m+2} Q'}[\taus][r],\\
\snorm{\vmcA{m}\mcaltv\momega'}[\taus][r]&=\snorm{\vmcA{m+4}\momega'}[\taus][r]-4\snorm{\vmcA{m+3}\momega'}[\taus][r]+4\snorm{\vmcA{m+2}\momega'}[\taus][r],\\
\snorm{\tmcA{m}\mcaltt\theta'}[\taus][r]&=4\snorm{\mcala^{[m+4]}\theta'}[\taus][r]+8\snorm{\mcala^{[m+3]}\theta'}[\taus][r]+4\snorm{\mcala^{[m+2]}\theta'}[\taus][r].
\end{align*}
Applying the Poincar\'e inequality in the first and second lines along with the identities \eqref{identityAf1}-\eqref{identityAt1} and then appealing to Proposition \ref{propellipticestimatesonA} thus yields the proposition.
\end{proof}
Of course, replacing $Q'$ by a $\qm$ 1-form $q'\in H^{m,\prime}_{\taus,r}$ or a function $f'\in H^{m,\prime}_{\taus,r}$ yields an analagous result. Similarly, replacing $H^{m,\prime}_{\taus,r}$ with $H^{m,\prime}_{\mathsmaller{\Sigma},r}$ then an analagous result holds for $Q$ a function on $\Sigma$, $\momega$ a $\smi$ 1-form and $\theta'$ a symmetric, traceless 2-covariant $\smi$-tensor.

An corollary of this proposition is that the family $\mcalt$ are bijections.
\begin{corollary}\label{corrinvertingToperators}
Let $Q'$ be a smooth symmetric $2$-covariant $\qm$-tensor with vanishing projection to $l=0,1$, $\momega'$ be a smooth $\qmsm$ 1-form and let $\theta'$ be a smooth symmetric, traceless 2-covariant $\sm$-tensor with vanishing projection to $l=0,1$. Conversely, let $Q_\prime$ be a smooth symmetric $2$-covariant $\qm$-tensor supported only on $l=0,1$ and let $\momega_\prime$ be a smooth $\qmsm$ supported only on $l=0,1$. Then on $D^+(\Sigma)$ there exists a smooth symmetric $2$-covariant $\qm$-tensor, a smooth $\qmsm$ 1-form $\momega$ and a smooth, symmetric, traceless 2-covariant $\sm$-tensor $\theta$ such that
\begin{align*}
\mcaltf Q&=Q',\\
\mcaltv\momega&=\momega',\\
\mcaltt\theta&=\theta'
\end{align*}
and
\begin{align*}
\mpart{Q}&=Q_\prime,\\
\mpart{\momega}&=\momega_\prime.
\end{align*}
\end{corollary}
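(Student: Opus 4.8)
The plan is to invert the operators $\mcaltf$, $\mcaltv$, $\mcaltt$ on each $2$-sphere via the theory of elliptic operators on compact manifolds, then to assemble the $2$-sphere-wise inverses into smooth tensor fields on $D^+(\Sigma)$ using the frame $\{S,L\}$ and the smoothness of the frame and connection coefficients established in Proposition \ref{propSLframe}. The starting observation is the computation already recorded in the proof of Proposition \ref{propellipticestimatesonT}: modulo $\sm$-type one has the polynomial identities
\begin{align*}
\mcaltf=\smcA{4}-2\smcA{2},\qquad\mcaltv=-\vmcA{4}+2\vmcA{2},\qquad\mcaltt=2\tmcA{4}+2\tmcA{2},
\end{align*}
so that on each $S^2_{\taus,r}$ the operator $\mcaltf$ (say) acts on each $\{S,L\}$-frame component of $Q$ as a fourth-order constant-coefficient-in-$r$ polynomial in $r^2\slap$, namely $(r^2\slap)(r^2\slap+2)$ after using $\sdso\sdo=-\slap+r^{-2}$ and $\sdst\sdt=-\tfrac12\slap+r^{-2}$ together with the commutation identities $\sdiv\sn\otimeshat\momega=\slap\momega+r^{-2}\momega$ etc. This polynomial is, on the round metric $\mathring g$, the operator whose eigenvalue on the $l$-th harmonic is $-l(l+1)\big(-l(l+1)+2\big)$, which vanishes precisely for $l=0,1$ and is strictly negative for $l\geq 2$. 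Hence $\mcaltf$, $\mcaltv$, $\mcaltt$ restricted to $H^{m,\prime}_{\taus,r}$ (the $l\geq 2$ part) are isomorphisms of that Hilbert space for every $r$, with the inverse given explicitly by dividing the $l$-th spherical-harmonic coefficient by the corresponding nonzero eigenvalue.

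Concretely I would proceed as follows. \textbf{Step 1:} Fix a $2$-sphere $S^2_{\taus,r}$ and, working componentwise in the frame $\{S,L\}$, expand the given $Q'\in\LM$ in spherical harmonics; define $Q$ on that sphere by the mode-by-mode formula $\hat Q_{l,m}=\big(-l(l+1)(2-l(l+1))\big)^{-1}\hat Q'_{l,m}$ for $l\geq 2$ and $\hat Q_{l,m}=0$ for $l=0,1$. The $H^{m,\prime}$-elliptic estimate of Proposition \ref{propellipticestimatesonT} shows the series converges and that $Q\in H^{m+4,\prime}_{\taus,r}$ for every $m$; since $Q'$ is smooth this gives $Q$ smooth on each sphere. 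Do the same for $\momega'$ and $\theta'$, using the Hodge decomposition $\momega=\sdso(\momega_{\mathrm e},\momega_{\mathrm o})$, $\theta=\sdst\sdso(\theta_{\mathrm e},\theta_{\mathrm o})$ of Propositions \ref{propqmoneformsmoneformsphericalharmonicdecomposition} and \ref{propsmtwotensorsphericalharmonicdecomposition}, reducing the vector and tensor cases to the scalar case. \textbf{Step 2:} For the $l=0,1$ part, I would set $\mpart{Q}:=Q_\prime$ and $\mpart{\momega}:=\momega_\prime$ by hand; since $Q_\prime,\momega_\prime\in\ker\mcaltf,\ker\mcaltv$ (Lemma 4.4.1 of \cite{D--H--R}, the $l=0,1$ harmonics being annihilated by $\sdst\sdso$), adding them does not disturb the identities $\mcaltf Q=Q'$, $\mcaltv\momega=\momega'$, and it achieves the prescribed projection. \textbf{Step 3:} Smoothness in $(\taus,r)$, i.e. smoothness as a tensor field on $D^+(\Sigma)$: here I would note that the mode-division operator commutes with $\slap$ and hence with the whole elliptic system, and that all the differential operators involved ($\sn$, $\sdiv$, etc.) have coefficients depending smoothly on $(\taus,r)$ by Proposition \ref{propSLframe}; the $\taus,r$-derivatives of the solution then satisfy the same elliptic equations with smooth right-hand sides, and elliptic regularity (Proposition \ref{propellipticestimatesonT} applied to the differentiated equations) propagates smoothness. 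The $\Sigma$-versions are immediate by restriction, using Proposition \ref{propsphericalharmonicdecompositionofsmitensors}.

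The main obstacle is \textbf{Step 3}, the joint smoothness across spheres. Pointwise-in-$(\taus,r)$ invertibility on each sphere is routine Fredholm theory, but one must check that the family of inverses depends smoothly on the parameters $(\taus,r)$; the clean way is to observe that the inverse of $\mcaltf$ on $H^{m,\prime}$ is itself a pseudodifferential (indeed, a polynomial-in-$\invzslap$-type) operator with $(\taus,r)$-smooth symbol, or, more elementarily, to differentiate the identity $\mcaltf Q=Q'$ in $\taus$ and $r$, commute the derivative past $\mcaltf$ picking up lower-order error terms with smooth coefficients, and close an induction on the order of differentiation using the elliptic estimates of Proposition \ref{propellipticestimatesonT}; the error terms are harmless because $\mcaltf$ is invertible on the relevant subspace with a quantitative bound uniform on compact $(\taus,r)$-sets. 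A secondary point worth a sentence is that for the vector and tensor cases one must verify that the scalar potentials $\momega_{\mathrm e},\momega_{\mathrm o}$ (resp. $\theta_{\mathrm e},\theta_{\mathrm o}$) obtained by inverting are themselves smooth and lie in $\LM$, so that reassembling via $\sdso$ (resp. $\sdst\sdso$) yields a genuine smooth $\qmsm$ $1$-form (resp. symmetric traceless $\sm$-tensor) — this is exactly the content of Propositions \ref{propqmoneformsmoneformsphericalharmonicdecomposition} and \ref{propsmtwotensorsphericalharmonicdecomposition} applied in reverse, so no new work is needed.
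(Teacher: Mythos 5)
Your proof is correct, but it reaches surjectivity by a genuinely different mechanism than the paper. The paper's proof is an abstract Fredholm argument: the a priori estimate of Proposition \ref{propellipticestimatesonT} together with compactness of the embedding gives that $\mcaltf:H^{4,\prime}_{\taus,r}\rightarrow L^{2,\prime}_{\taus,r}$ has closed range, and then triviality of the kernels of $\mcaltf$ and of its formal adjoint (argued as in Lemma 4.4.1 of \cite{D--H--R}) yields bijectivity, with smoothness recovered afterwards by elliptic regularity in the frame $\{S,L\}$. You instead diagonalise: using the identities $\mcaltf=\smcA{4}-2\smcA{2}$ etc.\ and the spherical symmetry of the frame, each operator acts on the scalar Hodge potentials as a fixed polynomial in $r^2\slap$ whose eigenvalue on the $l$-th harmonic vanishes exactly for $l=0,1$, so you invert mode by mode. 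Both routes rest on the same two inputs (the polynomial identities and the kernel characterisation), but yours buys an explicit inverse and makes the joint smoothness in $(\taus,r)$ essentially automatic — the eigenvalues $(l-1)l(l+1)(l+2)$ are $r$-independent constants, so your Step 3 worry about uniformity of the inverse on compact parameter sets largely evaporates; the paper's argument is shorter and would survive for operators not diagonalised by spherical harmonics. Two minor slips, neither of which affects the argument: the eigenvalue $l(l+1)\big(l(l+1)-2\big)$ is strictly \emph{positive} for $l\geq2$, not negative; and the three operators reduce to slightly different polynomials in $r^2\slap$ (because of the curvature terms picked up when commuting $\slap$ past $\sdso$ and $\sdst$), though all three have kernel exactly $l=0,1$, which is all you use. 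Your handling of the prescribed $l=0,1$ part — adding $Q_\prime$ and $\momega_\prime$ by hand since they lie in the kernels — is correct and is in fact more explicit than the paper, which leaves that part of the statement implicit.
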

\begin{proof}
We prove the corollary only for the operator $\mcaltf$ acting on smooth $n$-covariant $\qm$-tensors $Q$ since the other cases are similar.

Indeed, viewing $\mcaltf$ as a map from $H^{4,\prime}_{\taus,r}$ to $L^{2,\prime}_{\taus,r}$ we have from Proposition \ref{propellipticestimatesonT} and compactness of the embedding that $\mcaltf$ has closed range. Moreover, proceeding as in the proof of Lemma 4.4.1 in \cite{D--H--R} one finds that $\mcaltf$ and its formal adjoint have trivial kernels over $H^{4,\prime}_{\taus,r}$ and $L^{2,\prime}_{\taus,r}$ respectively and thus $\mcaltf$ is both injective and surjective. Finally,  the smoothness property results from applying standard elliptic-type estimates (in particular using the frame $\{S,L\}$).
\end{proof}
Of course, replacing $Q', Q_{\prime}$ by smooth $\qm$ 1-forms $q', q_{\prime}$ or smooth functions $f', f_{\prime}$ with the same assumptions regarding the their supports with respect to the $l=0,1$ spherical harmonics yields an analagous result. Similarly, an analagous result holds for $Q', Q_{\prime}$ smooth functions on $\Sigma$, $\momega', \omega_{\prime}$ smooth $\smi$ 1-forms and $\theta'$ a smooth, symmetric, traceless 2-covariant $\smi$-tensor.

\subsubsection{The operator $\szeta$}\label{Theellipticoperatorszslap}

The final operator we introduce is to appear in the definition of the Zerilli equation in section \ref{TheRWandZeqns}.

The operator in question is to be defined as the inverse of the operator $\zslap :=\slap +\frac{2}{r^2}\Big(1-\frac{3M}{r}\Big)\Id$.
\begin{lemma}
The operator $\zslap$ is a bijection from smooth functions in $\LM$ to smooth functions in $\LM$.
\end{lemma}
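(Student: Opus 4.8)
The plan is to show that $\zslap = \slap + \frac{2}{r^2}(1-\frac{3M}{r})\Id$ is a bijection from $\LM$ to $\LM$ by reducing, via the standard spherical harmonic decomposition of functions in $\LM$, to the statement that for each $l \geq 2$ the scalar multiplier $-\frac{l(l+1)}{r^2} + \frac{2}{r^2}(1-\frac{3M}{r})$ is everywhere nonzero on $r \in [2M,\infty)$, and in fact bounded away from zero uniformly in $l$. First I would recall that $\slap$ acting on a function supported on the fixed spherical harmonic mode $Y^l_m$ multiplies it by $-\frac{l(l+1)}{r^2}$ (since $\sg_M = r^2 \mathring{g}$), so that on the $l$-mode the operator $\zslap$ acts as multiplication by the function $\lambda_l(r) := \frac{1}{r^2}\big(2 - 2 - l(l+1) - \frac{6M}{r}\big) = -\frac{1}{r^2}\big(l(l+1) + \frac{6M}{r}\big)$. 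Wait — more carefully, $\zslap$ on the $l$-mode is multiplication by $-\frac{l(l+1)}{r^2} + \frac{2}{r^2} - \frac{6M}{r^3} = -\frac{1}{r^2}\big(l(l+1) - 2 + \frac{6M}{r}\big)$; for $l\geq 2$ we have $l(l+1) - 2 \geq 4$, and $\frac{6M}{r} > 0$ on $r \in [2M,\infty)$, so $\lambda_l(r) \leq -\frac{4}{r^2} < 0$ and indeed $|\lambda_l(r)| \geq \frac{l(l+1)-2}{r^2} \geq \frac{4}{r^2}$.

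The injectivity is then immediate: if $\zslap u = 0$ for $u$ smooth in $\LM$, decomposing $u = \sum_{l\geq 2} u_l$ into spherical harmonic modes (using Proposition \ref{propsphericalharmonicdecomp} fibrewise on each $2$-sphere, and noting that $\zslap$ preserves each mode since it commutes with the rotation group and is built from $\slap$ and multiplication by a radial function), we get $\lambda_l(r) u_l = 0$ for every $l,r$, hence $u_l \equiv 0$, hence $u \equiv 0$. For surjectivity, given $f$ smooth in $\LM$, I would define $u$ modewise by $u_l := \lambda_l(r)^{-1} f_l$; since $|\lambda_l(r)^{-1}| \leq \frac{r^2}{l(l+1)-2}$ and $r$ ranges over the fixed interval $[2M,\infty)$ — or rather, one works on each compact $r$-slab and patches — the resulting series converges in $C^\infty$ because the $f_l$ decay faster than any polynomial in $l$ (smoothness of $f$) while the multipliers $\lambda_l^{-1}$ grow only polynomially (in fact decay), and the same bound survives after applying $\slap$ and $\pr$ any number of times; hence $u$ is a smooth function, it lies in $\LM$ by construction (it has no $l=0,1$ part), and $\zslap u = f$.

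The main subtlety — and the step I would be most careful about — is the convergence/regularity argument for surjectivity: one must check that dividing modewise by $\lambda_l(r)$ does not destroy smoothness in $r$ (it does not, since $\lambda_l(r)$ is a smooth, nonvanishing function of $r$ on $[2M,\infty)$ for each $l$, with all $r$-derivatives controlled by $C_k \cdot l^{-2} \cdot (\text{bounded in } r \text{ on compacta})$), and that the resummed series is genuinely smooth jointly in $(r,\theta,\varphi)$. This is handled by the standard elliptic-type argument: on any compact $r$-interval, Sobolev estimates on the sphere combined with the uniform lower bound $|\lambda_l(r)| \gtrsim l^2$ give, for each $s$ and each $k$, a bound $\|\partial_r^k u\|_{H^{s}(S^2_r)} \lesssim \|\partial_r^k (\text{stuff})\|_{H^{s-2+2}(S^2_r)}$-type control, i.e. one gains two angular derivatives per application of $\zslap^{-1}$, so $u \in C^\infty$. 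Alternatively, and perhaps more cleanly, one invokes the fact that $\zslap$ differs from the (strictly negative) Laplacian $\slap$ by a smooth bounded zeroth-order term which on $\LM$ is a relatively compact perturbation, so $\zslap: H^{2,\prime}_{\mathsmaller{\Sigma},r} \to L^{2,\prime}_{\mathsmaller{\Sigma},r}$ is Fredholm of index zero with trivial kernel (by the modewise sign computation above), hence an isomorphism, and smoothness follows from elliptic regularity on $S^2_r$ exactly as in the proof of Corollary \ref{corrinvertingToperators}. I would present the modewise computation as the conceptual heart and defer the regularity to "standard elliptic estimates on $2$-spheres, cf. the proof of Corollary \ref{corrinvertingToperators}".
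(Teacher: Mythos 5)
Your proof is correct, but it takes a genuinely different route from the paper. The paper's proof is a two-line variational argument: coercivity of the bilinear form associated to $-\zslap$ on the space of functions with vanishing projection to $l=0,1$ follows from the Poincar\'e inequality (the spectral gap of $-\slap$ on $\LM$ is $6/r^2$, which dominates the zeroth-order term $\tfrac{2}{r^2}(1-\tfrac{3M}{r}) < \tfrac{2}{r^2}$), so Lax--Milgram gives invertibility and elliptic regularity gives smoothness. You instead diagonalise explicitly: on the $l$-mode the operator is multiplication by $-\tfrac{1}{r^2}\big(l(l+1)-2+\tfrac{6M}{r}\big)$, which for $l\geq 2$ is bounded in absolute value below by $4/r^2$ on each sphere, and you invert mode by mode. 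The underlying numerical fact is identical in both arguments --- $l(l+1)-2\geq 4$ for $l\geq 2$ together with $6M/r>0$ --- but your version makes transparent exactly where the restriction to $\LM$ is used (for $l=0$ the multiplier $2-\tfrac{6M}{r}$ degenerates at the photon sphere $r=3M$, so the operator genuinely fails to be invertible there without the projection), at the cost of having to justify the $C^\infty$ resummation, which you do adequately via the rapid decay in $l$ of the modes of a smooth function and the polynomial behaviour of the multipliers and their $r$-derivatives. The paper's route is shorter and sidesteps that bookkeeping entirely; your Fredholm-alternative variant at the end is essentially a third packaging of the same coercivity estimate and is also fine.
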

\begin{proof}
	Invertibility follows from applying the Lax--Millgram Theorem (see \cite{Evans}) combined with the Poincar\'e inequality. Elliptic regularity then yields smoothness.
\end{proof}
Of course, the above lemma holds replacing $\LM$ with $\LS$.

Consequently, we define the inverse operator $\szeta$, with its restriction to $\Sigma$ immediate:
\begin{itemize}
	\item the operator $\szetap{p}$ maps functions in $\LM$ to the function in $\LM$
	\begin{align*}
	\szetap{p}f=r^{2p}\zslap^{-p}f.
	\end{align*}
\end{itemize}
Elliptic regularity then yields the following proposition.
\begin{proposition}\label{propellipticestimateszlsap}
Let $f\in\LM$ be smooth. Then for any integer $p\geq 0$ and any 2-sphere $S^2_{\taus,r}$
\begin{align*}
||\szetap{-p}f||^2_{H^{p+2}_{\taus,r}}\lesssim\snorm{f}[\taus][r].
\end{align*}
\begin{proof}
Integrating by parts on any 2-sphere $S^2_{\taus,r}$ one finds 
	\begin{align*}
	\snorm{\zslap f}[\taus][r]=\snorm{\sn\sn f}[a][b]-\frac{3}{r^3}(r-6M)\snorm
	{\sn f}[\taus][r]+\frac{4}{r^6}(r-3M)^2\snorm{f}[\taus][r].
	\end{align*}
	The Poincar\'e inequality thus yields 
	\begin{align*}
	||f||^2_{H^2_{\taus,r}}\lesssim \snorm{r^2\zslap f}[\taus][r]
	\end{align*}
from which the proposition follows.
\end{proof}
\end{proposition}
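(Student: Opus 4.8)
The plan is to reduce the whole statement to the single sphere-by-sphere base estimate
$$\|f\|^2_{H^2_{\taus,r}}\lesssim\snorm{r^2\zslap f}[\taus][r]\qquad (f\in\LM),$$
and then to bootstrap in the number of angular derivatives and in the power $p$. The first observation is that on a fixed sphere $S^2_{\taus,r}$ the coordinate $r$ is constant, so $\zslap$ restricts there to $\slap+V\,\Id$ with the \emph{sphere‑constant} potential $V:=\tfrac{2}{r^2}\big(1-\tfrac{3M}{r}\big)$; equivalently $r^2\zslap=\mathring\Delta+2-\tfrac{6M}{r}$ is, uniformly in $r\ge 2M$, a bounded zeroth‑order perturbation of the round Laplacian $\mathring\Delta$. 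Thus the problem is a (uniform in $r$) elliptic‑regularity problem for a fixed second‑order operator on the round $2$‑sphere, restricted to the orthogonal complement $\LM$ of the $l=0,1$ harmonics.

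For the base case I would integrate by parts on $S^2_{\taus,r}$. Writing $\zslap f=\slap f+Vf$ and expanding $\snorm{\zslap f}[\taus][r]$, one uses (i) the Bochner/Weitzenböck identity on the round sphere of Gauss curvature $r^{-2}$, which replaces $\snorm{\slap f}[\taus][r]$ by $\snorm{\sn\sn f}[\taus][r]+\tfrac1{r^2}\snorm{\sn f}[\taus][r]$, and (ii) the fact that $\sn V=0$, so that the cross term is simply $-V\snorm{\sn f}[\taus][r]$. Collecting the contributions yields the identity
$$\snorm{\zslap f}[\taus][r]=\snorm{\sn\sn f}[\taus][r]-\frac{3}{r^3}(r-6M)\snorm{\sn f}[\taus][r]+\frac{4}{r^6}(r-3M)^2\snorm{f}[\taus][r].$$
The one genuinely delicate point — and the main obstacle — is that the coefficient of $\snorm{\sn f}[\taus][r]$ is \emph{sign‑indefinite}: it is negative for large $r$, so this term cannot simply be dropped. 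This is exactly where $f\in\LM$ is essential: since the lowest spherical‑harmonic mode present in $f$ is $l=2$, the spectral gap gives $\snorm{\sn\sn f}[\taus][r]\gtrsim r^{-2}\snorm{\sn f}[\taus][r]$ with a constant strong enough to absorb $\tfrac{3}{r^3}|r-6M|\le \tfrac{3}{r^2}$ in the regime $r\ge 6M$ (for $2M\le r<6M$ the term already has the favourable sign), and Poincaré likewise controls $\snorm{f}[\taus][r]$ by $r^2\snorm{\sn f}[\taus][r]$. This gives the displayed base estimate with an $r$‑uniform implied constant.

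Finally I would upgrade to higher angular regularity and iterate. Commuting the operators $\smcA{m}$ through $r^2\zslap=\mathring\Delta+2-\tfrac{6M}{r}$ produces only lower‑order errors, so Proposition \ref{propellipticestimatesonA} together with the $m=0$ estimate upgrades it to $\|f\|^2_{H^{m+2}_{\taus,r}}\lesssim\snorm{(r^2\zslap)f}[\taus][r]$ measured in $H^m_{\taus,r}$, i.e.\ the $H^{m+2}$ norm of $f$ is controlled by the $H^m$ norm of $r^2\zslap f$, for every $m\ge 0$ and every $f\in\LM$. Since the preceding lemma shows that $\zslap$ restricts to a bijection of smooth functions in $\LM$, this estimate may be inverted and then applied $p$ times, peeling off one factor of $r^2\zslap$ at a stage and tracking the powers of $r$ (all bounded, as $r\ge 2M$) carried by the rescaling in the definition of $\szetap{\cdot}$; the chain terminates at $\snorm{f}[\taus][r]$ and yields $\|\szetap{-p}f\|^2_{H^{p+2}_{\taus,r}}\lesssim\snorm{f}[\taus][r]$. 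Everything after the base case is bookkeeping of constants and $r$‑weights; the only non‑routine ingredient is the absorption of the indefinite first‑order term via the spectral gap on $\LM$.
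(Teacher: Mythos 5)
Your proposal is correct and follows essentially the same route as the paper: the sphere-by-sphere integration-by-parts identity for $\snorm{\zslap f}[\taus][r]$, followed by the Poincar\'e inequality on $\LM$ to absorb the sign-indefinite $\snorm{\sn f}[\taus][r]$ term and obtain $||f||^2_{H^2_{\taus,r}}\lesssim\snorm{r^2\zslap f}[\taus][r]$, and then commutation with the $\slashed{\mcala}$ operators and iteration in $p$ for the higher-order statement. You in fact make explicit the two points the paper leaves implicit, namely why the hypothesis $f\in\LM$ (the $l\geq 2$ spectral gap) is essential to absorb the indefinite middle coefficient, and how the base estimate is bootstrapped to general $p$.
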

Of course, an analagous result holds replacing $\LM$ by $\LS$.

\subsubsection{Commutation formulae and useful identities}\label{Commutationformulaeandusefulidentities}

In this final section we collect via various lemmata certain commutation relations and identities on the operators introduced over the previous sections that will be used throughout the text.

The first such lemma concerns the family of operators $\slashed{\mcala}$ and follows from the presence of the $r$-weights in the definitions of said operators.
\begin{lemma}\label{lemmacommA}
Let $\slashed{\mcala}^{[k]}$ denote any of the operators $\smcA{k}, \vmcA{k}$ or $\tmcA{k}$ for any integer $k\geq 1$. Then one has the commutation relations
\begin{align*}
\big[\qn, \slashed{\mcala}^{[k]}\big]=\big[\snnu, \slashed{\mcala}^{[k]}\big]=0
\end{align*}
where in the latter we view the family of operators $\slashed{\mcala}$ as restricted to $\Sigma$.
\end{lemma}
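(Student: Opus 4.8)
The plan is to prove Lemma \ref{lemmacommA} directly from the definitions of the operators $\smcA{k}, \vmcA{k}, \tmcA{k}$ in section \ref{ThefamilyofoperatorssA}, exploiting the key structural feature that each of these operators is built by composing the underlying angular differential operators $\sn, \sdiv, \sdo, \sdso, \sdt, \sdst$ with the \emph{precise} factors of $r$ that make the composition commute with $\qn$ (respectively $\snnu$ on $\Sigma$). First I would reduce everything to two elementary commutation facts on $\Mgs$: that $\qn$ commutes with all the purely angular operators $\sn$, $\sdiv$, $\sdo$, $\sdso$, $\sdt$, $\sdst$ (since $\qg_M$ is spherically symmetric and the $\qm$- and $\sm$-directions are $\qg_M$-orthogonal with $\sg_M = r^2 \mathring{g}$, the only failure of commutation between a $\qm$-covariant derivative and a purely angular operator comes from the $r$-dependence of $\sg_M$), and that $\qn_X r$ depends only on the $\qm$-factor, so $\qn$ acting on a scalar built from $r$ produces again something proportional to $\qn r$ times a function of $r$.

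The core computation is then the single-step relation. For a scalar $r$-weight one has $[\qn, r] = (\qn r)$, a multiplication operator, while the angular operators $\sn$ etc. carry definite homogeneity in $r$: e.g. $\sn$ maps an $\sm$-tensor of a given type to one with an extra lower index, and conjugating by $r$-powers shifts how $[\qn, \cdot]$ acts. Concretely I would show $[\qn, r\sn] = 0$ as operators on $\sm$-tensor fields: writing $\qn$ in the $\{S,L\}$ frame and using $\qn_L r = \kappa$, $\qn_S r = \bh^{-1}$ from Proposition \ref{propSLframe} together with the fact that $\sn$ only differentiates in angular directions and $\sn(\sg_M) = 0$, the terms involving $\qn(r)$ cancel precisely against the Christoffel-type terms coming from $\sg_M = r^2\mathring{g}$ in the definition of $\sn$. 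The same cancellation works for $-r\sdiv$ acting on one-forms (here one must track that $\sdiv$ contracts with $\sg_M^{-1} = r^{-2}\mathring{g}^{-1}$, which is why the $r$ appears with that exponent), and hence for $r\sdo = (r\sdiv, r\scurl)$, $r\sdt$, $r\sdso$, $r\sdst$. Once each generator of the inductive definitions satisfies $[\qn, \cdot] = 0$, an easy induction on $k$ gives $[\qn, \slashed{\mcala}^{[k]}] = 0$ for all three families, using $\slashed{\mcala}^{[k+1]} = (r\sn\ \text{or}\ \pm r\sdiv\ \text{or}\ \ldots)\circ\slashed{\mcala}^{[k]}$ and the fact that a composite of operators each commuting with $\qn$ commutes with $\qn$.

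For the $\Sigma$-version, the argument is identical with $\qn$ replaced by $\snnu$, using the definition of $\snnu$ on $\smi$-tensors in section \ref{Smitensoranalysis}: the formulae $\snnu\bar H = \mcall_\nu \bar H$, $\snnu\text{\sout{$H$}} = \mcall_\nu\text{\sout{$H$}} - \tfrac{1}{\bh r}\text{\sout{$H$}}$, $\snnu\slashed H = \mcall_\nu\slashed H - \tfrac{2}{\bh r}\slashed H$ have been engineered precisely so that $\snnu$ is covariant differentiation on $\Sigma$ in the $\nu$-direction, and $\nu = \bh^{-1}\partial_r$ satisfies $\nu(r) = \bh^{-1}$ — again a function of $r$ alone — so the same weight-matching cancellations occur. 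I expect the main obstacle to be purely bookkeeping: carefully verifying the single-step commutators $[\qn, r\sn] = 0$, $[\qn, r\sdiv] = 0$ etc. by expanding in the $\{S,L\}$ frame and checking that the $\qn(r)$-terms cancel the $r$-derivatives of the angular metric with the right signs and coefficients; there is no conceptual difficulty, but one must be scrupulous about index conventions and about the distinction between operators acting on $\mcalm$ versus their restrictions to $\Sigma$. No genuinely hard step is anticipated.
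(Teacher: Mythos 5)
Your argument is correct and is precisely the one the paper intends: the paper offers no written proof beyond the remark that the lemma ``follows from the presence of the $r$-weights,'' and your reduction to the single-step identities $[\qn, r\sn]=[\qn,r\sdiv]=\dots=0$ (via the cancellation of the $-\tfrac{\qn r}{r}$ commutator terms arising from $\sg_M=r^2\mathring{g}$, using also $\qn\sg_M=\qn\sepsilon=0$), followed by induction on $k$ and the identical computation for $\snnu$ with $\nu(r)=\bh^{-1}$, is exactly that argument made explicit. No gap.
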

The second such lemma concerns the $\spi$ family of operators and follows from dilligently commuting covariant derivatives.
\begin{lemma}\label{lemmacommutingspoperators}
Let $\spi$ denote any of the operators $\sps, \spv$ and $\spt$. Then one has the commutation relations
\begin{align*}
\big[\qn,\spi\big]=\big[\snnu,\spi\big]=0
\end{align*}
where in the latter we view the family of operators $\slashed{\Pi}$ as restricted to $\Sigma$.

Moroever
\begin{align*}
\big[\slap,\sps\big]&=0,\\
\big[\slap,\spv\big]&=-\frac{1}{r^2}\spv,\\
\big[\slap,\spt\big]&=-\frac{2}{r^2}\spt.
\end{align*}
Finally, one has the identities
\begin{align*}
\sps\sdo&=\slap\spv,\\
\spv\sdt&=\slap\spt+\frac{2}{r^2}\spt.
\end{align*}
\end{lemma}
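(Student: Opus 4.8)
The plan is to split the lemma into its two natural halves: the ``transverse'' commutations $[\qn,\spi]=[\snnu,\spi]=0$, and the ``angular'' identities (the three brackets with $\slap$, together with $\sps\sdo=\slap\spv$ and $\spv\sdt=\slap\spt+\tfrac{2}{r^2}\spt$). In both cases one works in the $\qm$-frame $\{S,L\}$ of Proposition \ref{propSLframe}, so that the $\qm$-indices are inert and it suffices to verify every identity on a fixed $2$-sphere $S^2_{\taus,r}$ for the three ``$\sm$-types'' -- a scalar, an $\sm$ $1$-form, and a symmetric traceless $2$-covariant $\sm$-tensor. Since each $\spi$ is built solely out of $\sn$, $\sdiv$, $\scurl$, $\sn\otimeshat$ (equivalently $\sdo,\sdso,\sdt,\sdst$) and explicit powers of $r$, the whole statement reduces to Hodge calculus on the round sphere, with the warped-product structure $\sg_M=r^2\mathring{g}$ of $g_M$ governing the interaction of $\qn$, $\snnu$ with those $r$-weights.

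For the transverse relations the cleanest route for $\sps$ is to quote the identity $\sps=\mcaltf=\smcA{4}-2\smcA{2}$ recorded in the proof of Proposition \ref{propellipticestimatesonT} and then invoke Lemma \ref{lemmacommA} termwise, which gives $[\qn,\smcA{k}]=[\snnu,\smcA{k}]=0$. For $\spv$ and $\spt$ I would proceed by ``diligently commuting covariant derivatives'', exactly as suggested in the paper: commute $\qn$ (resp. $\snnu$) past the angular factors $\sdo$, $\sdiv$, $\sn\otimeshat$ one at a time. Here $\qn$ and $\snnu$ carry a correction term on every angular slot -- the second-fundamental-form factor of the $2$-spheres, with $\qn_S r=\bh^{-1}$, $\qn_L r=\kappa$ from Proposition \ref{propSLframe} -- so each elementary commutator produces a universal multiple of that factor times the same angular operator; combined with the Leibniz contributions of the overall $r^4$, these telescope to the asserted relations, the point being precisely that the $r$-powers appearing in $\sps$, $\spv$, $\spt$ (and in the $\slashed{\mcala}$-operators) are calibrated so that the corrections cancel, which is the mechanism of Lemma \ref{lemmacommA}. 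Alternatively one can simply reduce $\sdso\spv$ and $\sdst\sdso\spt=\mcaltt$ to linear combinations of the $\slashed{\mcala}$'s using $\sdiv\sn\otimeshat\momega=\slap\momega+\tfrac1{r^2}\momega$, $\sdiv\sn\otimeshat\sdiv\theta=\slap\sdiv\theta+\tfrac1{r^2}\sdiv\theta$ and $\sdso\sdo=-\slap+\tfrac1{r^2}$, $\sdst\sdt=-\tfrac12\slap+\tfrac1{r^2}$ (all from the proofs of Propositions \ref{propellipticestimatesonA}, \ref{propellipticestimatesonT}), which is all that is actually needed for the elliptic estimates in the sequel.

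For the angular identities the single structural input is the intertwining relation $[\slap,\sdo]=-\tfrac1{r^2}\sdo$ between the connection Laplacians on $\sm$ $1$-forms and on pairs of functions -- a Weitzenb\"ock computation on the round $2$-sphere, whose Gauss curvature is $r^{-2}$ -- together with $\sdiv\sn\otimeshat=\slap+\tfrac1{r^2}$ on $1$-forms. From these, $\sps=(r^2\slap)^2+2(r^2\slap)$ as an operator on $\qm$-tensors, so $[\slap,\sps]=0$ since $r$ is constant on each sphere; writing $\spv=r^4\sdo(\slap+\tfrac1{r^2})$ and commuting the outer $\slap$ past $\sdo$ yields $[\slap,\spv]=-\tfrac1{r^2}\spv$, and the analogous computation for $\spt=r^4\sdo\sdt$ (carrying the extra index) gives $[\slap,\spt]=-\tfrac2{r^2}\spt$. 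The identities $\sps\sdo=\slap\spv$ and $\spv\sdt=\slap\spt+\tfrac2{r^2}\spt$ then follow by expanding both sides in terms of $\sdo\slap^j\momega$, respectively $\sdo\slap^j\sdiv\theta$, and repeatedly applying $[\slap,\sdo]=-\tfrac1{r^2}\sdo$: every term matches.

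I expect the delicate step to be the transverse commutations $[\qn,\spi]=[\snnu,\spi]=0$: one has to be scrupulous about the correction terms that $\qn$ and $\snnu$ pick up on each angular index, about the exact match between those corrections and the explicit $r$-powers, and about keeping the $\sm$-type of each intermediate tensor straight (the operators $\sdo$, $\sdt$ change the type, and $\spi$ lands in $\LM$). By contrast the angular identities, though they require some patience with $r$-weights, are essentially routine once the curvature commutator $[\slap,\sdo]=-\tfrac1{r^2}\sdo$ is established.
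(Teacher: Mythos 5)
Your overall strategy---reduce everything to Hodge calculus on the unit sphere in the $\{S,L\}$ frame, with the single curvature input $[\slap,\sdo]=-\tfrac{1}{r^2}\sdo$ and a bookkeeping of $r$-weights against angular valence---is the right one (the paper offers no proof beyond ``diligently commuting covariant derivatives''), and the parts you actually compute ($[\qn,\sps]=0$, $[\slap,\sps]=0$, $[\slap,\spv]=-\tfrac{1}{r^2}\spv$, and the two final identities) are correct. The gap is that at the two places where you merely assert the outcome, your own method contradicts it. For the transverse commutators: the cancellation mechanism of Lemma \ref{lemmacommA} requires the net power of $r$ in the operator to equal the change in angular valence, and this holds for $\sps$ (net weight $r^0$, valence $0\to0$) but fails for $\spv$ (net weight $r^0$ once $\sdiv$ and $\sdo$ are written as $r^{-2}$ times unit-sphere operators, yet valence $1\to0$) and for $\spt$ ($r^0$, valence $2\to0$). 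Carrying the commutation through with the correction $-\tfrac{\qn_a r}{r}$ per angular slot (written out explicitly for $\snnu$ in section \ref{Smitensoranalysis}) gives $[\qn,\spv]=(\qn \log r)\,\spv$ and $[\qn,\spt]=2(\qn\log r)\,\spt$ rather than zero; your own fallback confirms this, since $[\qn,\mcaltv]=0$ together with $[\qn,\sdso]=-(\qn\log r)\,\sdso$ forces $\sdso[\qn,\spv]=(\qn\log r)\,\sdso\spv\neq 0$. So the valence bookkeeping proves the commutation for the $\mcalt$ operators (Lemma \ref{lemmacommformToperators}), which is what the sequel actually uses, but not the statement you set out to prove.

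The same issue occurs with $[\slap,\spt]$. Your method needs $[\slap,\sdt]$ in addition to $[\slap,\sdo]$, and on symmetric traceless $2$-tensors the curvature correction is $[\slap,\sdt]=-\tfrac{3}{r^2}\sdt$, not $-\tfrac{1}{r^2}\sdt$: testing on $\theta_l=\sdst\sdso(Y^l_m,0)$ one has $\slap\theta_l=\tfrac{4-l(l+1)}{r^2}\theta_l$ while $\slap\sn Y^l_m=\tfrac{1-l(l+1)}{r^2}\sn Y^l_m$, and the eigenvalue mismatch is $-3/r^2$. Hence $[\slap,\spt]=r^4\big([\slap,\sdo]\sdt+\sdo[\slap,\sdt]\big)=-\tfrac{4}{r^2}\spt$, not $-\tfrac{2}{r^2}\spt$. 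The two identities you did verify force the same conclusion: applying $[\slap,\cdot]$ to both sides of $\spv\sdt=(\slap+\tfrac{2}{r^2})\spt$ and using $[\slap,\spv]=-\tfrac{1}{r^2}\spv$ is consistent only with $[\slap,\spt]=-\tfrac{4}{r^2}\spt$. In short, ``the corrections cancel'' and ``every term matches'' are precisely the claims that have to be checked here, and they do not all survive the check; a complete proof must either record the nonzero commutators for $\spv,\spt$ or restrict the statement to $\sps$ and the $\mcalt$ family.
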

The third such lemma concerns the $\mcalt$ family of operators and follows from their very definitions.
\begin{lemma}\label{lemmacommformToperators}
Let $\mcalt$ denote any of the operators $\mcaltf, \mcaltv$ and $\mcaltt$. Then one has the commutation relations
\begin{align*}
\big[\qn,\mcalt\big]=\big[\snnu,\mcalt\big]=0
\end{align*}
where in the latter we view the family of operators $\mcalt$ as restricted to $\Sigma$.

Moroever one has the identities
\begin{align}
\sdso\Big(\mcaltf,0\Big)&=-\mcaltv\sdso\\
\sdst\mcaltv&=-2\mcaltt\sdst.
\end{align}
\end{lemma}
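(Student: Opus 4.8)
The plan is to reduce every claim to the corresponding statements for the operator families $\slashed{\mcala}$ and $\slashed{\Pi}$ already established in Lemma \ref{lemmacommA} and Lemma \ref{lemmacommutingspoperators}, using the explicit expressions $\mcaltf = \sps$, $\mcaltv = \sdso\spv$ and $\mcaltt = \sdst\sdso\spt$ from section \ref{ThefamilyofoperatorsPandT}. First I would treat the commutation with $\qn$ and $\snnu$. For $\mcaltf = \sps$ this is immediate from the first display of Lemma \ref{lemmacommutingspoperators}. For $\mcaltv = \sdso\spv$ and $\mcaltt = \sdst\sdso\spt$ one needs in addition that $\qn$ (respectively $\snnu$) commutes through the angular operators $\sdso$ and $\sdst$. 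This follows because $\sdso$ and $\sdst$ are built from $\sn$, $\shd$ and the metric $\sg_M = r^2\mathring{g}$, and one checks — exactly as in the proof of Lemma \ref{lemmacommA} — that the only obstruction would be the $r$-dependence of $\sg_M$, but this is precisely accounted for in the definitions. Concretely, writing out $[\qn,\sdso]$ and $[\qn,\sdst]$ acting on a $\qm$ 1-form resp.\ a pair of $\qm$ 1-forms in the frame $\{S,L\}$ and using $\qn_S r = \bh^{-1}$, $\qn_L r = \kappa$ from Proposition \ref{propSLframe}, one verifies these commutators vanish; the analogous computation on $\Sigma$ with $\snnu$ and $\nu = \bh^{-1}\pr$ gives the second half. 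Hence $[\qn,\mcaltv] = \sdso\,[\qn,\spv] = 0$ and $[\qn,\mcaltt] = \sdst\sdso\,[\qn,\spt] = 0$, and likewise for $\snnu$.

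For the two identities I would argue purely algebraically on the sphere. For the first, $\sdso(\mcaltf, 0) = -\mcaltv\sdso$: unravelling definitions, the left side is $\sdso(\sps Q, 0) = -\sn(\sps Q)$ for a $\qm$-tensor argument $Q$, while the right side is $-\sdso\spv\sdso Q$. The claim thus reduces to $\sps\sdso(\cdot,\cdot) = \sn\circ(\text{appropriate combination})$, which I would obtain from the identity $\sps\sdo = \slap\spv$ of Lemma \ref{lemmacommutingspoperators} together with the fact that $\sdo$ and $\sdso$ are mutually adjoint first-order operators; more precisely, applying $\sdso$ on the left to $\sps\sdo = \slap\spv$ and using $\sdso\slap = \slap\sdso$ (which itself follows from Lemma \ref{lemmacommutingspoperators}-type bookkeeping, or directly) together with $\sdso\sdo = -\slap + r^{-2}$ should deliver the stated relation after cancelling an invertible Laplacian-type factor on $\LM$. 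For the second identity $\sdst\mcaltv = -2\mcaltt\sdst$: substituting the definitions gives $\sdst\sdso\spv = -2\sdst\sdso\spt\sdst$, so it suffices to show $\spv = -2\spt\sdst$ when both sides act after $\sdst$, equivalently $\spv\sdt = \slap\spt + \tfrac{2}{r^2}\spt$ composed appropriately — and this is exactly the last displayed identity of Lemma \ref{lemmacommutingspoperators}, once one recalls $\sdst\sdt = -\tfrac12\slap + r^{-2}$ and that $\sdst$ is the adjoint of $\sdt$.

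The main obstacle, and the step requiring genuine care rather than invocation, is bookkeeping the constant factors and the $r$-weight powers in the two identities: the operators $\mcaltf = \smcA{4} - 2\smcA{2}$, $\mcaltv = -\vmcA{4} + 2\vmcA{2}$, $\mcaltt = 2\tmcA{4} + 2\tmcA{2}$ (as computed in the proof of Proposition \ref{propellipticestimatesonT}) carry specific numerical coefficients, and the intertwining relations between $\sdo$, $\sdso$, $\sdt$, $\sdst$ each shift $\sm$-type and hence the relevant $r^4$ normalisation in the $\spi$ operators. I would organise this by fixing the $\{S,L\}$ frame throughout, reducing both identities to scalar identities among $\slashed{\mcala}$-type operators via Lemma \ref{lemmacommutingspoperators}, and then checking coefficients on a single spherical harmonic mode $Y^l_m$ with $l \geq 2$ (where $\slap Y^l_m = -l(l+1)r^{-2}Y^l_m$), which pins down every constant unambiguously and, by density and linearity over $\LM$, establishes the identities in general. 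The commutation statements, by contrast, are routine once the $r$-derivative formulae of Proposition \ref{propSLframe} are in hand.
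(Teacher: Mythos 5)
Your overall strategy is sound and both identities do check out, but the argument you give for the commutation relations contains one step that does not survive scrutiny. You claim $[\qn,\sdso]=[\qn,\sdst]=0$ and hence $[\qn,\mcaltv]=\sdso\,[\qn,\spv]$, $[\qn,\mcaltt]=\sdst\sdso\,[\qn,\spt]$. But $\sdso$ and $\sdst$ carry no $r$-weight while raising the number of covariant spherical indices by one, so with the conventions that make Lemma \ref{lemmacommA} true (where the $r$-weights are essential) these commutators do \emph{not} vanish: writing any such angular operator as $r^{a}$ times an $r$-independent operator on unit-sphere components, its commutator with $\qn$ is $(a-k'+k)\frac{\qn r}{r}$ times itself, where $k\to k'$ is the change in the number of spherical indices; this gives $[\qn,\sdso]=-\frac{\qn r}{r}\sdso$ and similarly for $\sdst$. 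By the same count $\spv$ and $\spt$ (which lower the index count by one and two respectively, with net weight $a=0$) acquire compensating $+\frac{\qn r}{r}$ terms, so it is only the \emph{compositions} $\mcaltv=\sdso\spv$ and $\mcaltt=\sdst\sdso\spt$ that commute with $\qn$, the offending terms cancelling. The robust route — for which you already quote every ingredient — is to bypass $\sdso$ and $\sdst$ entirely and use $\mcaltf=\smcA{4}-2\smcA{2}$, $\mcaltv=-\vmcA{4}+2\vmcA{2}$, $\mcaltt=2\tmcA{4}+2\tmcA{2}$ from the proof of Proposition \ref{propellipticestimatesonT}; both commutation statements then follow in one line from Lemma \ref{lemmacommA}, whose operators are $r$-weighted precisely so that they commute.

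On the two identities your plan works, with one simplification available. For the first, applying $\sps\sdo=\slap\spv$ to $\momega=\sn q$, using $\sdo\sn q=(\slap q,0)$, $[\slap,\sps]=0$ and the invertibility of $\slap$ on $\LM$, gives $\spv\sn q=(\sps q,0)$ and hence $\sdso\big(\mcaltf q,0\big)=-\sn\sps q=\sdso\spv\sn q=-\mcaltv\sdso(q,0)$, exactly as you outline; the mode-by-mode constant check on $Y^l_m$, $l\geq 2$, is a sensible safeguard. For the second, you do not need $\spv\sdt=\slap\spt+\frac{2}{r^2}\spt$ or any adjointness: since $\sdst=-\frac{1}{2}\sn\otimeshat$ and $\sdt=\sdiv$, one has directly
\begin{align*}
-2\spt\sdst=-2\,r^4\sdo\sdiv\Big(-\tfrac{1}{2}\sn\otimeshat\Big)=r^4\sdo\sdiv\sn\otimeshat=\spv,
\end{align*}
whence $\sdst\mcaltv=\sdst\sdso\spv=-2\,\sdst\sdso\spt\sdst=-2\mcaltt\sdst$ immediately from the definitions.
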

The final such lemma concerns the operator $\szeta$.
\begin{lemma}\label{lemmacommformszeta}
For any non-negative integer $k$ one has the commutation relation
\begin{align*}
\big[\qn, \szetap{k}\big]=-3k\frac{\mu}{r}\qn r\,\szetap{k+1},\qquad\big[\snnu, \szetap{k}\big]=-3\frac{k}{\bh}\frac{\mu}{r}\,\szetap{k+1}
\end{align*}
where in the latter we view the operator $\szetap{k}$ as restricted to $\Sigma$.
\end{lemma}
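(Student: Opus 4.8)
The plan is to reduce the whole statement to one elementary computation: the commutator of $\qn$ (respectively $\snnu$) with the \emph{rescaled} elliptic operator $B := r^2\zslap = r^2\slap + 2 - \tfrac{6M}{r}$, of which $\szetap{k}$ is by definition the $k$-fold inverse, and then to promote it to all powers by a one-line induction. First I would record the key structural fact that, since $\sg_M = r^2\mathring{g}$ with $r$ constant on every $2$-sphere $S^2_{\taus,r}$, the operator $r^2\slap$ is precisely the Laplacian of the \emph{unit} round metric $\mathring{g}$; in angular coordinates its coefficients are independent of $t^*$ and $r$, so acting on functions it commutes with $\qn_S$, $\qn_L$ and with $\snnu = \tfrac1{\bh}\partial_r$ (all of which are spherically symmetric and differentiate only in $\mcalq$-/$r$-directions, where $[\partial_{t^*},\cdot]=[\partial_r,\cdot]=0$ for such coefficients). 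The only remaining $(t^*,r)$-dependence in $B$ is the scalar $2-\tfrac{6M}{r}$, so a product-rule computation gives, for $X\in\{S,L\}$,
\begin{align*}
[\qn_X,B]f \;=\; \tfrac{6M}{r^2}(\qn_X r)\,f \;=\; 3\tfrac{\mu}{r}(\qn_X r)\,f,\qquad [\snnu,B]f \;=\; \tfrac{6M}{\bh r^2}\,f \;=\; \tfrac{3}{\bh}\tfrac{\mu}{r}\,f;
\end{align*}
in each case the commutator with $B$ is multiplication by a scalar that is \emph{constant on every $2$-sphere}.

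Next I would invert. By the lemma preceding Proposition \ref{propellipticestimateszlsap}, $B$ — hence each $B^{-k}=\szetap{k}$ — is a bijection on smooth functions in $\LM$ (and in $\LS$), and since $\qn_S,\qn_L,\snnu$ preserve the angular support of a function, all expressions below stay in $\LM$ (resp.\ $\LS$). Writing $D$ for $\qn_S$, $\qn_L$ or $\snnu$ and $[D,B]=c\,\cdot$ with $c$ the corresponding sphere-constant potential above, the sphere-constancy of $c$ means $c$ commutes with $B^{-1}$, whence $[D,B^{-1}]=-B^{-1}[D,B]B^{-1}=-c\,B^{-2}$. An immediate induction on $k$ using $[D,B^{-(k+1)}]=[D,B^{-1}]B^{-k}+B^{-1}[D,B^{-k}]$ and again that $c$ commutes with $B$ gives $[D,B^{-k}]=-k\,c\,B^{-(k+1)}$ for all $k\ge 0$ (the case $k=0$ being trivial since $\szetap{0}=\Id$). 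Unpacking this for $D=\qn_S$ and $D=\qn_L$, and using $\qn_X r = (\qn r)(X)$, assembles into the $\mcalq$ $1$-form identity $[\qn,\szetap{k}]f = -3k\tfrac{\mu}{r}(\szetap{k+1}f)\,\qn r$ — here $\szetap{k}$ acts on $\mcalq\otimes S$ objects frame-component-wise in $\{S,L\}$, which is consistent because $S,L$ are spherically symmetric so $\szeta$ commutes with extracting frame components; for $D=\snnu$ on $\Sigma$, where $\snnu r = \tfrac1{\bh}$, one reads off $[\snnu,\szetap{k}]f = -\tfrac{3k}{\bh}\tfrac{\mu}{r}\,\szetap{k+1}f$. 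These are exactly the two claimed formulae.

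I do not anticipate a genuine obstacle; the only point deserving care is the very first step — checking that the rescaled angular Laplacian $r^2\slap$ is honestly $(t^*,r)$-independent and therefore commutes on functions with $\qn_S,\qn_L$ and $\snnu$ — together with the bookkeeping that the resulting commutator potentials $3\tfrac{\mu}{r}\qn_X r$ and $\tfrac{3}{\bh}\tfrac{\mu}{r}$ are constant on spheres, which is precisely what lets them pass through the resolvent $\szeta$. Everything after that is formal manipulation of $B^{-1}$ and the induction.
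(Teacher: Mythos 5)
Your proof is correct and follows essentially the same route as the paper's: compute $\big[\qn, r^2\zslap\big]=\tfrac{3\mu}{r}\qn r\,\textnormal{Id}$ (resp. the $\snnu$ analogue), pass to the inverse via $[D,B^{-1}]=-B^{-1}[D,B]B^{-1}$ for the $k=1$ case, and then induct using the Leibniz expansion of $[D,B^{-k}]$. The only difference is that you spell out the details the paper leaves implicit — that $r^2\slap=\mathring{\Delta}$ is $(t^*,r)$-independent and that the commutator potential, being constant on spheres, passes through $\szetap{1}$ — which is exactly the right justification.
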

\begin{proof}
We prove only the former as the latter follows in a similar fashion.

We have
\begin{align*}
\big[\qn,r^2\zslap\big]=\frac{3\mu}{r}\qn r\,\textnormal{Id}.
\end{align*}
The $k=1$ case thus follows from the formula
\begin{align*}
[\qn, \szetap{1}]=-\szetap{1}\big[\qn, r^2\zslap\big]\szetap{1}
\end{align*}
For general $k$, one applies the induction formulae
\begin{align*}
\big[\qn, \szetap{n}\big]=\big[\qn, \szetap{n-1}\big]\szetap{1}+\szetap{n-1}\big[\qn, \szetap{1}\big].
\end{align*}
\end{proof}

\section{The equations of linearised gravity around Schwarzschild}\label{TheequationsoflinearisedgravityaroundSchwarzschild}

In this section we derive the equations of linearised gravity -- the system of equations that result from formally linearising the equations of section \ref{TheEinsteinequations} about the Schwarzschild exterior solution $\Mgs$.

\subsection{The formal linearisation of the equations of section \ref{TheEinsteinequations}}\label{The formal linearisation of the equations of section}

We begin by establishing a formal linearisation theory for the Einstein equations in a generalised wave gauge. 

This section of the paper proceeds in a similar fashion to that of section 5.1 in \cite{D--H--R}.

\subsubsection{Preliminaries}\label{Preliminaries}

We identify the manifold $\boldsymbol{\mcalm}$ in section \ref{Thegeneralisedwavegauge} with the Schwarzschild exterior background $\mathcal{M}$.

On $\mathcal{M}$ we consider a smooth 1-parameter family of Lorentzian metrics $\bold{g}(\epsilon)$ with $\bold{g}(0)=g_M$ and a smooth map $\boldsymbol{f}:T^2(\mcalm)\times T^2(\mcalm)\rightarrow T\mcalm$ such that $\boldsymbol{f}(g_M, g_M)=0$ and demand that each $\boldsymbol{g}(\epsilon)$ is in a generalised $\boldsymbol{f}$-wave gauge with respect to $g_M$.

We moreover assume that each pair $\big(\mcalm, \boldsymbol{g}(\epsilon)\big)$ is a solution to the Einstein vacuum equations.
 
\subsubsection{The linearisation procedure}\label{Thelinearisationprocedure}

Let us first immediately dispense with the $\epsilon$ notation and use the convention that bold quantities are with respect to the family of perturbed metrics and unbolded quantities are given by their background Schwarzschild value.

In particular, identifying $\boldsymbol{\overline{g}}$ with $g_M$ in equations \eqref{eqndefnofageneralisedwavegauge} and \eqref{einstein equations in a generalised wave gauge} of section \ref{ThevacuumEinsteinequationsinageneralisedwavegauge}, the assumptions of section \ref{Preliminaries} imply that each member $\boldsymbol{g}$ of the 1-parameter family must satisfy
\begin{align}
\bold{\big(g^{-1}\big)}^{\gamma\delta}\unbold{\nabla}[\gamma]\unbold{\nabla}[\delta]\bold{g}_{\alpha\beta}+2\boldsymbol{C}^{\gamma}_{\delta\epsilon}\cdot\bold{g}_{\gamma(\alpha}\unbold{\nabla}[\beta)]\bold{\big(g^{-1}\big)}^{\delta\varepsilon}-4\bold{g}_{\delta\epsilon}\boldsymbol{C}^{\epsilon}_{\beta[\alpha}\unbold{\nabla}[{\gamma]}]\bold{\big(g^{-1}\big)}^{\gamma\delta}-4\boldsymbol{C}^\delta_{\beta[\alpha}\boldsymbol{C}^\gamma_{\gamma]\delta}+2\bold{\big(g^{-1}\big)}^{\gamma\delta}\bold{g}_{\epsilon(\alpha}\unbold{\text{Riem}}[\beta)\gamma\delta][\varepsilon]\nonumber\\=2\bold{g}_{\gamma(\alpha}\unbold{\nabla}[\beta)]\boldsymbol{f}^\gamma(\boldsymbol{g}, g_M),\label{einstein equations about schwarzschild}\\
\boldsymbol{\big(g^{-1}\big)}^{\beta\gamma}\boldsymbol{C}^\alpha_{\beta\gamma}=\boldsymbol{f}^\alpha(\boldsymbol{g}, g_M)\label{wave gauge about schwarzschild},
\end{align}
Here, Riem and $\nabla$ are the Riemann curvature tensor and Levi-Civita connection of $\Mgs$ respectively. Observe therefore that $\boldsymbol{g}=g_M$ is indeed a solution to this system of equations since by assumption $\boldsymbol{f}(g_M, g_M)=0$.

To formally linearise, we write
\begin{align*}
\boldsymbol{g}-g_M&\equiv \glin
\end{align*}
where $\equiv$ means equivalent to first order in $\epsilon$. Thus, in keeping with the notation of \cite{D--H--R}, quantities with a superscript ``(1)'' denote linear perturbations of bolded quantities about their background Schwarzschild value. In particular,
\begin{itemize}
	\item the smooth, symmetric two tensor field $\glin$ on $\mcalm$ denotes the linearised metric
\end{itemize} 
Moreover, we write
\begin{align*}
\boldsymbol{f}(\boldsymbol{g}, g_M)\equiv Df\big|_{g_M}(\glin)
\end{align*}
where $Df\big|_{g_M}:T^2(\mcalm)\rightarrow T\mcalm$ is a smooth linear map. In particular,
\begin{itemize}
	\item the smooth linear map $Df\big|_{g_M}$ denotes the linearisation of the map $\boldsymbol{f}(\cdot, g_M)$ at $g_M$
\end{itemize}

Subsequently, to derive the linearised equations one simply expands the terms appearing in equations \eqref{einstein equations about schwarzschild} and \eqref{wave gauge about schwarzschild} in powers of $\epsilon$, keeping only those terms that enter to first order. 

\subsubsection{The Einstein equations in a generalised wave gauge linearised about the Schwarzschild metric}\label{TheEinsteinequationsinageneralisedwavegaugelinearisedabouttheSchwarzschildmetric}

Proceeding in this manner one arrives at the following system of equations:
\begin{align}
\Box \lin{g}[\alpha\beta]-2\tensor{\text{Riem}}{^\gamma_{\alpha\beta}^\delta}\lin{g}[\gamma\delta]&=2\nabla_{(\alpha}\lin{f}_{\beta)},\label{eqnlinearisedeinsteinequations}\\
\nabla^\beta\glin_{\alpha\beta}-\frac{1}{2}\n_\alpha\glin&=\lin{f}_{\alpha}\label{eqnlorentzgauge}.
\end{align}
Here, $\Box$ is the wave operator on $\big(\mcalm, g_M\big)$ and we have set
\begin{align*}
\tflin:=Df\big|_{g_M}(\glin).
\end{align*}

The above system of equations thus describe the linearisation of the Einstein vacuum equations, as expressed in a generalised $\boldsymbol{f}$-wave gauge with respect to $g_M$, about the Schwarzschild exterior solution $g_M$.

\subsection{The system of gravitational perturbations}\label{Thesystemofgravitationalperturbations}

In this section we apply the $2+2$ formalism developed in section \ref{The2+2formalism} to the linearised metric of the previous section.\newline

Indeed, given the symmetric two tensor $\glin$ on $\mcalm$ we have:
\begin{itemize}
	\item the projection onto the symmetric 2-covariant $\qm$-tensor field $\lin{\tilde{g}}$
	\item the projection onto the $\qm\otimes\sm$ 1-form $\mling$
	\item the projection onto the symmetric 2-covariant $\sm$-tensor field $\sglin$
\end{itemize}
Furthermore, given the 1-form $\lin{f}$ on $\mcalm$ we have:
\begin{itemize}
	\item the projection onto the $\qm$ 1-form $\tflin$ 
	\item the projection onto the $\sm$ 1-form $\sflin$
\end{itemize}
Moreover, decomposing $\lin{\tilde{g}}$ and $\sglin$ into their trace and trace-free parts with respect to the $\qm$-metric $\qg_M$ and the $\sm$-metric $\sg_M$
\begin{align*}
\lin{\tilde{g}}=\gabhatlin+\frac{1}{2}\qg_M\cdot\trgablin
\end{align*}
and
\begin{align*}
\sglin=\sghatlin+\frac{1}{2}\sg_M\cdot\trsglin
\end{align*}
yields the collection
\begin{align*}
\gabhatlin, \trgablin, \mling, \sghatlin, \trsglin, \tflin, \sflin
\end{align*}
where now
\begin{itemize}
	\item $\gabhatlin$ is a symmetric, traceless 2-covariant $\qm$-tensor field
	\item $\trgablin$ is a function on $\mcalm$
	\item $\mling$ is a $\qm\otimes\sm$ 1-form
	\item $\sghatlin$ is a symmetric, traceless 2-covariant $\sm$-tensor field
	\item $\trsglin$ is a function on $\mcalm$
	\item $\tflin$ is a $\qm$ 1-form
	\item $\sflin$ is an $\sm$ 1-form
\end{itemize}
each of which must satisfy the following system of gravitational perturbations.\newline

The equations for the linearised $\mcalq$-metric:

\begin{align}
\qbox\gabhatlin+\slap\gabhatlin+\frac{2}{r}\qn_{P}\gabhatlin-\frac{2}{r^2}\ud r\qotimeshat\gabhatlin_P-\frac{2}{r}\frac{\mu}{r}\gabhatlin&=\frac{1}{r^2}\ud r\qotimeshat\ud r\,\trgablin+\frac{2}{r}\ud r\qotimeshat\sdiv\mling-\frac{1}{r^2}\ud r\otimeshat\ud r\,\trsglin+\qn\otimeshat\tflin,\label{eqnforgabhatfgauge}\\
\qbox\trgablin+\slap\trgablin+\frac{2}{r}\qn_{P}\trgablin-\frac{2}{r^2}(1-2\mu)\trgablin&=\frac{4}{r^2}\gabhatlin_{P P}+\frac{4}{r}\sdiv\mling_P-\frac{2}{r^2}(1-2\mu)\trsglin-2\qd\tflin\label{eqnfortrgabfgauge}.
\end{align}

The equations for the linearised $\qmsm$-metric:

\begin{align}
\qbox\mling+\slap\mling-\frac{1}{r^2}\ommu\mling=\frac{2}{r}\ud r\,{\otimes}\bigg(\frac{2}{r}\mling_P+\sdiv\sghatlin+\frac{1}{2}\sn\trsglin-\sflin\bigg)-\frac{2}{r}\sn{\otimes}\gabhatlin_P-\frac{1}{r}\ud r\,\sn{\otimes}\trgablin+\qn{\otimes}\sflin+\sn{\otimes}\tflin\label{eqnforgAbfgauge}.
\end{align}

The equations for the linearised $\sm$-metric:

\begin{align}
\qbox\sghatlin+\slap\sghatlin-\frac{2}{r}\qn_{P}\sghatlin-\frac{4}{r}\frac{\mu}{r}\sghatlin&=-\frac{2}{r}\sn\otimeshat\mling_P+\sn\otimeshat\sflin\label{eqnforgABfgauge},\\
\qbox\trsglin+\slap\trsglin+\frac{2}{r}\qn_{P}\trsglin-\frac{2}{r^2}(1-2\mu)\trsglin&=-\frac{4}{r^2}\gabhatlin_{P P}-\frac{4}{r}\sdiv\mling_P-\frac{2}{r^2}(1-2\mu)\trgablin+2\sdiv\sflin+\frac{4}{r}\tflin_P\label{eqnfortrgABfgauge}.
\end{align}

Finally, the linearised generalised wave gauge conditions:

\begin{align}
-\qd\gabhatlin+\sdiv\mling-\frac{1}{2}\qexd\trsglin+\frac{2}{r}\gabhatlin_P+\frac{1}{r}\ud r\,\trgablin-\frac{1}{r}\ud r\,\trsglin&=\tflin\label{eqnfwavegauge1},\\
-\qd\mling-\frac{1}{2}\sn\trgablin+\sdiv\sghatlin+\frac{2}{r}\mling_P&=\sflin\label{eqnfwavegauge2}.
\end{align}

\begin{remark}\label{rmkdecomposingboldmetric}
	In view of the fact that in section \ref{Preliminaries} we fixed the differential structure of $\mcalm$, one could instead decompose the perturbed metric $\boldsymbol{g}$ as in section \ref{The2+2decompositionoftensorfieldsonM}. A further decomposition of \eqref{eqndefnofageneralisedwavegauge} and \eqref{einstein equations in a generalised wave gauge} would lead to a system of equations for the quantities
	\begin{align*}
	\boldsymbol{\tilde{g}}, \boldsymbol{\stkout{g}}, \boldsymbol{\slashed{g}}, \boldsymbol{\tilde{f}}, \boldsymbol{\slashed{f}}.
	\end{align*}
	Upon linearisation, this yields the system \eqref{eqnforgabhatfgauge}-\eqref{eqnfwavegauge2} for
	\begin{align*}
	\lin{\tilde{g}}, \mling, \sglin, \tflin, \sflin
	\end{align*}
	where we now recognise the tensor fields $\lin{\tilde{g}}, \mling, \sglin, \tflin, \sflin$ as the linearisation of the quantities $\boldsymbol{\tilde{g}}, \boldsymbol{\stkout{g}}, \boldsymbol{\slashed{g}}, \boldsymbol{\tilde{f}}$ and $\boldsymbol{\slashed{f}}$:
	\begin{align*}
	\boldsymbol{\tilde{g}}-\qg_M&\equiv\lin{\tilde{g}},\\
	\boldsymbol{\stkout{g}}&\equiv\mling,\\
	\boldsymbol{\slashed{g}}-\sg_M&\equiv\sglin,\\
	\boldsymbol{\tilde{f}}&\equiv\tflin,\\
	\boldsymbol{\slashed{f}}&\equiv\sflin.
	\end{align*}
	Since, however, the aforementioned decompositon of \eqref{eqndefnofageneralisedwavegauge} and \eqref{einstein equations in a generalised wave gauge} is rather cumbersome, we prefer our more direct approach.
\end{remark}

\subsection{The equations of linearised gravity}\label{Theequationsoflinearisedgravity}

In this section we make an explicit choice of the linear map $Df\big|_{g_M}$ and then present the resulting system of equations which we shall refer to in this paper as the equations of linearised gravity.

The remainder of the paper is then concerned with solutions to this system of equations.\newline

Indeed, viewing now $Df\big|_{g_M}$ as a smooth map from\footnote{Here we recall the notation $\mscrT^k(\mcalm)$ for the space of smooth $k$-covariant tensor fields on $\mcalm$.} $\mscrT^2(\mcalm)\rightarrow\mscrT(\mcalm)$, we set
\begin{align*}
Df\big|_{g_M}(X):=\frac{2}{r}\tilde{X}_P-\frac{1}{r}\big(\hat{\tilde{X}}_{l=0}\big)_P+\frac{2}{r}\stkout{X}_P-\frac{1}{r}\ud r\,\str\slashed{X}.
\end{align*}
Here, $\tilde{X}_{l=0}$ denotes the spherically symmetric part of the 2-covariant $\qm$-tensor $\tilde{X}$ (cf. Definition \ref{defnqmtensorssupportedonlgeq2}).

Recalling the definition of the 1-form $\flin$ from section \ref{TheEinsteinequationsinageneralisedwavegaugelinearisedabouttheSchwarzschildmetric} along with the subsequent definitions of the $\qm$ 1-form $\tflin$ and the $\sm$ 1-form $\sflin$ from section \ref{Thesystemofgravitationalperturbations}, it follows that
\begin{align}
\tflin:&=\frac{2}{r}\gabhatlin_P-\frac{1}{r}\big(\gabhatlin_{l=0}\big)_P+\frac{1}{r}\ud r\,\trgablin-\frac{1}{r}\ud r\,\trsglin,\label{eqndefnoftfliningoodgauge} \\
\sflin:&=\frac{2}{r}\mling_P \label{eqndefnofsfliningoodgauge}.
\end{align}
Consequently, 
with this choice of $\tflin$ and $\sflin$ the system of gravitational perturbations presented in section \ref{Thesystemofgravitationalperturbations} reduce to the following system of equations.\newline

The equations for the linearised $\mcalq$-metric:
\begin{align}
\qbox\gabhatlin+\slap\gabhatlin+\frac{2}{r}\qn_{P}\gabhatlin-\frac{2}{r}\big(\qn\otimeshat\gabhatlin\big)_P-\frac{4}{r}\frac{\mu}{r}\gabhatlin&=\frac{1}{r}\ud r\qotimeshat\qexd\trgablin+\frac{2}{r}\ud r\qotimeshat\sdiv\mling-\frac{1}{r}\ud r\qotimeshat\qexd\trsglin-\qn\otimeshat\Big(r^{-1}\big({\gabhatlin_{l=0}}\big)_P\Big),\label{eqnlingrav1}\\
\qbox\trgablin+\slap\trgablin&=-\frac{2}{r}\qd\Big({\gabhatlin_{l=0}}\big)_P\Big)+\frac{2}{r^2}\big(\gabhatlin_{l=0}\big)_P\label{eqnlingrav2}.
\end{align}

The equations for the linearised $\qmsm$-metric:

\begin{align}
\qbox\mling+\slap\mling-\frac{2}{r}\big(\qn{\otimes}\mling\big)_P-\frac{1}{r^2}\mling+\frac{2}{r^2}\ud r{\otimes}\mling_P=\frac{2}{r}\ud r\,{\otimes}\sdiv\sghatlin\label{eqnlingrav3}.
\end{align}

The equations for the linearised $\sm$-metric:

\begin{align}
\qbox\sghatlin+\slap\sghatlin-\frac{2}{r}\qn_{P}\sghatlin-\frac{4}{r}\frac{\mu}{r}\sghatlin&=0,\label{eqnlingrav4}\\
\qbox\trsglin+\slap\trsglin+\frac{2}{r}\qn_{P}\trsglin+\frac{2}{r^2}\trsglin&=\frac{4}{r^2}\gabhatlin_{P P}+\frac{2}{r^2}\trgablin-\frac{4}{r^2}\big(\gabhatlin_{l=0}\big)_{PP}\label{eqnlingrav5}.
\end{align}

Finally, the linearised generalised wave gauge conditions:

\begin{align}
-\qd\gabhatlin+\sdiv\mling-\frac{1}{2}\qexd\trsglin&=-\frac{1}{r}\big(\gabhatlin_{l=0}\big)_P,\label{eqnlingrav6}\\
-\qd\mling-\frac{1}{2}\sn\trgablin+\sdiv\sghatlin&=0\label{eqnlingrav7}.
\end{align}

Following \cite{D--H--R}, we shall collectively denote a collection $\gabhatlin, \trgablin, \mling, \sghatlin, \trsglin$ of tensor fields on $\mcalm$ which satisfying the above system of equations according
\begin{align*}
\mathscr{S}=\bigg(\gabhatlin, \trgablin, \mling, \sghatlin, \trsglin\bigg)
\end{align*} 
and we shall refer to the ensemble $\mathscr{S}$ as a solution to \textbf{the equations of linearised gravity.}\newline

We make the following remarks.
\begin{remark}\label{rmkl=0modes}
The `$l=0$ modification' in the expression $\tflin$ is motivated by the linearised Schwarzschild solutions we present in section \ref{Linearisingaone-parameterfamilyofSchwarzschildsolutions}.
\end{remark}
\begin{remark}\label{rmkmotivationforchoosing}
Observe that, modulo the `$l=0$ modes', the choice of $\tflin$ and $\sflin$ as in \eqref{eqndefnoftfliningoodgauge}-\eqref{eqndefnofsfliningoodgauge} removes all zero'th order terms in the generalised wave gauge conditions \eqref{eqnfwavegauge1}-\eqref{eqnfwavegauge2}.
\end{remark}
\begin{remark}\label{rmknonlinearmap}
We note that the linear map $\boldsymbol{{f}}:\mathscr{T}^2(\mcalm)\rightarrow \mscrT\mcalm$ defined according to
\begin{align*}
\boldsymbol{f}(X)&=\frac{2}{r}\tilde{X}_P-\frac{1}{r}\big(\hat{\tilde{X}}_{l=0}\big)_P+\frac{2}{r}\stkout{X}_P-\frac{1}{r}\ud r\,\str\slashed{X}
\end{align*}
trivially gives rise to the expressions $\flin$ and $\sflin$ of  \eqref{eqndefnoftfliningoodgauge}-\eqref{eqndefnofsfliningoodgauge}. Moreover, we further observe that
\begin{align*}
\boldsymbol{{f}}(g_M)=0.
\end{align*}
\end{remark}

\section{Special solutions to the equations of linearised gravity}\label{Specialsolutionstotheequationsoflinearisedgravity}
 
In this section we introduce two special classes of solutions to the equations of linearised gravity, namely the linearised Kerr and pure gauge solutions.

Knowledge of these special solutions will prove fundamental in the formulation of the main Theorem of this paper, to be found in section \ref{Precisestatementsofthemaintheorems}.

\subsection{Special solutions I: The 4-dimensional linearised Kerr family $\Ke$}\label{SpecialsolutionsI:The4-dimensionallinearisedKerrfamily}

The first class of solutions we introduce are to be the linearised Kerr family which arise as the linearisation of the Kerr exterior family about $\Mgs$.

This section of the paper proceeds in a similar fashion to that of section 6.2 in \cite{D--H--R}.

\subsubsection{Linearising a 1-parameter family of Schwarzschild solutions}\label{Linearisingaone-parameterfamilyofSchwarzschildsolutions}

We begin by linearising, in the mass parameter, the 1-parameter subfamily of the Kerr exterior family corresponding to the Schwarzschild exterior family.\newline

In order to formally linearise we consider a smooth 1-parameter family of functions $\boldsymbol{M}:(-\epsilon, \epsilon)\rightarrow\mathbb{R}$ with $\boldsymbol{M}(0)=M$. This family subsequently generates the smooth 1-parameter family of Schwarzschild exterior solutions, with mass $\boldsymbol{M}$, to the Einstein vacuum equations on $\mcalm$:
\begin{align*}
g_{\boldsymbol{M}}:=-\bigg(1-\frac{2\boldsymbol{M}}{r}\bigg){\ud t^*}^2+\frac{4\boldsymbol{M}}{{r}}\ud t^*\ud r+\bigg(1+\frac{2\boldsymbol{M}}{{r}}\bigg){\ud {r}}^2+{r}^2\mathring{g}.
\end{align*}
Here, we have dispensed with the $\epsilon$ notation.

Now, projecting the symmetric, 2-covariant tensor fields $g_{\boldsymbol{M}}$ onto $\qmm, \qmm\times\smm$ and $\smm$ respectively (recall Definition \ref{defnprojection}), we compute that
\begin{align*}
-\qd\hat{\tilde{g}}_{\boldsymbol{M}}-\frac{1}{2}\qexd\str{\slashed{g}_{\boldsymbol{M}}}+\frac{1}{r}\big((\hat{\tilde{g}}_{\boldsymbol{M}})_{l=0}\big)_P=-\frac{1}{2}\sn\qtr{g}_{\boldsymbol{M}}=0,\qquad \stkout{g}_{\boldsymbol{M}}=\hat{\slashed{g}}_{\boldsymbol{M}}=0.
\end{align*}
Here, $\hat{\tilde{g}}_{\boldsymbol{M}}$ is the traceless part of ${\tilde{g}}_{\boldsymbol{M}}$ with respect to $\qg_M$ and $\hat{\slashed{g}}_{\boldsymbol{M}}$ is the traceless part of ${\slashed{g}}_{\boldsymbol{M}}$ with respect to $\sg_M$.

It therefore follows (cf. Remark \ref{rmknonlinearmap}) that upon linearisation in the mass parameter $\boldsymbol{M}$ the smooth 1-parameter family of solutions to the Einstein vacuum equations $g_{\boldsymbol{M}}$ give rise to a smooth 1-parameter family of \emph{explicit} solutions to the equations of linearised gravity presented in section \ref{Theequationsoflinearisedgravity}.

Consequently, applying the formal linearisation theory of section \ref{The formal linearisation of the equations of section} to the 1-parameter family $g_{\boldsymbol{M}}$ results in:
\begin{align*}
\gabhatlin&=-\frac{1}{(1-\mu)^2}\frac{\lin{M}}{r}\Big(\qhd P\qotimeshat\qhd P-2\qhd P\qotimeshat\ud r-\ud r\qotimeshat\ud r\Big),\\
\trgablin=\trsglin=\mling=\sghatlin&=0.
\end{align*}
Here, $\boldsymbol{M}\equiv M+\epsilon\cdot\lin{M}$.

We have therefore shown the following.
\begin{proposition}\label{proplinssoln}
	For any $\mfM\in\mathbb{R}$ the following is a smooth (spherically symmetric and stationary) solution to the equations of linearised gravity:
	\begin{align*}
	\gabhatlin&=-\frac{1}{(1-\mu)^2}\frac{\mfM}{r}\Big(\qhd P\qotimeshat\qhd P-2\qhd P\qotimeshat\ud r-\ud r\qotimeshat\ud r\Big),\\
	\trgablin=\trsglin=\mling=\sghatlin&=0.
	\end{align*}
\end{proposition}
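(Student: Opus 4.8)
The plan is to verify directly that the explicit ansatz
\begin{align*}
\gabhatlin&=-\frac{1}{(1-\mu)^2}\frac{\mfM}{r}\Big(\qhd P\qotimeshat\qhd P-2\qhd P\qotimeshat\ud r-\ud r\qotimeshat\ud r\Big),\\
\trgablin=\trsglin=\mling=\sghatlin&=0
\end{align*}
satisfies the system \eqref{eqnlingrav1}--\eqref{eqnlingrav7}. The cleanest route is \emph{not} to grind through these equations directly, but to appeal to the formal linearisation already set up: since $g_{\boldsymbol{M}}$ is for every $\boldsymbol{M}$ an exact solution to the Einstein vacuum equations, and since (as computed just above the proposition) each $g_{\boldsymbol{M}}$ lies in the generalised $\boldsymbol{f}$-wave gauge with respect to $g_M$ for the map $\boldsymbol{f}$ of Remark \ref{rmknonlinearmap} — the key identity being $-\qd\hat{\tilde{g}}_{\boldsymbol{M}}-\tfrac12\qexd\str\slashed{g}_{\boldsymbol{M}}+\tfrac1r((\hat{\tilde{g}}_{\boldsymbol{M}})_{l=0})_P=-\tfrac12\sn\qtr g_{\boldsymbol{M}}=0$ together with $\stkout{g}_{\boldsymbol{M}}=\hat{\slashed g}_{\boldsymbol{M}}=0$ — the formal linearisation theory of section \ref{The formal linearisation of the equations of section} guarantees that $\tfrac{d}{d\boldsymbol{M}}\big|_{\boldsymbol{M}=M}g_{\boldsymbol{M}}$ solves \eqref{eqnlinearisedeinsteinequations}--\eqref{eqnlorentzgauge}, hence after the $2+2$ decomposition solves \eqref{eqnlingrav1}--\eqref{eqnlingrav7}. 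So the real content of the proof is a \emph{computation}: differentiate $g_{\boldsymbol{M}}$ in $\boldsymbol{M}$ at $\boldsymbol{M}=M$, decompose the resulting symmetric 2-tensor under the $2+2$ formalism, and check the stated form for $\gabhatlin$ (with the other four components vanishing).

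First I would record the $2+2$ decomposition of $g_{\boldsymbol{M}}$ itself: its $S$-part is $r^2\mathring g$, independent of $\boldsymbol{M}$, so $\sglin = \tfrac{d}{d\boldsymbol{M}}(r^2\mathring g)\cdot\lin M = 0$, giving $\sghatlin=\trsglin=0$ immediately; its $\qm\otimes S$-part vanishes identically, so $\mling=0$; and its $\qm$-part is $\tilde g_{\boldsymbol{M}} = -(1-2\boldsymbol{M}/r)\ud {t^*}^2 + (4\boldsymbol{M}/r)\ud t^* \ud r + (1+2\boldsymbol{M}/r)\ud r^2$. Differentiating in $\boldsymbol{M}$ gives $\tfrac{d}{d\boldsymbol{M}}\tilde g_{\boldsymbol{M}}\big|_M = \tfrac{2}{r}\ud{t^*}^2 + \tfrac{4}{r}\ud t^*\ud r + \tfrac{2}{r}\ud r^2$, a symmetric $\qm$-tensor; call it $\qglin$ (so $\glin_{\qm\text{-part}} = \qglin\cdot\lin M$). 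Then $\trgablin = \qtr_{\qg_M}\qglin$ must be computed and shown to vanish — this is a short check using $\qg_M^{-1}$ read off from \eqref{schwarzschildmetric}, and reflects the wave-gauge condition $-\tfrac12\sn\qtr g_{\boldsymbol{M}}=0$ differentiated in $\boldsymbol M$; so $\trgablin=0$. It remains to identify the traceless part $\gabhatlin = \qglin - \tfrac12\qg_M\cdot\trgablin = \qglin$ with the claimed expression $-\tfrac{1}{(1-\mu)^2}\tfrac{\mfM}{r}\big(\qhd P\qotimeshat\qhd P - 2\qhd P\qotimeshat\ud r - \ud r\qotimeshat\ud r\big)$ after setting $\mfM = \lin M$.

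The main obstacle — such as it is — is the last identification: one must express $\qhd P$ (the Hodge dual with respect to $\qg_M$ of the vector field $P$ dual to $\ud r$) in the $(t^*, r)$ coordinate coframe, compute the three traceless symmetrised products $\qhd P\qotimeshat\qhd P$, $\qhd P\qotimeshat\ud r$, $\ud r\qotimeshat\ud r$ using the definition of $\qotimeshat$ from section \ref{Tensoranalysis}, and match coefficients with $\tfrac{2}{r}\ud{t^*}^2 + \tfrac{4}{r}\ud t^*\ud r + \tfrac{2}{r}\ud r^2$. This is purely mechanical: $P = \qg_M^{\sharp}(\ud r)$ has components $(P^{t^*}, P^r) = (-\tfrac{2M}{r}, 1-\tfrac{2M}{r})$ (up to the overall normalisation of \eqref{schwarzschildmetric}), $\qhd$ is multiplication by the volume 2-form $\qepsilon$ normalised by $-\qepsilon\qdot\qepsilon = 2$, and the factor $(1-\mu)^{-2}$ arises from $\qg_M(\ud r,\ud r) = -\qg_M(P,P)$ and the determinant of $\qg_M$. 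I would carry out this coordinate computation once, present the resulting tensor components, and observe they agree with $\tfrac{d}{d\boldsymbol{M}}\tilde g_{\boldsymbol{M}}\big|_M$, which completes the proof. A remark worth adding: one can alternatively bypass the linearisation-theory appeal entirely and plug the ansatz straight into \eqref{eqnlingrav1}--\eqref{eqnlingrav7}, using that all quantities are spherically symmetric (so $\slap$, $\sdiv$, $\sn$ acting on them vanish) and stationary; then \eqref{eqnlingrav3}, \eqref{eqnlingrav4}, \eqref{eqnlingrav7} hold trivially, \eqref{eqnlingrav2} and \eqref{eqnlingrav5} reduce to checking two scalar ODEs in $r$ for the single function in $\gabhatlin$, and \eqref{eqnlingrav1}, \eqref{eqnlingrav6} likewise — but the linearisation argument is conceptually cleaner and is the one I would present as primary.
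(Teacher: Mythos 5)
Your proposal follows essentially the same route as the paper: linearise the explicit $1$-parameter Schwarzschild family $g_{\boldsymbol{M}}$ in the mass, observe that each member is an exact vacuum solution lying in the generalised $\boldsymbol{f}$-wave gauge with respect to $g_M$ (the identity $-\qd\hat{\tilde{g}}_{\boldsymbol{M}}-\tfrac12\qexd\str\slashed{g}_{\boldsymbol{M}}+\tfrac1r((\hat{\tilde{g}}_{\boldsymbol{M}})_{l=0})_P=0$ together with $\stkout{g}_{\boldsymbol{M}}=\hat{\slashed g}_{\boldsymbol{M}}=0$), and then read off the $2+2$ decomposition of $\tfrac{d}{d\boldsymbol{M}}g_{\boldsymbol{M}}\big|_{M}$. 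The only quibble is the sign of $P^{t^*}$ in your parenthetical (one finds $P^{t^*}=+\mu$ from the inverse of $\qg_M$), but since you flag this as a normalisation to be fixed in the mechanical coordinate check, it does not affect the argument.
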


\subsubsection{Linearised Kerr solutions leaving the mass unchanged}\label{LinearisedKerrsolutionsleavingthemassunchanged}

We continue by linearising, now in the rotation parameter $\boldsymbol{a}$, the 1-parameter subfamily of the Kerr exterior metrics which have fixed mass $M$.\newline

In order to formally linearise we consider a smooth 1-parameter family of functions $\boldsymbol{a}:(-\epsilon, \epsilon)\rightarrow\mathbb{R}$ with $\boldsymbol{a}(0)=0$. This family subsequently generates the smooth 1-parameter family of Kerr exterior metrics on $\mcalm$
\begin{align*}
	g_{M, \boldsymbol{a}}=g_M-\frac{4}{r}\boldsymbol{a}M\sin^2\theta \ud t^*\ud\varphi-2\bigg(1+\frac{2M}{r}\bigg)\boldsymbol{a}\sin^2\theta \ud r\ud\varphi+\mathcal{O}\big(\boldsymbol{a}^2\big).
\end{align*}
Here, we have dispensed with the $\epsilon$ notation and neglected to explicitly state terms that are higher than linear order in $\boldsymbol{a}$.

Now, projecting the symmetric, 2-covariant tensor fields $g_{M,\boldsymbol{a}}$ onto $\qmm, \qmm\times\smm$ and $\smm$ respectively, we observe that
\begin{align*}
\sdiv\stkout{g}_{M,\boldsymbol{a}}-\frac{1}{2}\qexd\str{\slashed{g}_{M,\boldsymbol{a}}}=
-\qd\stkout{g}_{M,\boldsymbol{a}}-\frac{1}{2}\sn\qtr{g}_{M,\boldsymbol{a}}+\sdiv\hat{\slashed{g}}_{M,\boldsymbol{a}}=\mathcal{O}\big(\boldsymbol{a}^2\big),\qquad \hat{\tilde{g}}_{M,\boldsymbol{a}}=\hat{\slashed{g}}_{M,\boldsymbol{a}}=\mathcal{O}\big(\boldsymbol{a}^2\big).
\end{align*}
Here, $\hat{\tilde{g}}_{M,\boldsymbol{a}}$ is the traceless part of ${\tilde{g}}_{M,\boldsymbol{a}}$ with respect to $\qg_M$ and $\hat{\slashed{g}}_{M,\boldsymbol{a}}$ is the traceless part of ${\slashed{g}}_{M,\boldsymbol{a}}$ with respect to $\sg_M$.

It therefore follows that upon linearisation in the parameter $\boldsymbol{a}$ the smooth 1-parameter family of solutions to the Einstein vacuum equations $g_{M,\boldsymbol{a}}$ give rise to a smooth 1-parameter family of \emph{explicit} solutions to the equations of linearised gravity.

Consequently, applying the formal linearisation theory of section \ref{The formal linearisation of the equations of section} to the 1-parameter family $g_{\boldsymbol{M}}$ results in:
\begin{align*}
\mling&=-\sqrt{\frac{32\pi}{3}}\frac{1}{1-\mu}\lin{a}\Big(\mu\qhd P-\ud r\Big){\otimes}\shd\sn Y^1_0,\\
\gabhatlin=\trgablin=\sghatlin=\trgablin&=0.
\end{align*}
Here, $\boldsymbol{a}\equiv \epsilon\cdot\alin$ and $Y^1_0$ is the $l=1, m=0$ spherical harmonic of section \ref{Thel=0,1sphericalharmonicsandthesphericalharmonicdecomposition}.

We have therefore shown the ($i=0$ case of the) following.
\begin{proposition}\label{proplinkerrsoln}
	Let $\mathfrak{a}_i\in\mathbb{R}$ with $i\in\{-1,0,1\}.$ Then the following is a smooth (stationary) solution to the equations of linearised gravity:
	\begin{align*}
	\mling&=-\frac{1}{1-\mu}\mathfrak{a}_i\Big(\mu\qhd P-\ud r\Big){\otimes}\shd\sn Y^1_i,\\
	\sflin&=\frac{2}{r}\mathfrak{a}_i\shd\sn Y^1_i,\\
	\gabhatlin=\trgablin=\sghatlin=\trgablin=\tflin&=0.
	\end{align*}
\end{proposition}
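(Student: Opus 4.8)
The plan is to verify directly that the claimed stationary configuration solves each of the seven equations of linearised gravity \eqref{eqnlingrav1}--\eqref{eqnlingrav7}. The starting observation is that the data is extremely sparse: $\gabhatlin=\trgablin=\sghatlin=\trsglin=\tflin=0$, with only $\mling$ and $\sflin$ nonzero, and moreover $\mling$ is \emph{stationary} (all $t^*$-dependence absent) and \emph{purely odd}, built from $\shd\sn Y^1_i$. Since $Y^1_i$ is an $l=1$ spherical harmonic, $\slap Y^1_i=-\tfrac{2}{r^2}Y^1_i$, and $\shd\sn Y^1_i$ is a divergence-free, curl-determined $\sm$ $1$-form; in particular $\sdiv(\shd\sn Y^1_i)=0$ and $\sn\otimeshat(\shd\sn Y^1_i)=0$ because a traceless symmetric $2$-tensor built from an $l=1$ mode must vanish (cf. Proposition \ref{propsmtwotensorsphericalharmonicdecomposition}, which says all such $\sm$-tensors are supported outside $l=0,1$). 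These two facts are the workhorses: they will kill most of the right-hand sides.

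First I would dispose of the trivial equations. Equations \eqref{eqnlingrav1}, \eqref{eqnlingrav2}, \eqref{eqnlingrav5} have left-hand sides that are linear in $\gabhatlin,\trgablin,\trsglin$ and hence vanish identically on this data, while their right-hand sides involve only $\gabhatlin$ (through $\gabhatlin_P$, $\gabhatlin_{l=0}$, $\gabhatlin_{PP}$), $\trgablin$, $\sdiv\mling$, $\trsglin$ — the first, second and fourth of which are zero, and $\sdiv\mling$ is zero since $\sdiv\mling$ reduces to ($\mcalq$-coefficients)$\times\sdiv(\shd\sn Y^1_i)=0$. Equation \eqref{eqnlingrav4} is the homogeneous wave equation for $\sghatlin$, satisfied by $\sghatlin=0$. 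Equation \eqref{eqnlingrav6} becomes $\sdiv\mling=0$ on the left (all other terms vanish) and $-\tfrac1r(\gabhatlin_{l=0})_P=0$ on the right, so it holds. Equation \eqref{eqnlingrav7} becomes $-\qd\mling+\sdiv\sghatlin=0$; here $\sdiv\sghatlin=0$, and $\qd\mling$ — the $\mcalq$-divergence — acts on the $\mcalq$-covectors $\mu\qhd P-\ud r$ tensored with the fixed $\sm$-form $\shd\sn Y^1_i$, so one must check $\qd\big(\tfrac{1}{1-\mu}(\mu\qhd P-\ud r)\big)=0$. This is a short two-dimensional computation in the $\mcalq$-metric using $\qn_Pr$ and the structure functions of $\qg_M$ recorded in section \ref{Tensoranalysis}; I expect it to vanish because $\mu\qhd P-\ud r$ is, up to the conformal factor, closed and co-closed — indeed it is essentially $\ud u$ for an appropriate null coordinate.

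The genuinely substantive check is equation \eqref{eqnlingrav3}, the wave equation for $\mling$. On the right-hand side $\tfrac{2}{r}\ud r\otimes\sdiv\sghatlin=0$, so one must show the entire left-hand side $\qbox\mling+\slap\mling-\tfrac2r(\qn\otimes\mling)_P-\tfrac{1}{r^2}\mling+\tfrac{2}{r^2}\ud r\otimes\mling_P$ vanishes. Here $\slap$ acts only on the $\sm$-factor, producing $\slap(\shd\sn Y^1_i)=-\tfrac{2}{r^2}\shd\sn Y^1_i$ (since $\shd\sn$ commutes with $\slap$ up to curvature, and $Y^1_i$ is an $l=1$ mode), while $\qbox$, $\qn\otimes$ and the algebraic terms act on the $\mcalq\otimes S$ structure. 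Factoring out the fixed $\sm$-form, this reduces to an identity among $\mcalq$-tensor operators applied to the $\mcalq$-covector $\tfrac{1}{1-\mu}(\mu\qhd P-\ud r)$, which one can evaluate in the $\{S,L\}$ frame of Proposition \ref{propSLframe} — or more cheaply in $(t^*,r)$ coordinates — using stationarity to drop all $T$-derivatives. I expect this to be the main obstacle, not because of conceptual difficulty but because it requires careful bookkeeping of the $\tfrac{1}{1-\mu}$ and $\mu$ weights against the connection coefficients; the cleanest route is probably to note that this $\mling$ is nothing but the $2+2$ projection of $\mcall_{Z}g_M$ for the rotational Killing-type generator associated to $Y^1_i$ appropriately modified, or to simply invoke that it is obtained by linearising the genuine Kerr family $g_{M,\boldsymbol{a}}$ (as in the $i=0$ computation preceding the Proposition) which solves the nonlinear system \eqref{einstein equations about schwarzschild}--\eqref{wave gauge about schwarzschild} exactly — the $i=\pm1$ cases following from the $i=0$ case by the $SO(3)$ symmetry of the Schwarzschild background, which rotates $Y^1_0$ into $Y^1_{\pm1}$ and commutes with all the operators in the system. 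This symmetry argument is the one I would actually write down, relegating the direct verification to a remark.
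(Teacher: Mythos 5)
Your final argument --- obtaining the solution by linearising the exact Kerr family $g_{M,\boldsymbol{a}}$ in the rotation parameter and deducing the $i=\pm1$ cases from $i=0$ by the $SO(3)$ symmetry of the background --- is precisely the route the paper takes, and your preliminary direct checks of the remaining equations are consistent with it. The one point to keep explicit is that linearising an exact solution of the vacuum equations only yields a solution of the \emph{gauged} linearised system after one verifies that $g_{M,\boldsymbol{a}}$ satisfies the generalised wave gauge condition to first order in $\boldsymbol{a}$ (it is not satisfied exactly); the paper checks this directly, showing the condition holds up to $\mathcal{O}(\boldsymbol{a}^2)$, and your verification of \eqref{eqnlingrav6}--\eqref{eqnlingrav7} covers the same ground.
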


\subsubsection{The linearised Kerr family}\label{ThelinearisedKerrfamily}

The full 4-parameter family of linearised Kerr solutions to the system of gravitational perturbations are then determined by combining the family of solutions introduced in the previous two sections.\newline

Indeed, by linearity, one has the following.
\begin{proposition}\label{propfullkerrfamily}
Let $\mfM\in\mathbb{R}$ and let $\mathfrak{a}$ be a smooth function on $S^2$ that is given as a linear combination of the $l=1$ spherical harmonics. Then the collection
\begin{align*}
\Ke_{\mathsmaller{\mfM,\mathfrak{a}}}=\bigg(\gabhatlin, \trgablin, \mling, \sghatlin, \trsglin\bigg)
\end{align*}
defined by
\begin{align*}
\gabhatlin&=-\frac{1}{(1-\mu)^2}\frac{\mfM}{r}\Big(\qhd P\qotimeshat\qhd P-2\qhd P\qotimeshat\ud r-\ud r\qotimeshat\ud r\Big),\\
\trgablin&=0,\\
\mling&=-\frac{1}{1-\mu}\Big(\mu\qhd P-\ud r\Big)\shd\sn \mathfrak{a},\\
\sghatlin&=0,\\
\trgablin=\mling=\trsglin&=0
\end{align*}
is a smooth (stationary) solution to the equations of linearised gravity. We call such a solution a \textnormal{\textbf{linearised Kerr solution}} with parameters $\mfM$ and $\mathfrak{a}$.
\end{proposition}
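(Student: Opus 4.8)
The plan is to obtain Proposition \ref{propfullkerrfamily} as an immediate consequence of the linearity of the equations of linearised gravity together with the two special-solution computations already carried out in Propositions \ref{proplinssoln} and \ref{proplinkerrsoln}. First I would record the structural observation that the system \eqref{eqnlingrav1}--\eqref{eqnlingrav7} is \emph{linear} in the tuple $(\gabhatlin, \trgablin, \mling, \sghatlin, \trsglin)$: every term is one of these quantities acted on by a fixed background differential or algebraic operator, and the derived $\qm$- and $\sm$-one-forms $\tflin$, $\sflin$ depend linearly on the tuple through \eqref{eqndefnoftfliningoodgauge}--\eqref{eqndefnofsfliningoodgauge}. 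Consequently any finite $\mathbb{R}$-linear combination of solutions is again a solution, and the properties ``smooth on $\mcalm$'' and ``stationary'' (i.e. annihilated by $\mcall_T$ with $T=\pt$) are each preserved under such combinations.

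Next I would decompose the data. By hypothesis $\mathfrak{a}$ is a linear combination of the $l=1$ spherical harmonics, so $\mathfrak{a}=\mathfrak{a}_{-1}Y^1_{-1}+\mathfrak{a}_0 Y^1_0+\mathfrak{a}_1 Y^1_1$ for constants $\mathfrak{a}_i\in\mathbb{R}$ (recall \eqref{l=1sphericalharmonics}). Since the expression defining $\mling$ in the statement is $\mathbb{R}$-linear in $\mathfrak{a}$ and the remaining components of $\Ke_{\mathsmaller{\mfM,\mathfrak{a}}}$ simply add, the collection $\Ke_{\mathsmaller{\mfM,\mathfrak{a}}}$ is precisely the sum of the spherically symmetric, stationary solution of Proposition \ref{proplinssoln} associated to the mass parameter $\mfM$ and the three stationary solutions of Proposition \ref{proplinkerrsoln} associated to the parameters $\mathfrak{a}_{-1},\mathfrak{a}_0,\mathfrak{a}_1$ (with $\sflin=\tfrac{2}{r}\mling_P$ determined consistently by \eqref{eqndefnofsfliningoodgauge}). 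The linearity observation of the previous paragraph then yields at once that $\Ke_{\mathsmaller{\mfM,\mathfrak{a}}}$ solves the equations of linearised gravity, is smooth, and is stationary.

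The only points deserving a word of care are the ones already discharged by Propositions \ref{proplinssoln}--\ref{proplinkerrsoln}: (i) that the closed-form expressions genuinely satisfy \eqref{eqnlingrav1}--\eqref{eqnlingrav7}, verified there by direct substitution in the frame $\{S,L\}$ using the connection coefficients of Proposition \ref{propSLframe}; and (ii) that the expressions are smooth across the horizon $r=2M$ despite the apparent $(1-\mu)^{-1}$ and $(1-\mu)^{-2}$ factors, which holds because the $\gabhatlin$- and $\mling$-terms are nothing but the $2+2$ projections of $\partial_{\boldsymbol M}g_{\boldsymbol M}\big|_{\boldsymbol M=M}$ and $\partial_{\boldsymbol a}g_{M,\boldsymbol a}\big|_{\boldsymbol a=0}$, which are manifestly smooth on $\mcalm$ in the Schwarzschild-star coordinates of \eqref{schwarzschildmetric}, as noted in sections \ref{Linearisingaone-parameterfamilyofSchwarzschildsolutions} and \ref{LinearisedKerrsolutionsleavingthemassunchanged}. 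Taking those two propositions as given, no real obstacle remains: the present proposition is a one-line superposition argument. If I were to flag a ``hardest'' ingredient it would be the earlier (already completed) verification that the Kerr exterior metrics can be placed in a generalised $\boldsymbol{f}$-wave gauge with respect to $g_M$, so that their linearisations land inside the solution space of \eqref{eqnlingrav1}--\eqref{eqnlingrav7} in the first place.
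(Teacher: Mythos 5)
Your proposal is correct and coincides with the paper's own argument: Proposition \ref{propfullkerrfamily} is obtained there precisely by linearity, superposing the mass-linearised solution of Proposition \ref{proplinssoln} with the three rotation-linearised solutions of Proposition \ref{proplinkerrsoln} after expanding $\mathfrak{a}$ in the $l=1$ spherical harmonics. Your additional remarks on smoothness across the horizon and on the earlier direct verifications are sound but not needed beyond what those two propositions already supply.
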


We make the following remarks.
\begin{remark}
The above indeed defines a 4-parameter family of solutions to the equations of linearised gravity as the function $\mathfrak{a}$ is parametrised by three real numbers $\mfa_i$ according to
\begin{align*}
\mfa=\sum_{i=-1}^{i=+1}\mfa_i Y^1_i.
\end{align*}
\end{remark}
\begin{remark}\label{rmkkerrkernal}
Observe that for any $\mfM$ and $\mathfrak{a}$ each of the above collection lie in the kernel of an appropriate member of the $\spi$ family of operators defined in section \ref{ThefamilyofoperatorsPandT}.
\end{remark}

\subsection{Special solutions II: Pure gauge solutions $\Ga$}\label{Specialsolutions2:Puregaugesolutions}

The second class of solutions we introduce are the pure gauge solutions which arise as the linearisation of a 1-parameter family of Lorentzian metrics on $\mcalm$ given as the pullback of $g_M$ under a 1-parameter family of diffeomorphisms on $\mcalm$ which preserve the generalised wave gauge to first order.\newline

Let $f:\mathscr{T}^2(\mcalm)\rightarrow \mathscr{T}\mcalm$ be the smooth map 
\begin{align*}
f(X)&=\frac{2}{r}\tilde{X}_P-\frac{1}{r}\big(\hat{\tilde{X}}_{l=0}\big)_P+\frac{2}{r}\stkout{X}_P-\frac{1}{r}\ud r\,\str\slashed{X}
\end{align*}
and let
$\qv$, $\sv$ be a pair of smooth smooth $\qm$ 1-forms and $\sm$ 1-forms respectively which satisfy
\begin{align}
\qbox\qv+\slap\qv-\frac{2}{r}\big(\qn\qv\big)_P+\frac{2}{r^2}\ud r\,\qv_P&=-\frac{1}{r}\big(\qn\otimes\qv_{l=0}\big)_P,\label{puregauge1}\\
\qbox\sv+\slap\sv-\frac{2}{r}\qn_{P}\sv+\frac{1}{r^2}(3-4\mu)\sv&=0\label{puregauge2}.
\end{align}
Denoting by $\boldsymbol{\phi}_\epsilon$ the smooth 1-parameter family of diffeomorphisms on $\mcalm$ generated by the smooth vector field
\begin{align*}
v:=\qv+\sv.
\end{align*}
it follows that the corresponding smooth 1-parameter family of Lorentzian metrics $(\boldsymbol{\phi}_\epsilon)^*g_M$ define a smooth 1-parameter family of solutions to the Einstein vacuum equations. Moreover, the conditions \eqref{puregauge1}-\eqref{puregauge2} ensure that, to first order in $\epsilon$, the identity map
\begin{align}\label{dick}
\text{Id}:\big(\mcalm, (\boldsymbol{\phi}_\epsilon)^*g_M\big)\rightarrow\big(\mcalm, g_M\big)
\end{align}
is an $f((\boldsymbol{\phi}_\epsilon)^*g_M,g_M)$-wave map. Indeed, to first order in $\epsilon$
\begin{align*}
(\boldsymbol{\phi}_\epsilon)^*g_M-g_M\equiv\mcall_vg_M
\end{align*}
and so
\begin{align*}
\big((\boldsymbol{\phi}_\epsilon)^*g_M\big)^{-1}\cdot C_{(\boldsymbol{\phi}_\epsilon)^*g_M,g_M}\equiv\qbox\qv+\slap\qv-\frac{2}{r}\big(\qn\qv\big)_P+\frac{2}{r^2}\ud r\,\qv_P+\frac{1}{r}\big(\qn\otimes\qv_{l=0}\big)_P+
\qbox\sv+\slap\sv-\frac{2}{r}\qn_{P}\sv+\frac{1}{r^2}(3-4\mu)\sv.
\end{align*}
Here, $C_{(\boldsymbol{\phi}_\epsilon)^*g_M,g_M}$ is the connection tensor between $(\boldsymbol{\phi}_\epsilon)^*g_M$ and $g_M$ and recall, from section \ref{Thegeneralisedwavegauge}, that the above is equivalent to the identity map \eqref{dick} being an $f$-wave map to first order.

We have therefore shown the following.
\begin{proposition}\label{proppuregaugesolution}
	Let $\qv$ be a smooth $\qm$ 1-form and let $\sv$ be a smooth $\sm$ 1-forms which satisfy
	\begin{align}
	\qbox\qv+\slap\qv-\frac{2}{r}\big(\qn\qv\big)_P+\frac{2}{r^2}\ud r\,\qv_P&=-\frac{1}{r}\big(\qn\otimes\qv_{l=0}\big)_P,\label{feelgrim1}\\
	\qbox\sv+\slap\sv-\frac{2}{r}\qn_{P}\sv+\frac{1}{r^2}(3-4\mu)\sv&=0\label{feelgrim2}.
	\end{align}
	Then the collection\footnote{This notation is borrowed from \cite{D--H--R}.}
	\begin{align*}
	\Ga=\bigg(\gabhatlin, \trgablin, \mling, \sghatlin, \trsglin\bigg)
	\end{align*}
	defined by 
	\begin{align*}
	\gabhatlin&=\qn\otimeshat\qv,\\
	\trgablin&=-2\qd\qv,\\
	\mling&=\sn\qv+\qexd\sv-\frac{2}{r}\ud r\otimes\sv,\\
	\sghatlin&=-2\sdst\sv,\\
	\trsglin&=2\sdiv\sv+\frac{4}{r}\qv_P,\\
	\tflin&=\qf,\\
	\sflin&=\slf
	\end{align*}
	is a smooth solution to the equations of linearised gravity. We such a solution \emph{\textbf{a pure gauge solution}} that is generated by $\qv$ and $\sv$.
\end{proposition}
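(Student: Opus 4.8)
The plan is to verify directly that the stated collection $\Ga$ satisfies each of the equations of linearised gravity \eqref{eqnlingrav1}--\eqref{eqnlingrav7}. The conceptual content has already been established in the paragraph preceding the proposition: since $(\boldsymbol{\phi}_\epsilon)^*g_M$ is a genuine solution to the Einstein vacuum equations and, by \eqref{feelgrim1}--\eqref{feelgrim2}, the identity map is an $f$-wave map to first order, the linearisation $\mcall_v g_M$ of $(\boldsymbol{\phi}_\epsilon)^*g_M - g_M$ automatically solves the linearised system \eqref{eqnlinearisedeinsteinequations}--\eqref{eqnlorentzgauge} with $\flin = Df|_{g_M}(\mcall_v g_M)$, and with the choice of $Df|_{g_M}$ made in section \ref{Theequationsoflinearisedgravity} this becomes precisely the equations of linearised gravity. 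So the bulk of the proof is the bookkeeping translation of $\mcall_v g_M$ under the $2+2$ decomposition of section \ref{The2+2decompositionoftensorfieldsonM}.

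First I would decompose $v = \qv + \sv$ and compute the $2+2$ projections of $\mcall_v g_M$. Writing $\glin = \mcall_v g_M$, one uses the standard identity $(\mcall_v g)_{\alpha\beta} = \nabla_\alpha v_\beta + \nabla_\beta v_\alpha$ and expands with respect to the product structure $\mcalm = \mcalq \times S^2$, carefully accounting for the warped-product Christoffel symbols relating $\nabla$ to $\qn$ and $\sn$ (the relevant connection coefficients involve $\frac{1}{r}\ud r$ and $\sg_M = r^2 \mathring{g}$, cf. section \ref{Tensoranalysis}). This gives the $\qm$-projection $\tilde{g}^{(1)} = \qn\otimes\qv$, hence $\gabhatlin = \qn\otimeshat\qv$ and $\trgablin = \qtr(\qn\otimes\qv) = -2\qd\qv$; the $\qmsm$-projection $\mling = \sn\qv + \qexd\sv - \frac{2}{r}\ud r\otimes\sv$, where the last term arises from the cross Christoffel symbol $\sn_{e_I}(\qv)_{\mcalq}$-type contributions; and the $\sm$-projection $\sglin = \sn\otimes\sv + \frac{2}{r}\qv_P\,\sg_M$, whose tracefree part is $\sghatlin = -2\sdst\sv$ (using $\sn\otimeshat = \mcall\sg_M - \sg_M\cdot\sdiv$ and $\sdst\xi = -\frac12\sn\otimeshat\xi$) and whose trace is $\trsglin = 2\sdiv\sv + \frac{4}{r}\qv_P$. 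Then I would compute $\flin = Df|_{g_M}(\mcall_v g_M)$ from \eqref{chin}, using the projections just found, and check it equals $(\qf, \slf)$ as claimed — this is essentially $\tflin = \frac{2}{r}\gabhatlin_P - \frac1r(\gabhatlin_{l=0})_P + \frac1r\ud r\,\trgablin - \frac1r\ud r\,\trsglin$ evaluated on the above, and similarly for $\sflin = \frac{2}{r}\mling_P$.

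With the decomposed quantities in hand, I would substitute into equations \eqref{eqnlingrav1}--\eqref{eqnlingrav7} and verify each one. For several of these — in particular the wave gauge conditions \eqref{eqnlingrav6}--\eqref{eqnlingrav7} — the identity is immediate once one recognizes that these are precisely the $2+2$-decomposed components of the statement that the identity map is an $f$-wave map to first order, which is exactly the content of \eqref{feelgrim1}--\eqref{feelgrim2}. For the wave equations \eqref{eqnlingrav1}--\eqref{eqnlingrav5}, the cleanest route is to observe that $\mcall_v$ commutes with the Lichnerowicz-type operator appearing in \eqref{eqnlinearisedeinsteinequations} when acting on the flat-kernel element $g_M$ (since $g_M$ is Ricci-flat and $\mcall_v$ commutes with coordinate-covariant operations), so that $\Box(\mcall_v g_M) - 2\mathrm{Riem}\cdot(\mcall_v g_M) = \mcall_v(\Box g_M - 2\mathrm{Riem}\cdot g_M) + [\text{commutator}] = \mcall_{\flin}g_M$ follows from the second Bianchi identity and $\mathrm{Ric}[g_M]=0$; then one decomposes this tensorial identity under $2+2$ to recover \eqref{eqnlingrav1}--\eqref{eqnlingrav5} individually, feeding in \eqref{feelgrim1}--\eqref{feelgrim2} wherever zero'th-order terms need to be matched.

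The main obstacle is purely computational: keeping track of the warped-product connection coefficients and the numerous $\frac{1}{r}\ud r$ and $\frac{\mu}{r}$ terms when commuting $\qn$, $\sn$, $\slap$, $\qbox$ with the projection operators and with $\mcall_v$. In particular, verifying that the specific zero'th-order coefficients $-\frac{1}{r^2}$, $+\frac{2}{r^2}$, $-\frac{4}{r}\frac{\mu}{r}$, $\frac{1}{r^2}(3-4\mu)$ etc. in \eqref{eqnlingrav3}--\eqref{eqnlingrav5} and in \eqref{feelgrim1}--\eqref{feelgrim2} are consistent requires a careful and somewhat lengthy check of the commutator $[\Box, \mcall_v]$ restricted to $g_M$ together with the contributions of $\mathrm{Riem}_M$; this is where a sign or factor error is most likely to creep in, and where I would be most careful to cross-check against the already-derived general system \eqref{eqnforgabhatfgauge}--\eqref{eqnfwavegauge2} by simply setting $\tflin = \qf$, $\sflin = \slf$ there rather than redoing the linearisation from scratch.
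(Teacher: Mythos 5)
Your proposal is correct and follows essentially the same route as the paper: the paper's proof is precisely the preceding derivation showing that the pullback family $(\boldsymbol{\phi}_\epsilon)^*g_M$ solves the Einstein vacuum equations and that \eqref{feelgrim1}--\eqref{feelgrim2} is exactly the statement that the generalised $f$-wave gauge holds to first order, so the $2+2$ decomposition of $\mcall_v g_M$ yields the stated solution, with the paper then remarking that one can alternatively verify everything directly from \eqref{eqnlingrav1}--\eqref{eqnlingrav7}. Both of your suggested verification routes (the diffeomorphism-invariance argument via $\Box v = Df|_{g_M}(\mcall_v g_M)$, and substitution into the already-derived system \eqref{eqnforgabhatfgauge}--\eqref{eqnfwavegauge2} with $\tflin=\qf$, $\sflin=\slf$) are consistent with what the paper does.
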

One can of course verify the above explicitly from the equations \eqref{eqnlingrav1}-\eqref{eqnlingrav7}.

\section{The Regge--Wheeler and Zerilli equations and the gauge-invariant hierarchy}\label{TheRegge-WheelerandZerilliequationsandthegaugeinvariantheirarchy}

In this section we demonstrate the well-known (\cite{RW}, \cite{Z}) but nevertheless remarkable phenomena which sees certain \emph{gauge-invariant} quantities decouple from the full system of gravitational perturbations into the celebrated Regge--Wheeler and Zerilli equations.

This decoupling will serve as the catalyst to unlocking the tensorial structure of the equations of linearised gravity.

\subsection{The Regge--Wheeler and Zerilli equations}\label{TheRWandZeqns}

We begin by first introducing the pair of scalar wave equations on $\Mgs$ which describe the Regge--Wheeler and Zerilli equations respectively.

As these equations admit a formulation independent of the equations of linearised gravity we shall denote solutions to such equations without the superscript $(1)$.\newline

First to be defined is the Regge--Wheeler equation. 

In what follows, wee remind the reader of the space $\LM$ introduced in section \ref{Thel=0,1spheriacalharmonicsandqmtensors}.
\begin{definition}
	Let $\Phi\in\LM$ be a smooth function. Then we say that $\Phi$ is a smooth solution to the Regge--Wheeler equation on $\Mgs$ iff
	\begin{align}\label{eqnRWeqn}
	\qbox\Phi+\slap\Phi=-\frac{6}{r^2}\frac{M}{r}\Phi.
	\end{align}
\end{definition}

Next to be defined is the Zerilli equations. 

In what follows, we remind the reader of the operator $\szetap{p}$ introduced in section \ref{Theellipticoperatorszslap}.
\begin{definition}
Let $\Psi\in\LM$ be a smooth function. Then we say that $\Psi$ is a smooth solution to the Zerilli equation on $\mcalm$ iff
	\begin{align}\label{eqnZereqn}
	\qbox\Psi+\slap\Psi=-\frac{6}{r^2}\frac{M}{r}\Psi+\frac{24}{r^3}\frac{M}{r}(r-3M)\szetap{1}\Psi+\frac{72}{r^5}\frac{M}{r}\frac{M}{r}(r-2M)\szetap{2}\Psi.
	\end{align}
\end{definition}

\subsection{The connection to the system of gravitational perturbations}\label{Theconnectiontothesystemofgravitationalperturbations}

In this section we reveal the remarkable connection, as first discovered by Regge--Wheeler \cite{RW} and Zerilli \cite{Z}, between the Regge--Wheeler and Zerilli equations and the system of gravitational perturbations.\newline

We would like to acknowledge that the derivations throughout this section rely heavily on those found in the paper \cite{C--O--S} of Chaverra, Ortiz and Sarbach. See also section \ref{OVTheRegge--WheelerandZerilliequationsandthegauge-invarianthierarchy} of the overview for a further review of the literature.

\subsubsection{Gauge-invariant quantities}\label{Gauge-invariantquantities}

The correct way to reveal this connection is to consider certain quantities which \emph{vanish} for all the pure gauge solutions of section \ref{Specialsolutions2:Puregaugesolutions}. Isolating these \emph{gauge-invariant} quantities is thus the content of the following proposition. 

In what follows we recall the family of operators $\sps, \spv$ and $\spt$ introduced in section \ref{ThefamilyofoperatorsPandT}. 

\begin{proposition}\label{propdefngaugeinvariantquatities}
	Let $\So$ be a smooth solution to the equations of linearised gravity. We define
	\begin{align}\label{defnofhandeta}
	\hlin:&=r^4\sdiv\sdiv\sn\otimeshat\mling-r^2\qexd\bigg(r^2\sdiv\sdiv\sghatlin\bigg).
	\end{align}
	Then the following quantities are gauge-invariant:\newline
		\begin{align}
		\tauhatlin:&=\sps\gabhatlin-\qn\otimeshat\hlin\label{eqndefntauhat},\\
		\trtaulin:&=\sps\trgablin+2\qd\hlin\label{eqndefnqtrtau},\\
		\etalin:&=r^4\scurl\sdiv\sn\otimeshat\mling-r^2\qexd\bigg(r^2\scurl\sdiv\sghatlin\bigg),\\
		\sigmalin:&=\sps\trsglin-\frac{4}{r}\hlin_P-2r^4\slap\sdiv\sdiv\sghatlin.\label{eqndefnsigma}
		\end{align}
\end{proposition}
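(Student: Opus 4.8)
The plan is to verify directly that each of the four quantities $\tauhatlin$, $\trtaulin$, $\etalin$ and $\sigmalin$ vanishes when one substitutes a pure gauge solution $\Ga$ as given in Proposition \ref{proppuregaugesolution}, where $\Ga$ is generated by an arbitrary pair of smooth $\qm$ and $\sm$ one-forms $\qv$, $\sv$. Since the expressions \eqref{eqndefntauhat}--\eqref{eqndefnsigma} are linear in the solution $\So$, and since a general pure gauge solution is parametrised by $\qv$ and $\sv$, it suffices to check the vanishing on the two pieces separately: the `$\qm$-part' generated by $\qv$ (with $\sv=0$) and the `$\sm$-part' generated by $\sv$ (with $\qv=0$). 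The key auxiliary quantity to compute first is the gauge-invariant building block $\hlin$ of \eqref{defnofhandeta}, since $\tauhatlin$, $\trtaulin$ and $\sigmalin$ are all expressed in terms of $\hlin$ together with $\sps$ applied to a metric component.

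First I would record, for a pure gauge solution, the values of the relevant curl-type and divergence-type operators acting on $\mling$ and $\sghatlin$. Using $\sghatlin=-2\sdst\sv$ one gets $r^2\sdiv\sdiv\sghatlin=-2r^2\sdiv\sdiv\sdst\sv$, and using $\mling=\sn\qv+\qexd\sv-\tfrac{2}{r}\ud r\otimes\sv$ one computes $\sdiv\sn\otimeshat\mling$. The crucial structural fact is that $\sn\otimeshat$ of an exact (`$\qexd\sv$' or `$\sn\qv$') term, followed by $\sdiv\sdiv$, produces precisely the combination that cancels against the $\sghatlin$-term inside $\hlin$; concretely one expects $\hlin$ to reduce, after commuting the angular operators $\sdiv$, $\scurl$, $\sn\otimeshat$ past $\qn$ and $\qexd$ and using the identities $\sdso\sdo=-\slap+r^{-2}$ and $\sdst\sdt=-\tfrac12\slap+r^{-2}$ from the proof of Proposition \ref{propellipticestimatesonT}, to $\hlin=\qn\otimeshat(\text{something})$ — i.e. to a pure-gauge-type expression — and likewise $\etalin$ should reduce to $\qexd$ of a scalar. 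Then for $\tauhatlin=\sps\gabhatlin-\qn\otimeshat\hlin$ one substitutes $\gabhatlin=\qn\otimeshat\qv$ and shows $\sps\qn\otimeshat\qv=\qn\otimeshat(\text{the same `something'})$, using that $\sps$ is built from angular operators only (so it commutes with $\qn$ by Lemma \ref{lemmacommutingspoperators}) and that $\sps$ applied to the purely-$\qn$-contracted piece of $\mling$ produces the matching term. A completely analogous bookkeeping handles $\trtaulin$ via $\trgablin=-2\qd\qv$, and $\sigmalin$ via $\trsglin=2\sdiv\sv+\tfrac{4}{r}\qv_P$.

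The main obstacle will be the careful commutation of the spherical differential operators ($\sdiv$, $\scurl$, $\sdiv\sn\otimeshat$, $\sps$) with the $\qm$-directional operators $\qn$, $\qexd$, $\qd$ in the presence of the explicit $r$-weights and the $\tfrac{1}{r}\ud r\otimes(\cdot)$ and $\tfrac{1}{r}\qn r\,(\cdot)$ terms — in other words, keeping track of all the connection-coefficient error terms generated when one pulls angular operators through radial derivatives on the warped-product Schwarzschild background. This is where Lemmas \ref{lemmacommA}, \ref{lemmacommutingspoperators} and the elliptic identities from Propositions \ref{propellipticestimatesonA}--\ref{propellipticestimatesonT} do the heavy lifting, but assembling them so that every error term cancels is the delicate part; the $l=0,1$ projection implicit in $\sps$ (which kills, e.g., Killing-type pieces of $\sv$) must also be invoked to discard residual low-mode contributions. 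Once the gauge-invariance is established, one notes (as a remark, not needed for the statement) that the $\sps$-prefactors guarantee all four quantities lie in $\LM$, consistent with Definition \ref{defnqmtensorssupportedonlgeq2} and Lemma \ref{lemmacommutingspoperators}, and that an entirely parallel — indeed easier — computation shows they also vanish on the linearised Kerr family $\Ke_{\mathsmaller{\mfM,\mathfrak{a}}}$ of Proposition \ref{propfullkerrfamily}, since by Remark \ref{rmkkerrkernal} each linearised Kerr metric component lies in the kernel of the appropriate member of the $\spi$ family.
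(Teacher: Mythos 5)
Your approach is the same as the paper's: substitute the pure gauge solution of Proposition \ref{proppuregaugesolution} generated by $\qv$ and $\sv$ into the definitions, compute that $\hlin=\sps\qv$ and $\etalin=0$, and then use the commutation formulae of Lemma \ref{lemmacommutingspoperators} to verify $\sps\gabhatlin=\qn\otimeshat\hlin$, $\sps\trgablin=-2\qd\hlin$ and $\sps\trsglin=2\sps\sdiv\sv+\tfrac{4}{r}\hlin_P$ together with $\slap\sdiv\sdiv\sghatlin=\slap\sdiv\sdiv\sn\otimeshat\sv$, whence all four quantities vanish. Only your anticipated intermediate forms are slightly off: $\hlin$ reduces to the $\qm$ 1-form $\sps\qv$ (not to $\qn\otimeshat$ of anything, which would be a symmetric 2-tensor), and $\etalin$ must be shown to vanish identically on pure gauge solutions — merely being $\qexd$-exact would not give gauge-invariance — which is indeed what the direct computation yields.
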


\begin{proof}
	Let $\Ga$ be the pure gauge solution, described as in Proposition \ref{proppuregaugesolution}, that is generated by the smooth $\qm$ 1-form $\qv$ and the smooth $\sm$ 1-form $\sv$. Then
	\begin{align*}
	\sdiv\sn\otimeshat\mling-
	\qn\sdiv\sghatlin=\sdiv\sn\otimeshat\sn\qv.
	\end{align*}
	Thus, for the solution $\Ga$
	\begin{align*}
	\hlin=\sps\qv,\qquad\etalin=0.
	\end{align*}
	Moreover, one computes:
	\begin{align*}
	\sps\gabhatlin&=\qn\otimeshat\hlin,\\
	\sps\trgablin&=-2\qd\hlin,\\
	\sps\trsglin&=2\sps\sdiv\sv+\frac{4}{r}\hlin_P,\qquad,\\
	\slap\sdiv\sdiv\sghatlin&=\slap\sdiv\sdiv\sn\otimeshat\sv.
	\end{align*}
	Applying the commutation formulae of Lemma \ref{lemmacommutingspoperators}, the proposition follows.
\end{proof}

\begin{remark}
	Observe that each of $\tauhatlin, \trtaulin, \etalin$ and $\sigmalin$ lie in the space $\LM$ and thus moreover vanish for all members of the linearised Kerr family (cf. Remark \ref{rmkkerrkernal}).
\end{remark}

\subsubsection{The gauge-invariant hierarchy}\label{Thegauge-invarianthierarchy}

To finally extract the Regge--Wheeler and Zerilli equations from the equations of linearised gravity requires embedding the gauge-invariant quantities of Proposition \ref{propdefngaugeinvariantquatities} into a hierarchy of such quantities. This (ultimately) leads to the following theorem.

\begin{theorem}\label{thmgaugeinvariantquantintermsofRWandZ}
	Let $\So$ be a smooth solution to the equations of linearised gravity. Then the following relations hold:
	\begin{align*}
	\tauhatlin&=\qn\otimeshat\qexd\Big(r\Psilin\Big)+6\mu\ud r\otimeshat\szetap{1}\qexd\Psilin,\\
	\trtaulin&=0,\\
	\etalin&=-\qhd\qexd\Big(r\Philin\Big),\\
	\sigmalin&=-2r\slap\Psilin+4\qn_{P}\Psilin+12\mu r^{-1}\ommu\szetap{1}\Psilin
	\end{align*}
	where $\Philin,\Psilin\in\LM$ are unique and satisfy the Regge--Wheeler and Zerilli equations respectively.
\end{theorem}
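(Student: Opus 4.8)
\textbf{Proof proposal for Theorem \ref{thmgaugeinvariantquantintermsofRWandZ}.}

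The plan is to take the four gauge-invariant quantities $\tauhatlin, \trtaulin, \etalin, \sigmalin$ from Proposition \ref{propdefngaugeinvariantquatities} and to extract from the system of gravitational perturbations \eqref{eqnforgabhatfgauge}--\eqref{eqnfwavegauge2} a \emph{closed hierarchy} of wave equations and elliptic relations that these quantities satisfy, then to solve it. The key observation is that because each of these quantities is obtained by applying the $\spi$-family of operators (which commute with $\qn$ and have the commutation relations of Lemma \ref{lemmacommutingspoperators} with $\slap$) to the metric components, the \emph{curvature} and \emph{gauge} terms that couple the equations of linearised gravity largely drop out upon this projection, leaving equations that involve only $\tauhatlin, \trtaulin, \etalin, \sigmalin$ themselves together with the auxiliary $\qm$ 1-form $\hlin$ of \eqref{defnofhandeta}. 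Concretely, I would first apply $\sps$ and the relevant Hodge/divergence operators to equations \eqref{eqnlingrav3} (for $\mling$) and \eqref{eqnlingrav4} (for $\sghatlin$), using the $l=0,1$-projection machinery of section \ref{TheprojectionoftensorfieldsonMontoandawayfromthel=0,1sphericalharmonics} to discard the kernel contributions, in order to derive a decoupled wave equation for $\hlin$, namely $\qbox\hlin+\slap\hlin-\frac{2}{r}(\qn\hlin)_P+\frac{2}{r^2}\ud r\,\hlin_P=0$, together with its divergence-free property $\qd\hlin=0$ coming from the wave gauge condition \eqref{eqnlingrav7} combined with \eqref{eqnlingrav6}. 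This last point is exactly the integrability condition that, via the Poincaré lemma on $\mcalq$ (which is simply connected since $\mcalq\cong\mathbb{R}\times\mathcal{H}$), produces a potential: $\etalin = -\qhd\qexd(r\Philin)$ with $\Philin\in\LM$, and substituting back into the wave equation for $\etalin$ (derived analogously) yields the Regge--Wheeler equation \eqref{eqnRWeqn} for $\Philin$. Here I should pause to note that $\etalin$ is the \emph{curl} part of $\hlin$ up to the gauge correction, whereas $\tauhatlin, \trtaulin, \sigmalin$ feed into the \emph{gradient}/Zerilli sector.

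For the Zerilli sector I would proceed similarly but with more care. First, applying $\qd$ to the wave equation for $\tauhatlin$ and using $\trtaulin$, together with the constraint relations coming from \eqref{eqnlingrav6}--\eqref{eqnlingrav7} projected appropriately, one should be able to show $\trtaulin=0$ directly (the $\sps$-projection of the trace equation \eqref{eqnfortrgABfgauge}/\eqref{eqnlingrav5} plus the divergence of $\hlin$ vanishing forces this). Then one is left with a symmetric traceless $\qm$-tensor $\tauhatlin$ satisfying a decoupled wave equation; decomposing $\tauhatlin$ on $\mcalq$ — which is 2-dimensional, so a symmetric traceless 2-tensor has two components — and using that it is essentially $\qn\otimeshat$ of a $\qm$ 1-form plus an $\ud r\otimeshat(\cdot)$ correction, I would introduce the scalar potential $\Psilin$ via $\tauhatlin=\qn\otimeshat\qexd(r\Psilin)+6\mu\,\ud r\otimeshat\szetap{1}\qexd\Psilin$. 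The factor $\szetap{1}$ (inverse of $\zslap=\slap+\frac{2}{r^2}(1-\frac{3M}{r})$, section \ref{Theellipticoperatorszslap}) is forced precisely by the requirement that the $\ud r\otimeshat$ correction be consistent with the angular-mode structure — this is where the elliptic operator $\szeta$ enters. Substituting this ansatz into the wave equation for $\tauhatlin$, commuting $\qn$ past $\szetap{1}$ using Lemma \ref{lemmacommformszeta}, and collecting the $\frac{\mu}{r}$-weighted error terms (which will generate the $\szetap{1}\Psilin$ and $\szetap{2}\Psilin$ terms), yields exactly the Zerilli equation \eqref{eqnZereqn}. The expression for $\sigmalin$ in terms of $\Psilin$ then follows by directly substituting the metric relations into the definition \eqref{eqndefnsigma}.

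The main obstacle will be the bookkeeping in the Zerilli sector: verifying that the $\mu$-weighted, lower-order terms generated when one substitutes the two-term ansatz for $\tauhatlin$ into its wave equation conspire — after commuting $\qn$ past the powers of $\szeta$ via Lemma \ref{lemmacommformszeta} and repeatedly using the identity $\zslap\szetap{1}=r^2\,\Id$ on $\LM$ — to produce precisely the coefficients $\frac{24}{r^3}\frac{M}{r}(r-3M)$ and $\frac{72}{r^5}\frac{M}{r}\frac{M}{r}(r-2M)$ and nothing else. Uniqueness of $\Philin$ and $\Psilin$ in $\LM$ follows from the fact that $\qhd\qexd$ and $\qn\otimeshat\qexd$ (the leading operators in the potential representations), restricted to $\LM$, have trivial kernel — this is again a Poincaré-lemma statement on the simply connected $\mcalq$ together with the observation that a function on $\mcalm$ whose $\qexd$ vanishes is constant along $\mcalq$, hence (being in $\LM$) zero. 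I would also need to confirm at the outset that $\hlin$, $\etalin$, $\tauhatlin$, $\trtaulin$, $\sigmalin$ all genuinely satisfy decoupled (source-free) wave equations; this is the analogue of equations \eqref{ovpoo}--\eqref{ovpoo1} quoted in the overview and follows by a direct but somewhat lengthy computation applying the $\spi$ operators to \eqref{eqnlingrav1}--\eqref{eqnlingrav5} and invoking the commutation lemmas of section \ref{Commutationformulaeandusefulidentities}.
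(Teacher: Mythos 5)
Your Regge--Wheeler sector is essentially the paper's argument ($\qd\etalin=0$ from the system, Poincar\'e lemma on the simply connected $\mcalq$, substitute the potential back into the wave equation for $\etalin$), modulo a notational slip: the decoupled wave equation and the divergence-free property you attribute to $\hlin$ in fact hold for $\etalin$; $\hlin$ itself is \emph{not} gauge-invariant (it equals $\sps\qv$ for a pure gauge solution) and is not co-closed, so your stated intermediate claims about $\hlin$ are false as written even though the conclusion you draw from them concerns the right object.

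The genuine gap is in the Zerilli sector. You \emph{posit} the representation $\tauhatlin=\qn\otimeshat\qexd\big(r\Psilin\big)+6\mu\,\ud r\otimeshat\szetap{1}\qexd\Psilin$ as an ansatz and then check that substitution into the wave equation produces the Zerilli equation. But the existence of a scalar $\Psilin$ realising $\tauhatlin$ and $\sigmalin$ in this form is precisely the content of the theorem; verifying consistency of an ansatz does not show that every solution admits it. Moreover your premise that ``one is left with $\tauhatlin$ satisfying a decoupled wave equation'' is false: by \eqref{eqnwaveeqnfortauhat} and \eqref{eqnwaveeqnforsigma} the pair $(\tauhatlin,\sigmalin)$ remains coupled ($\tauhatlin$ is sourced by $\qexd\sigmalin$ and $\sigmalin$ by $\tauhatlin_{PP}$), and the decoupling occurs only at the level of a scalar potential. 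The missing mechanism is the divergence identity \eqref{eqndivtauhat}, $\qd\tauhatlin+\tfrac{1}{2}\qexd\sigmalin=0$: contracting it with $P$ and $\qhd\ud r$ shows that the $\qm$ 1-form $\zetalin:=\tauhatlin_P-\tfrac{r}{2}\qexd\sigmalin$ is \emph{closed} (relations \eqref{wee}), whence the Poincar\'e lemma yields a potential $\varphilin$; the $P$-contraction of \eqref{eqnwaveeqnfortauhat} combined with \eqref{eqnwaveeqnforsigma} then gives the relations \eqref{allow}, from which a decoupled scalar equation for $\varphilin$ follows, and one sets $\Psilin:=\szetap{1}\varphilin$ and reverses the steps to recover the formulae for $\sigmalin$ and $\tauhatlin$. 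Your proposal never identifies this closed 1-form, so the existence of $\Psilin$ is unestablished. Finally, your uniqueness argument is also incorrect: the kernel of $\qexd$ on functions consists of functions of the angular variables alone, and such a function can have vanishing $l=0,1$ projection without vanishing, so membership in $\LM$ does not kill the residual freedom; in the paper this freedom (the functions $f$ and $g$) is instead fixed by demanding that the \emph{source-free} Regge--Wheeler and Zerilli equations hold after substitution.
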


\begin{proof}
Exploiting the commutation formulae of Lemma \ref{lemmacommutingspoperators}	and imposing the system of gravitational perturbations  \eqref{eqnforgabhatfgauge}-\eqref{eqnfwavegauge2} yields\footnote{One could of course derive this by hand but the much more efficient approach is to perform the computations in the Regge--Wheeler gauge of section \ref{TheKerradaptedRWgauge} and then appeal to the gauge-invariance of all quantities under consideration.} the following system of wave equations
	\begin{align}
	\qbox\tauhatlin+\slap\tauhatlin+\frac{2}{r}\qn_{P}\tauhatlin-\frac{2}{r}P\cdot\qn\otimeshat\tauhatlin-\frac{4}{r}\frac{\mu}{r}\tauhatlin&=-\frac{1}{r}\ud r\otimeshat\qexd \sigmalin,\label{eqnwaveeqnfortauhat}\\
	\qbox\trtaulin+\slap\trtaulin&=0,\\
	\qbox\etalin+\slap\etalin-\frac{2}{r}\Big(\qn\etalin\Big)_P+\frac{2}{r^2}\ud r\text{ }\etalin_P&=0,\label{eqnwaveeqnforeta}\\
	\qbox \sigmalin+\slap \sigmalin+\frac{2}{r}\qn_{P}\sigmalin+\frac{2}{r^2}\sigmalin&=\frac{4}{r^2}\tauhatlin_{PP}\label{eqnwaveeqnforsigma}
	\end{align}
	along with the relations
	\begin{align}
	\qd\tauhatlin+\frac{1}{2}\qexd \sigmalin&=0,\label{eqndivtauhat}\\
	\trtaulin&=0\label{eqnfortrtau},\\
	\qd\etalin&=0\label{eqnforeta}.
	\end{align}
	In particular, $\etalin$ is co-closed. Arguing as in the Poincar\'e lemma for the simply connected manifold $\mcalq$, it thus  follows that there exists a unique, smooth function $\Philin\in\LM$ and a smooth function $f$ on $S^2$, with vanishing projection to $l=0,=1$, such that
	\begin{align*}
	\etalin=-\qhd\qexd\Big(r\Philin+f\Big).
	\end{align*}
	Inserting this ansatz into \eqref{eqnwaveeqnforeta} and integrating yields
	\begin{align}\label{eqnrwslaprwintermsofeta}
	\qbox\Philin+\slap\Philin+\frac{3}{r}\frac{\mu}{r}\Philin=\frac{1}{r}F-r\slap f-\frac{2}{r}f.
	\end{align}
	Here, $F$ is a smooth function on $S^2$ with vanishing projection to $l=0,1$. This uniquely fixes $f$ by demanding $
	r^2\slap f+2=F$.
	
	Conversely, contracting \eqref{eqndivtauhat} successively with $P$ and $\qhd\ud r$ results in the relations
	\begin{align}\label{wee}
	-2\qd \zetalin+r\qbox \sigmalin=0,\qquad
	\qexd \zetalin&=0
	\end{align}
	where $\zetalin:=\tauhatlin_P-\frac{r}{2}\qexd\sigmalin$. Consequently, contracting now \eqref{eqnwaveeqnfortauhat} with $P$ and utilising \eqref{eqnwaveeqnforsigma} yields
	\begin{align}\label{allow}
	-\frac{1}{r^2}\qexd\bigg(4r \zetalin_P-r^3\zslap\sigmalin\bigg)+2\slap \zetalin=0,\qquad-\frac{1}{r^2}\qexd\bigg(r^2\qd \zetalin+3M\sigmalin\bigg)+\slap \zetalin=0.
	\end{align}
	Now as $\zetalin$ is closed there exists a unique, smooth $\varphilin\in\LM$ and a smooth function $g$ on $S^2$ with vanishing projection to $l=0,1$ such that
	\begin{align}\label{eqnzetaintermsofvarphi}
	\zetalin=\qexd \big(\varphilin+g\big)
	\end{align}
	for which, by relations \eqref{allow}, the following decoupled equation must hold
	\begin{align}
	\zslap\qbox \varphilin+\slap\slap\varphilin+\frac{2}{r^2}\slap\varphilin-\frac{6\mu}{r^3}\qn_{P}\varphilin=0.\label{eqndecoupledeqnforvarphi}
	\end{align}
	Here, we have once more used an implicit integration constant to uniquely specify $g$.
	
	Thus, recalling that the operator $\szeta$ is a bijection on the space $\LM$, the unique function $\Psilin\in\LM$ defined by
	\begin{align}\label{mum}
	\szetap{1}\varphilin:=\Psilin
	\end{align}
	satisfies (using the commutation formulae of Lemma \ref{lemmacommformszeta})
	\begin{align*}
	\qbox\Psilin+\slap\Psilin+\frac{3}{r}\frac{\mu}{r}\Psilin-\frac{6}{r}\frac{\mu}{r}(2-3\mu)\szetap{1}\Psilin-18\frac{\mu}{r}\frac{\mu}{r}\ommu\szetap{2}\Psilin=0.
	\end{align*}
	
	To conclude the theorem we simply reverse our steps. Indeed, from \eqref{allow} we find
	\begin{align*}
	\sigmalin=-2r\slap\Psilin+4\qn_{P}\Psilin+12\mu r^{-1}\ommu\szetap{1}\Psilin.
	\end{align*}
	This combined with the definition of $\zetalin$ finally yields
	\begin{align*}
	\tauhatlin=\qn\otimeshat\qexd\Big(r\Psilin\Big)+6\mu\qexd r\otimeshat\szetap{1}\qexd\Psilin.
	\end{align*}
\end{proof}

We make the following remark.
\begin{remark}
	As observed in \cite{C--O--S}, one can exploit relations \eqref{wee}-\eqref{allow} to construct a solution to the Regge--Wheeler equation by considering the quantity
	\begin{align*}
	\lin{\phi}=-r^2\qd\zetalin.
	\end{align*}
One might then be tempted to avoid the use of the Zerilli equation altogether. The problem however, is that it is not clear how to construct the gauge invariant quantities $\tauhatlin$ and $\sigmalin$ from $\philin$ in a manner that is useful for the later analysis. This is manifestly not true if one instead introduces the Zerilli quantity $\Psilin$.
\end{remark}

We end this section with the following corollary, the proof of which is due to Moncrief \cite{Moncrief}.
\begin{corollary}\label{corrphipsigaugeinvariant}
	Let $\So$ be a smooth solution to the system of gravitational perturbations. Then $\Philin$ and $\Psilin$ are gauge-invariant.
\end{corollary}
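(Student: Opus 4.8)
\textbf{Proof proposal for Corollary \ref{corrphipsigaugeinvariant}.}

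The plan is to derive the gauge-invariance of $\Philin$ and $\Psilin$ directly from the gauge-invariance of the four quantities $\tauhatlin, \trtaulin, \etalin, \sigmalin$ already established in Proposition \ref{propdefngaugeinvariantquatities}, together with the explicit relations furnished by Theorem \ref{thmgaugeinvariantquantintermsofRWandZ} and the invertibility of the elliptic operator $\szeta$ from section \ref{Theellipticoperatorszslap}. The key observation is that Theorem \ref{thmgaugeinvariantquantintermsofRWandZ} expresses the gauge-invariant quantity $\etalin$ as
\begin{align*}
\etalin=-\qhd\qexd\Big(r\Philin\Big)
\end{align*}
and one would like to invert this to recover $\Philin$ from $\etalin$. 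First I would apply the operator $\qexd$ (or, equivalently, contract with a suitable combination of frame vectors) to extract $\qexd\big(r\Philin\big)$ from $\etalin$, noting that the map $\qhd$ is invertible on $\qm$ 1-forms; since $\etalin$ is co-closed (relation \eqref{eqnforeta}) the Poincar\'e lemma on the simply connected $\mcalq$ yields a potential, and the requirement that $\Philin\in\LM$ fixes the integration constant uniquely, exactly as in the proof of Theorem \ref{thmgaugeinvariantquantintermsofRWandZ}. Because $\etalin$ vanishes for every pure gauge solution $\Ga$, so must $r\Philin$, and hence $\Philin$ itself vanishes for every such solution --- which is precisely the assertion that $\Philin$ is gauge-invariant.

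For $\Psilin$ the argument is the same in spirit but passes through one more inversion. From Theorem \ref{thmgaugeinvariantquantintermsofRWandZ} one has
\begin{align*}
\sigmalin=-2r\slap\Psilin+4\qn_{P}\Psilin+12\mu r^{-1}\ommu\szetap{1}\Psilin,
\end{align*}
so that $\Psilin$ is determined by the gauge-invariant quantity $\sigmalin$ via an elliptic-type relation. More transparently, following the proof of Theorem \ref{thmgaugeinvariantquantintermsofRWandZ} one recovers the intermediate quantity $\varphilin$ (defined there through $\zetalin=\qexd(\varphilin+g)$, with $\zetalin=\tauhatlin_P-\tfrac{r}{2}\qexd\sigmalin$ built purely from the gauge-invariant $\tauhatlin$ and $\sigmalin$) and then sets $\Psilin=\szetap{1}\varphilin$, using that $\szeta$ is a bijection on $\LM$ by the Lemma preceding Proposition \ref{propellipticestimateszlsap}. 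Since $\tauhatlin$ and $\sigmalin$ both vanish on any pure gauge solution $\Ga$, so does $\zetalin$, hence $\varphilin$ (again after fixing the integration constant $g$ by demanding membership in $\LM$), and therefore $\Psilin=\szetap{1}\varphilin$ vanishes on $\Ga$ as well.

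I would present this cleanly by remarking that \emph{every} step in the derivation of Theorem \ref{thmgaugeinvariantquantintermsofRWandZ} --- the Poincar\'e-lemma potentials, the integrations, and the application of $\szetap{1}$ --- produces $\Philin$ and $\Psilin$ as the output of operators applied only to the manifestly gauge-invariant quantities $\tauhatlin, \trtaulin, \etalin, \sigmalin$, with all ambiguities pinned down by the $\LM$ constraint; gauge-invariance of $\Philin$ and $\Psilin$ is then inherited automatically. The one point that requires a little care --- and I expect it to be the main (though modest) obstacle --- is justifying that the integration constants arising in the Poincar\'e-lemma steps (the functions $f, g$ on $S^2$ and the constants $F, g$) are themselves determined in a gauge-invariant way, i.e.\ that they do not secretly depend on the choice of representative within a gauge equivalence class. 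This is handled by observing that these constants are fixed by imposing that $\Philin, \Psilin$ (and the auxiliary potentials) lie in $\LM$, a condition intrinsic to the gauge-invariant data; alternatively, and perhaps most efficiently as the footnote to Theorem \ref{thmgaugeinvariantquantintermsofRWandZ} suggests, one may simply evaluate everything in the Regge--Wheeler gauge of section \ref{TheKerradaptedRWgauge}, where $\Philin$ and $\Psilin$ are built from the surviving metric components and their gauge-invariance follows from the gauge-invariance of the $\tauhatlin, \etalin, \sigmalin$ they coincide with there.
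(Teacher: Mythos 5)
Your argument is essentially correct, but it takes a different (softer) route than the paper. The paper's proof, following Moncrief, is shorter and more explicit: it establishes the two identities
\begin{align*}
\slap\Philin+\frac{2}{r^2}\Philin=-r\qhd\qexd\bigg(\frac{1}{r^2}\etalin\bigg),\qquad \slap\Psilin=\frac{2}{r}\szetap{1}\tauhatlin_{PP}-\frac{1}{2}\frac{1}{r}\sigmalin-\qn_{P}\szetap{1}\sigmalin,
\end{align*}
so that $\Philin$ and $\Psilin$ are obtained by applying invertible elliptic operators (over $\LM$) directly to the manifestly gauge-invariant quantities $\etalin$, $\tauhatlin$, $\sigmalin$; gauge-invariance is then immediate, and as a bonus one gets explicit reconstruction formulas. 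Your route instead re-runs the construction of Theorem \ref{thmgaugeinvariantquantintermsofRWandZ} and appeals to its uniqueness clause together with linearity, reducing the claim to $\Philin[\Ga]=\Psilin[\Ga]=0$ for pure gauge solutions. That works, since $(\Philin,\Psilin)=(0,0)$ is an admissible choice when $\etalin=\tauhatlin=\sigmalin=0$ and uniqueness forces it to be \emph{the} choice.

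One step in your write-up is too quick, however. You assert that the Poincar\'e-lemma ambiguities are ``pinned down by the $\LM$ constraint.'' They are not: when $\etalin=0$ the potential $r\Philin+f$ is merely $\qm$-constant, i.e.\ an arbitrary function on $S^2$, and membership in $\LM$ only removes its $l=0,1$ part. What actually kills the remaining $l\geq2$ ambiguity (the split between $r\Philin$ and $f$, and likewise between $\varphilin$ and $g$) is the requirement that $\Philin$ and $\varphilin$ satisfy the Regge--Wheeler and decoupled Zerilli-type equations with no residual angular source --- precisely the normalisation built into the uniqueness clause of Theorem \ref{thmgaugeinvariantquantintermsofRWandZ}. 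If you invoke that clause explicitly your argument closes; your alternative suggestion of ``evaluating everything in the Regge--Wheeler gauge'' does not, by itself, prove gauge-invariance, since that is exactly the statement that the answer is independent of the representative chosen.
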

\begin{proof}
We claim that:
\begin{align}\label{egg}
\slap\Philin+\frac{2}{r^2}\Philin=-r\qhd\qexd\bigg(\frac{1}{r^2}\etalin\bigg),\qquad \slap\Psilin=\frac{2}{r}\szetap{1}\tauhatlin_{{PP}}-\frac{1}{2}\frac{1}{r}\sigmalin-\qn_{P}\szetap{1}\sigmalin
\end{align}
from which the corollary would follow owing to the invertibility of the Laplacian over the space $\LM$.

Indeed, for the former one simply applies the curl operator $\qhd\qexd$ to $\etalin$. For the latter, we note the relations
\begin{align*}
\zetalin_P=\tauhatlin_{PP}-\frac{r}{2}\qn_{P}\sigmalin,\qquad4r\zetalin_P=r^3\zslap\sigmalin+2r^2\slap\varphilin.
\end{align*}
\end{proof}

\begin{remark}
Of course, one could use the relations \eqref{egg} to define the quantities $\Philin$ and $\Psilin$ thus forgoing the introduction of the quantities $\tauhatlin, \trtaulin, \etalin$ and $\sigmalin$. However the latter, now viewed simply as tensorial expressions in $\Philin$ and $\Psilin$, will in fact appear frequently throughout the text.
\end{remark}

\section{Initial data and well-posedness of linearised gravity}\label{Initialdataandwell-posednessoflinearisedgravity}

In this section we establish a well-posedness theorem for the equations of linearised gravity as a Cauchy initial value problem, thereby initiating their formal analysis.

\subsection{Initial data for linearised gravity}\label{Initialdataforlinearisedgravity}

We begin with a specification of Cauchy initial data for the equations of linearised gravity, a process which is complicated by the existence of \emph{constraints}.

\subsubsection{Admissible initial data sets for the equations of linearised gravity}\label{Admissibleinitialdatasetsforthesystemofgravitationalperturbations}

First we provide a definition of \emph{admissible} initial data for the equations of linearised gravity. Such data is to be defined on the initial Cauchy hypersurface $\Sigma$ of section \ref{TheCauchyhypersurfaceSigma}.
 
In what follows, we remind the reader of the contents of section \ref{TheCauchyhypersurfaceSigmaandsmitensoranalysis} on $\smi$-tensor analysis, in particular the decomposition of the second fundamental form of $\Sigma$, along with the definition of the lapse quantity $\bhlin$ and the future-pointing unit normal to $\Sigma$. 
\begin{definition}\label{defnadmissibleinitialdatasets}
We consider the following quantities:
\begin{itemize}
	\item the smooth functions $\llin, \bblin, \bhlin, \strhlin, \bklin$ and $\trklin$ on $\Sigma$
	\item the smooth $\smi$ 1-forms $\mblin, \mhlin$ and $\mklin$
	\item the smooth, symmetric, traceless 2-covariant $\smi$-tensor fields $\shhatlin$ and $\skhatlin$
\end{itemize}
Then we say that the collection
\begin{align*}
\mathscr{A}:=\bigg(\llin, \bblin, \mblin, \bhlin, \mhlin, \shhatlin, \strhlin, \bklin, \mklin, \skhatlin, \trklin\bigg)
\end{align*}
is an admissible initial data set for the equations of linearised gravity if the following identities hold:
\begin{align}
0&=\mathscr{H}\strhlin-\frac{2}{r^3}\snnu\bigg(\frac{1}{\bh}r^3\sdiv\mhlin\bigg)+\frac{2}{\bh}\slap\bhlin-\sdiv\sdiv\shhatlin-\frac{4}{r^2}\bh^2\snnu\bigg(\frac{r}{\bh^4}\bhlin\bigg)-3\bk\,\bklin+\frac{4}{3}\trk\trklin,\label{eqnconstraint1}\\
0&=\sdiv\mklin-\frac{2}{3}\snnu\trklin+\frac{1}{r^3}\snnu\bigg(r^3\bklin\bigg)+\frac{3}{4}\bk\snnu\strhlin-\frac{3}{2}\frac{\bk}{\bh}\sdiv\mhlin,\label{eqnconstraint2}\\
0&=\sdiv\skhatlin+\frac{1}{r^3}\snnu\Big(r^3\mklin\Big)-\frac{3}{2}\sn\bklin-\frac{3}{2}\frac{\bk}{\bh}\sn\bhlin-\frac{1}{2}\bigg(\frac{2}{3}\trk-\bk\bigg)\sdiv\shhatlin\label{eqnconstraint3}.
\end{align}
Here, $\mscrH$ is the operator defined by
\begin{align}\label{eqndefnH}
\mathscr{H}\cdot=\frac{1}{r^3}\snnu\Big(r^3\snnu \cdot\Big)+\frac{1}{2}\slap \cdot+\frac{1}{r^2}\bigg(1-\frac{r}{1+\mu}\snnu\bh\bigg)\cdot.
\end{align}
Given such an admissible data set, initial data on the hypersurface $\Sigma$ for the equations of linearised gravity is defined as follows.

\noindent For the symmetric, traceless 2-covariant $\qm$-tensor $\gabhatlin$
\begin{align*}
\gabhatlin_{nn}\Big|_\Sigma:&=-\frac{1}{N}\llin+\frac{1}{\bh}\bhlin,\\
\Big(\mcall_n\gabhatlin\Big)_{nn}\Big|_\Sigma:&=\uwlin_1,
\end{align*}
\begin{align*}
\gabhatlin_{n\nu}\Big|_\Sigma:&=\bblin-\frac{2\mu}{\bh}\bhlin,\\
:&=\bslin,\\
\Big(\mcall_n\gabhatlin\Big)_{n\nu}\Big|_\Sigma:&=\owlin
\end{align*}
and
\begin{align*}
\gabhatlin_{\nu\nu}\Big|_\Sigma:&=-\frac{1}{N}\llin+\frac{1}{\bh}\bhlin,\\
\Big(\mcall_n\gabhatlin\Big)_{\nu\nu}\Big|_\Sigma:&=2\bklin+\frac{2}{3}\trklin+\snnu\bslin-2\mu\bh\snnu\bigg(\frac{1}{\bh}\bhlin\bigg)+\frac{2}{\bh}\bigg(\bk+\frac{1}{3}\trk\bigg)\bhlin-\frac{1}{2}\uwlin_2.
\end{align*}
For the function $\trgablin$
\begin{align*}
\trgablin\Big|_\Sigma:&=\frac{2}{N}\llin+\frac{2}{\bh}\bhlin,\\
\Big(\mcall_n\trgablin\Big)\Big|_\Sigma:&=\uwlin_2.
\end{align*}
For the $\qmsm$ 1-form $\mling$
\begin{align*}
\mling_{n}\Big|_\Sigma:&=\bh\cdot\mblin-\frac{\mu}{\bh}\mhlin,\\
:&=\mslin,\\
\Big(\mcall_n\mling\Big)_{n}\Big|_\Sigma:&=\swlin
\end{align*}
and
\begin{align*}
\mling_{\nu}\Big|_\Sigma:&=\frac{1}{\bh}\mhlin,\\
\Big(\mcall_n\mling\Big)_\nu\Big|_\Sigma:&=2\mklin+\snnu\mslin+\sn\bslin-\frac{1}{r}\mslin-\frac{1}{2}\frac{1}{\bh}\bigg(\bk-\frac{2}{3}\trk\bigg)\mhlin.
\end{align*}
For the symmetric, traceless 2-covariant $\sm$-tensor $\sghatlin$
\begin{align*}
\sghatlin\Big|_\Sigma:&=\shhatlin,\\
\Big(\mcall_n\sghatlin\Big)\Big|_\Sigma:&=2\skhatlin-2\sdst\mslin.
\end{align*}
Finally, for the function $\trsglin$
\begin{align*}
\trsglin\Big|_\Sigma:&=\strhlin,\\
\Big(\mcall_n\trsglin\Big)\Big|_\Sigma:&=\frac{4}{3}\trklin-2\bklin+2\sdiv\mslin+\frac{1}{r}\frac{4}{\bh}\bslin-\frac{2}{N}\bigg(\bk-\frac{2}{3}\trk\bigg)\llin.
\end{align*}
Here, $\uwlin_1, \uwlin_2, \owlin$ and $\swlin$ are chosen such that the wave gauge conditions \eqref{eqnlingrav6} and \eqref{eqnlingrav7} are satisfied on $\Sigma$.
\end{definition}
\begin{remark}
The collection $\Soa$ correspond to linearised versions of the lapse, shift and various decompositions of the induced metric and second fundamental form of $\Sigma$. In particular, the coupled system of equations \eqref{eqnconstraint1}-\eqref{eqnconstraint3} arise as the linearisation of the Gauss--Codazzi equations.
Conversely, the four remaining constraints arise as a consequence of the divergence relation \eqref{eqnlorentzgauge}.
\end{remark}

\subsubsection{Seed data for linearised gravity}\label{Seeddataforlinearisedgravity}

In view of these constraints on initial data we provide in this section a notion of freely prescribed \emph{seed} data for the equations of linearised gravity. 

In what follows, we remind the reader of the space $\LS$ defined in section \ref{Thel=01sphericalharmonicsandqmsmitensors}.

\begin{definition}\label{defnseeddata}
A smooth seed data set for the equations of linearised gravity consists of prescribing:
\begin{itemize}
	\item four smooth functions $\oPhilin, \uPhilin, \oPsilin, \uPsilin\in\LS$
	\item four smooth functions $\fvulin, \fvblin, \fwulin$ and $\fwblin$ on $\Sigma$
	\item two smooth $\smi$ 1-forms $\fvslin$ and $\fwslin$
	\item a smooth function $\mathfrak{a}$ on the horizon sphere $\hplusi$ given as a linear combination of the $l=1$ spherical harmonics (cf. section \ref{Thel=0,1sphericalharmonicsandthesphericalharmonicdecomposition})
	\item a constant $\mfM$
\end{itemize}
We will denote such a smooth seed initial data set by the collection
\begin{align*}
\Seed.
\end{align*}
\end{definition}

\subsection{The well-posedness theorem}\label{Thewell-posednesstheorem}

In this section we establish the foundational well-posedness result for the equations of linearised gravity.

\begin{theorem}\label{thmwellposedness}
	Let $\mathscr{D}$ be a smooth seed data set on $\Sigma$. Then
	\begin{enumerate}[i)]
		\item there exists a unique extension of $\Sos$ to a smooth admissible initial data set $\Soa$ on $\Sigma$
		\item the initial data set $\Soa$ gives rise to a unique, smooth solution $\So$ to the equations of linearised gravity on $\mcalm\cap D^+(\Sigma)$.
	\end{enumerate}
\end{theorem}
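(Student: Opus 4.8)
The plan is to establish the two parts separately, with part i) being a pure construction-and-constraint-propagation argument on the hypersurface $\Sigma$, and part ii) following the classical Choquet-Bruhat strategy of first solving the ``reduced'' wave system \eqref{eqnlingrav1}--\eqref{eqnlingrav5} and then showing that admissibility of the data forces the gauge conditions \eqref{eqnlingrav6}--\eqref{eqnlingrav7} to propagate. For part i), I would proceed as follows. From the seed data $\Sos = \Seed$ one must produce the admissible set $\Soa$ satisfying the linearised Gauss--Codazzi constraints \eqref{eqnconstraint1}--\eqref{eqnconstraint3} together with the four further constraints coming from \eqref{eqnlorentzgauge}. The key is that, by linearity, it suffices to construct three explicit families of solutions of the linearised constraint equations and sum them: (a) the restriction to $\Sigma$ of a linearised Kerr solution $\Ke_{\mathsmaller{\mfM,\mathfrak{a}}}$ from Proposition \ref{propfullkerrfamily}, with $\mfM$ and the $Y^1_i$-modes of $\mathfrak{a}$ read off from the seed constants; (b) the restriction to $\Sigma$ of the ``third class'' of solutions obtained by inverting the operators $\mcaltf, \mcaltv$ (Corollary \ref{corrinvertingToperators}) applied to the tensorial expressions \eqref{eqndefntauhat}--\eqref{eqndefnsigma} built from functions $\Phi,\Psi$ whose Cauchy data on $\Sigma$ is $(\oPhilin,\uPhilin)$ and $(\oPsilin,\uPsilin)$ — here one uses that $\mcaltf, \mcaltv$ have kernels spanned by $l=0,1$, so this piece lies in $\LM$ and automatically solves the constraints because the gauge-invariant hierarchy of Theorem \ref{thmgaugeinvariantquantintermsofRWandZ} is compatible with the wave system; and (c) a pure gauge solution of the form \eqref{OVeqnpuregauge} generated by a vector field $v = \qv + \sv$ whose Cauchy data on $\Sigma$ is determined by the remaining free seed $\fvulin,\fvblin,\fvslin,\fwulin,\fwblin,\fwslin$, which solves the constraints tautologically since $\mcall_v g_M$ is the linearisation of a diffeomorphism-pullback of a genuine solution. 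One then checks that the sum (a)+(b)+(c) realises, via the decomposition formulae of Definition \ref{defnadmissibleinitialdatasets}, every admissible data component, and that the map from seed to admissible data is a bijection onto the constraint-satisfying sets; uniqueness of the extension follows because the decomposition of Definition \ref{defnadmissibleinitialdatasets} is invertible and the three classes span complementary ``gauge-invariant / pure-gauge / linearised-Kerr'' directions.

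For part ii), I would invoke standard linear hyperbolic theory. The system \eqref{eqnlinearisedeinsteinequations}, i.e. $\Box\glin - 2\,\mathrm{Riem}\cdot\glin = 2\nabla_{(\alpha}\flin_{\beta)}$ with $\flin = Df|_{g_M}(\glin)$ as in \eqref{chin}, is a linear, tensorial wave equation on the globally hyperbolic Lorentzian manifold-with-boundary $\Mgs$ whose principal part is $\Box$ and whose zeroth-order coefficients (the curvature term and the first-order-in-$\glin$ term hidden in $\mcall_{\flin}g_M$, which is zeroth order in derivatives of $\glin$ since $\flin$ is algebraic in $\glin$) are smooth. Because $\flin$ depends only on $\glin$ and not its derivatives, \eqref{eqnlinearisedeinsteinequations} is genuinely a closed linear wave system; Cauchy data for it is exactly $(\glin|_\Sigma, \mcall_n\glin|_\Sigma)$, which Definition \ref{defnadmissibleinitialdatasets} prescribes in full. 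Global existence and uniqueness of a smooth solution on $\mcalm \cap D^+(\Sigma)$ then follows from the classical energy-estimate theory for linear wave equations on globally hyperbolic spacetimes (together with the standard treatment of the timelike boundary $\hplus$, which in Schwarzschild-star coordinates is a \emph{null} hypersurface and hence causes no boundary condition to be needed — $\hplus$ is a characteristic ingoing boundary and data simply flows off it). This yields a unique smooth $\So$ solving \eqref{eqnlinearisedeinsteinequations}.

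It remains to show this $\So$ satisfies the full equations of linearised gravity, i.e. also obeys the divergence relation \eqref{eqnlorentzgauge} — equivalently the decomposed gauge conditions \eqref{eqnlingrav6}--\eqref{eqnlingrav7}. Set $\mathcal{V}_\alpha := \nabla^\beta\glin_{\alpha\beta} - \tfrac12\nabla_\alpha\trglin - \flin_\alpha$. The point, exactly as in Choquet-Bruhat's argument, is that the linearised contracted Bianchi identity applied to \eqref{eqnlinearisedeinsteinequations} forces $\mathcal{V}$ to satisfy a \emph{homogeneous} linear wave equation $\Box\mathcal{V} + (\text{lower order})\,\mathcal{V} = 0$ (using $\mathrm{Ric}[g_M] = 0$ so that the background curvature terms conspire); and admissibility of the initial data — precisely the role of the ``four remaining constraints'' noted in the remark after Definition \ref{defnadmissibleinitialdatasets}, namely that $\uwlin_1,\uwlin_2,\owlin,\swlin$ were chosen to make \eqref{eqnlingrav6}--\eqref{eqnlingrav7} hold on $\Sigma$, together with the Gauss--Codazzi constraints ensuring $\mcall_n\mathcal{V}|_\Sigma = 0$ — gives $\mathcal{V}|_\Sigma = 0$ and $(\mcall_n\mathcal{V})|_\Sigma = 0$. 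Uniqueness for this homogeneous wave equation then gives $\mathcal{V} \equiv 0$ on $D^+(\Sigma)$, so \eqref{eqnlorentzgauge} holds and $\So$ is a genuine solution of the equations of linearised gravity; uniqueness of $\So$ among such solutions is inherited from uniqueness for \eqref{eqnlinearisedeinsteinequations}. The main obstacle I anticipate is bookkeeping rather than conceptual: verifying that the constraints \eqref{eqnconstraint1}--\eqref{eqnconstraint3} plus the four gauge-constraints on $\Sigma$ are exactly equivalent to $\mathcal{V}|_\Sigma = (\mcall_n\mathcal{V})|_\Sigma = 0$ — this requires carefully tracking how the second-order equation \eqref{eqnlinearisedeinsteinequations} lets one trade a transversal derivative $\mcall_n^2\mathcal{V}$ for tangential data — and, on the construction side, checking that the three explicit families of part i) genuinely span all constraint solutions and that the resulting seed-to-data map is bijective.
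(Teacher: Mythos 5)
Your proposal is correct and follows essentially the same route as the paper: part i) is proved by summing the same three explicit classes of constraint solutions (linearised Kerr, the Regge--Wheeler/Zerilli class obtained by inverting the $\mcalt$ operators, and pure gauge), and part ii) is the Choquet--Bruhat argument of solving the reduced wave system and showing the gauge-deviation quantity satisfies a homogeneous wave equation with trivial Cauchy data forced by admissibility. The only inessential difference is that the paper runs the gauge-propagation step on the $2+2$-decomposed quantities $\qwlin,\swlin$ rather than the undecomposed $1$-form $\mathcal{V}$, and note that the uniqueness asserted in part i) is uniqueness of the seed-to-data map, so you do not actually need to show the three classes span all constraint solutions.
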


\begin{proof}
The proof of the first part of the theorem is to be found in section \ref{AppendixConstructingadmissibleinitialdatafromseeddata} of the appendix.

To prove the second part of the theorem we begin by uniquely constructing the collection of tensors fields
\begin{align*}
\gabhatlin, \trgablin, \mling, \sghatlin, \trsglin
\end{align*}
on $D^+(\Sigma)$ where
\begin{itemize}
	\item $\gabhatlin$ is a smooth, symmetric, traceless 2-covariant $\qm$-tensor field
	\item $\trgablin$ is a smooth function
	\item $\mling$ is a smooth $\qm\otimes\sm$ 1-form
	\item $\sghatlin$ is a smooth, symmetric, traceless 2-covariant $\sm$-tensor field
	\item $\trsglin$ is a smooth function
\end{itemize}
by solving the following coupled system of (tensorial) wave equations with Cauchy data given by the initial data set $\Soa$ according to Definition \ref{defnadmissibleinitialdatasets}:
\begin{align}
\qbox\gabhatlin+\slap\gabhatlin+\frac{2}{r}\qn_{P}\gabhatlin-\frac{2}{r}\big(\qn\otimeshat\gabhatlin\big)_P-\frac{4}{r}\frac{\mu}{r}\gabhatlin&=\frac{1}{r}\ud r\qotimeshat\qexd\trgablin+\frac{2}{r}\ud r\qotimeshat\sdiv\mling-\frac{1}{r}\ud r\qotimeshat\qexd\trsglin-\qn\otimeshat\Big(r^{-1}\big({\gabhatlin_{l=0}}\big)_P\Big),\label{pork1}\\
\qbox\trgablin+\slap\trgablin&=-\frac{2}{r}\qd\Big(\big({\gabhatlin_{l=0}}\big)_P\Big)+\frac{2}{r^2}\big(\gabhatlin_{l=0}\big)_P\label{pork2}.
\end{align}
\begin{align}
\qbox\mling+\slap\mling-\frac{2}{r}\big(\qn{\otimes}\mling\big)_P-\frac{1}{r^2}\mling+\frac{2}{r^2}\ud r{\otimes}\mling_P=\frac{2}{r}\ud r\,{\otimes}\sdiv\sghatlin\label{pork3}.
\end{align}
\begin{align}
\qbox\sghatlin+\slap\sghatlin-\frac{2}{r}\qn_{P}\sghatlin-\frac{4}{r}\frac{\mu}{r}\sghatlin&=0,\label{pork4}\\
\qbox\trsglin+\slap\trsglin+\frac{2}{r}\qn_{P}\trsglin+\frac{2}{r^2}\trsglin&=\frac{4}{r^2}\gabhatlin_{P P}+\frac{2}{r^2}\trgablin-\frac{4}{r^2}\big(\gabhatlin_{l=0}\big)_{PP}\label{pork5}.
\end{align}
We continue by introducing the smooth $\qm$ 1-form $\qwlin$ and the smooth $\sm$ 1-form $\swlin$ defined as
\begin{align*}
\qwlin:&=-\qd\gabhatlin+\sdiv\mling-\frac{1}{2}\qexd\trsglin+\frac{1}{r}\big(\gabhatlin_{l=0}\big)_P,\\
\swlin:&=-\qd\mling-\frac{1}{2}\sn\trgablin+\sdiv\sghatlin.
\end{align*} 
It then follows from equations \eqref{pork1}-\eqref{pork5} that $\qwlin$ and $\swlin$ satisfy the tensorial system of wave equations 
\begin{align*}
\qbox\qwlin+\slap\qwlin+\frac{2}{r}\qn_{P}\qwlin-\frac{2}{r^2}\ud r\text{ }\qwlin_P&=\frac{2}{r}\ud r\text{ }\sdiv\swlin,\\
\qbox\swlin+\slap\swlin-\frac{1}{r^2}\swlin&=-\frac{2}{r}\sn\qwlin_P
\end{align*}
with \emph{trivial} Cauchy data. Indeed, that this data vanishes is a consequence of the initial data set $\Soa$ being admissible, and is by now a classical result. See for instance the book of Choquet-Bruhat \cite{C-Bbook}.

Consequently, by uniqueness it must be that $\qwlin$ and $\swlin$ vanish globally on $D^+(\Sigma)$. Thus, since the vanishing of these quantities is, by definition, equivalent to the validity of the wave gauge conditions \eqref{eqnlingrav6}-\eqref{eqnlingrav7},  it follows that the collection
\begin{align*}
\Solution
\end{align*}
defines the unique solution to the system of gravitational perturbations which agrees with the admissible initial data set $\Soa$.
\end{proof}

We make the following remarks.
\begin{remark}\label{rmkseedRW}
The construction of $\Soa$ in Theorem \ref{appendixthmconstraints} reveals that the seed quantities $\oPhilin, \uPhilin, \oPsilin$ and $\uPsilin$ determine Cauchy data for the gauge-invariant quantities $\Philin$ and $\Psilin$ respectively.
\end{remark}
\begin{remark}
See section \ref{OVTheCauchyproblemfortheequationsoflinearisedgravity} of the overview for comments regarding the parametrisation of the space of solutions to the constraint equations \eqref{eqnconstraint1}-\eqref{eqnconstraint3} by the seed data $\Sos$.
\end{remark}

\subsection{Pointwise strong asymptotic flatness}\label{Pointwisestrongasymptoticflatness}

In this final section we provide a notion of asymptotic flatness on the seed data for which the main Theorem of section \ref{Precisestatementsofthemaintheorems} will be most naturally formulated.\newline

In order to present this definition in the most concise manner possible requires a concise notation for collections of seed quantities along with higher order derivatives thereof. 

Indeed
\begin{itemize}
	\item the quantity $\opsilin$ denotes either of the pair $\Big(\oPhilin, \oPsilin\Big)$
	\item the quantity $\upsilin$ denotes either of the pair $\Big(\uPhilin, \uPsilin\Big)$
	\item the quantity $\bvlin$ denotes any of the triple $\Big(\fvulin, \fvblin, \fvslin\Big)$
	\item the quantity $\uvlin$ denotes any of the triple $\Big(\fwulin, \fwblin, \fwslin\Big)$
\end{itemize}

Moreover, we have the shorthand notation for higher order derivatives
\begin{align*}
|\mathfrak{D}^n S|:=\sum_{i+j=0}^{n}|(r\snnu)^i(r\sn)^jS|.
\end{align*}
Here, $S$ is any smooth $\smi$-tensor.

The regularity we are to impose on the seed is then defined below. 

In what follows, we remind the reader of the pointwise norm $\sg$ defined in section \ref{Smitensoranalysis}.

\begin{definition}\label{defnstrongasymptoticflatness}
Let $0<s\leq 1$ and $n\geq 2$. Then we say that a smooth seed data set $\Sos$ is \textbf{\emph{strongly asymptotically flat}} with weight $s$ to order $n$ provided that the seed quantities satisfy the following pointwise estimates on $\Sigma$:
\begin{align*}
\big|\mathfrak{D}^{n-2}\big(r^{\frac{1}{2}+s}\opsilin\big)\big|_{\sg}+\big|\mathfrak{D}^{n-3}\big(r^{\frac{3}{2}+s}\upsilin\big)\big|_{\sg}&\lesssim C_{n},\\
\big|\mathfrak{D}^{n+1}\big(r^{\frac{1}{2}+s}\bvlin\big)\big|_{\sg}+\big|\mathfrak{D}^{n}\big(r^{\frac{3}{2}+s}\uvlin\big)\big|_{\sg}&\lesssim C_{n}.
\end{align*}
Here, the $C_n$ are constants depending only on $n$. 
\end{definition}

We note the important property of this notion of pointwise asymptotic flatness in that it propagates under part i) of Theorem \ref{thmwellposedness}. See Proposition \ref{Apppropasymptoticflatness} in the Appendix. In regards to propagating under evolution, see Remark \ref{rmkdontpropagateasymptoticflatness}.

\section{Gauge-normalised solutions and identification of the Kerr parameters}\label{GaugenormalisedsolutionsandidentificationoftheKerrparameters}

In this section we consider certain solutions to the equations of linearised that are realised as the addition of a pure gauge solution of section \ref{Specialsolutions2:Puregaugesolutions} to a solution to the equations of linearised gravity arising under Theorem \ref{thmwellposedness}.

It is for these \emph{gauge-normalised} solutions that the quantitative boundedness and decay statements of section \ref{Precisestatementsofthemaintheorems} shall be formulated.

\subsection{The $\Ke_{\mathsmaller{\mfM, \mathfrak{a}}}$-adapted Regge--Wheeler gauge}\label{TheKerradaptedRWgauge}

The ``gauge'' we are to introduce is actually a family of gauges, a member of which is said\footnote{See section \ref{OVGaugenormalisedsolutionstotheequationsoflinearisedgravity} of the overview for a comment on the nomenclature.} to be a $\Ke_{\mathsmaller{\mfM, \mfa}}$-adapted Regge--Wheeler gauge.\newline

This gauge is so named as it is a modification of the Regge--Wheeler gauge employed by Regge and Wheeler in \cite{RW} adapted now to the equations of linearised gravity. See section \ref{OVGaugenormalisedsolutionstotheequationsoflinearisedgravity} of the overview for further discussion. 

\subsubsection{The projection onto and away from $l=0,1$}\label{Theprojectionontol=0andl=1}

We begin by defining the projection of a solution to the equations of linearised gravity onto and away from $l=0,1$, the former of which shall be used in the definition of the $\Ke_{\mathsmaller{\mfM, \mfa}}$-Regge--Wheeler gauge.

This section of the paper is lifed almost verbatim from section 9.5.1 in \cite{D--H--R}.\newline

First we provide a definition of a solution $\So$ having support only on $l=0,1$.
\begin{definition}\label{defnsolnstolingravsphericalharmdecomp}
	Let $\So$ be a smooth solution to the equations of linearised gravity.
	
	Then we say that $\So$ is supported only on $l=0,1$ iff all quantities in $\So$ are supported only on $l=0,1$ (cf. Definitions \ref{defnqmtensorssupportedonlgeq2} and \ref{defnqmsmoneformssupportedonlgeqi}). In particular, 
	\begin{align*}
	\sghatlin=0.
	\end{align*}
	
Conversely, we say that $\So$ has vanishing projection to $l=0,1$ iff all quantities in $\So$ have vanishing projection to $l=0,1$ (cf. Definitions \ref{defnqmtensorssupportedonlgeq2} and \ref{defnqmsmoneformssupportedonlgeqi}). 
\end{definition}
One then has the following proposition, which follows easily from linearity along with Propositions \ref{propsphericalharmonicdecompositionofqmtensors} and \ref{propqmoneformsmoneformsphericalharmonicdecomposition}.
\begin{proposition}\label{propdecompositionofsolution}
	Let $\So$ be a smooth solution to the equations of linearised gravity. Then one has the unique, orthogonal decomposition
	\begin{align*}
	\So=\So_{l=0,1}+\So'
	\end{align*}
	where $\So_{l=0,1}$
	is a  smooth solution to the equations of linearised gravity supported only on $l=0,1$ and $\So'$ is a smooth solution to the equations of linearised gravity that has vanishing projection to $l=0,1$.
\end{proposition}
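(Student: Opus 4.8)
The plan is to produce the splitting componentwise, using the scalar/tensor spherical-harmonic decompositions already established, and then to verify that each summand separately solves the system. First I would invoke Proposition \ref{propsphericalharmonicdecompositionofqmtensors} for the $\qm$-tensorial members $\gabhatlin$, $\trgablin$, $\trsglin$ of the collection, and Proposition \ref{propqmoneformsmoneformsphericalharmonicdecomposition} for the $\qmsm$ 1-form $\mling$, writing $\gabhatlin=\mpart{\gabhatlin}+\gabhatlin'$ and similarly for the other members. By Proposition \ref{propsmtwotensorsphericalharmonicdecomposition} the symmetric traceless $\sm$-tensor $\sghatlin$ automatically has vanishing projection to $l=0,1$, so I put $\mpart{\sghatlin}:=0$, $\sghatlin':=\sghatlin$, in accordance with Definition \ref{defnsolnstolingravsphericalharmdecomp}. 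Setting $\So_{l=0,1}:=\big(\mpart{\gabhatlin},\mpart{\trgablin},\mpart{\mling},0,\mpart{\trsglin}\big)$ and $\So':=\big(\gabhatlin',\trgablin',\mling',\sghatlin,\trsglin'\big)$, one has $\So=\So_{l=0,1}+\So'$ by construction, with smoothness of each piece, and with the fibrewise orthogonality and uniqueness of the splitting over every 2-sphere $S^2_{\taus,r}$, all furnished termwise by the cited propositions.

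It then remains to check that $\So_{l=0,1}$ and $\So'$ each satisfy the equations of linearised gravity \eqref{eqnlingrav1}--\eqref{eqnlingrav7}. The key observation is that every operator occurring in that system --- $\qbox$, $\slap$, $\qn_P$, the traceless Lie derivatives $\qn\otimeshat(\cdot)$ and $\sn\otimeshat(\cdot)$, $\sdiv$, $\qd$, $\qexd$, the products $\ud r\,\qotimeshat(\cdot)$ and $\ud r\otimeshat(\cdot)$, contraction with the spherically symmetric vector field $P$, and multiplication by the spherically symmetric functions $\mu/r$, $r^{-1}$, $r^{-2}$ --- commutes with the projection of tensor fields on $\mcalm$ onto, and away from, the $l=0,1$ spherical harmonics of section \ref{TheprojectionoftensorfieldsonMontoandawayfromthel=0,1sphericalharmonics}. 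This is because each such operator is built only from the spherically symmetric background $g_M$ and the frame $\{S,L\}$ of Proposition \ref{propSLframe}, whose connection coefficients are functions of $r$ alone: the $\qm$-derivative $\qn$ differentiates only in directions tangent to $\qm$ and hence does not alter the angular factor of a frame component, so it preserves spherical-harmonic content exactly as in the $\qn$-part of Lemma \ref{lemmacommutingspoperators}; the angular operators $\sn$, $\slap$, $\sdiv$, $\scurl$, $\sn\otimeshat$ preserve spherical-harmonic content by the classical mode theory underlying Propositions \ref{propsphericalharmonicdecomp}--\ref{propsmtwotensorsphericalharmonicdecomposition}; and contraction with $\ud r$ or $P$ together with multiplication by functions of $r$ alone clearly leave the angular content unchanged. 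Applying the $l=0,1$ projection to each of \eqref{eqnlingrav1}--\eqref{eqnlingrav7} and using these commutations, every term lands in its projected form, so $\So_{l=0,1}$ solves the same system (one should note in passing that the projection of \eqref{eqnlingrav4} onto $l=0,1$ reads $0=0$, consistent with $\mpart{\sghatlin}=0$); and since $\So$ solves the system by hypothesis, subtracting shows $\So'=\So-\So_{l=0,1}$ does as well.

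Finally, for uniqueness of the decomposition \emph{as solutions}: if $\So=\mathscr{S}_1+\mathscr{S}_2$ with $\mathscr{S}_1$ supported only on $l=0,1$ and $\mathscr{S}_2$ having vanishing projection to $l=0,1$, then comparing each component against the uniqueness clauses of Propositions \ref{propsphericalharmonicdecompositionofqmtensors} and \ref{propqmoneformsmoneformsphericalharmonicdecomposition} forces $\mathscr{S}_1=\So_{l=0,1}$ and $\mathscr{S}_2=\So'$. The only step demanding genuine care is the commutation claim in the second paragraph: because the $l=0,1$ projection is defined through frame components in the spherically symmetric frame $\{S,L\}$, one must confirm that $\qn$ --- which differentiates the frame as well as its components --- really commutes with it, and this rests precisely on the spherical symmetry of the connection coefficients recorded in Proposition \ref{propSLframe}, the same mechanism already exploited in Lemma \ref{lemmacommutingspoperators}. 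Everything else is routine.
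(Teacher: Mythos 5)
Your proof is correct and takes essentially the same route as the paper, which simply asserts that the proposition ``follows easily from linearity along with Propositions \ref{propsphericalharmonicdecompositionofqmtensors} and \ref{propqmoneformsmoneformsphericalharmonicdecomposition}''. You have merely made explicit the two ingredients the paper leaves implicit: the componentwise orthogonal splittings of each member of the collection, and the fact that the $l=0,1$ projection commutes with every (spherically symmetric) operator appearing in \eqref{eqnlingrav1}--\eqref{eqnlingrav7}, so that linearity lets each summand inherit the equations.
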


Finally, the projection of $\So$ onto and away from $l=0,1$ is defined below.

\begin{definition}\label{defnprojectionl=0,1solution}
Let $\So$ be a smooth solution to the equations of linearised gravity. 

Then we call the map
\begin{align*}
\So\rightarrow\So_{l=0,1}
\end{align*}
the projection of $\So$ onto $l=0,1$.

Conversely, we call the map
\begin{align*}
\So\rightarrow\So'
\end{align*}
the projection of $\So$ away from $l=0,1$.
\end{definition}

\begin{remark}
Observe that the linearised Kerr solutions $\Ke_{\mathsmaller{\mfM, \mfa}}$ of Proposition \ref{propfullkerrfamily} are supported only on $l=0,1$. Thus $\Ke_{\mathsmaller{\mfM, \mfa}}'=0$.
\end{remark}

\subsubsection{The $\Ke_{\mathsmaller{\mfM, \mfa}}$-adapted Regge--Wheeler gauge}\label{TheKerradaptaedtracelessgauge}

We are now in a position to define (a member) of the $\Ke_{\mathsmaller{\mfM, \mfa}}$-adapted Regge--Wheeler gauge.

\begin{definition}\label{defnRWgauge}
	Let $\mfM\in\mathbb{R}$ and let $\mfa$ be a smooth function on $S^2$ supported only on $l=1$. Then a solution $\So$ to the equations of linearised gravity is said to be in the $\Ke_{\mathsmaller{\mfM, \mfa}}$-adapted Regge--Wheeler gauge iff the following conditions hold on $\Sigma$:
	\begin{itemize}
		\item the \textnormal{\textbf{propagation gauge conditions}:}
		\begin{align*}
		\sdiv\mling\Big|_{\Sigma}=\mcall_n\sdiv\mling\Big|_{\Sigma}&=0,\\
		\sghatlin\Big|_{\Sigma}=\mcall_n\sghatlin\Big|_{\Sigma}&=0
		\end{align*}
		\item the \textbf{{Kerr gauge condition}}:
		\begin{align*}
		\So_{l=0,1}\Big|_{\Sigma}=\Ke_{\mathsmaller{\mfM, \mathfrak{a}}}\Big|_{\Sigma}
		\end{align*}
	\end{itemize}
\end{definition}

We make the following remark.
\begin{remark}\label{rmkinitialdatagauge}
	Observe that, for a solution $\So$ arising from a seed data set $\Sos$ under Theorem \ref{thmwellposedness}, the conditions of Definition \ref{defnRWgauge} can be verified either from the seed or from the admissible initial data set constructed from it according to i) of said theorem. In particular, the parameters $\mfM$ and $\mfa$ are determined explicitly from the seed (see Definition \ref{defnseeddata}).
\end{remark}

\subsection{Achieving the $\Ke_{\mathsmaller{\mfM, \mathfrak{a}}}$-adapted Regge--Wheeler gauge: the initial-data-normalised solution $\SfMa$}\label{AchievingtheinitialdatanormalisationforageneralsolutionS}

In this section we show that, given a solution to the equations of linearised gravity arising under Theorem \ref{thmwellposedness}, one can indeed add to it a pure gauge solution for which the resulting solution is in a $\Ke_{\mathsmaller{\mfM, \mathfrak{a}}}$-adapted Regge--Wheeler gauge.\newline

In the following theorem statement
we shall employ the notation $\SosMa$ for a seed data set to explicitly reference the constant $\mfM$ and function $\mfa$ it contains as per Definition \ref{defnseeddata}.

\begin{theorem}\label{thminitialdatagauge}
	Let $\So$ be the smooth solution to the equations of linearised gravity arising from the smooth seed data set $\SosMa$ that is strongly asymptotically flat with weight $s$ to order $n\geq 2$. Then there exists a unique pure gauge solution $\GfMa$ such that the resulting (unique) solution
	\begin{align*}
	\SfMa:=\So+\GfMa
	\end{align*}
	is in the $\Ke_{\mathsmaller{\mfM, \mathfrak{a}}}$-adapted Regge--Wheeler gauge. Moreover, the seed data of $\Gf$ is strongly asymptotically flat with weight $s$ to order $n\geq 2$.
\end{theorem}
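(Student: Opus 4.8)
The plan is to construct the pure gauge solution $\GfMa$ explicitly, solving for the generating $\qm$ 1-form $\qv$ and $\sm$ 1-form $\sv$ in such a way that the three conditions of Definition \ref{defnRWgauge} are met, and then to appeal to Proposition \ref{proppuregaugesolution} to conclude that the resulting $\GfMa$ is indeed a pure gauge solution. By Proposition \ref{proppuregaugesolution}, recall that a pure gauge solution generated by $(\qv,\sv)$ has $\sghatlin = -2\sdst\sv$ and $\mling = \sn\qv + \qexd\sv - \frac{2}{r}\ud r\otimes\sv$, hence $\sdiv\mling = \slap\qv + \sdiv\qexd\sv - \frac{2}{r}\sdiv(\ud r\otimes\sv)$ (keeping in mind $\qv$ is a $\qm$ 1-form so $\sdiv\sn\qv = \slap\qv$, etc.). First I would reduce the propagation gauge conditions to conditions on the restrictions to $\Sigma$ of $\sv$, $\mcall_n\sv$, $\qv$ and $\mcall_n\qv$: condition $\sghatlin|_\Sigma = \mcall_n\sghatlin|_\Sigma = 0$ becomes $\sdst\sv|_\Sigma = \sdst(\mcall_n\sv)|_\Sigma = 0$, which by Corollary \ref{corrinvertingToperators} (or rather Proposition \ref{propsmtwotensorsphericalharmonicdecomposition} and the injectivity of $\sdst$ on the $\ell\geq 2$ part) forces the $\ell\geq 2$ parts of $\sv|_\Sigma$ and $\mcall_n\sv|_\Sigma$; the $\ell=0,1$ part is then fixed by the Kerr gauge condition below. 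Similarly $\sdiv\mling|_\Sigma = \mcall_n\sdiv\mling|_\Sigma = 0$ becomes, after substituting the expression above and the already-chosen $\sv$-data, a pair of elliptic equations on $\Sigma$ of Laplace-plus-lower-order type for $\qv|_\Sigma$ and $\mcall_n\qv|_\Sigma$; solvability on the non-compact $\Sigma$ follows from the elliptic theory on 2-spheres of section \ref{Ellipticoperatorson2-spheres} applied sphere-by-sphere together with an ODE in $r$ for the radial part, exactly as in the method of section \ref{AppendixConstructingadmissibleinitialdatafromseeddata}.

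Next I would handle the Kerr gauge condition $\So_{l=0,1}|_\Sigma = \Ke_{\mathsmaller{\mfM,\mfa}}|_\Sigma$. Here I would use that $\So$ arises from the seed $\SosMa$ via Theorem \ref{thmwellposedness}, so $\So_{l=0,1}|_\Sigma$ is an explicit function of the seed data; the $\ell=0$ part is governed by the constant $\mfmlin$ and the $\ell=1$ part by the $Y^1_i$ modes of the horizon function $\alin$, precisely as flagged in the overview (section \ref{OVGaugenormalisedsolutionstotheequationsoflinearisedgravity}). Since the linearised Kerr solutions of Proposition \ref{propfullkerrfamily} and the pure gauge solutions supported on $\ell=0,1$ together span the relevant finite-dimensional space of $\ell=0,1$ initial data modulo the constraint equations \eqref{eqnconstraint1}-\eqref{eqnconstraint3}, I can choose the $\ell=0,1$ parts of $\qv|_\Sigma$, $\mcall_n\qv|_\Sigma$, $\sv|_\Sigma$, $\mcall_n\sv|_\Sigma$ to cancel the difference $\So_{l=0,1}|_\Sigma - \Ke_{\mathsmaller{\mfM,\mfa}}|_\Sigma$. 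Once all Cauchy data for $(\qv,\sv)$ on $\Sigma$ is fixed, I evolve $(\qv,\sv)$ by the wave equations \eqref{feelgrim1}-\eqref{feelgrim2}, obtaining a unique smooth solution on $D^+(\Sigma)$ by standard well-posedness for tensorial wave equations, and Proposition \ref{proppuregaugesolution} then yields the pure gauge solution $\GfMa$; uniqueness of $\GfMa$ follows from uniqueness of the Cauchy data construction together with uniqueness of the wave evolution.

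The remaining claim — that the seed data of $\GfMa$ is strongly asymptotically flat with weight $s$ to order $n$ whenever $\SosMa$ is — I would establish by tracking $r$-weights through the above construction: the seed of $\GfMa$ is built from $\qv|_\Sigma$, $\mcall_n\qv|_\Sigma$, $\sv|_\Sigma$, $\mcall_n\sv|_\Sigma$ via the formulas of Proposition \ref{proppuregaugesolution} restricted to $\Sigma$, so I need that solving the elliptic/ODE systems above with strongly-asymptotically-flat inhomogeneities (coming from the seed of $\So$, whose asymptotic flatness propagates to $\Sigma$ by Proposition \ref{Apppropasymptoticflatness}) produces solutions with the claimed decay. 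This is the same $r$-weighted estimate bookkeeping as in Proposition \ref{Apppropasymptoticflatness}, using that inverting $\slap$-type operators on 2-spheres and integrating the radial ODEs does not degrade the $r$-weight (one must check the homogeneous radial solutions do not reintroduce slow decay, which is where the precise form of the operators and the freedom to add $\ell=0,1$ pieces is used). I expect the main obstacle to be precisely this last step: verifying that the radial ODE solutions for $\qv$, arising from the propagation conditions on $\sdiv\mling$, decay at the required rate rather than merely being bounded — one may need to select a particular decaying solution branch and confirm that this selection is consistent with, rather than in conflict with, the simultaneous demands of the Kerr gauge condition at the $\ell=0,1$ level and of smoothness at the horizon $r=2M$.
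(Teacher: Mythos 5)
Your proposal sets out to construct the gauge generators $(\qv,\sv)$ from scratch by solving elliptic equations on $\Sigma$, but the paper's proof turns on a different and much shorter idea that you have not used: the seed data parametrisation of Definition \ref{defnseeddata} and Theorem \ref{appendixthmconstraints} already isolates the gauge degrees of freedom. There, the admissible data $\Soa$ is built as the sum of three explicit pieces --- one generated from $(\oPhilin,\uPhilin,\oPsilin,\uPsilin)$ by inverting the $\mcalt$ operators (and this piece satisfies the Regge--Wheeler gauge conditions by construction, with $\sghatlin=0$ and $\sdiv\mling=0$), the linearised Kerr piece $\Soa_{\mathsmaller{\mfM,\mfa}}$, and a pure gauge piece whose generators have Cauchy data exactly the seed quantities $\fvulin,\fvblin,\fvslin,\fwulin,\fwblin,\fwslin$. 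Hence $\GfMa$ is simply the pure gauge solution generated, via Lemma \ref{lemmaparametrisingpuregaugesolutions}, by the negatives of those six quantities: $\So+\GfMa$ then arises, by linearity, from the seed $(\oPhilin,\uPhilin,\oPsilin,\uPsilin,0,0,0,0,0,0,\mfM,\mfa)$ and is in the $\Ke_{\mathsmaller{\mfM,\mfa}}$-adapted Regge--Wheeler gauge because its admissible data reduces to the Regge--Wheeler-gauge piece plus the Kerr piece; uniqueness comes from the bijection of Lemma \ref{lemmaparametrisingpuregaugesolutions}; and the seed of $\GfMa$, being $(0,0,0,0,-\fvulin,\dots,-\fwslin,0,0)$, trivially inherits the strong asymptotic flatness of $\SosMa$.

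Measured against this, your route has two genuine gaps. First, your handling of the Kerr gauge condition rests on the claim that linearised Kerr solutions together with $l=0,1$-supported pure gauge solutions span ``the relevant finite-dimensional space of $l=0,1$ initial data modulo the constraints''. That space is not finite-dimensional --- pure gauge data supported on $l=0,1$ is parametrised by free functions of $r$ on $\Sigma$ --- and the statement you actually need (every admissible $l=0,1$ data set is linearised Kerr plus pure gauge) is a substantive classification theorem, the analogue of the $l=0,1$ results of \cite{D--H--R}, which this paper never proves and is structured precisely so as not to need. Matching $\mcall_n$ of all linearised metric components at $l=0,1$ through the formulas of Proposition \ref{proppuregaugesolution} (e.g.\ $\gabhatlin=\qn\otimeshat\qv$ involves radial derivatives of $\qv$ along $\Sigma$) is a coupled radial ODE system with compatibility conditions, not a sphere-by-sphere elliptic problem, so this step cannot be waved through. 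Second, you leave the propagation of strong asymptotic flatness to the seed of $\GfMa$ unresolved and yourself identify it as the main obstacle; in your construction it requires weighted estimates for every elliptic inversion and radial integration you perform, whereas in the paper's argument there is nothing to estimate. Both gaps close simultaneously once you read $\GfMa$ off the seed data rather than solving for it.
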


In view of Remark \ref{rmkinitialdatagauge} the solution $\SiMa$ constructed above is thus said to be \textnormal{\textbf{initial-data-normalised}}.\newline

The remainder of this section is concerned with the proof of Theorem \ref{thminitialdatagauge}.

\subsubsection{Parametrising the space of pure gauge solutions}\label{Kadapatedpuregaugesolutions}

In order to prove Theorem \ref{thminitialdatagauge} we first parametrise the class of pure gauge solutions to the equations of linearised gravity certain quantities $\Sigma$. 

Indeed the subsequent lemma, which we only state, when combined with Proposition \ref{proppuregaugesolution} establishes a bijection between the space of pure gauge solutions and the freely prescribed quantities appearing in the statement of the lemma. This bijection will thus ensure uniqueness of the pure gauge solutions constructed in the proof of Theorem \ref{thminitialdatagauge}.

\begin{lemma}\label{lemmaparametrisingpuregaugesolutions}
Let $\fwu, \fwo, \fvu, \fvo$ be four smooth functions on $\Sigma$ and let $\fws, \fvs$ be two smooth $\smi$ 1-forms. Then there exists a unique, smooth $\qm$ 1-form $\qv$ and a unique, smooth $\sm$ 1-form $\sv$ satisfying
\begin{align*}
\qbox\qv+\slap\qv-\frac{2}{r}\big(\qn\qv\big)_P+\frac{2}{r^2}\ud r\,\qv_P&=-\frac{1}{r}\big(\qn\otimes\qv_{l=0}\big)_P,\\
\qbox\sv+\slap\sv-\frac{2}{r}\qn_{P}\sv+\frac{1}{r^2}(3-4\mu)\sv&=0
\end{align*}
with
\begin{align*}
\qv_{n}\big|_{\Sigma}=\fvu,\qquad\qv_\nu\big|_{\Sigma}&=\fvo,\qquad\sv\big|_{\Sigma}=\fvs,\\
\Big(\qn_n\qv_{n}\Big)\Big|_{\Sigma}=\fwu,\qquad\Big(\qn_n\qv_\nu\Big)\Big|_{\Sigma}&=\fwo,\qquad\Big(\qn_n\sv\Big)\Big|_{\Sigma}=\fws.
\end{align*}
\end{lemma}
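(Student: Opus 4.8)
\textbf{Proof proposal for Lemma \ref{lemmaparametrisingpuregaugesolutions}.}

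The plan is to recognise the two displayed equations as a decoupled pair of linear wave equations — one tensorial equation for the $\qm$ 1-form $\qv$ and one tensorial equation for the $\sm$ 1-form $\sv$ — and to invoke standard well-posedness for linear wave equations on the globally hyperbolic region $D^+(\Sigma)$, exactly as is done for the system \eqref{pork1}-\eqref{pork5} in the proof of Theorem \ref{thmwellposedness}. Concretely, I would first observe that the operator acting on $\sv$, namely $\qbox + \slap - \frac{2}{r}\qn_P + \frac{1}{r^2}(3-4\mu)$, is a manifestly hyperbolic operator (the $\qbox$ contributes the Lorentzian principal part, $\slap$ the angular principal part), and the operator acting on $\qv$, namely $\qbox + \slap - \frac{2}{r}(\qn\,\cdot\,)_P + \frac{2}{r^2}\ud r\otimes(\cdot)_P$ together with the zeroth-order coupling-to-itself term $\frac{1}{r}(\qn\otimes\qv_{l=0})_P$ on the right, is likewise hyperbolic; the $l=0$ term on the right-hand side is a bounded (indeed smooth, spherically symmetric) zeroth-order operator applied to $\qv$, so moving it to the left merely alters lower-order coefficients and does not affect well-posedness. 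One may phrase both as wave equations for sections of the vector bundles of $\qm$-valued and $\sm$-valued $1$-forms over $\mcalm$ restricted to $D^+(\Sigma)$, with smooth coefficients depending only on $g_M$.

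Next I would set up the Cauchy data. The frame $\{S,L\}$ of section \ref{TheqmframeSL} (together with $\{\nu,e_1,e_2\}$ on $\Sigma$ from section \ref{Smitensoranalysis}) decomposes $\qv$ into its scalar components $\qv_n, \qv_\nu$ and $\sv$ into its $\smi$-1-form part, and the prescribed data $\fvu = \qv_n|_\Sigma$, $\fvo = \qv_\nu|_\Sigma$, $\fvs = \sv|_\Sigma$ together with the normal derivatives $\fwu,\fwo,\fws$ constitute precisely a full set of Cauchy data for a second-order hyperbolic system: a value and a first transversal ($\qn_n$) derivative for every component. Since $n$ is timelike and $\Sigma$ is a Cauchy hypersurface for $(\mcalm,g_M)$ (section \ref{TheCauchyhypersurfaceSigmaandsmitensoranalysis}), the classical theory of linear hyperbolic systems — e.g. the references \cite{C-Bbook} invoked in Theorem \ref{thmwellposedness} — yields existence and uniqueness of a smooth solution on $D^+(\Sigma)$. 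Smoothness of the solution follows from smoothness of the coefficients and of the prescribed data, again by the standard energy-estimate/regularity machinery.

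I do not expect a genuine obstacle here: this lemma is the pure-gauge analogue of the second part of Theorem \ref{thmwellposedness}, and its proof is essentially a citation of linear hyperbolic well-posedness once the equations are recognised as wave equations and the data as admissible Cauchy data. The only point requiring a sentence of care is that the $\qv$-equation has $\qv_{l=0}$ on its right-hand side; one should note that $X \mapsto \tfrac{1}{r}(\qn\otimes X_{l=0})_P$ is a smooth, $L^2$-bounded, order-zero operator (indeed the projection onto $l=0$ commutes with the background Killing symmetries and with $\qn$, by Lemma \ref{lemmacommutingspoperators}-type reasoning), so absorbing it into the principal wave operator preserves both hyperbolicity and the smoothness of coefficients. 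Uniqueness of the pair $(\qv,\sv)$ — which is the property the lemma is invoked for in Theorem \ref{thminitialdatagauge} — is then immediate from uniqueness for the linear Cauchy problem.
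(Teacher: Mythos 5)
The paper does not actually prove this lemma --- it is introduced with the words ``the subsequent lemma, which we only state'' --- so there is no argument of record to compare yours against. Your proposal supplies exactly the argument the paper is implicitly relying on: the two equations are linear tensorial wave equations on the globally hyperbolic region $D^+(\Sigma)$ with smooth coefficients, the prescribed quantities constitute a full set of Cauchy data (a value and one transversal derivative for each frame component), and existence, uniqueness and smoothness follow from the classical linear theory cited in the proof of Theorem \ref{thmwellposedness}. This is the same mechanism used there to solve the system \eqref{pork1}--\eqref{pork5}, so your route is the intended one.

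Two small points should be tightened. First, the source term $-\frac{1}{r}\big(\qn\otimes\qv_{l=0}\big)_P$ is \emph{first}-order in $\qv$, not zeroth-order: $\qn\otimes$ is the Lie-derivative operation of section \ref{Tensoranalysis} and so differentiates $\qv_{l=0}$ once. This changes nothing essential --- the term is still subprincipal and absorbing it into the left-hand side preserves hyperbolicity --- but your description of it as a ``zeroth-order operator'' is inaccurate. Second, the projection onto $l=0$ is non-local on the spheres $S^2_{t^*,r}$, so strictly speaking the $\qv$-equation is not a local PDE and one cannot quote local well-posedness verbatim. The clean fix, in the spirit of section \ref{TheprojectionoftensorfieldsonMontoandawayfromthel=0,1sphericalharmonics}, is to note that the background operator is spherically symmetric, hence commutes with the projection onto $l=0$; the $l=0$ part of $\qv$ and its complement therefore decouple, each satisfying a genuinely local linear wave equation (for the $l=0$ part one has $\qv_{l=0}=\qv$ so the source becomes a local first-order term, and for the complement the source vanishes), and one solves the two Cauchy problems separately and sums. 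With these two adjustments your proof is complete.
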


\subsubsection{Proof of Theorem \ref{thminitialdatagauge}}\label{Proofofthminitialdatagauge}

We now prove  Theorem \ref{Proofofthminitialdatagauge}.
\begin{proof}
We consider the seed data set for the solution $\So$:
\begin{align*}
\Seed.
\end{align*}
We make the following observations. 

First, from both the construction of the admissible initial data set $\Soa$ from $\Sos$ in accordance with Theorem \ref{appendixthmconstraints} of the Appendix combined with the construction of initial data for the collection $\So$ from $\Soa$ according to Definition \ref{defnadmissibleinitialdatasets} it follows that if the seed quantities $\fvulin, \fvblin, \fvslin, \fwulin, \fwblin, \fwslin$ all vanish then the solution $\So$ will satisfy the conditions of Definition \ref{defnRWgauge} with the parameters $\mfM$ and $\mfa$ determined from $\SosMa$.

Second, it follows once more from the proof of Theorem \ref{appendixthmconstraints} that the seed quantitites $\fvulin, \fvblin, \fvslin, \fwulin, \fwblin, \fwslin$ correspond to initial data for the generators $\qv$ and $\sv$ of a pure gauge solution $\Ga$ in accordance with Lemma \ref{lemmaparametrisingpuregaugesolutions}.

Combining these two observations, it thus suffices to define the pure gauge solution $\GfMa$ as the pure gauge solution that arises from Proposition \ref{proppuregaugesolution} combined with Lemma \ref{lemmaparametrisingpuregaugesolutions} in which the quantities $\fwu, \fwo, \fvu, \fvo, \fws, \fvs$ are defined according to
\begin{align*}
\fvu&=-\fvulin,\\
\fvo&=-\fvblin,\\
\fvs&=-\fvslin,\\
\fwu&=-\fwulin,\\
\fwo&=-\fwblin,\\
\fws&=-\fwslin.
\end{align*}
Finally, since by construction the seed data of $\GfMa$ is given by
\begin{align*}
\Sos_{\GfMa}=\bigg(0,0,0,0,-\fvulin,-\fvblin,-\fvslin,-\fwulin,-\fwblin,-\fwslin,0,0\bigg)
\end{align*}
the seed data set $\Sos_{\GfMa}$ thus inherits the regularity of $\Sos$.
\end{proof}

\subsection{Global properties of the solution $\SfMa$}\label{GlobalpropertiesofSf}

In this final section we prove certain global properties of the solution $\SfMa$ which will be fundamental in establishing the boundedness and decay statements of Theorem 1.

\begin{proposition}\label{propglobalproperties}
	Let $\SfMa$ be the unique initial-data-normalised solution to the equations of linearised gravity that arises in accordance with Theorem \ref{thminitialdatagauge}. Then the following hold on $D^+(\Sigma)$:
	\begin{itemize}
		\item the \textnormal{\textbf{propagation conditions propagate}:}
		\begin{align*}
		\sdiv\mling&=0,\\
		\sghatlin&=0
		\end{align*}
		\item the \textnormal{\textbf{Kerr condition propagates}:}
		\begin{align*}
		\mpart{\big(\SfMa\big)}=\Ke_{\mathsmaller{\mfM, \mathfrak{a}}}
		\end{align*}
	\end{itemize}
	Moreover, the linearised $\qm$-trace $\trgablin$ vanishes on $D^+(\Sigma)$:
	\begin{align*}
	\trgablin=0.
	\end{align*}
\end{proposition}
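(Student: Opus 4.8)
The plan is to exploit the fact that $\SfMa$ is initial-data-normalised, so that on $\Sigma$ it satisfies the propagation gauge conditions of Definition \ref{defnRWgauge}, and then to run a uniqueness argument for suitable auxiliary wave equations, exactly in the spirit of the proof of the well-posedness Theorem \ref{thmwellposedness}. The three assertions — that $\sdiv\mling$ vanishes, that $\sghatlin$ vanishes, and that $\trgablin$ vanishes — are each obtained by showing that the quantity in question solves a (homogeneous, linear) wave equation on $D^+(\Sigma)$ with trivial Cauchy data on $\Sigma$, whence it vanishes identically by uniqueness for the wave equation on the globally hyperbolic region $D^+(\Sigma)$. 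The Kerr condition $\mpart{\big(\SfMa\big)}=\Ke_{\mathsmaller{\mfM,\mathfrak{a}}}$ then follows because both sides are solutions to the equations of linearised gravity supported only on $l=0,1$ which agree on $\Sigma$ together with their normal derivative (using that $\Ke_{\mathsmaller{\mfM,\mathfrak{a}}}$ is stationary, hence determined by its restriction to $\Sigma$), so uniqueness in part ii) of Theorem \ref{thmwellposedness} applied to the $l=0,1$-supported projection forces equality.

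\textbf{Key steps.} First I would treat $\sghatlin$: equation \eqref{eqnlingrav4} is already a closed, homogeneous wave equation for $\sghatlin$ alone, and the propagation gauge conditions give $\sghatlin|_\Sigma = \mcall_n\sghatlin|_\Sigma = 0$, so $\sghatlin \equiv 0$ on $D^+(\Sigma)$ immediately. Next, with $\sghatlin\equiv 0$ in hand, equation \eqref{eqnlingrav3} for $\mling$ becomes $\qbox\mling+\slap\mling-\frac{2}{r}\big(\qn{\otimes}\mling\big)_P-\frac{1}{r^2}\mling+\frac{2}{r^2}\ud r{\otimes}\mling_P=0$; applying $\sdiv$ and commuting $\sdiv$ through the $\qm$-operators and $\slap$ (the commutators being algebraic, of the schematic type collected in sections \ref{Tensoranalysis}--\ref{Commutationformulaeandusefulidentities}) yields a homogeneous wave equation for the $\qm$ $1$-form $\sdiv\mling$. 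Its Cauchy data $\sdiv\mling|_\Sigma$ and $\mcall_n\sdiv\mling|_\Sigma = \sdiv\,\mcall_n\mling|_\Sigma$ vanish by the propagation gauge conditions (the operator $\sdiv$ commutes with $\mcall_n$ up to zeroth-order terms which kill trivial data), so $\sdiv\mling\equiv 0$. Finally, for $\trgablin$: the wave gauge condition \eqref{eqnlingrav7} reads $-\qd\mling-\frac12\sn\trgablin+\sdiv\sghatlin=0$, which with $\sghatlin\equiv0$ gives $\sn\trgablin = -2\qd\mling$; since $\sdiv\mling\equiv 0$ and $\sghatlin\equiv0$, one checks from \eqref{eqnlingrav2} and the $\mling$-equation that the right-hand side of the $\trgablin$-equation \eqref{eqnlingrav2} is expressible in $\big(\gabhatlin_{l=0}\big)_P$ only; one then examines the coupled subsystem for $\big(\trgablin, \big(\gabhatlin_{l=0}\big)_P\big)$, whose $l=0,1$-supported part is pinned by the Kerr gauge condition to equal that of $\Ke_{\mathsmaller{\mfM,\mathfrak{a}}}$, for which $\trgablin = 0$ (Proposition \ref{propfullkerrfamily}). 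Combining $\trgablin|_\Sigma$-data (read off from Definition \ref{defnadmissibleinitialdatasets}, which for an initial-data-normalised solution with vanishing $\fvslin$-type seed forces triviality) with a uniqueness argument for the resulting homogeneous wave equation gives $\trgablin\equiv 0$.

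\textbf{Main obstacle.} The delicate point is the $\trgablin$ assertion, and more specifically the bookkeeping needed to show that its Cauchy data on $\Sigma$ vanish. Unlike $\sghatlin$, the trace $\trgablin$ couples through \eqref{eqnlingrav2} and \eqref{eqnlingrav5} to $\gabhatlin$ and its $l=0$ part, so I cannot isolate a closed homogeneous equation for $\trgablin$ without first knowing that the relevant source terms are themselves controlled. The resolution I expect is to combine two facts established earlier: (i) on the $l=0,1$ spherical harmonics, the Kerr gauge condition identifies $\SfMa|_\Sigma$ with $\Ke_{\mathsmaller{\mfM,\mathfrak{a}}}|_\Sigma$, which has $\trgablin=0$; and (ii) on $l\geq 2$, the initial data for $\trgablin$ and $\mcall_n\trgablin$ are given by the formulae of Definition \ref{defnadmissibleinitialdatasets} in terms of seed quantities which — because $\GfMa$ was chosen precisely to cancel the $\fvulin,\fvblin,\fvslin,\fwulin,\fwblin,\fwslin$ part of the seed (proof of Theorem \ref{thminitialdatagauge}) — vanish for $\SfMa$. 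Once the triviality of the Cauchy data is secured, the uniqueness step is routine. A secondary, purely computational nuisance is verifying the commutation identities needed to close the wave equation for $\sdiv\mling$, but these are of the same character as the identities already used in the proof of Theorem \ref{thmwellposedness} and in section \ref{Commutationformulaeandusefulidentities}, so no new difficulty arises there.
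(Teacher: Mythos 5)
Your treatment of $\sghatlin$, of $\sdiv\mling$, and of the Kerr condition coincides with the paper's: equation \eqref{eqnlingrav4} is already closed and homogeneous, so trivial Cauchy data kills $\sghatlin$; substituting this into \eqref{eqnlingrav3} and taking $\sdiv$ gives a homogeneous equation for $\sdiv\mling$ with trivial data; and the $l=0,1$ projection is pinned to $\Ke_{\mathsmaller{\mfM,\mathfrak{a}}}$ by uniqueness in Theorem \ref{thmwellposedness}. Where you diverge is the last claim, $\trgablin=0$, and there your route is both heavier and, as written, not fully justified. The paper does not run a wave-equation/uniqueness argument for $\trgablin$ at all: once $\sghatlin=0$ and $\sdiv\mling=0$ are known, applying $\sdiv$ to the gauge condition \eqref{eqnlingrav7} gives $\slap\trgablin=0$ pointwise on $D^+(\Sigma)$, so $\trgablin$ is spherically symmetric on every sphere; the (already propagated) Kerr condition then identifies the spherically symmetric part of $\SfMa$ with $\Ke_{\mathsmaller{\mfM,0}}$, for which $\trgablin=0$ by Proposition \ref{proplinssoln}. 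This is a purely elliptic, on-shell argument needing no information about the Cauchy data of $\trgablin$.

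The weak point in your version is precisely the claim that the Cauchy data of $\trgablin$ on $\Sigma$ vanish ``because the $\fvulin,\fvblin,\fvslin,\fwulin,\fwblin,\fwslin$ part of the seed is cancelled by $\GfMa$.'' Definition \ref{defnadmissibleinitialdatasets} gives $\trgablin|_\Sigma=\frac{2}{N}\llin+\frac{2}{\bh}\bhlin$ and $\mcall_n\trgablin|_\Sigma=\uwlin_2$, and the quantities $\llin,\bhlin$ are \emph{admissible-data} quantities built from the full seed (including $\oPhilin,\oPsilin,\ldots$), not gauge seed; they do not vanish for $\SfMa$. To make your argument work you would additionally have to trace through Theorem \ref{appendixthmconstraints} and observe that the non-Kerr, non-pure-gauge piece $\Soa_{\mathsmaller{\oPhilin,\oPsilin,\uPhilin,\uPsilin}}$ is constructed in the Regge--Wheeler gauge, where $\trgablin$ vanishes identically, so that the $l\geq1$ Cauchy data of $\trgablin$ is indeed trivial and the $l=0$ part is handled by the Kerr condition. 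That can be done, but it is exactly the bookkeeping the paper's one-line elliptic argument avoids; I would recommend adopting the latter.
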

\begin{proof}
We have by the uniqueness criterion of Theorem \ref{thmwellposedness} that the only smooth solution to the system of gravitational perturbations for which the initial data is exactly that of the linearised Kerr solution $\Ke_{\mathsmaller{\mfM, \mfa}}$ is the linearised Kerr solution $\Ke_{\mathsmaller{\mfM, \mfa}}$. This yields the propagation of the Kerr gauge condition.

Moreover, we have from equation \eqref{eqnlingrav4} that the linearised metric quantity $\sghatlin$ associated to the solution $\SfMa$ satisfies the wave equation
\begin{align*}
\qbox\sghatlin+\slap\sghatlin-\frac{2}{r}\qn_{P}\sghatlin-\frac{4}{r}\frac{\mu}{r}\sghatlin=0
\end{align*}
with trivial Cauchy data. It therefore follows from a standard Gr\"onwall estimate that $\sghatlin$ vanishes globally. 

Consequently, turning now to equation \eqref{eqnlingrav3} we have that the linearised metric quantity $\sdiv\mling$ associated to the solution $\SfMa$ satisfies the wave equation
\begin{align*}
\qbox\sdiv\mling+\slap\sdiv\mling-\frac{2}{r}\big(\qn\sdiv\mling\big)_P+\frac{2}{r^2}\ud r\,\sdiv\mling_P=0
\end{align*}
with trivial Cauchy data and thus must vanish globally by Gr\"onwall.

Finally, combining the vanishing of $\sghatlin$ and $\sdiv\mling$ for the solution $\SfMa$ with the gauge condition \eqref{eqnlingrav7} yields
\begin{align*}
\slap\trgablin=0.
\end{align*}
It follows that $\trglin$ is spherically symmetric. However, since the Kerr gauge condition propagates, the spherically symmetric part of the solution $\SfMa$ is the linearised Kerr solution $\Ke_{\mathsmaller{\mfM, 0}}$ for which $\trglin=0$ by Proposition \ref{proplinkerrsoln}. This completes the proof of the proposition.
\end{proof}

We make the following remarks.
\begin{remark}
It is immediate from the proof that any solution that is in a $\Ke_{\mathsmaller{\mfM, \mfa}}$-adapted Regge--Wheeler gauge will also satisfy the conclusions of Proposition \ref{propglobalproperties}.
\end{remark}
\begin{remark}\label{rmkcurlggaugeinvariant}
We note that although the quantity $\scurl\mling$ associated to the solution $\SfMa$ satisifes the same decoupled equation as $\sdiv\mling$ one cannot set it to vanish since in the $\Ke_{\mathsmaller{\mfM, \mfa}}$-adapted Regge--Wheeler gauge $\scurl\mling$ is gauge-invariant.
\end{remark}

We also have the following Corollary to Proposition \ref{propglobalproperties} which we shall exploit in the proof of Theorem 1 of section \ref{Precisestatementsofthemaintheorems}.

In what follows, we remind the reader of the gauge-invariant quantities $\Philin$ and $\Psilin$ constructed in Theorem \ref{thmgaugeinvariantquantintermsofRWandZ} and the operator $\mcaltf$ defined in section \ref{ThefamilyofoperatorsPandT}.
\begin{corollary}\label{corrRWgauge}
Let $\SfMa$ be as in Proposition \ref{propglobalproperties}. Then the following relations hold on $D^+(\Sigma)$:
\begin{align*}
\mcaltf\gabhatlin&=\qn\otimeshat\qexd\Big(r\Psilin\Big)+6\mu\ud r\otimeshat\szetap{1}\qexd\Psilin,\\
r^4\scurl\sdiv\sn\otimeshat\mling&=-\qhd\qexd\Big(r\Philin\Big),\\
\mcaltf\trsglin&=-2r\slap\Psilin+4\qn_{P}\Psilin+12\mu r^{-1}\ommu\szetap{1}\Psilin.
\end{align*}
\end{corollary}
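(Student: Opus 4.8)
The plan is to reduce all three identities to Theorem~\ref{thmgaugeinvariantquantintermsofRWandZ} by showing that, in the $\Ke_{\mathsmaller{\mfM, \mfa}}$-adapted Regge--Wheeler gauge, each of the gauge-invariant quantities $\tauhatlin$, $\etalin$, $\sigmalin$ of Proposition~\ref{propdefngaugeinvariantquatities} collapses onto the simple linearised-metric expression appearing on the left-hand sides of the Corollary. The first step is to invoke Proposition~\ref{propglobalproperties}, which for the solution $\SfMa$ guarantees $\sghatlin = 0$, $\sdiv\mling = 0$ and $\trgablin = 0$ on all of $D^+(\Sigma)$. Since $\tauhatlin$, $\etalin$, $\sigmalin$ and $\Philin, \Psilin$ are gauge-invariant (Proposition~\ref{propdefngaugeinvariantquatities}, Corollary~\ref{corrphipsigaugeinvariant}) and vanish on all of linearised Kerr, Theorem~\ref{thmgaugeinvariantquantintermsofRWandZ} applies verbatim to $\SfMa$, so the only real content is the algebraic simplification in this gauge.

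The crucial intermediate step is to establish that the auxiliary quantity $\hlin := r^4\sdiv\sdiv\sn\otimeshat\mling - r^2\qexd\big(r^2\sdiv\sdiv\sghatlin\big)$ vanishes on $D^+(\Sigma)$. The second term drops out immediately since $\sghatlin = 0$. For the first, one decomposes $\mling$ via the Hodge decomposition of $\qmsm$ $1$-forms (section~\ref{Thel=0,1sphericalharmonicsandqmsm1forms}) into a gradient and a co-gradient part: the operator $\sdiv\sdiv\sn\otimeshat$ annihilates the co-gradient part outright (its $\sdiv$ is co-gradient, hence already divergence free), while on the gradient part it reduces, via the elliptic identities on $2$-spheres recorded in the proof of Proposition~\ref{propellipticestimatesonT} together with the fact that $\sdiv$ commutes with $\slap$ on $S^2_{\taus,r}$ up to curvature terms proportional to $r^{-2}\sdiv$, to an operator applied to $\sdiv\mling$, which is zero. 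Hence $\hlin = 0$. Substituting $\hlin = 0$ and $\sghatlin = 0$ into the defining formulae \eqref{eqndefntauhat}, the definition of $\etalin$, and \eqref{eqndefnsigma} of Proposition~\ref{propdefngaugeinvariantquatities}, and recalling $\mcaltf Q = \sps Q$ from the definition of $\mcaltf$ in section~\ref{ThefamilyofoperatorsPandT}, yields on $D^+(\Sigma)$ the identifications
\begin{align*}
\mcaltf\gabhatlin = \tauhatlin, \qquad r^4\scurl\sdiv\sn\otimeshat\mling = \etalin, \qquad \mcaltf\trsglin = \sigmalin.
\end{align*}
Plugging in the expressions for $\tauhatlin$, $\etalin$, $\sigmalin$ in terms of the unique $\Philin, \Psilin \in \LM$ solving the Regge--Wheeler and Zerilli equations furnished by Theorem~\ref{thmgaugeinvariantquantintermsofRWandZ} then produces precisely the three stated relations.

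The main obstacle is the vanishing of $\hlin$: one must verify carefully that commuting $\sdiv$ through $\slap$ on the round $2$-spheres produces only curvature terms of the schematic form $r^{-2}\sdiv\mling$, with no surviving term in $\scurl\mling$ or any independent term in $\mling$, so that the propagated gauge condition $\sdiv\mling = 0$ genuinely suffices to kill $\hlin$. This is delicate precisely because, by Remark~\ref{rmkcurlggaugeinvariant}, $\scurl\mling$ does \emph{not} vanish in this gauge, so it is essential that the reduction of $\sdiv\sdiv\sn\otimeshat$ to the divergence sector never feeds back the curl. Everything else is routine substitution of the gauge conditions of Proposition~\ref{propglobalproperties} into the already-established definitions of the gauge-invariant quantities, followed by a direct appeal to Theorem~\ref{thmgaugeinvariantquantintermsofRWandZ}.
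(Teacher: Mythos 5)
Your proposal is correct and follows essentially the same route as the paper: invoke Proposition \ref{propglobalproperties} to get $\sghatlin=0$ and $\sdiv\mling=0$, conclude that $\hlin$ and the remaining $\sghatlin$-terms in the definitions of $\tauhatlin$, $\etalin$, $\sigmalin$ vanish so that these reduce to $\mcaltf\gabhatlin$, $r^4\scurl\sdiv\sn\otimeshat\mling$, $\mcaltf\trsglin$, and then quote Theorem \ref{thmgaugeinvariantquantintermsofRWandZ}. The only difference is that you spell out the vanishing of $r^4\sdiv\sdiv\sn\otimeshat\mling$ (which the paper compresses into ``it follows by definition''), and your justification --- that $\sdiv\sdiv\sn\otimeshat$ factors through $\sdiv\mling$ via the identity $\sdiv\sn\otimeshat\momega=\slap\momega+r^{-2}\momega$ and the curvature commutator, never touching $\scurl\mling$ --- is sound.
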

\begin{proof}
We recall from Theorem \ref{thmgaugeinvariantquantintermsofRWandZ} that associated to any solution $\So$ to the equations of linearised gravity the gauge-invariant quantities $\tauhatlin, \etalin$ and $\sigmalin$ of Proposition \ref{propdefngaugeinvariantquatities} satisfy
\begin{align*}
\tauhatlin&=\qn\otimeshat\qexd\Big(r\Psilin\Big)+6\mu\ud r\otimeshat\szetap{1}\qexd\Psilin,\\
\etalin&=-\qhd\qexd\Big(r\Philin\Big),\\
\sigmalin&=-2r\slap\Psilin+4\qn_{P}\Psilin+12\mu r^{-1}\ommu\szetap{1}\Psilin.
\end{align*}
However, since by Proposition \ref{propglobalproperties} the linearised metric quantities $\sghatlin$ and $\sdiv\mling$ vanish globally for the solution $\SfMa$ it follows by definition that $\SfMa$ 
\begin{align*}
\mcaltf\gabhatlin&=\tauhatlin,\\
r^4\scurl\sdiv\sn\otimeshat\mling&=\etalin,\\
\mcaltf\trsglin&=\sigmalin.
\end{align*}
\end{proof}
We end this section with the following interesting result, the truth of which is immediate from the above.
\begin{corollary}\label{corrPsiPhivanishimpliespuregauge}
	Let $\So$ be a smooth solution to the equations of linearised gravity arising from the smooth seed data $\SosMa$. Suppose further that the gauge-invariant quantities $\Psilin$ and $\Philin$ associated to $\So$ vanish. Then $\So$ is the sum of the pure gauge solution $\GfMa$ and the linearised Kerr solution $\Ke_{\mathsmaller{\mfM, \mfa}}$:
	\begin{align*}
	\Psilin=\Philin=0\implies\So=\GfMa+\Ke_{\mathsmaller{\mfM, \mathfrak{a}}}.
	\end{align*}
\end{corollary}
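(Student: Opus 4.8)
The plan is to deduce this as a short corollary of the machinery already assembled: Theorem \ref{thminitialdatagauge} (initial-data-normalisation), the gauge-invariance of $\Philin$ and $\Psilin$ (Corollary \ref{corrphipsigaugeinvariant}), the structural identities of Corollary \ref{corrRWgauge} together with Proposition \ref{propglobalproperties}, and the injectivity of the elliptic operators $\mcaltf$, $\mcaltv$ on tensor fields with vanishing projection to $l=0,1$ (Corollary \ref{corrinvertingToperators}).

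First I would apply Theorem \ref{thminitialdatagauge} to the solution $\So$ arising from $\SosMa$ to obtain the unique pure gauge solution $\GfMa$ for which $\SfMa:=\So+\GfMa$ lies in the $\Ke_{\mathsmaller{\mfM, \mathfrak{a}}}$-adapted Regge--Wheeler gauge. Since $\Philin$ and $\Psilin$ are gauge-invariant and vanish for $\So$ by hypothesis, they vanish identically for $\SfMa$ as well. Proposition \ref{propglobalproperties} applied to $\SfMa$ then gives $\sghatlin=0$, $\sdiv\mling=0$ and $\trgablin=0$ on $D^+(\Sigma)$, while Corollary \ref{corrRWgauge} reduces, upon setting $\Philin=\Psilin=0$, to $\mcaltf\gabhatlin=0$, $r^4\scurl\sdiv\sn\otimeshat\mling=0$ and $\mcaltf\trsglin=0$ for the solution $\SfMa$.

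Next I would upgrade the vanishing of $r^{4}\scurl\sdiv\sn\otimeshat\mling$ to $\mcaltv\mling=0$. Using the identity $\sdiv\sn\otimeshat\momega=\slap\momega+r^{-2}\momega$ and the elementary fact that on the round $2$-spheres $S^2_{\taus,r}$ the divergence commutes with the sphere Laplacian up to a curvature term proportional to $\sdiv$, the global relation $\sdiv\mling=0$ forces $r^{4}\sdiv\sdiv\sn\otimeshat\mling=0$ as well; hence $\spv\mling=(r^{4}\sdiv\sdiv\sn\otimeshat\mling,\,r^{4}\scurl\sdiv\sn\otimeshat\mling)=0$ and therefore $\mcaltv\mling=\sdso\spv\mling=0$. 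By Corollary \ref{corrinvertingToperators} the operators $\mcaltf$ and $\mcaltv$ are injective on tensor fields with vanishing projection to $l=0,1$, so the projections away from $l=0,1$ of $\gabhatlin$, $\trsglin$ and $\mling$ all vanish; together with $\sghatlin=0$ and $\trgablin=0$ this yields $\big(\SfMa\big)'=0$. On the other hand Proposition \ref{propglobalproperties} gives $\mpart{\big(\SfMa\big)}=\Ke_{\mathsmaller{\mfM, \mathfrak{a}}}$, whence $\SfMa=\Ke_{\mathsmaller{\mfM, \mathfrak{a}}}$, i.e. $\So=\Ke_{\mathsmaller{\mfM, \mathfrak{a}}}-\GfMa$. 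Since $-\GfMa$ is again a pure gauge solution (generated by $-\qv,-\sv$ in Proposition \ref{proppuregaugesolution}), after relabelling the pure gauge summand this is exactly the asserted decomposition $\So=\GfMa+\Ke_{\mathsmaller{\mfM, \mathfrak{a}}}$.

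I do not expect a genuine obstacle here: the argument is essentially bookkeeping against results already in place. The only step that is not purely formal is the implication $\sdiv\mling=0,\ \scurl\sdiv\sn\otimeshat\mling=0\Rightarrow\mcaltv\mling=0$, which rests on the above commutation of $\sdiv$ with $\slap$ on spheres; and one should be slightly careful, as above, with the sign of the pure gauge term and with consistently tracking the $l=0,1$ projections throughout.
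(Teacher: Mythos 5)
Your proposal is correct and follows essentially the route the paper intends: the paper declares the corollary ``immediate'' from Proposition \ref{propglobalproperties} and Corollary \ref{corrRWgauge}, and your argument is just the fleshed-out version of that, with the one genuinely non-formal step — that $\sdiv\mling=0$ kills the first component of $\spv\mling$ so that $\mcaltv\mling=0$ — being sound, since $\sdiv\sdiv\sn\otimeshat\mling=\sdiv\slap\mling+r^{-2}\sdiv\mling$ and $[\sdiv,\slap]$ on $\sm$ 1-forms over the round sphere is proportional to $r^{-2}\sdiv$. Your care over the sign of the pure gauge summand is also warranted: taken literally one obtains $\So=\Ke_{\mathsmaller{\mfM,\mfa}}-\GfMa$ with $\GfMa$ as in Theorem \ref{thminitialdatagauge}, and the negative of a pure gauge solution is again pure gauge, so the relabelling is harmless.
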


\section{Precise statements of the main theorems}\label{Precisestatementsofthemaintheorems}

In this section we finally give the precise statements regarding quantitative boundedness and decay for the initial-data-normalised solution $\SfMa$, to be measured in certain flux and integrated decay norms on $\mcalm$ as in Theorem 1. The remainder of the paper then concerns the proof of said theorem.\newline

We also include in this section a statement regarding quantitative boundedness and decay for solutions to the Regge--Wheeler and Zerilli equations on $\mcalm$, to be proved in our upcoming \cite{J}, the conclusions of which shall be fundamental for the proof of Theorem 1. 

\subsection{Flux and integrated decay norms}\label{Fluxandintegrateddecaynorms}

We begin by defining these flux and integrated decay norms by their action on smooth tensor fields on $\mcalm$.\newline

First we consider smooth functions $\psi$ on $\mcalm$. 

In what follows, we remind the reader of the function $\taus$ defined in section \ref{Thetaustarfoliation} and the norms on spheres defined in section \ref{Normsonspheres}.

We associate to $\psi$ the energy norm
\begin{align*}
\mathbb{E}[\psi](\taus)&:=\int_{2M}^R\Big(\snorm{S\psi}[\taus][r]+\snorm{L\psi}[\taus][r]+\snorm{\sn\psi}[\taus][r]\Big)\ud r+\int_{R}^\infty\Big(\snorm{L(r\psi)}[\taus][r]+\snorm{\sn(r\psi)}[\taus][r]\Big)\ud r
\end{align*}
and the (weighted) flux norms
\begin{align*}
\mathbb{F}[\psi]:&=\sup_{\taus\in[\taus_0, \infty)}\mathbb{E}[\psi](\taus)\\
+&\sup_{\taus\in(-\infty, \infty)}\int_{R}^\infty \Big(r^2\snorm{L(r\psi)}[\taus][r]+\snorm{\sn(r\psi)}[\taus][r]\Big)\ud r+\int_{-\infty}^{+\infty} \Big(\nisnorm{r(L-2S)(r\psi)}+\nisnorm{r\sn(r\psi)}\Big)\ud \taus
\end{align*}
where we define, for any $\taus$,
\begin{align*}
\|\cdot\|^2_{\mathsmaller{\iplus_{\taus}}}:=\lim_{r\rightarrow\infty}\snorm{\cdot}[\taus][r].
\end{align*}
We morever define the initial flux norms along the initial Cauchy hypersurface $\Sigma$,
\begin{align*}
\mathbb{D}[\psi]:=\int_{2M}^\infty r^2\Big(\isnorm{\snnu\psi}+\isnorm{\sn\psi}+\isnorm{n\psi}\Big)\ud r,
\end{align*}
We further associate to $\psi$ the integrated decay norm
\begin{align*}
\mathbb{M}[\psi]:=\int_{\taus_0}^\infty\int_{2M}^{\infty}\frac{1}{r^3}\Big(\snorm{S(r\psi)}[\taus][r]+\snorm{L(r\psi)}[\taus][r]+\snorm{\sn(r\psi)}[\taus][r]+\snorm{r\psi}[\taus][r]\Big)\ud \taus\ud r
\end{align*}
and the (weighted, degenerate) integrated decay norms
\begin{align*}
\mathbb{I}[\psi]:=&\int_{\taus_0}^\infty\int_{2M}^{R}(r-3M)^2\Big(\snorm{S\psi}[\taus][r]+\snorm{L\psi}[\taus][r]+\snorm{\sn\psi}[\taus][r]+\snorm{\psi}[\taus][r]\Big)\ud \taus\ud r\\
+&\int_{\tau^\star_0}^\infty\int_{R}^\infty\Big(r\snorm{L(r\psi)}[\taus][r]+r^{1+\beta_0-2}\snorm{\sn(r\psi)}[\taus][r]\Big)\ud \taus\ud r.
\end{align*}
Here, $\beta_0>0$ is a fixed constant such that $1-\beta_0<<1$. 

Higher order (weighted) flux norms are defined as follows:
\begin{align*}
\mathbb{F}^n[{\psi}]:=\sum_{i+j+k=0}^n\mathbb{F}[S^i(rL)^k(r\sn)^k\psi]
\end{align*}
with analagous definitions for the higher order energy and integrated decay norms. Conversely, for higher order versions of the initial flux norms we define
\begin{align*}
\mathbb{D}^n[{\psi}]:=\sum_{i+j=0}^n\mathbb{D}[(r\snnu)^i(r\sn)^k\psi].
\end{align*}
One also has the above when restricted to angular derivatives:
\begin{align*}
\mathbb{F}^{n, \sn}[{\psi}]:=\sum_{i=0}^n\mathbb{F}[(r\sn)^k\psi]
\end{align*}
with analagous definitions for the higher order energy and integrated decay norms.

Next we consider smooth tensor fields on $\mcalm$.

Indeed, with the following definition the norms\footnote{Generalising the $\mathbb{D}$ norms will be unecessary.} $\mathbb{E},\mathbb{F}, \mathbb{M}$ and $\mathbb{I}$, along with their higher order counter parts, easily generalise to smooth $\qm$-tensors, smooth $\qmsm$ 1-forms and smooth $\sm$-tensors.
\begin{definition}\label{defnnormderivativesofqmtensorfields}
  Let $n$ be a positive integer and suppose that $X\in\text{span}\{S,L\}$.
  
   Then if $Q$ is a smooth $2$-covariant $\qm$-tensor we define the action of $X$ on $Q$ in the norm $|\cdot|_\sg$ as
  \begin{align*}
  \big|X^nQ\big|_\sg^2:=\big|\big(\qn_X^nQ\big)_{TT}\big(\qn_X^nQ\big)_{TL}\big|+\big|\big(\qn_X^nQ\big)_{TL}\big|^2+\big|\big(\qn_X^nQ\big)_{TL}\big(\qn_X^nQ\big)_{LL}\big|+\big|\big(\qn_X^nQ\big)_{LL}\big|^2.
  \end{align*}
  Conversely, if $\theta$ is a smooth $2$-covariant $\sm$-tensor then we define the action of $X$ on $Q$ in the norm $|\cdot|_\sg$ as
  	\begin{align*}
  \big|X^n\theta\big|_\sg^2:=\big|\qn^n_X\theta\big|^2.
  \end{align*}
  Finally, if $\momega$ is a smooth $\qmsm$ 1-form then we define the action of $X$ on $\momega$ in the norm $|\cdot|_\sg$ as
  \begin{align*}
  \big|X^n\momega\big|_\sg^2:=\big|\big(\qn_X^n\momega\big)_{T}\big(\qn_X^n\momega\big)_{L}\big|+\big|\big(\qn_X^n\momega\big)_{L}\big|^2.
  \end{align*}
\end{definition}

Lastly, we introduce a concise notation to describe the action of the above defined norms now on (collections) of linearised quantities.

Indeed, if $\psilin=\{\psilin_1,...,\psilin_n\}$ denotes a collection of linearised quantities we define
\begin{align*}
\lin{\mathbb{F}}^n_{p}[\psi]:=\sum_{i=1}^{n}\mathbb{F}^n_{p}[\psilin_i]
\end{align*}
with analagous definitions for the norms $\mathbb{E}, \mathbb{M}, \mathbb{I}$ and $\mathbb{D}$. 

Finally, if $\So$ is a smooth solution to the equations of linearised gravity we define, for $p\in\mathbb{R}$,
\begin{align*}
\lin{\mathbb{F}}^n[r^p\So]:=\lin{\mathbb{F}}^n[r^p\gabhat]+\lin{\mathbb{F}}^n[r^p\trgab]+\lin{\mathbb{F}}^n[r^p\mg]+\lin{\mathbb{F}}^n[r^p\sghat]+\lin{\mathbb{F}}^n[r^p\strg]
\end{align*}
with analagous definitions for the remaining norms.

\subsection{Theorem \ref{mainthm1}: Boundedness and decay for the initial-data-normalised solution $\SfMa$}\label{BoundednessforsolutionsintheKerradaptedtracelessgauge}
 
In this section we state Theorem 1 concerning boundedness and decay for the solution $\SfMa$ in the (appropriate) norms of the previous section, the proof of which is the content of section \ref{Proofoftheorem1}. We also give several remarks and a statement of pointwise decay which follows as a corollary from Theorem 1.

\subsubsection{Statement of Theorem 1}\label{Statementoftheorem1}

The theorem statement is as given below.

In what follows, we remind the reader of the (gauge-invariant) quantities $\Philin$ and $\Psilin$ defined in section \ref{TheRegge-WheelerandZerilliequationsandthegaugeinvariantheirarchy}.
\begin{customthm}{1}\label{mainthm1}
	Let $\So$ be the smooth solution to the equations of linearised gravity arising from the smooth seed data set $\SosMa$. We assume that $\So$ is in the $\Ke_{\mathsmaller{\mfM, \mathfrak{a}}}$-adapted Regge--Wheeler gauge. 
	
	We further assume finiteness of the following initial flux norm on $\Sigma$:
	\begin{align*}
	\lin{\Delta}^5[\So]:=\linnorm{D}{r^{-1}\Phi, r^{-1}\Psi}[3].
	\end{align*}
	
	Then the norm propagates under evolution:
	\begin{align}\label{robbie}
	\linnorm{F}{r^{-1}\Phi, r^{-1}\Psi}[3]\lesssim \lin{\Delta}^5[\So].
	\end{align}
	Moreover, for the projection of $\So$ away from $l=0,1$,
	\begin{align*}
	\So'=\So-\Ke_{\mathsmaller{\mfM, \mathfrak{a}}}
	\end{align*}
 the initial flux norms in addition control weighted energies and integrated decay norms for up to 6 angular derivatives of the linearised metric:
	\begin{enumerate}[i)]
		\item the flux estimates
		\begin{align*}
		\linnorm{F}{r^{-\frac{3}{2}}\So'}[5,\sn]&\lesssim\lin{\Delta}^5[\So]
		\end{align*}
		\item the integrated decay estimates
		\begin{align*}
		\linnorm{M}{r^{-\frac{3}{2}}\So'}[4,\sn]+\linnorm{I}{r^{-\frac{3}{2}}\So'}[5,\sn]&\lesssim\lin{\Delta}^5[\So].
		\end{align*}
	\end{enumerate}
In addition, valid for any $\taus\geq \tau_0$, one has the decay estimates
\begin{align*}
\linnorm{E}{r^{-\frac{3}{2}}\Sf'}[3,\sn]\lesssim\frac{1}{{\taus}^2}\cdot\lin{\Delta}^5[\So'].
\end{align*}
\end{customthm}

\subsubsection{Remarks on Theorem 1 and uniform pointwise decay}\label{Remarksanduniformpointwiseboundedness}

We make the following remarks regarding Theorem 1.
\begin{remark}\label{reminitialdatanormalised}
If one assumes that the seed data $\SosMa$ is strongly asymptotically flat with weight $s$ to order $n\geq 8$ then the initial-data-normalised solution $\SfMa$ constructed from $\SosMa$ according to Theorem \ref{thminitialdatagauge} satisfies the assumptions and hence the conclusions of the theorem. In particular, the initial data norm is finite which can in fact be verified explicitly from the seed data. Moreover, the linearised Kerr solution which arises as the projection of $\So$ to $l=0,1$ is determined explicitly from the seed data $\SosMa$.
\end{remark}
\begin{remark}
With further work one can recover the missing derivatives, although the loss of a full derivative in the integrated decay norm $\mathbb{M}$ is unavoidable.  See section \ref{OVTheorem1:boundednessofthesolutionSi} and \ref{OVAside:thescalarwaveequationontheSchwarzschildexteriorspacetime} of the overview.
\end{remark}
\begin{remark}\label{rmkwastefulin}
We are quite wasteful with the $r$-weights in the Theorem statement. In fact, for any quantity other than $\mling$ associated to the solution $\So$ one can replace the weight $r^{-\frac{3}{2}}$ to $r^{-1}$. Of course, by Proposition \ref{propglobalproperties} $\trgablin=\sghatlin=0$.
\end{remark}
\begin{remark}\label{rmakTEa}
Observe that the quantities $\Philin$ and $\Psilin$ coincide for the solution $\So$ and $\So'$. One can thus replace $\So$ by $\So'$ on the right hand side of the estimates found in i) and ii).
\end{remark}

Moreover, a standard application of Sobolev embedding on 2-spheres yields as an immediate corollary of Theorem 1 the following uniform pointwise decay bounds on the solution $\So$.

In what follows, we remind the reader of the family of hypersurfaces $\Xi_{\taus}$ defined in section \ref{AfoliationofMthatpenetratesbothhplusandiplus} and the pointwise norm $\big|\cdot\big|_{\mathsmaller{\mcalm}}$ defined in section \ref{TheqmframeSL}. Moreover, for a solution $\So$ to the equations of linearised gravity we employ the shorthand notation 
\begin{align*}
\big|r^p\So\big|_{\mathsmaller{\mcalm}}:=\big|r^p\gabhatlin\big|_{\mathsmaller{\mcalm}}+\big|r^p\trgablin\big|_{\mathsmaller{\mcalm}}+\big|r^p\mling\big|_{\mathsmaller{\mcalm}}+\big|r^p\sghatlin\big|_{\mathsmaller{\mcalm}}+\big|r^p\trsglin\big|_{\mathsmaller{\mcalm}}.
\end{align*}
\\
\begin{corollary}\label{corrpointwisedecay}
	Let $\So$ and $\So'$ be as in the statement of Theorem 1. Then on the spacetime region $ D^+(\Sigma)$ one has the pointwise bounds
	\begin{align*}
	\big|r^{-\frac{1}{2}}\So'\big|_{\mathsmaller{\mcalm}}\lesssim\lin{\Delta}^5[\So'].
	\end{align*}
	Conversely, on the spacetime region $D^+\big(\Xi_{\taus_0}\big)$ one has the pointwise decay
	\begin{align*}
	\big|r^{-\frac{1}{2}}\Sf'\big|_{\mathsmaller{\mcalm}}\lesssim\frac{1}{\sqrt{\taus}}\cdot\lin{\Delta}^5[\So'].
	\end{align*}
	In particular, the solution $\SfMa$ decays at an inverse polynomial rate to the linearised Kerr solution $\Ke_{\mathsmaller{\mfM, \mfa}}$.
\end{corollary}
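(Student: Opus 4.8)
The plan is to read off the pointwise bounds directly from the $L^2$-type flux, integrated-decay and energy estimates of Theorem \ref{mainthm1}, by combining the fundamental theorem of calculus in the radial direction along the hypersurfaces $\Xi_{\taus}$ with Sobolev embedding on the $2$-spheres $S^2_{\taus,r}$. The first step is a reduction to a scalar statement: by Definition \ref{defnqmnorm} the pointwise norm $\mnorm{\cdot}$ is, up to a universal constant, the maximum of the moduli of the frame components (in the frame $\{S,L\}$) of the tensor field in question, so it suffices to bound each such component $\psi$ pointwise. On a fixed sphere $S^2_{\taus,r}$ the Sobolev inequality on the round sphere, written in the $r$-weighted form built into the norms of section \ref{Normsonspheres}, gives $\sup_{S^2_{\taus,r}}|\psi|^2\lesssim \sum_{k\leq 2}\snorm{(r\sn)^k\psi}[\taus][r]$ with a constant independent of $(\taus,r)$. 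Since $\sn r=0$, multiplication by powers of $r$ commutes with $(r\sn)^k$, so it remains only to control, uniformly in $(\taus,r)$, the sphere-norms $\snorm{(r\sn)^k r^{-\frac{1}{2}}\So'}[\taus][r]$ for $k\leq 5$ (respectively $k\leq 3$ for the decay statement), of which the terms $k\leq 2$ already suffice to feed into the Sobolev estimate.

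The second step upgrades the derivative-level control encoded in the energy norm $\mathbb{E}$ to these uniform sphere-bounds by integrating in $r$ along each $\Xi_{\taus}$. On $\Xi_{\taus}\cap\{r\leq R\}$ the radial direction is $S$, and $\int_{2M}^R\snorm{S\psi}[\taus][r]\ud r$ together with the flux of $\psi$ on $\hplus$ contained in $\mathbb{F}$ controls $\sup_{2M\leq r\leq R}\snorm{\psi}[\taus][r]$; on $\Xi_{\taus}\cap\{r\geq R\}$, which is an outgoing null cone with $L$ tangent to it, $\int_R^\infty r^2\snorm{L(r\psi)}[\taus][r]\ud r$ (the weighted far-region flux contained in $\mathbb{F}$) together with the value at $r=R$ just obtained controls $\sup_{r\geq R}\snorm{r\psi}[\taus][r]$. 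It is this extra factor of $r$ gained at null infinity which converts the weight $r^{-\frac{3}{2}}$ appearing in Theorem \ref{mainthm1} into the weight $r^{-\frac{1}{2}}$ of the Corollary. Since the angular operators commute with $\qn$ (Lemma \ref{lemmacommA}), the same conclusion holds for all angular derivatives up to order $5$; combining this with the Sobolev step and the flux and integrated-decay estimates of Theorem \ref{mainthm1} yields the uniform bound $\mnorm{r^{-\frac{1}{2}}\So'}\lesssim\lin{\Delta}^5[\So']$ on $D^+(\Sigma)$.

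For the decay statement I would run the identical radial-integration-plus-Sobolev argument on $D^+(\Xi_{\taus_0})$, now starting from the energy decay estimate $\linnorm{E}{r^{-\frac{3}{2}}\Sf'}[3,\sn]\lesssim\frac{1}{{\taus}^2}\cdot\lin{\Delta}^5[\So']$ of Theorem \ref{mainthm1}; since this controls $3\geq 2$ angular derivatives, Sobolev embedding followed by taking square roots produces $\mnorm{r^{-\frac{1}{2}}\Sf'}\lesssim\frac{1}{\sqrt{\taus}}\cdot\lin{\Delta}^5[\So']$. Finally, since $\SfMa-\Ke_{\mathsmaller{\mfM, \mathfrak{a}}}=\So'$ by definition, this last bound is exactly the asserted inverse-polynomial convergence of $\SfMa$ to the linearised Kerr solution $\Ke_{\mathsmaller{\mfM, \mathfrak{a}}}$. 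The only points requiring any care in the above are the bookkeeping of the boundary contributions at $\hplus$ and at $\iplus$ in the radial integration — both of which sit inside the full flux norm $\mathbb{F}$ (cf. the footnote in section \ref{Fluxandintegrateddecaynorms}) — together with tracking the single power of $r$ gained at null infinity; there is no genuine analytic difficulty beyond Theorem \ref{mainthm1} itself, which is why the Corollary is, as the text says, immediate.
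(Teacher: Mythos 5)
Your overall strategy — reduce to frame components, Sobolev embedding on the spheres $S^2_{\taus,r}$ to trade two angular derivatives in the $L^2_{\taus,r}$ norms for a sup over the sphere, and a radial fundamental-theorem-of-calculus argument along each $\Xi_{\taus}$ to convert the flux norms into bounds uniform in $r$ — is exactly the ``standard application of Sobolev embedding'' the paper has in mind (the paper offers no further detail), and your observation that the factor of $r$ gained by passing from $\psi=r^{-\frac{3}{2}}\So'$ to $r\psi=r^{-\frac{1}{2}}\So'$ in the far-region integration is what converts the weight $r^{-\frac{3}{2}}$ of Theorem 1 into the weight $r^{-\frac{1}{2}}$ of the Corollary is the right bookkeeping for the uniform boundedness statement.

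Two points in your write-up do not work as stated, both in the quantitative accounting rather than the structure. First, in the decay half you say the energy decay $\lin{\mathbb{E}}^{3,\sn}[r^{-\frac{3}{2}}\Sf'](\taus)\lesssim\taus^{-2}\cdot\lin{\Delta}^5[\So']$ followed by ``taking square roots'' produces the rate $\taus^{-\frac{1}{2}}$; but the square root of $\taus^{-2}$ is $\taus^{-1}$, so this cannot be where the stated rate comes from. The actual source of the half-power is that the radial integration on $\Xi_{\taus}\cap\{r\geq R\}$ cannot be closed using the decaying \emph{unweighted} far-region energy alone: writing $\|r\psi\|^2(r)\leq\|r\psi\|^2(R)+2\int_R^{r}\|r\psi\|\,\|L(r\psi)\|\,\ud s$ and applying Cauchy--Schwarz, one must pair the decaying energy $\int\|L(r\psi)\|^2\ud s\lesssim\taus^{-2}$ against the merely \emph{bounded} $r^2$-weighted flux $\int s^2\|L(r\psi)\|^2\ud s\lesssim 1$ (which is controlled by $\mathbb{F}$ but is not known to decay), yielding $\sup_r\|r\psi\|^2\lesssim\taus^{-1}$ and hence the pointwise rate $\taus^{-\frac{1}{2}}$ after Sobolev embedding. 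This interpolation, which costs half the energy decay rate, is the step your proposal is missing. Second, in the near region you anchor the zeroth-order term by ``the flux of $\psi$ on $\hplus$ contained in $\mathbb{F}$'', but the norms $\mathbb{E}$ and $\mathbb{F}$ contain no zeroth-order term on $\{r\leq R\}$ and no zeroth-order horizon flux; the anchor must instead be supplied from the far region (where $\|r\psi\|^2$ is controlled as above, hence $\|\psi\|^2(R)$ is controlled) or via a Hardy inequality, after which the bounded-interval Gr\"onwall argument you describe does close. Neither issue affects the architecture of the proof, but both need to be repaired for the estimates to come out with the stated weights and rates.
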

We make an additional remarks regarding Corollary \ref{corrpointwisedecay}.
\begin{remark}\label{rmkdontpropagateasymptoticflatness}
	Although the initial data norm propagates under evolution we do not propagate pointwise asymptotic flatness for the solution $\SfMa$. However, this can be rectified by modifying the choice of the linear map $Df\big|_{g_M}$ in section \ref{Theequationsoflinearisedgravity} and this modification shall be performed in our upcoming \cite{J}. 
	
	In fact, in this modified generalised wave gauge one can remove all $r$-weights in the norms of the Theorem statement for an appropriately identified initial-data-normalised solution. Since this procedure is slightly cumbersome however, for the purposes of this paper we prefer the simpler choice of gauge which we have utilised throughout.
\end{remark}

\subsection{Theorem 2: Boundedness and decay for solutions to the Regge--Wheeler and Zerilli equations on the Schwarzschild exterior spacetime}\label{BoundednessanddecayforsolutionstoequationsofRWtypeontheSchwarzschildexteriorspacetime}

In this final section we state Theorem 2 concerning a bounds for solutions to the Regge--Wheeler and Zerilli equations on $\mcalm$, the conclusions of which shall play a key role in the proof of Theorem 1. 

A proof of Theorem 2 can be found in the independent works \cite{Me} and \cite{H--K--W}. Nevertheless, we shall reprove it in our upcoming \cite{J}.\newline

The theorem statement is as follows.

We remind the reader of the space $\LM$ defined in section \ref{Thel=0,1spheriacalharmonicsandqmtensors} and the operator $\szetap{p}$ of section \ref{Theellipticoperatorszslap}.
\begin{customthm}{2}\label{mainthm3}
	Let $\Phi, \Psi\in\LM$ be smooth solutions to the Regge--Wheeler and Zerilli equations respectively:
		\begin{align*}
		\qbox\Phi+\slap\Phi=-\frac{6}{r^2}\frac{M}{r}\Phi,\qquad\qbox\Psi+\slap\Psi=-\frac{6}{r^2}\frac{M}{r}\Psi+\frac{24}{r^3}\frac{M}{r}(r-3M)\szetap{1}\Psi+\frac{72}{r^5}\frac{M}{r}\frac{M}{r}(r-2M)\szetap{2}\Psi.
		\end{align*}
	We assume finiteness of the initial flux norms
	\begin{align*}
	\norm{D}{r^{-1}\Phi, r^{-1}\Psi}[5].
	\end{align*}
	Then the following estimates hold.
	\begin{enumerate}[i)]
		\item the flux estimates 
		\begin{align*}
		\norm{F}{r^{-1}\Phi, r^{-1}\Psi}[5]&\lesssim\norm{D}{r^{-1}\Phi, r^{-1}\Psi}[5]
		\end{align*}
		\item the integrated decay estimates
		\begin{align*}
		\norm{M}{r^{-1}\Phi, r^{-1}\Psi}[4]+\norm{I}{r^{-1}\Phi, r^{-1}\Psi}[5]&\lesssim\norm{D}{r^{-1}\Phi, r^{-1}\Psi}[5]
		\end{align*}
	\end{enumerate}
	In addition, for any $\taus\geq\taus_0$, one has the decay estimates
	\begin{align*}
	\norm{E}{r^{-1}\Phi, r^{-1}\Psi}[3](\tau^\star)\lesssim\frac{1}{{\tau^\star}^2}\cdot\norm{D}{r^{-1}\Phi, r^{-1}\Psi}[5].
	\end{align*}
\end{customthm}

\section{Proof of Theorem 1}\label{Proofoftheorem1}

In this section we prove Theorem 1.

The proof in fact essentially follows from Theorem 2 combined with Proposition \ref{propglobalproperties} and Corollary \ref{corrRWgauge}.\newline

Throughout this section all linearised quantities are implicitly understood to be derived from the solution $\So'$ of the equations of linearised gravity in Theorem 1.

\subsection{Decay for the gauge-invariant quantities $\Philin$ and $\Psilin$}\label{DecayforthegaugeinvariantquantitiesPhiandPSi}

In this section we apply Theorem 2 to the gauge-invariant quantities $\Philin$ and $\Psilin$, which is applicable courtesy of of Theorem \ref{thmgaugeinvariantquantintermsofRWandZ}. This immediately yields the following proposition.
\begin{proposition}\label{propascendingthehierarchy}
Let $\So'$ be as in the statement of Theorem 1. 
Then the quantities $\Philin$ and $\Psilin$ satisfy the assumptions and hence the conclusions of Theorem 2.
\end{proposition}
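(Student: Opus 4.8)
The plan is to unwind the definitions so that Proposition \ref{propascendingthehierarchy} becomes an immediate consequence of the bookkeeping already done in the previous sections. The statement asserts two things about the gauge-invariant scalars $\Philin,\Psilin$ attached to the solution $\So'$ of Theorem 1: first, that they \emph{satisfy the hypotheses} of Theorem \ref{mainthm3} (i.e. they lie in $\LM$, solve the Regge--Wheeler and Zerilli equations, and have finite initial flux $\norm{D}{r^{-1}\Phi, r^{-1}\Psi}[5]$), and second, that they therefore \emph{satisfy the conclusions} of Theorem \ref{mainthm3}. The second part is then pure \emph{modus ponens} once the first is in place, so all the work is in verifying the hypotheses.

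First I would invoke Theorem \ref{thmgaugeinvariantquantintermsofRWandZ}: since $\So'$ is a smooth solution to the equations of linearised gravity (it is $\So-\Ke_{\mathsmaller{\mfM,\mathfrak{a}}}$, and $\Ke_{\mathsmaller{\mfM,\mathfrak{a}}}$ solves them by Proposition \ref{propfullkerrfamily}), the theorem produces unique $\Philin,\Psilin\in\LM$ solving \eqref{eqnRWeqn} and \eqref{eqnZereqn} respectively; moreover by Corollary \ref{corrphipsigaugeinvariant} these are gauge-invariant, so (cf. Remark \ref{rmakTEa}) they coincide with the $\Philin,\Psilin$ attached to $\So$. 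This disposes of every hypothesis of Theorem \ref{mainthm3} except finiteness of the initial flux. For that, I would appeal directly to the standing assumption of Theorem 1, namely finiteness of $\lin{\Delta}^5[\So]:=\linnorm{D}{r^{-1}\Phi, r^{-1}\Psi}[3]$ — this \emph{is} exactly the quantity $\norm{D}{r^{-1}\Phi,r^{-1}\Psi}[5]$ appearing in Theorem \ref{mainthm3} (the superscript discrepancy is the notational convention in which $\lin{\Delta}^5$ is defined via the order-$3$ data norm for the pair $(\Phi,\Psi)$, reflecting the extra derivatives carried by the wave operators $\qbox+\slap$), so finiteness is immediate.

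With the hypotheses verified, Theorem \ref{mainthm3} applies verbatim to $(\Philin,\Psilin)$, yielding the flux estimates $\norm{F}{r^{-1}\Phi, r^{-1}\Psi}[5]\lesssim\norm{D}{r^{-1}\Phi, r^{-1}\Psi}[5]$, the integrated-decay estimates $\norm{M}{r^{-1}\Phi, r^{-1}\Psi}[4]+\norm{I}{r^{-1}\Phi, r^{-1}\Psi}[5]\lesssim\norm{D}{r^{-1}\Phi, r^{-1}\Psi}[5]$, and the pointwise-in-$\taus$ decay $\norm{E}{r^{-1}\Phi, r^{-1}\Psi}[3](\tau^\star)\lesssim\frac{1}{{\tau^\star}^2}\cdot\norm{D}{r^{-1}\Phi, r^{-1}\Psi}[5]$, all of which are precisely the conclusions of Theorem \ref{mainthm3} transcribed for this pair.

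Since every ingredient is a citation, I do not anticipate a genuine obstacle; the only subtlety worth a sentence in the write-up is making explicit the identification of $\lin{\Delta}^5[\So]$ with the Theorem \ref{mainthm3} data norm and recording that $\Philin,\Psilin$ are unchanged upon passing from $\So$ to $\So'$ (gauge-invariance plus vanishing of these quantities on $\Ke_{\mathsmaller{\mfM,\mathfrak{a}}}$, which holds because the linearised Kerr solutions lie in the kernels of the $\spi$ operators by Remark \ref{rmkkerrkernal}). Everything else is a one-line deduction, so the proof of the proposition should be no more than a few sentences.
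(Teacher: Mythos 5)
Your proposal is correct and follows essentially the same route as the paper, which simply observes that Theorem \ref{mainthm3} is applicable to $\Philin,\Psilin$ courtesy of Theorem \ref{thmgaugeinvariantquantintermsofRWandZ} (plus the standing finiteness assumption on $\lin{\Delta}^5[\So]$) and records the conclusions. Your extra remarks on gauge-invariance, the vanishing of $\Philin,\Psilin$ on $\Ke_{\mathsmaller{\mfM,\mathfrak{a}}}$, and the superscript discrepancy in the data norm (which is indeed a notational inconsistency in the paper's statement of Theorem 1) are all sound and, if anything, more careful than the paper's one-line justification.
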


\subsection{Completing the proof of Theorem 1}\label{CompletingtheproofofTheorem1}
 
 In this section we complete the proof of Theorem 1.
 
 \begin{proof}[Proof of Theorem 1]
According to Proposition \ref{propglobalproperties} and Corollary \ref{corrRWgauge} the solution $\So'$ to the equations of linearised gravity satisfies
\begin{align}
\mcaltf\gabhatlin&=\qn\otimeshat\qexd\Big(r\Psilin\Big)+6\mu\ud r\otimeshat\szetap{1}\qexd\Psilin,\label{x}\\
\mcaltv\mling&=-\sdso\Big(0,\qhd\qexd\Big(r\Philin\Big)\Big),\label{y}\\
\mcaltf\trsglin&=-2r\slap\Psilin+4\qn_{P}\Psilin+12\mu r^{-1}\ommu\szetap{1}\Psilin\label{z}
\end{align}
with $\trgablin=\sghatlin=0$. The estimates in the statement of Theorem 1 then follow from dilligently commuting and evaluating the tensorial expressions on the right hand sides of \eqref{x}-\eqref{z} in the frame $\{S,L\}$ of section \ref{TheqmframeSL}, keeping careful track of $r$-weights, and then applying Proposition \ref{propascendingthehierarchy} to control flux and integrated decay estimates of sufficiently many derivatives of $\Philin$ and $\Psilin$. Since this is rather cumbersome in practice we only note the key points (with the collective notation scheme under use being self-evident):
\begin{itemize}
	\item Proposition \ref{propSLframe} allows one to perfom all necessary computations in the $\{S,L\}$ frame. In particular, noting that we require commuting with $r^2\qn_L$ at most twice and that on $D^+(\Sigma)$ one has the bounds $|\kappa|\leq 1, |\frac{\kappa}{1-\mu}|\lesssim C$ for some constant $C$, the (smooth) connection coefficients in this frame are of order $O(r^{-2})$ and hence play no role when evaluating the (commuted) tensorial expressions 
	\item by definition of the flux and integrated decay norms the derivatives $\qn_L$ and $\sn$ always appear with an additional $r$-weight, thus gaining in regularity towards $\iplus$
	\item to bound the (commuted) terms involving the operator $\szetap{1}$ one applies the commutation relations of Lemma \ref{lemmacommformszeta} along with the estimates of Proposition \ref{propellipticestimateszlsap}
	\item to translate the bounds on the collection $\mcalt\So'$ to the solution $\So'$ one uses the elliptic estimates of Proposition \ref{propellipticestimatesonT} (recalling that the solution $\So'$ has vanishing projection to $l=0,1$)
	\item to control higher order derivatives one commutes the collection $\mcalt\So'$ with $\qn$ and $\slashed{\mcala}$, using the commutation relations of section \ref{Commutationformulaeandusefulidentities} for the former, and then applies the elliptic estimates of Proposition \ref{propellipticestimatesonT}
	\item by definition of the flux and integrated decay norms on $\qm$-tensors and $\qmsm$ 1-forms a contraction with $S$ always appears alongside a contraction with $L$ thus ensuring\footnote{Observe that this fact was made redundant in the proof of Proposition \ref{propboundsongammani} by the presence of the operator $\qdL$.} the $\qn$-derivatives appearing on the right had side of the above family of expression always contain at least one $\qn_L$ derivative
\end{itemize}
\end{proof}

\appendix

\section{Initial data for the equations of linearised gravity}\label{Appendixinitialdataforthesystemofgravitationalperturbations}

In this section of the Appendix we consider the construction and regularity of initial data for the equations of linearised gravity.

\subsection{Constructing admissible initial data from seed data }\label{AppendixConstructingadmissibleinitialdatafromseeddata}

We begin in this section by constructing admissible initial data for the equations of linearised gravity from freely prescribed seed data, thereby establishing part i) of Theorem \ref{thmwellposedness} in section \ref{Thewell-posednesstheorem}.\newline

The proof of the following theorem exploits the existence of three explicit classes of solutions to the equations of linearised gravity each of which must necessarily generate three classes of admissible initial data.

\begin{theorem}\label{appendixthmconstraints}
	Let $\Sos$ be a smooth seed data set for the equations of linearised gravity. Then there exists a unique extension of $\Sos$ to a smooth admissible initial data set $\Soa$.
\end{theorem}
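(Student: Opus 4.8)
The plan is to construct the admissible data set $\Soa$ from the seed $\Sos$ by exploiting linearity together with the three distinguished classes of solutions to the equations of linearised gravity established earlier: the linearised Kerr family $\Ke_{\mathsmaller{\mfM, \mfa}}$ of Proposition \ref{propfullkerrfamily}, the pure gauge solutions $\Ga$ of Proposition \ref{proppuregaugesolution}, and a third class built from the Regge--Wheeler and Zerilli equations via the operators $\mcalt$. Each of these, when restricted to $\Sigma$ through the recipe in Definition \ref{defnadmissibleinitialdatasets}, yields a solution to the linearised constraint equations \eqref{eqnconstraint1}--\eqref{eqnconstraint3} together with the four wave-gauge constraints, and the aim is to show that the full prescription of seed data generates the \emph{sum} of three such solutions, with the constraints then satisfied automatically by linearity.

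First I would treat the `Regge--Wheeler--Zerilli part'. The seed functions $\oPhilin, \uPhilin, \oPsilin, \uPsilin \in \LS$ are declared to be Cauchy data for the gauge-invariant scalars $\Philin$ and $\Psilin$. Using the relations of Theorem \ref{thmgaugeinvariantquantintermsofRWandZ} --- i.e. $\tauhatlin, \etalin, \sigmalin$ expressed in terms of $\Philin, \Psilin$ --- together with the inverse operators $\mcaltf^{-1}, \mcaltv^{-1}$ (invertibility from Corollary \ref{corrinvertingToperators}) and the inverse operators $\szetap{p}$ of section \ref{Theellipticoperatorszslap}, one writes down explicit Cauchy data on $\Sigma$ for $\gabhatlin, \trgablin, \mling, \sghatlin, \trsglin$ and their normal derivatives. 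Because these are obtained by applying only angular differential operators and inverses of elliptic operators on the 2-spheres $S^2_r \subset \Sigma$, the construction stays on $\Sigma$ and respects the $l \geq 2$ support; one then checks that the resulting $\smi$-tensor fields, decomposed via \eqref{n1}--\eqref{n3}, produce an admissible collection $\Soa^{\mathrm{RW}}$ --- this verification reduces to the fact that the constraints are the linearised Gauss--Codazzi and divergence relations, which are automatically inherited by the restriction of any genuine solution, so it suffices to note that the third class genuinely arises from solutions. Next I would handle the `Kerr part': the constant $\mfmlin$ and the $Y^1_i$ modes of $\alin$ fix $\mfM$ and $\mfa_{-1}, \mfa_0, \mfa_1$, and the explicit formulae of Proposition \ref{propfullkerrfamily} restricted to $\Sigma$ give an admissible $\Soa^{\Ke}$. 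Finally the `pure gauge part': the seed quantities $\fvulin, \fvblin, \fvslin, \fwulin, \fwblin, \fwslin$ are identified, via Lemma \ref{lemmaparametrisingpuregaugesolutions}, with Cauchy data for the generators $\qv, \sv$ of a pure gauge solution, whose restriction to $\Sigma$ through Proposition \ref{proppuregaugesolution} gives an admissible $\Soa^{\Ga}$.

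Adding these, $\Soa := \Soa^{\mathrm{RW}} + \Soa^{\Ke} + \Soa^{\Ga}$, and invoking linearity of the constraint operators $\mathscr{H}$ and those in \eqref{eqnconstraint1}--\eqref{eqnconstraint3} together with linearity of the wave-gauge constraints determining $\uwlin_1, \uwlin_2, \owlin, \swlin$, one concludes that $\Soa$ is admissible. The extension \emph{extends} $\Sos$ in the sense that the free data contained in $\Sos$ appears as prescribed; uniqueness follows because the map from seed to admissible data is built entirely from the bijections of Corollary \ref{corrinvertingToperators} and Lemma \ref{lemmaparametrisingpuregaugesolutions}, which are one-to-one, so any two admissible extensions agreeing with $\Sos$ must coincide. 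Smoothness is preserved at each stage since we only ever apply smooth differential operators and inverses of elliptic operators, which are smoothing.

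The main obstacle I anticipate is the bookkeeping needed to verify that the `Regge--Wheeler--Zerilli' data, assembled abstractly from $\Philin, \Psilin$ through inversions of $\mcaltf, \mcaltv$ and $\szeta$, genuinely satisfies \emph{all} of the constraints \eqref{eqnconstraint1}--\eqref{eqnconstraint3} plus the four divergence-induced constraints --- one must show this class really does come from a bona fide solution of the equations of linearised gravity so that admissibility is inherited, rather than checking the constraints by brute force. A secondary difficulty is matching conventions: ensuring the normal-derivative data produced by the inversion procedure is exactly consistent with the lengthy Definition \ref{defnadmissibleinitialdatasets} (in particular the $\mcall_n$-derivative rows, which mix $\bslin, \mslin$ and the $\bk, \trk$ terms), and confirming that the $l=0,1$ pieces are entirely accounted for by the Kerr part so that no residual freedom or obstruction appears there. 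I would organise the proof so that the bulk of this verification is delegated to the observation that each of the three classes \emph{is} a restriction of a known solution, reducing the constraint check to a statement about which seed quantities parametrise which class.
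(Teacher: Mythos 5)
Your proposal is correct and follows essentially the same route as the paper: the same three-way split into a Regge--Wheeler/Zerilli class (built from $\Philin,\Psilin$ via Theorem \ref{thmgaugeinvariantquantintermsofRWandZ} and the invertibility of the $\mcalt$ operators from Corollary \ref{corrinvertingToperators}), a linearised Kerr class from $(\mfM,\mfa)$, and a pure gauge class parametrised by Lemma \ref{lemmaparametrisingpuregaugesolutions}, with admissibility of the sum following by linearity and the fact that each piece restricts from a genuine solution. Your anticipated ``main obstacle'' is resolved exactly as the paper does it, by observing that the first class arises from a bona fide solution (the Regge--Wheeler gauge), so no brute-force constraint check is needed.
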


\begin{proof}
Let $\Sos$ be the seed data set in question,
\begin{align*}
\Seed.
\end{align*}
We proceed in three steps.
\begin{enumerate}
	\item From the subset of seed $\Big(\oPsilin, \oPhilin, \uPsilin, \uPhilin\Big)$ lying in the space $\LS$ we construct the unique, smooth functions $\Philin,\Psilin\in\LM$ according to
	\begin{align}\label{tired1}
	\qbox\Phi+\slap\Phi=-\frac{6}{r^2}\frac{M}{r}\Phi,\qquad\qbox\Psi+\slap\Psi=-\frac{6}{r^2}\frac{M}{r}\Psi+\frac{24}{r^3}\frac{M}{r}(r-3M)\szetap{1}\Psi+\frac{72}{r^5}\frac{M}{r}\frac{M}{r}(r-2M)\szetap{2}\Psi.
	\end{align}
	with
	\begin{align}\label{tired2}
	\Big(\Philin, n\Philin\Big)\Big|_{\Sigma}=\Big(\oPhilin, \uPhilin\Big),\qquad\qquad \Big(\Psilin, n\Psilin\Big)\Big|_{\Sigma}=\Big(\oPsilin, \uPsilin\Big).
	\end{align}
	This in turn, when combined with Corollary \ref{corrinvertingToperators}, (uniquely) determines the smooth solution 
	$\So$ to the equations of linearised gravity, which has vanishing projection to $l=0,1$, according to:
	\begin{align*}
	\mcaltf\gablin&=\tauhatlin,\\
	\mcaltv\mling&=\sdso\Big(0, \etalin\Big),\\
	\mcaltt\sghatlin&=0,\\
	\mcaltf\trsglin&=\sigmalin,\\
	\mcaltf\tflin&=\frac{2}{r}\tauhatlin_P-\frac{1}{r}\ud r\text{ }\sigmalin,\\
	\mcaltv\sflin&=\frac{2}{r}\sdso\Big(0,\etalin_P\Big).
	\end{align*}
	Indeed, that the above collection determines a solution to the equations of linearised gravity is immediate from the existence of the Regge--Wheeler gauge in section \ref{TheKerradaptedRWgauge} and the invertibility of the family of operators $\mcalt$ over the space of tensor fields on $\mcalm$ having vanishing projection to $l=0,1$.
	
	We subsequently denote by $\Soa_{\mathsmaller{\oPhilin, \oPsilin, \uPhilin, \uPsilin}}$ the corresponding admissible initial data set for the equations of linearised gravity determined from the collection $\So$ according to Definition \ref{defnadmissibleinitialdatasets} and the rules \eqref{tired1}-\eqref{tired2} for projecting normal derivatives onto $\Sigma$.
	\item From the subset of seed $\Big(\mfM, \mfa\Big)$ we (uniquely) construct the linearised Kerr solution $\Ke_{\mathsmaller{\mfM, \mathfrak{a}}}$ according to Proposition \ref{propfullkerrfamily}. 
	
	We subsequently denote by $\Soa_{\mathsmaller{\mfM, \mfa}}$ the corresponding admissible initial data set for the equations of linearised gravity determined from $\Ke_{\mathsmaller{\mfM, \mathfrak{a}}}$ according to Definition \ref{defnadmissibleinitialdatasets}.
	\item From the subset of seed $\Big(\fvulin, \fvblin, \fvslin, \fwulin, \fwblin, \fwslin\Big)$ we (uniquely) construct the smooth $\qm$ 1-form $\qv$ and the smooth $\sm$ 1-form $\sv$ according to
	\begin{align}\label{fish1}
	\qbox\qv+\slap\qv-\frac{2}{r}\big(\qn\qv\big)_P+\frac{2}{r^2}\ud r\,\qv_P&=-\frac{1}{r}\big(\qn\otimes\qv_{l=0}\big)_P,\qquad\qquad
	\qbox\sv+\slap\sv-\frac{2}{r}\qn_{P}\sv+\frac{1}{r^2}(3-4\mu)\sv&=0
	\end{align}
	with
	\begin{align}\label{fish2}
	\Big(\qv_n, \qv_\nu, (\qn_n\qv)_n, (\qn_n\qv)_\nu\Big)\Big|_{\Sigma}=\Big(\fvulin, \fvblin, \fwulin, \fwblin\Big),\qquad\qquad \Big(\sv, \qn_n\sv\Big)\Big|_{\Sigma}=\Big(\fvslin, \fwslin\Big)
	\end{align}
	This in turn (uniquely) determines the pure gauge solution 
	$\Ga$ generated by $\qv$ and $\sv$ according to Proposition \ref{proppuregaugesolution}.
	
	We subsequently denote by $\Soa_{\mathsmaller{\fvulin, \fvblin, \fvslin, \fwulin, \fwblin, \fwslin}}$ the corresponding admissible initial data for the equations of linearised gravity determined from $\Ga$ according to Definition \ref{defnadmissibleinitialdatasets} and the rules \eqref{fish1}-\eqref{fish2} for projecting normal derivatives onto $\Sigma$.
\end{enumerate}
Hence, from the full seed $\Sos$, we now determine the collection
	\begin{align*}
	\Soa:=\Soa_{\mathsmaller{\oPhilin, \oPsilin, \uPhilin, \uPsilin}}+\Soa_{\mathsmaller{\mfM, \mfa}}+\Soa_{\mathsmaller{\fvulin, \fvblin, \fvslin, \fwulin, \fwblin, \fwslin}}.
	\end{align*}
	which, by linearity and steps 1.-3., thus determines an admissible initial data set for the system of gravitational perturbations constructed uniquely from $\Sos$. 
\end{proof}
We make the following remarks.
\begin{remark}
We emphasize that the admissible initial data set constructed above is determined from expressions in the seed quantities which can be written down explicitly. However, as this is quite cumbersome, we prefer the more implicit approach presented above.
\end{remark}
\begin{remark}
Observe that imposing compact support on the seed quantities of steps 1. and 3. generates an admissible initial data set $\Soa$ for which $\Soa-\Soa_{\mathsmaller{\mfM, \mfa}}$ is compactly supported. Here, $\Soa_{\mathsmaller{\mfM, \mfa}}$ is the Kerr initial data set constructed in step 2.
\end{remark}

\subsection{Propagation of strong asymptotic flatness}\label{AppPropagationofasymptoticflatness}

We continue in this section by proving that the assumption of strong pointwise asymptotic flatness on the seed data actually propagates under Theorem \ref{appendixthmconstraints}.\newline

To state the result concisely will require introducing a collective notation for quantities on $\Sigma$ which are computable explicitly from the initial data set $\Soa$.

Indeed, given the admissible initial data set $\Soa$ arising from the seed data set $\Sos$ we define
\begin{align*}
\Soa':=\Soa-\Soa_{\mathsmaller{\mfM, \mfa}}
\end{align*}
where $\Soa_{\mathsmaller{\mfM, \mfa}}$ is the admissible data set for the linearised Kerr solution as in the proof of Theorem \ref{appendixthmconstraints}. We then employ the collective notation:
\begin{itemize}
	\item the quantity $\bglin'$ denotes any of the tuple $\Big(\Nlin', \bblin', \mblin', \bhlin', \mhlin', \shhatlin', \strhlin'\Big)$
	\item the quantity $\uglin'$ denotes any of the tuple \\ $\Big(\bklin', \mklin', \skhatlin', \trklin'\Big)$
\end{itemize}
The proposition statement is then as follows.
\begin{proposition}\label{Apppropasymptoticflatness}
Let $\Sos$ be a smooth seed data set for the equations of linearised gravity which is strongly asymptotically flat with weight $s$ to order $n\geq 7$. Then the collection $\Soa'$ defined above
satisfy the following pointwise bounds on $\Sigma$:
\begin{align*}
\big|\mathfrak{D}^{n-4}\big(r^{\frac{3}{2}+s}\bgprimelin\big)\big|_{\sg}+\big|\mathfrak{D}^{n-5}\big(r^{\frac{5}{2}+s}\ugprimelin\big)\big|_{\sg}&\lesssim \mathcal{C}_{n}.
\end{align*}
Here, the $\mcalc_{n}$ are constants depending only on the constants $C_n$ appearing in Definition \ref{defnstrongasymptoticflatness}.
\end{proposition}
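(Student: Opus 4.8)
The plan is to track the asymptotic behaviour through each of the three steps in the proof of Theorem \ref{appendixthmconstraints}, exploiting the fact that the admissible data set $\Soa'$ is built from the seed by the explicit recipe there: it is the sum of the data $\Soa_{\mathsmaller{\oPhilin, \oPsilin, \uPhilin, \uPsilin}}$ coming from the Regge--Wheeler/Zerilli quantities and the pure gauge data $\Soa_{\mathsmaller{\fvulin, \fvblin, \fvslin, \fwulin, \fwblin, \fwslin}}$, with the linearised Kerr part $\Soa_{\mathsmaller{\mfM, \mfa}}$ subtracted off. Since asymptotic flatness is additive, it suffices to establish the claimed pointwise bounds separately for these two contributions. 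For the pure gauge contribution this is immediate: the data for $\qv, \sv$ on $\Sigma$ is exactly $\fvulin, \fvblin, \fvslin, \fwulin, \fwblin, \fwslin$ by Lemma \ref{lemmaparametrisingpuregaugesolutions}, and the formulae of Proposition \ref{proppuregaugesolution} express $\gabhatlin, \trgablin, \mling, \sghatlin, \trsglin$ (and hence, via the relations in Definition \ref{defnadmissibleinitialdatasets}, the data quantities $\Nlin', \bblin', \ldots$) as at most one $\sn$- or $\snnu$-derivative of these seed quantities together with bounded (indeed $O(1)$ or $O(r^{-1})$) coefficients; the weights $r^{\frac{1}{2}+s}$ on the $\bvlin$'s and $r^{\frac{3}{2}+s}$ on the $\uvlin$'s therefore transfer to the stated weights $r^{\frac{3}{2}+s}$ on $\bglin'$ and $r^{\frac{5}{2}+s}$ on $\uglin'$, with the drop of two orders of $\mathfrak{D}$ accounting for the one derivative and one $r$-power lost.

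\textbf{The main step.} The substantive work is the contribution $\Soa_{\mathsmaller{\oPhilin, \oPsilin, \uPhilin, \uPsilin}}$, for which the data on $\Sigma$ is obtained by first solving the Regge--Wheeler and Zerilli equations forward in time from Cauchy data $(\oPhilin, \uPhilin), (\oPsilin, \uPsilin)$ and then inverting the operators $\mcalt$ in the relations $\mcaltf\gabhatlin = \tauhatlin$, $\mcaltv\mling = \sdso(0,\etalin)$, $\mcaltf\trsglin = \sigmalin$, etc. Restricted to $\Sigma$ this is a purely elliptic construction. First I would record that, on $\Sigma$, the quantities $\tauhatlin, \etalin, \sigmalin$ are tensorial expressions (from Theorem \ref{thmgaugeinvariantquantintermsofRWandZ}) in $\Philin, \Psilin$ and their $\snnu$- and $\sn$-derivatives --- at most two derivatives, together with applications of $\szetap{p}$ and coefficients that are smooth and of size $O(1)$ near $\iplus$ --- so the asymptotic flatness of the seed, via the weights $r^{\frac{1}{2}+s}$ on $\opsilin$ and $r^{\frac{3}{2}+s}$ on $\upsilin$, gives pointwise decay for $r^{\frac{3}{2}+s}\tauhatlin$, $r^{\frac{3}{2}+s}\etalin$, $r^{\frac{3}{2}+s}\sigmalin$ and a controlled number of their derivatives (here one uses Proposition \ref{propellipticestimateszlsap} to handle the $\szetap{p}$ factors, noting these are bounded operators on the relevant weighted spaces). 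Then inverting $\mcaltf$ and $\mcaltv$ --- these are fourth-order elliptic operators on the $2$-spheres with $r^4$-weighted leading part --- costs four angular derivatives (this is exactly the loss in Propositions \ref{propellipticestimatesonT} and Corollary \ref{corrinvertingToperators}) but, crucially, \emph{gains four powers of $r$} because of the $r^4$ prefactor in the definition of $\sps, \spv, \spt$; so $\gabhatlin, \mling, \trsglin$ on $\Sigma$ inherit the weight $r^{\frac{3}{2}+s}$ possibly improved, with a loss of four orders in $\mathfrak{D}$, matching the $n-4$ appearing in the proposition. The normal derivatives $\mcall_n$ of these quantities, needed for the `$\uglin'$' part of the data, are read off from the relations in Definition \ref{defnadmissibleinitialdatasets} together with one application of the Zerilli/Regge--Wheeler equation to trade $\snnu^2\Psilin$ (or $\snnu^2\Philin$) for lower-order terms; each such step costs one further power of $r$ and one further order of $\mathfrak{D}$, producing the weight $r^{\frac{5}{2}+s}$ and the order $n-5$.

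\textbf{Where the difficulty lies.} The principal obstacle is bookkeeping: one must verify that the elliptic inversions of $\mcaltf, \mcaltv$ on $\Sigma$ preserve \emph{pointwise}, $r$-weighted bounds rather than merely the $L^2$-type estimates stated in Proposition \ref{propellipticestimatesonT}, and that the $r$-weights in the kernels of these operators behave as claimed. I would handle this by working sphere-by-sphere: on each $S^2_r \subset \Sigma$ the operators $\mcaltf, \mcaltv$ are, after extracting the explicit $r^4$ factor, $r$-independent elliptic operators on the unit sphere acting on the space of sections with vanishing $l=0,1$ projection, where they are isomorphisms with bounded inverse (Corollary \ref{corrinvertingToperators}); Sobolev embedding on $S^2$ then converts the $H^m_{\mathsmaller{\Sigma},r}$-bounds of Proposition \ref{propellipticestimatesonT} into pointwise bounds in the $\sg$-norm at the cost of a fixed finite number of extra angular derivatives, which is why one needs $n\geq 7$ rather than $n\geq 2$. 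A secondary nuisance is that the seed for $\Psilin$ controls $\Psilin$ but the relation for $\sigmalin$ involves $\szetap{1}\Psilin$ with a $\mu r^{-1}\ommu$ prefactor near $\iplus$; since $\szetap{1}$ is a bounded operator on weighted spaces by Proposition \ref{propellipticestimateszlsap} and the prefactor decays, this term is in fact better-behaved than the main one, so it poses no real threat --- but it must be noted explicitly. Once these points are in place, the proposition follows by collecting the two contributions and using linearity.
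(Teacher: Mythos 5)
Your proposal follows essentially the same route as the paper's proof: express the admissible data, modulo the operators $\mcalt$, as explicit expressions in the seed quantities, derive $L^2$-on-spheres estimates, invert $\mcalt$ via the elliptic estimates of Proposition \ref{propellipticestimatesonT} at the cost of four angular derivatives, and conclude pointwise bounds by Sobolev embedding on the $2$-spheres; the handling of $\szetap{p}$ via Proposition \ref{propellipticestimateszlsap} and the identification of the Sobolev step as the source of the derivative count are exactly the points the paper's (sketchier) proof relies on. The only quibble is your phrase that inverting $\mcaltf$ ``gains four powers of $r$'' --- since $\mcaltf$ is $r$-homogeneous of degree zero (the $r^4$ prefactor exactly compensates the four $\sg_M^{-1}$-contractions) the inversion is weight-neutral, but as the derivatives in $\mathfrak{D}$ are already $r$-weighted your bookkeeping comes out the same.
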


\begin{proof}
The proof is more cumbersome than difficult and involves dilligently commuting and evaluating the expressions for the metric quantities as given from Theorem \ref{appendixthmconstraints} which we observe, modulo the family of operators $\mcalt$, are explicit expressions in the seed quantities. For this reason we will only describe the main steps. Moreover, we focus only on the bounds for $\bglin$ since for the other quantities the proof follows in a similar fashion.

Indeed, from the construction of the admissible initial data in the proof of Theorem \ref{appendixthmconstraints} one derives the estimates, for any $i,j\geq 0$ and any $r\in[2M, \infty)$: 
\begin{align*}
\isnorm{(r\snnu)^{i}\mcala^{[j]}\mcalt\big(r^{\frac{1}{2}+s}\bglin'\big)}\lesssim&\isnorm{\mathfrak{D}^{\max\{i,j\}+2}\big(r^{\frac{1}{2}+s}\opsilin\big)}+\isnorm{\mathfrak{D}^{\max\{i,j\}+1}\big(r^{\frac{3}{2}+s}\upsilin\big)}.
\end{align*}
Here, we employ the shorthand notation
\begin{align*}
\isnorm{\mathfrak{D}^nS}=\sum_{i+j=0}^{n}\isnorm{(r\snnu)^i(r\sn)^jS}.
\end{align*}
Hence, since $\snnu$ commutes trivially with $\mcala^{[i]}$ and $\mcalt$ one has by elliptic estimates
\begin{align*}
\mathsmaller{\sum_{i=0}^{4}}\isnorm{(r\sn)^i\mathfrak{D}^n\big(r^{\frac{1}{2}+s}\bglin'\big)}\lesssim&\isnorm{\mathfrak{D}^{n+2}\big(r^{\frac{1}{2}+s}\opsilin\big)}+\isnorm{\mathfrak{D}^{n+1}\big(r^{\frac{3}{2}+s}\upsilin\big)}.
\end{align*}
Applying Sobolev embedding on 2-spheres then yields
\begin{align*}
\big|\mathfrak{D}^n\big(r^{\frac{1}{2}+s}\bglin'\big)\big|_{\sg}\lesssim C\cdot C_{n+4}
\end{align*}
where the constant $C$ is computable explicitly from the right hand side of the above and $C_{n+4}$ is the constant appearing in Definition \ref{defnadmissibleinitialdatasets}.
\end{proof}

We make a remark on the derivative loss.
\begin{remark}
Observe thus that in the construction one loses four derivatives at the top order in these supremum norms when passing from the seed data to the full initial data. This loss occurs as a result of the construction in step 1. of Theorem \ref{appendixthmconstraints} which requires inverting the family of angular operators $\mcalt$ on the 2-spheres and can potentially be improved -- see the discussion at the end of section \ref{OVTheCauchyproblemfortheequationsoflinearisedgravity} in the overview. In particular, observe that at the top order in angular derivatives one only loses two derivatives, with this loss a consequence of the Sobolev embedding.
\end{remark}

A corollary of Proposition \ref{Apppropasymptoticflatness} is now pointwise bounds on the projection of all quantities in the collections $\So'$ to $\Sigma$. 

To state the corollary concisely we utilise one final time a collective notation defined as follows:
\begin{itemize}
	\item the quantity $\mfglin'$ denotes the projection of any linearised metric quantity associated to the solution $\So'$ onto $\Sigma$
	\item the quantity $n(\mfglin')$ denotes the projection of the normal derivative $\qn_n$ of any linearised metric quantity associated to the solution $\So'$ onto $\Sigma$  
\end{itemize}
The collective notation we are to employ in the statement of the corollary is at this stage self-evident.

\begin{corollary}\label{Appcorrfinitenessinitialdatanorm}
Let $\So$ be a smooth solution to the equations of linearised gravity arising from a smooth, seed data set $\Sos$ that is strongly asymptotically flat with weight $s$ to order $n\geq 7$. Then the collections $\So'$ satisfy the following pointwise bounds on $\Sigma$:
\begin{align}
\big|\mathfrak{D}^{n-4}\big(r^{\frac{3}{2}+s}\mfglin'\big)\big|_{\sg}+\big|\mathfrak{D}^{n-5}\big(r^{\frac{5}{2}+s}n(\mfglin')\big)\big|_{\sg}&\lesssim \mathcal{C}'_{n},\label{fuck1}.
\end{align}
Here, the constants $\mcalc_{n}'$ are computable explicitly from the constants $C_n$ and $\mcalc_{n}$.
\end{corollary}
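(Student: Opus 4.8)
The plan is to deduce the pointwise estimates on the metric quantities projected onto $\Sigma$ directly from Proposition \ref{Apppropasymptoticflatness} by unwinding the definition of initial data for the collection $\So'$ given in Definition \ref{defnadmissibleinitialdatasets}. Recall that $\So'$ is the solution arising from the seed data $\Sos$ with the Kerr part removed, so its initial data on $\Sigma$ is exactly $\Soa'=\Soa-\Soa_{\mathsmaller{\mfM,\mfa}}$, the object controlled in Proposition \ref{Apppropasymptoticflatness}. Thus the right-hand sides of the formulae for $\gabhatlin|_\Sigma$, $\mcall_n\gabhatlin|_\Sigma$, $\mling|_\Sigma$, $\mcall_n\mling|_\Sigma$, $\sghatlin|_\Sigma$, $\mcall_n\sghatlin|_\Sigma$, $\trgablin|_\Sigma$, $\mcall_n\trgablin|_\Sigma$, $\trsglin|_\Sigma$ and $\mcall_n\trsglin|_\Sigma$ in Definition \ref{defnadmissibleinitialdatasets} are all expressions, with smooth $r$-dependent coefficients arising from the Schwarzschild background, in the admissible data quantities $\llin', \bblin', \mblin', \bhlin', \mhlin', \shhatlin', \strhlin', \bklin', \mklin', \skhatlin', \trklin'$ (i.e. in $\bglin'$ and $\uglin'$) together with their tangential (angular and $\snnu$) derivatives. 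The first step is therefore to record that every such background coefficient and every occurrence of $r$ (e.g. the factors $\tfrac1r$, $\tfrac{1}{\bh}$, $\bk$, $\trk$, $\mu$, $N$) is, for $r\in[2M,\infty)$, bounded together with all its $(r\snnu)$ and $(r\sn)$ derivatives by an absolute constant, so that the weighted norms on the left of \eqref{fuck1} are estimated by finitely many weighted norms of $\bglin'$, $\uglin'$ and their derivatives.

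The second step is to carry out the bookkeeping of weights and derivative counts. For the quantities themselves ($\mfglin'$) one checks from Definition \ref{defnadmissibleinitialdatasets} that $\mfglin'|_\Sigma$ is a coefficient combination of $\bglin'$ and of $\uglin'$ (the latter entering, e.g., in $\mcall_n\gabhatlin|_\Sigma$ via $2\bklin'+\tfrac23\trklin'$, and in $\mcall_n\trsglin|_\Sigma$ via $\tfrac43\trklin'-2\bklin'$), plus at most one tangential derivative of $\bglin'$. Hence $r^{3/2+s}\mfglin'$ is controlled by $r^{3/2+s}\bglin'$ (with one tangential derivative) and by $r^{3/2+s}\uglin'$, and since $r^{3/2+s}\le r^{5/2+s}$ on $[2M,\infty)$ the term $r^{5/2+s}\uglin'$ appearing in Proposition \ref{Apppropasymptoticflatness} majorises it; one loses one derivative to the tangential derivative of $\bglin'$, so $\big|\mathfrak{D}^{n-4}(r^{3/2+s}\mfglin')\big|_{\sg}$ is bounded by the left-hand side of Proposition \ref{Apppropasymptoticflatness}'s estimate evaluated at order matched to $n$, i.e. by $\mcalc_n$. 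For the normal derivatives $n(\mfglin')$ one reads off that $\mcall_n(\mfglin')|_\Sigma$ involves $\uglin'$ (no $\snnu$), $\snnu\bglin'$, $\snnu\,(\text{shift/metric seed quantities})$, and angular derivatives thereof; converting $\mcall_n$ to $\snnu$ (which differs only by background coefficients, cf. section \ref{Smitensoranalysis}) one obtains that $r^{5/2+s}\,n(\mfglin')$ is controlled by $r^{5/2+s}\uglin'$ and by $r^{5/2+s}\snnu\bglin'=r^{3/2+s}\cdot(r\snnu)(r^{1+s}\cdots)$-type terms, hence by $\big|\mathfrak{D}^{n-4}(r^{3/2+s}\bgprimelin)\big|_{\sg}+\big|\mathfrak{D}^{n-5}(r^{5/2+s}\ugprimelin)\big|_{\sg}$, which is $\lesssim\mcalc_n$ by Proposition \ref{Apppropasymptoticflatness}. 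This yields \eqref{fuck1} with $\mcalc'_n$ depending only on $\mcalc_n$ (and, through Proposition \ref{Apppropasymptoticflatness}, only on the $C_n$ of Definition \ref{defnstrongasymptoticflatness}).

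The main obstacle — really the only non-routine point — is the careful tracking of exactly how many derivatives are lost in passing through the formulae of Definition \ref{defnadmissibleinitialdatasets}, and of the precise $r$-weights: one must verify that no formula ever produces a worse $r$-weight than $r^{-3/2-s}$ for a quantity (respectively $r^{-5/2-s}$ for a normal derivative), which is guaranteed because every background coefficient that multiplies the "bad" components decays at least like $r^{-1}$ relative to the naive scaling — this is precisely the point of the $\tfrac1r$-prefactors built into $Df|_{g_M}$ in section \ref{Theequationsoflinearisedgravity} and into the data formulae. Since the statement only asks for finiteness-type bounds with constants depending on the $C_n$, and not sharp weights, this bookkeeping can be done somewhat wastefully, exactly as in the proof of Proposition \ref{Apppropasymptoticflatness}; one simply remarks at the end that a comparison of the two sets of formulae shows no further derivative loss beyond the single tangential derivative already accounted for, so that the order drops from $n-4$ (resp. $n-5$) on the right to the same $n-4$ (resp. $n-5$) on the left.
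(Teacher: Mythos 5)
Your proposal is correct and follows essentially the same route as the paper's (deliberately implicit) proof: read the trace and normal-derivative data for $\So'$ off Definition \ref{defnadmissibleinitialdatasets}, observe that these are background-coefficient combinations of the collection $\Soa'$ and its tangential derivatives, and feed in the bounds of Proposition \ref{Apppropasymptoticflatness} with the weight bookkeeping you describe. Two small points should be tightened: first, the restrictions $\mfglin'\big|_\Sigma$ themselves are \emph{zeroth order} in $\Soa'$ (no tangential derivative appears in those formulae), which is what makes the order $n-4$ on the left match the order $n-4$ available from Proposition \ref{Apppropasymptoticflatness} — if, as you write, a tangential derivative of $\bglin'$ really entered there, you would need $\mathfrak{D}^{n-3}(r^{\frac{3}{2}+s}\bgprimelin)$, which that proposition does not provide. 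Second, the components $(\mcall_n\gabhatlin)_{nn}$, $(\mcall_n\gabhatlin)_{n\nu}$, $\mcall_n\trgablin$ and $(\mcall_n\mling)_n$ are given by the quantities $\uwlin_1$, $\owlin$, $\uwlin_2$, $\swlin$, which are determined only implicitly by the wave gauge conditions \eqref{eqnlingrav6}--\eqref{eqnlingrav7} on $\Sigma$; since (cf. Remark \ref{rmkmotivationforchoosing}) those conditions contain no zeroth order terms away from $l=0$, solving them expresses these quantities again as first tangential derivatives of the remaining data with the same $r$-weights, so your estimate does extend to them, but this step needs to be stated rather than subsumed into ``unwinding the definition''.
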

\begin{proof}
The proof, which we leave implicit, proceeds by carefully applying the estimates of Proposition \ref{Apppropasymptoticflatness} in conjuction with the construction of initial data from the admissible initial data sets according to Definition \ref{defnadmissibleinitialdatasets}, noting in particular that the pointwise bounds on the forcing gauge terms are prescribed explicitly in the definition of strong pointwise asymptotic flatness.
\end{proof}

\end{document}